\numberwithin{equation}{section}
\newtheorem{theorem}{Theorem}[section]
\newtheorem{lemma}[theorem]{Lemma}
\newtheorem{corollary}[theorem]{Corollary}
\newtheorem{proposition}[theorem]{Proposition}
\theoremstyle{definition}
\newtheorem{definition}[theorem]{Definition}
\theoremstyle{remark}
\newtheorem{remark}[theorem]{Remark}
\numberwithin{equation}{section}
\numberwithin{equation}{section}
\newcommand{\norm}[1]{\left\lVert #1 \right\rVert}
\newcommand{\bdm}{\begin{displaymath}}
\newcommand{\edm}{\end{displaymath}}
\newcommand{\bdn}{\begin{eqnarray}}
\newcommand{\edn}{\end{eqnarray}}
\newcommand{\bay}{\begin{array}{c}}
\newcommand{\eay}{\end{array}}
\newcommand{\ben}{\begin{enumerate}}
\newcommand{\een}{\end{enumerate}}
\newcommand{\beq}{\begin{equation}}
\newcommand{\eeq}{\end{equation}}
\newcommand{\tx}{\textstyle}
\newcommand{\eps}{\varepsilon}
\newcommand{\R}{\mathbb{R}}
\newcommand{\N}{\mathbb{N}}
\newcommand{\C}{\mathbb{C}}
\newcommand{\F}{\mathcal{F}}
\newcommand{\E}{\mathcal{E}}
\newcommand{\cS}{\mathcal{S}}
\newcommand{\PP}{\mathcal{P}}
\newcommand{\one}{{\ensuremath {\mathds 1} }}
\newcommand{\al}{\alpha}
\newcommand{\ep}{\varepsilon}
\newcommand{\Om}{\Omega}
\newcommand{\om}{\omega}
\newcommand{\half}{\tx{\frac{1}{2}}}
\newcommand{\supp}{\mathrm{supp}}
\newcommand{\wto}{\rightharpoonup}
\newcommand{\bral}{\left<}
\newcommand{\brar}{\right|}
\newcommand{\ketl}{\left|}
\newcommand{\ketr}{\right>}
\newcommand{\ZN}{\mathcal{Z}_{N}}
\newcommand{\MFf}{\F ^{\rm MF}}
\newcommand{\MFe}{F ^{\rm MF}}
\newcommand{\MFmin}{\mathcal{M} ^{\rm MF}}
\newcommand{\rhoMF}{\varrho ^{\rm MF}}
\newcommand{\MFEf}{\E ^{\rm MF}}
\newcommand{\MFEe}{E ^{\rm MF}}
\newcommand{\MFfo}{\E ^{\rm MF}}
\newcommand{\MFeo}{E ^{\rm MF}}
\newcommand{\MFfal}{\E ^{\rm MF}_{\al}}
\newcommand{\MFeal}{E ^{\rm MF}_{\al}}
\newcommand{\logal}{\log_{\al}}
\DeclareMathOperator{\Tr}{Tr}
\DeclareMathOperator{\tr}{Tr}
\def\geqslant{\ge}
\def\leqslant{\le}
\def\bq{\begin{eqnarray}}
\def\eq{\end{eqnarray}}
\def\bqq{\begin{eqnarray*}}
\def\eqq{\end{eqnarray*}}
\def\nn{\nonumber}
\def\eps{\varepsilon}
\def\wto{\rightharpoonup}
\def\gB {\mathfrak{B}}
\newcommand\1{{\ensuremath {\mathds 1} }}
\newcommand{\gammaP}{\gamma_{\Psi}}
\newcommand{\dM}{{\rm d}M}
\renewcommand{\epsilon}{\varepsilon}
\def\cF {\mathcal{F}}
\def\R {\mathbb{R}}
\def\C {\mathbb{C}}
\def\cS {\mathcal{S}}
\def\E {\mathcal{E}}
\def\cE {\mathcal{E}}
\def\F {\mathcal{F}}
\def\R {\mathbb{R}}
\def\C {\mathbb{C}}
\def\gS{\mathfrak{S}}
\def\cS {\mathcal{S}}
\def\E {\mathcal{E}}
\def\cM {\mathcal{M}}
\def\d{{\rm d}}
\def\gH{\mathfrak{H}}
\newcommand\ii{{\ensuremath {\infty}}}
\newcommand\pscal[1]{{\ensuremath{\left\langle #1 \right\rangle}}}
\renewcommand{\leq}{\leqslant}
\renewcommand{\geq}{\geqslant}
\newcommand{\cEH}{\ensuremath{\cE_{\text{\textnormal{H}}}}}
\newcommand{\EH}{\E_{\rm H}}
\newcommand{\eH}{e_{\rm H}}
\newcommand{\uH}{u_{\rm H}}
\newcommand{\ENLS}{\cE_{\rm nls}}
\newcommand{\eNLS}{e_{\rm nls}}
\newcommand{\uNLS}{u_{\rm nls}}
\newcommand{\mubf}{\boldsymbol{\mu}}
\newcommand{\nubf}{\boldsymbol{\nu}}
\newcommand{\mut}{\tilde{\boldsymbol{\mu}}}
\newcommand{\Gammat}{\tilde{\Gamma}}
\newcommand{\gammat}{\tilde{\gamma}}
\newcommand{\Pp}{P_{\perp}}
\newcommand{\wep}{w_{\varepsilon}}
\newcommand{\aep}{a_{\varepsilon}}
\newcommand{\Eep}{E ^{\varepsilon}}
\newcommand{\MNLS}{\cM_{\rm nls}}
\newcommand{\Fcl}{\F_{\rm cl}}
\newcommand{\Fcle}{F_{\rm cl}}
\newcommand{\mucl}{\mu_{\rm cl}}
\numberwithin{equation}{section}
\title[De Finetti theorems and Bose-Einstein condensation]{De Finetti theorems, mean-field limits and Bose-Einstein condensation}
\author[Nicolas Rougerie]{Nicolas Rougerie\\ 
L\lowercase{aboratoire de }P\lowercase{hysique et }M\lowercase{od\'elisation des }M\lowercase{ilieux} C\lowercase{ondens\'es}, U\lowercase{niversit\'e} G\lowercase{renoble 1} \& CNRS.}
\address{Universit\'e Grenoble 1 \& CNRS, LPMMC, UMR 5493, BP 166, 38042 Grenoble, France.}
\email{nicolas.rougerie@lpmmc.cnrs.fr}
\date{June 2015.}
\begin{document}

%


\maketitle

\bigskip

\begin{center}
Lecture notes for a course at the Ludwig-Maximilian Universit\"at, M\"unich in April 2015.\\
Translated from a ``Cours Peccot'' at the coll\`ege de France, February-April 2014. 
\end{center}

\bigskip

\begin{abstract}
These notes deal with the mean-field approximation for equilibrium states of $N$-body systems in classical and quantum statistical mechanics. A general strategy for the justification of effective models based on statistical independence assumptions is presented in details. The main tools are structure theorems \`a la de Finetti, describing the large $N$ limits of admissible states for these systems. These rely on the symmetry under exchange of particles, due to their indiscernability. Emphasis is put on quantum aspects, in particular the mean-field approximation for the ground states of large bosonic systems, in relation with the Bose-Einstein condensation phenomenon. Topics covered in details include: the structure of reduced density matrices for large bosonic systems, Fock-space localization methods, derivation of effective energy functionals of Hartree or non-linear Schr\"odinger type, starting from the many-body Schr\"odinger Hamiltonian.   
\end{abstract}

\newpage

$\phantom{o}$

\bigskip
\bigskip

\bigskip
\bigskip

\bigskip
\bigskip

\bigskip
\bigskip

\bigskip
\bigskip

\bigskip
\bigskip

\noindent\emph{Dedicated to my daughter C\'eleste. }

\newpage

$\phantom{o}$

\bigskip
\bigskip

\bigskip
\bigskip

\tableofcontents

\newpage

\section*{\textbf{Foreword}}\label{sec:pre intro}

The purpose of these notes is to present as exhaustively and pedagogically as possible some recent mathematical results bearing on the \emph{Bose-Einstein condensation} phenomenon observed in ultra-cold atomic gases. One of the numerous theoretical problems posed by these experiments is the understanding of the link between effective models, describing the physics with a remarkable precision, and the first principles of quantum mechanics. The process leading from the fundamental to the effective theories is often called a \emph{mean-field limit}, and it has motivated a very large number of investigations in theoretical and mathematical physics. In this course, we shall focus on one of the methods allowing to deal with mean-field limits, which is based on \emph{de Finetti theorems}. The emergence of the mean-field models will be interpreted as a fundamental consequence of the structure of physical states under consideration.  

This text will touch on subjects from mathematical analysis, probability theory, condensed matter physics, ultra-cold atoms physics, quantum statistical mechanics and quantum information. The emphasis will be on the author's speciality, namely the analytic aspects of the derivation of mean-field equilibrium states. The presentation will thus have a pronounced mathematical style, but readers should keep in mind the connection between the questions adressed here and cold atoms physics, in particular as regards the experiments leading to the observation of Bose-Einstein condensates in the mid 90's.

\subsection*{A few words about the experiments}

\smallskip

Bose-Einstein condensation (BEC) is at the heart of a rapidly expanding research field since the mid 90's. The extreme versatility of cold atoms experiments allows for a direct investigation of numerous questions of fundamental physics. For more thorough developments in this direction, I refer the reader to the literature, in particular to~\cite{Aftalion-06,BloDalZwe-08,DalGioPitStr-99,DalGerJuzOhb-11,LieSeiSolYng-05,PetSmi-01,PitStr-03,Fetter-09,Cooper-08} and references therein. French readers will find very accessible discussions in~\cite{Dalibard-01,CheDal-03,CohDalLal-05}.

\medskip

The first experimental realizations of BEC took place at the MIT and at Boulder, in the groups of W. Ketterle on the one hand and E. Cornell-C. Wieman on the other hand. The 2001 Nobel prize in physics was awarded to Cornell-Wieman-Ketterle for this remarkable achievement. The possibilities opened up by these experiments for investigating macroscopic quantum phenomena constitute a cornerstone of contemporary physics.

\medskip

A Bose-Einstein condensate is made of a large number of particles (alkali atoms usually) occupying the same quantum state. BEC thus requires that said particles be bosons, i.e. that they do not satisfy Pauli's exclusion principle which prevents mutliple occupancy of a single quantum state. 

This macroscopic occupancy of a unique low-energy quantum state appears only at very low temperatures. There exists a critical temperature $T_c$ for the existence of a condensate, and macroscopic occupancy occurs only for temperatures $T<T_c$. The existence of such a critical temperature was theoretically infered in works of Bose and Einstein~\cite{Bose-24,Einstein-24} in the 1920's. Important objections were however formulated:
\begin{enumerate}
\item The critical temperature $T_c$ is extremely low, unrealistically so as it seemed in the~1920's.
\item At such a temperature, all known materials should form a solid state, and not be gaseous as assumed in Bose and Einstein's papers. 
\item The argument of Bose and Einstein applies to an ideal gas, neglecting interactions between particles, which is a serious drawback.
\end{enumerate}

The first objection could be bypassed only in the 1990's with the advent of powerful techniques such as laser cooling\footnote{1997 Nobel prize in physics: S. Chu-W. Phillips-C. Cohen-Tannoudji.} and evaporative cooling. These allowed to reach temperatures in the micro-Kelvin range in quantum gases trapped by magneto-optics means. As for the second objection, the answer has to do with the diluteness of the samples: three-particles collisions necessary to initiate the formation of molecules, and ultimately of a solid phase, are extremely rare in the experiments. One thus has the possibility of observing a metastable gaseous phase during a sufficiently long time for a condensate to form.   

The third objection is of a more theoretical nature. Most of the material discussed in these notes is part of a research program (of many authors, see references) whose goal is to remove that objection. We will thus have the opportunity to discuss it a length in the sequel. 

Many agreeing observations have confirmed the experimental realization of BEC: imaging of the atoms' distribution in momentum/energy space, interference of condensates, superfluidity in trapped gases ... The importance thus acquired by the mathematical models used in the description of this phenomenon has motivated a vast literature devoted to their derivation and analysis. 

\subsection*{Some mathematical questions raised by experiments}

\smallskip

In the presence of BEC, the gas under consideration can be described by a single wave-functions $\psi:\R ^d \mapsto \C$, corresponding to the quantum state in which all particles reside. A system of $N$ quantum particles should normally be described by a $N$-particle wave-function $\Psi_N: \R ^{dN} \mapsto \C$. One thus has to understand how and why can this huge simplification be justified, that is how the collective behavior of the $N$ particles emerges. The investigation of the precision of this approximation, whose practical and theoretical consequences are fundamental, is a task of extreme importance for theoretical and mathematical physicists.  

One may ask the following questions:
\begin{enumerate}
\item Can one describe the ground state (that is the equlibrium state at zero temperature) of an interacting Bose system with a single wave-function~$\psi$~?
\item Start from a single wave-function and let the system evolve along the natural dynamics ($N$-body Schr\"odinger flow). Is the single wave-function description preserved by the dynamics ? 
\item Can one rigorously prove the existence of a critical temperature $T_c$ under which the finite temperature equilibrium states may be described by a single wave-function~?
\end{enumerate}

These questions are three aspects of the third objection mentioned in the preceding paragraph. We thus recall that the point is to understand the BEC phenomenon \emph{in the presence of interactions}. The case of an ideal gas is essentially trivial, at least for questions 1 and 2. 

One should keep in mind that, in the spirit of statistical mechanics, we aim at justifying the single wave-function description \emph{asymptotically in the limit of large particle numbers}, under appropriate assumptions on the model under consideration. Ideally, the assumptions should reduce to those ensuring that both the $N$-body model one starts from and the $1$-body model one arrives at are mathematically well-defined. Let us note that, for interacting quantum particles, the former is always linear, while the latter is always non-linear.

Many recent results presented in the sequel are generalizations to the quantum case of better known results of classical statistical mechanics. Related questions indeed occur also in this simpler context. For pedagogical reasons, some notions on mean-field limits in classical mechanics will thus be recalled in these notes.

\medskip

This course deals with question number 1, and we will be naturally lead to develop tools of intrinsic mathematical interest. One may use essentially two approaches:

\begin{itemize}
\item The first one exploits particular properties of certain important physical models. It thus applies differently to different models, and under often rather restrictive assumptions, in particular as regards the shape of the inter-particle interactions. A non-exhaustive list of references using such ideas is~\cite{SanSer-12,SanSer-13,RouSer-13,RouSerYng-13b,Serfaty-14} for the classical case, and~\cite{BenLie-83,LieYau-87,Seiringer-11,GreSei-12,SeiYngZag-12,LieSeiSolYng-05,LieSei-06} for the quantum case
\item The second one is the object of these lectures. It exploits properties of the set of admissible states, that is of $N$-body wave-functions $\Psi_N$. That we consider bosonic particles implies a fundamental symmetry property for these functions. This approach has the merit of being much more general than the first one, and in many cases to get pretty close to the ``ideally minimal assumptions'' for the validity of the mean-field approximation mentioned before.

A possible interpretation is to see the mean-field limit as a parameter regime where correlations between particles become negligible. We will use very strongly the key notion of bosonic symmetry. There will be many opportunities to discuss the literature in details, but let us mention immediately~\cite{MesSpo-82,CagLioMarPul-92,Kiessling-89,Kiessling-93,KieSpo-99,RouYng-14} and~\cite{FanSpoVer-80,FanVan-06,PetRagVer-89,RagWer-89,LewNamRou-13,LewNamRou-14} for applications of these ideas in classical and quantum mechanics respectively.  
\end{itemize}
The distinction between the two philosophies is of course somewhat artificial since one will often benefit from borrowing ideas to both, see e.g.~\cite{NamRouSei-15} for a recent example.

\medskip

To keep these notes reasonably short, question 2 will not be treated at all, although a vast literature exists, see e.g.~\cite{Hepp-74,GinVel-79,GinVel-79b,Spohn-80,BarGolMau-00,ElgErdSchYau-06,ElgSch-07,AmmNie-08,ErdSchYau-09,FroKnoSch-09,RodSch-09,BenOliSch-12,KnoPic-10,Pickl-11} and references therein, as well as the lecture notes~\cite{Golse-13,BenPorSch-15}. We note that the quantum de Finetti theorems that we will discuss in the sequel have recently proved useful in dealing with question 2, see~\cite{AmmNie-08,AmmNie-09,AmmNie-11,CheHaiPavSei-13,CheHaiPavSei-14}. The use of classical de Finetti theorems in a dynamic framework is older~\cite{Spohn-80,Spohn-81,Spohn-12}

\medskip

As for question 3, this is a famous open problem in mathematical physics. Very little is known at a satisfying level of mathematical rigor, but see however~\cite{SeiUel-09,BetUel-10}. We will touch on questions raised by taking temperature into account only very briefly in Appendix~\ref{sec:large T}, in a greatly simplified framework. This subject is further studied in the paper~\cite{LewNamRou-14b}.

\subsection*{Plan of the course}\mbox{}

\smallskip

These notes are organized as follows:

\begin{itemize}
\item A long introduction, Chapter~\ref{sec:intro}, recalls the basic formalism we shall need to give a precise formulation to the problems we are interested in. We will start with classical mechanics and then move on to quantum aspects. The question of justifying the mean-field approximation for equilibrium states of a given Hamiltonian will be formulated in both contexts. The proof stategy that will occupy the bulk of the notes will be described in a purely formal manner, so as to introduce as quickly as possible the de Finetti theorems that will be our main tools. Similarities between the classical and quantum frameworks are very strong. Differences show up essentially when discussing the proofs of the fundamental de Finetti theorems.  
\item Chapter~\ref{sec:class} is essentially independent from the rest of the notes. It contains the analysis of classical systems: proof of the classical de Finetti theorem (also called Hewitt-Savage theorem), application to equilibrium states of a classical Hamiltonian. The proof of the Hewitt-Savage theorem we will present, due to Diaconis and Freedman, is purely classical and does not generalize to the quantum case. 
\item We start adressing quantum aspects in Chapter~\ref{sec:quant}. Two versions (strong and weak) of the quantum de Finetti theorem are given without proofs, along with their direct applications to ``relatively simple'' bosonic systems in the mean-field regime. Section~\ref{sec:rel deF} contains a discussion of the various versions of the quantum de Finetti theorem and outlines the proof strategy that we shall follow.
\item Chapters~\ref{sec:deF finite dim} and~\ref{sec:locFock} contain the two main steps of the proof of the quantum de Finetti theorem we choosed to present: respectively ``explicit construction and estimates in finite dimension'' and ``generalization to infinite dimensions via localization in Fock space''. The proof should not be seen as a black box: not only the final result but also the intermediary constructions will be of use in the sequel.

\item Equiped with the results of the two previous chapters, we will be able to give in Chapter~\ref{sec:Hartree} the justification of the mean-field approximation for the ground state of an essentially generic bosonic system. Contrarily to the case treated in Chapter~\ref{sec:quant}, the quantum de Finetti theorem will not be sufficient in this case, and we will have to make use of some of the ingredients introduced in Chapter~\ref{sec:locFock}. 

\item The mean-field limit is not the only physically relevant one. In Chapter~\ref{sec:NLS} we will study a dilute regime in which the range of the interactions goes to $0$ when $N\to\infty$. In this case one obtains in the limit Gross-Pitaevskii (or non-linear Schr\"odinger) functionals with local non-linearities. We will present a strategy for the derivation of such objects based on the tools of Chapter~\ref{sec:deF finite dim}.  
\end{itemize}

The main body of the text is supplemented with two appendices containing each an unpublished note of Mathieu Lewin and the author.

\begin{itemize}
\item  Appendix~\ref{sec:class quant} shows how, in some particular cases, one may use the classical de Finetti theorem to deal with quantum problems. This strategy is less natural (and less efficient) than that presented in Chapters~\ref{sec:quant} and~\ref{sec:Hartree}, but it has a conceptual interest.   

\item Appendix~\ref{sec:large T} deviates from the main line of the course since the Hilbert spaces under consideration will be finite dimensional. In this context, combining a large temperature limit with a mean-field limit, one may obtain a theorem of semi-classical nature which gives examples of de Finetti measures not encountered previously. This will be the occasion to mention Berezin-Lieb inequalities and their link with the considerations of Chapter~\ref{sec:deF finite dim}.
\end{itemize}

\subsection*{Acknowledgements}

The motivation to write the french version of these notes came from the opportunity of presenting the material during a ``cours Peccot'' at the Coll\`ege de France in the spring of 2014. Warm thanks to the audience of these lectures for their interest and feedback, and to the Spartacus project (website: http://spartacus-idh.com) for their proposal to publish the french version.  

The motivation for translating the notes into english was provided by an invitation to give an (expanded version of) the course at the LMU-Munich in the spring of 2015. I thank in particular Thomas \O{}stergaard S\o{}rensen for the invitation and the practical organization of the lectures, as well as Heinz Siedentop, Martin Fraas, Sergey Morozov, Sven Bachmann and Peter M\"uller for making my stay in M\"unich very enjoyable. Feedback from the students following the course was also very useful in revising the notes to produce the current version.

I am of course indebted to my collaborators on the subjects of these notes: Mathieu Lewin and Phan Th\`anh Nam for the quantum aspects (more recently also Douglas Lundholm, Robert Seiringer and Dominique Spehner), Sylvia Serfaty and Jakob Yngvason for the classical aspects. Fruitful exchanges with colleagues, in particular Zied Ammari, Francis Nier, Patrick G\'erard, Isaac Kim, Jan-Philip Solovej, J\"urg Fr\"ohlich, Elliott Lieb, Eric Carlen, Antti Knowles, Benjamin Schlein, Maxime Hauray and Samir Salem were also very useful. 

Finally, I acknowledge financial support from the French ANR (Mathostaq project, ANR-13-JS01-0005-01)

\newpage

\section{\textbf{Introduction: Problems and Formalism}}\label{sec:intro}

Here we describe the mathematical objects that will be studied throughout the course. Our main object of interest will be many-body quantum mechanics, but the analogy with some questions of classical mechanics is instructive enough for us to also describe that formalism. Questions of units and dimensionality are systematically ignored to simplify notation.

\subsection{Statistical mechanics and mean-field approximation}\label{sec:forma class}

For pedagogical reasons we shall recall some notions on mean-field limits in classical mechanics before going to the quantum aspects, relevant to the BEC phenomenon. This paragraph aims at fixing notation and reviewing some basic concepts of statistical mechanics. We will limit ourselves to the description of the equilibrium states of a classical system. Dynamic aspects are voluntarily ignored, and the reader is refered to~\cite{Golse-13} for a review on these subjects.

\subsubsection*{Phase space.} The state of a classical particle is entirely determined by its position $x$ and its speed $v$ (or equivalently its momentum $p$). For a particle living in a domain $\Om\subset \R ^d$ we thus work in the phase space $\Om \times \R ^d$, the set of possible positions and momenta. For a $N$-particle system we work in $\Om ^N \times \R ^{dN}.$

\subsubsection*{Pure states.} We call a pure state one where the positions and momenta of all particles are known exactly. Equilibrium states at zero temperature for example are pure: in classical mechanics, uncertainty on the state of a system is only due to ``thermal noise''.

For a $N$-particle system, a pure state corresponds to a point 
$$(X;P) = (x_1,\ldots,x_N;p_1,\ldots,p_N) \in \Om ^N \times \R ^{dN}$$
in phase-space, where the pair $(x_i;p_i)$ gives the position and momentum of particle number~$i$. Having in mind the introduction of mixed states in the sequel, we will identify a pure state with a superposition of Dirac masses
\begin{equation}\label{eq:def etat pur class}
\mu_{X;P} = \sum_{\sigma \in \Sigma_N} \delta_{X_{\sigma};P_{\sigma}}. 
\end{equation}
This identification takes into account the fact that real particles are \emph{indistinguishable}. One can actually not attribute the pair $(x_i;p_i)$ of position/momentum to any one of the $N$ particles in particular, whence the sum over the permutation group $\Sigma_N$ in~\eqref{eq:def etat pur class}. Here and in the sequel our notation is
\begin{align}\label{eq:convention permutation}
X_{\sigma} &= (x_{\sigma(1)},\ldots,x_{\sigma(N)}) \nonumber
\\P_{\sigma} &= (p_{\sigma(1)},\ldots,p_{\sigma(N)}).
\end{align}
Saying that the system is in the state  $\mu_{X;P}$ means that one of the particles has position and momentum $(x_i;p_i),i=1 \ldots N$, but one cannot specify which because of indistinguishability.

\subsubsection*{Mixed states.} At non-zero temperature, that is when some thermal noise is present, one cannot determine with certainty the state of the system. One in fact looks for a statistical superposition of pure states, which corresponds to specifying the probability that the system is in a certain pure state. One then speaks of mixed states, which are the convex cominations of pure states, seen as Dirac masses as in Equation~\eqref{eq:def etat pur class}. The set of convex combinations of pure states of course corresponds to the set of symmetric probability measures on phase-space. A general $N$-particles mixed state is thus a probability measure $\mubf_N \in \PP_s (\Om ^{N}\times \R ^{dN})$ satisfying 
\begin{equation}\label{eq:class sym}
d\mubf_N (X;P) = d\mubf_N (X_\sigma;P_\sigma)
\end{equation}
for all permutations $\sigma\in \Sigma_N$. One interprets $\mubf_N (X;P)$ as the probability density that particle $i$ has position $x_i$ and momentum $p_i$, $i=1\ldots N$. Pure states of the form~\eqref{eq:def etat pur class} are a particular type of mixed states where the statistical uncertainty is reduced to zero, up to indistinguishability.

\subsubsection*{Free energy.} The energy of a classical system is specified by the choice of a Hamiltonian, a function over phase-space. In non-relativistic mechanics, the kinetic energy of a particle of momentum $p$ is always $m|p|^2 / 2$. Taking $m=1$ to simplify notation, we will consider an energy of the form    
\begin{equation}\label{eq:intro class hamil 1}
H_N (X;P) := \sum_{j=1} ^N \frac{|p_j| ^2}{2} + \sum_{j=1} ^N V(x_j) + \lambda \sum_{1\leq i<j\leq N} w (x_i-x_j) 
\end{equation}
where $V$ is an external potential (e.g. electrostatic) felt by all particles and $w$ a pair-interaction potential that we will assume symmetric,
$$ w (-x) = w (x).$$ 
The real parameter $\lambda$ sets the strength of interparticle interactions. We could of course add three-particles, four-particles etc ... interactions, but this is seldom required by the physics, and when it is, there is no additional conceptual difficulty.

The energy of a mixed state $\mubf_N \in \PP_s (\Om ^{N}\times \R ^{dN})$ is then given by 
\begin{equation}\label{eq:def ener func class}
\E [\mubf_N] := \int_{\Om ^N \times \R ^{dN}} H_N (X;P) d\mubf_N (X;P) 
\end{equation}
and by symmetry of the Hamiltonian this reduces to $H_N (X;P)$ for a pure state of the form~\eqref{eq:def etat pur class}. At zero temperature, equilibrium states are found by minimizing the energy functional~\eqref{eq:def ener func class}:
\begin{equation}\label{eq:def ener class}
E (N) = \inf\left\{  \E [\mubf_N], \mubf_N \in \PP_s (\Om ^{N}\times \R ^{dN})\right\rbrace 
\end{equation}
and the infimum (ground state energy) is of course equal to the minimum of the Hamiltonian $H_N$. It is attained by a pure state~\eqref{eq:def etat pur class} where $(X;P)$ is a minimum point for $H_N$ (in particular $P= (0,\ldots,0)$).

In presence of thermal noise, one must take the entropy
\begin{equation}\label{eq:def class entr}
S [\mubf_N] := - \int_{\Om ^N \times \R ^{dN}} d \mubf_N (X;P) \log (\mubf_N (X;P))  
\end{equation}
into account. This is a measure of the degree of uncertainty on the state of the system. Note for example that pure states have the lowest possible entropy: $S [\mubf_N] = -\infty$ if $\mubf_N$ if of the form~\eqref{eq:def etat pur class}. At temperature $T$, one finds the equilibrium state by minimizing the free-energy functional
\begin{align}\label{eq:def free ener func class}
\F [\mubf_N] &= \E [\mubf_N] - T S [\mubf_N]\nonumber 
\\&= \int_{\Om ^N \times \R ^{dN}} H_N (X;P) d\mubf_N (X;P) + T \int_{\Om ^N \times \R ^{dN}} d \mubf_N (X;P) \log (\mubf_N (X;P))
\end{align}
which amounts to saying that the more probable states must find a balance between having a low energy and having a large entropy. We shall denote
\begin{equation}\label{eq:def free ener class}
F (N) = \inf\left\{  \F [\mubf_N], \mubf_N \in \PP_s (\Om ^{N}\times \R ^{dN})\right\rbrace
\end{equation}
without specifying the temperature dependence. A minimizer must be a sufficiently regular probability so that (minus) the entropy is finite

\subsubsection*{Momentum minimization.} In the absence of a prescribed relation between the position and the momentum distribution of a classical state, the minimization in momentum variables of the above functionals is in fact trivial. A state minimizing ~\eqref{eq:def ener func class} is always of the form 
$$ \mubf_N = \delta_{P=0} \otimes \sum_{\sigma \in \Sigma_N} \delta_{X=X ^0 _\sigma}$$
where $X^0$ is a minimum point for $H_N(X;0,\ldots,0)$, i.e. particles are all at rest. We are thus reduced to looking for the minimum points  of $H_N (X;0,\ldots,0)$ as a function of $X$.

The minimization of~\eqref{eq:def free ener func class} leads to a Gaussian in momentum variables mutliplied by a Gibbs state in position variables\footnote{The \emph{partition functions} $Z_P$ and $Z_N$ normalize the state in $L ^1$.}
$$ \mubf_N = \frac{1}{Z_P} \exp \left(-\frac{1}{2T} \sum_{j=1} ^N |p_j| ^2\right) \otimes \frac{1}{Z_N} \exp \left(-\frac{1}{T} H_N (X;0,\ldots, 0)\right).$$
Momentum variables thus no longer intervene in the minimization of the functionals determining equilibrium states and they will be completely ignored in the sequel. We will keep the preceding notation for the minimization in position variables:
\begin{align}\label{eq:intro class hamil 2}
H_N (X) &= \sum_{j=1} ^N V(x_j) + \lambda \sum_{1\leq i<j\leq N} w (x_i-x_j) \nonumber\\
\E [\mubf_N] &= \int_{\Om ^N } H_N (X) d\mubf_N (X) \nonumber\\
\F [\mubf_N] &= \int_{\Om ^N } H_N (X) d\mubf_N (X) + T \int_{\Om ^N } d \mubf_N (X) \log (\mubf_N (X))
\end{align}
where $\mubf_N \in \PP_s (\Om ^N)$ is a symmetric probability measure in position variables only.

\subsubsection*{Marginals, reduced densities.} Given a $N$-particles mixed state, it is very useful to consider its marginals, or reduced densities, obtained by integrating out some variables:
\begin{equation}\label{eq:defi marginale}
\mubf_N ^{(n)} (x_1,\ldots,x_n) = \int_{\Om ^{N-n}} \mubf(x_1,\ldots,x_n,x'_{n+1},\ldots,x'_N) dx'_{n+1}\ldots dx'_N \in \PP_s (\Om ^n).
\end{equation}
The $n-$th reduced density $\mubf_N ^{(n)}$ is interpreted as the probability density for having one particle at $x_1$, one particle at $x_2$, etc... one particle at $x_n$. In view of the symmetry of $\mubf_N$, the choice of which $N-n$ variables over which one integrates in  Definition~\eqref{eq:defi marginale} is irrelevant.

A first use of these marginals is a rewriting of the energy using only the first two marginals\footnote{More generally, an energy depending only on a $n$-body potential may be rewritten by using the $n$-th marginal only.}:
\begin{align}\label{eq:intro ener marginales}
\E [\mubf_N] &= N \int_{\Om} V(x) d\mubf_N ^{(1)} (x) + \lambda \frac{N(N-1)}{2} \iint_{\Om \times \Om} w(x-y) d\mubf_N ^{(2)}(x,y)\nonumber\\
&= \iint_{\Om\times\Om} \left( \frac{N}{2} V (x) + \frac{N}{2} V (y) + \lambda \frac{N(N-1)}{2} w(x-y) \right)d\mubf_N ^{(2)}(x,y),
\end{align}
where we used the symmetry of the Hamiltonian.

\subsubsection*{Mean-field approximation.} Solving the above minimization minimization problems is a very difficult task when the particle number $N$ gets large. In order to obtain more tractable theories from which useful information can be extracted, one often has to rely on approximations. The simplest and most well-known of these is the mean-field approximation. One can introduce it in several ways, the goal being to obtain a self-consistent one-body problem starting from the $N$-body problem.

Here we follow the ``molecular chaos'' point of view on mean-field theory, originating in the works of Maxwell and Boltzmann. The approximation consists in assuming that all particles are independent and identically distributed~(iid). We thus take an ansatz of the form
\begin{equation}\label{eq:intro MF ansatz}
\mubf_N (x_1,\ldots,x_N) = \rho ^{\otimes N} (x_1,\ldots,x_N) = \prod_{j=1} ^N \rho(x_j) 
\end{equation}
where $\rho \in \PP (\Om)$ is a one-body probability density describing the typical behavior of one of the iid particles under consideration.

The mean-field energy and free-energy functionals are obtained by inserting this ansatz in~\eqref{eq:def ener func class} or~\eqref{eq:def free ener func class}. The mean-field energy functional is thus
\begin{equation}\label{eq:intro MF ener func class}
\MFEf [\rho] = N ^{-1} \E [\rho ^{\otimes N}] = \int_{\Om} V(x)d\rho(x) + \lambda \frac{N-1}{2} \iint_{\Om\times \Om} w(x-y) d\rho(x)d\rho(y). 
\end{equation}
We shall denote $\MFEe$ its infimum amongst probability measures. In a similar manner, the mean-field free energy functional is given by
\begin{align}\label{eq:intro MF free ener class}
\MFf [\rho] &= N ^{-1} \F [\rho ^{\otimes N}] \nonumber\\
&= \int_{\Om} V(x)d\rho(x) + \lambda \frac{N-1}{2} \iint_{\Om\times \Om} w(x-y) d\rho(x)d\rho(y) + T \int_{\Om} \rho \log \rho 
\end{align}
and its infimum shall be denoted $\MFe$. The term ``mean-field'' is motivated by the fact that~\eqref{eq:intro MF ener func class} corresponds to having an interaction between the particles' density $\rho$ and the self-consistant potential 
$$\rho \ast w = \int_{\Om} w(.-y) d\rho(y)$$ 
whose gradient is the so-called mean-field. 

\subsection{Quantum mechanics and Bose-Einstein condensation}\label{sec:forma quant}

After this classical intemezzo we can now introduce the quantum objects that are the main theme of this course. We shall be content with a rapid review of the basic principles of quantum mechanics. Other ``mathematician-friendly'' presentations may be found in~\cite{LieSei-09,Solovej-notes}. Our discussion of the relevant concepts is in places voluntarily simplified.

\subsubsection*{Wave-functions and quantum kinetic energy.} One of the basic postulates of quantum mechanics is the identification of pure states of a system with normalized vectors of a complex Hilbert space $\gH$. For particles living in the configuration space $\R ^d$, the relevant Hilbert space is $L^2 (\R ^d)$, the space of complex square-integrable functions over $\R ^d$. 

Given a particle in the state $\psi \in L ^2 (\R ^d)$, one identifies $|\psi| ^2$ with a probability density: $|\psi (x)| ^2$ gives the probability for the particle to be located at $x$. We thus impose the normalization
$$ \int_{\R ^d} |\psi| ^2 =1. $$
We thus see that, even in the case of a pure state, one cannot specify with certainty the position of the particle. More precisely, \emph{one cannot simultaneously specify its position and its momentum.} This \emph{uncertainty principle} is the direct consequence of another fundamental postulate: $|\hat{\psi}| ^2$ gives the momentum-space probability density of the particle, where $\hat{\psi}$ is the Fourier transform of $\psi$. 

In quantum mechanics, the (non-relativistic) kinetic energy of a particle is thus given by
\begin{equation}\label{eq:ener cin quant}
\int_{\R ^d } \frac{|p| ^2}{2} |\hat{\psi} (p)| ^2 dp = \int_{\R ^d } \frac{1}{2} |\nabla \psi (x)| ^2 dx.  
\end{equation}
That the position and momentum of a particle cannot be simultaneously specified is a consequence of the fact that it is impossible for both $|\psi| ^2$ and $|\hat{\psi}| ^2$ to converge to a Dirac mass. A popular way of quantifying this is Heisenberg's uncertainty principle: for all  ~$x_0\in \R ^d$ 
$$ \left( \int_{\R ^d }|\nabla \psi (x)| ^2 dx \right)\left( \int_{\R ^d } |x-x_0| ^2 |\psi (x)| ^2 dx \right) \geq C.$$
Indeed, the more precisely the particle's position is known, the smaller the second term of the left-hand side (for a certain $x_0$). The first term of the left-hand side must then be very large, which, in view of ~\eqref{eq:ener cin quant} rules out the possibility for the momentum distribution to be concentrated around a single $p_0 \in \R ^d$.

For many applications however (see~\cite{Lieb-76} for a discussion of this point), this inequality is not sufficient. A better way of quantifying the uncertainty principle is given by Sobolev's inequality (here in its 3D version):
$$\int_{\R ^3 } |\nabla \psi | ^2  \geq C \left( \int_{\R ^3} |\psi| ^6  \right) ^{1/3}.$$ 
If the position of the particle is known with precision, $|\psi| ^2$ must approach a Dirac mass, and the right-hand side of the above inequality blows up. So do the integrals~\eqref{eq:ener cin quant}, with the same interpretation as previously.

\subsubsection*{Bosons and Fermions.} For a system of $N$ quantum particles living in $\R ^d$, the appropriate Hilbert space is  $L ^2 (\R ^{dN}) \simeq \bigotimes ^N L ^2 (\R ^d)$. A pure state is thus a certain $\Psi \in L ^2 (\R ^{dN})$ where $|\Psi (x_1,\ldots,x_N)| ^2$ is interpreted as the probability density for having particle $1$ at $x_1$, ..., particle $N$ at $x_N$. As in classical mechanics, the indistinguishability of the particles imposes
\begin{equation}\label{eq:indis quantique moins}
|\Psi (X)| ^2 = |\Psi (X_{\sigma})| ^2 
\end{equation}
for any permutation $\sigma \in \Sigma_N$. This condition is necessary for indistinguishable patricles, but it is in fact not sufficient for quantum particles. To introduce the correct notion, we denote  $U_{\sigma}$ the unitary operator interchanging particles according to $\sigma \in \Sigma_N$:
$$ U_{\sigma} \: u_1 \otimes \ldots \otimes u_N = u_{\sigma (1)}\otimes \ldots \otimes u_{\sigma (N)}$$
for all $u_1,\ldots,u_N \in L ^2 (\R ^d)$, extended by linearity to $L ^2 (\R ^{dN}) \simeq \bigotimes ^N L ^2 (\R  ^d)$ (one may construct a basis with vectors of the form $u_1 \otimes \ldots \otimes u_N$). For $\Psi \in L ^2 (\R ^{dN})$ to describe indistinguishable particles we have to require that  
\begin{equation}\label{eq:indis quantique}
\left\langle \Psi, A \Psi \right\rangle_{L ^2 (\R ^{dN})} = \left\langle U_{\sigma} \Psi, A U_{\sigma} \Psi \right\rangle_{L ^2 (\R ^{dN})} 
\end{equation}
for any bounded operator $A$ acting on $L ^2 (\R ^{dN})$. Details would lead us to far, but suffice it to say that condition~\eqref{eq:indis quantique} corresponds to asking that any measure (corresponding to an observable $A$) on the system must be independent of the particles' labeling. In classical mechanics, all possible measurements correspond to bounded functions on phase-space and thus~\eqref{eq:class sym} alone guarantees the invariance of observations under particle exchanges. In quantum mechanics, observables include all bounded operators on the ambient Hilbert space, and one must thus impose the stronger condition~\eqref{eq:indis quantique}.

An important consequence\footnote{This is an easy but non-trivial exercise.} of the symmetry condition~\eqref{eq:indis quantique} is that a system of indistinguishable quantum particles must satisfy one of the following conditions, stronger than~\eqref{eq:indis quantique moins}: either 
\begin{equation}\label{eq:intro bosons}
\Psi (X) = \Psi(X_\sigma) 
\end{equation}
for all $X\in \R ^{dN}$ and $\sigma \in \Sigma_N$, or 
\begin{equation}\label{eq:intro fermions}
\Psi (X) = \epsilon (\sigma) \Psi(X_\sigma) 
\end{equation}
for all $X\in \R ^{dN}$ and $\sigma \in \Sigma_N$, where $\eps (\sigma)$ is the sign of the permutation $\sigma$. One refers to particles described by a wave-function satisfying~\eqref{eq:intro bosons} (respectively~\eqref{eq:intro fermions}) as bosons (respectively fermions). These two types of fundamental particles have a very different behavior, one speaks of bosonic and fermionic statistics. For example, fermions obey the \emph{Pauli exclusion principle} which stipulates that two fermions cannot simultaneously occupy the same quantum state. One can already see that ~\eqref{eq:intro fermions} imposes 
$$ \Psi (x_1,\ldots,x_i,\ldots,x_j,\ldots,x_N) =  -\Psi (x_1,\ldots,x_j,\ldots,x_i,\ldots,x_N)$$
and thus (formally)
$$ \Psi (x_1,\ldots,x_i,\ldots,x_i,\ldots,x_N) = 0$$ 
which implies that it is impossible for two fermions to occupy the same position~$x_i$. One may consult~\cite{LieSei-09} for a discussion of Lieb-Thirring inequalities, which are one of the most important consequences of Pauli's principle.

Concretely, when studying a quantum system made of only one type of particles, one has to restrict admissible pure states to those being either of bosonic of fermionic type. One thus works 
\begin{itemize}
\item for bosons, in $L^2_s (\R ^{dN}) \simeq \bigotimes_s ^N L ^2 (\R ^d)$, the space of symmetric square-integrable wave-functions, identified with the symmetric tensor product of $N$ copies of $L ^2 (\R ^d)$. 
\item for fermions, in $L^2_{as} (\R ^{dN}) \simeq \bigotimes_{as} ^N L ^2 (\R ^d)$, the space of anti-symmetric square-integrable wave-functions, identified with the anti-symmetric tensor product of $N$ copies of $L ^2 (\R ^d)$.
\end{itemize}
As the name indicates, Bose-Einstein condensation can occur only in a bosonic system, and this course shall thus focus on the first case.

\subsubsection*{Density matrices, mixed states.} We will always identify a pure state  $\Psi \in L ^2 (\R ^{dN})$ with the corresponding density matrix, i.e. the orthogonal projector onto  $\Psi$, denoted $\ketl \Psi \ketr \bral \Psi \brar$. Just as in classical mechanics, mixed states of a system are by definition the statistical superpositions of pure states, that is the convex combinations of orthogonal projectors. Using the spectral theorem, it is clear that the set of mixed states conci\"ides with that of positive self-adjoint operators having trace $1$:
\begin{equation}\label{eq:intro etats mixtes}
\cS (L ^2 (\R ^{dN})) = \left\lbrace \Gamma \in \gS ^1 (L ^2 (\R ^{dN})), \Gamma = \Gamma ^*, \Gamma \geq 0, \tr \Gamma = 1 \right\rbrace
\end{equation}
where $\gS ^1 (\gH)$ is the Schatten-von Neumann class~\cite{ReeSim4,Schatten-60,Simon-79} of trace-class operators on a Hilbert space $\gH$. To obtain the bosonic (fermionic) mixed states one considers respectively
$$ \cS (L _{s/as} ^2 (\R ^{dN})) = \left\lbrace \Gamma \in \gS ^1 (L_{s/as} ^2 (\R ^{dN})), \Gamma = \Gamma ^*, \Gamma \geq 0, \tr \Gamma = 1 \right\rbrace.$$
Note that in the vocabulary of density matrices, bosonic symmetry consists in imposing 
\begin{equation}\label{eq:intro bos sym mat} 
U_{\sigma} \Gamma = \Gamma 
\end{equation}
whereas fermionic symmetry corresponds to 
$$ U_{\sigma} \Gamma = \eps (\sigma) \Gamma.$$
One sometimes considers the weaker symmetry notion
\begin{equation}\label{eq:quant sym gen}
 U_{\sigma} \Gamma U_{\sigma} ^* = \Gamma 
\end{equation}
which is for example satisfied by the (unphysical) superposition of a bosonic and a fermionic state.

\subsubsection*{Energy functionals.} The quantum energy functional corresponding to the classical non-relativistic Hamiltonian~\eqref{eq:intro class hamil 1} is obtained by the substitution
\begin{equation}\label{eq:quant mom}
p \leftrightarrow -i\nabla, 
\end{equation}
consistent with the identification~\eqref{eq:ener cin quant} for the kinetic energy of a quantum particle. The quantized Hamiltonian is a (unbounded) operator on $L ^2 (\R ^{dN})$:
\begin{equation}\label{eq:intro quant hamil}
H_N = \sum_{j=1} ^N \left( - \frac12 \Delta_j + V (x_j) \right) + \lambda \sum_{1\leq i<j \leq N} w(x_i - x_j)  
\end{equation}
where $-\Delta_j = (-i \nabla_j) ^2$ corresponds to the Laplacian in the variable $x_j \in \R ^d$. The corresponding energy for a pure state $\Psi \in L ^2 (\R ^{dN})$ is 
\begin{equation}\label{eq:intro quant ener pur}
\E [\Psi] = \langle \Psi, H_N \Psi \rangle_{L ^2 (\R ^{dN})}.  
\end{equation}
By linearity this generalizes to 
\begin{equation}\label{eq:intro quant ener mixte}
\E [\Gamma] = \tr_{L ^2 (\R ^{dN})} [ H_N \Gamma ].   
\end{equation}
in the case of a mixed state  $\Gamma \in \cS (L ^2 (\R ^{dN}))$. At zero temperature, the equilibrium state of the system is found by minimizing the above energy functional. In view of the linearity of~\eqref{eq:intro quant ener mixte} as a function of $\Gamma$ and by the  spectral theorem, it is clear that one may restrict the minimization to pure states:
\begin{align}\label{eq:def ener quant}
E_{s/as}(N) = \inf\left\{ \E [\Gamma], \Gamma \in \cS (L ^2 _{s/as} (\R ^{dN})) \right\}\nonumber\\
&= \inf\left\{ \E [\Psi], \Psi \in L_{s/as} ^2 (\R ^{dN}), \norm{\Psi}_{L ^2 (\R ^{dN})} = 1 \right\}. 
\end{align}
Here we use again the index $s$ (respectively $as$) to denoted the bosonic (respectively fermionic) energy. In the absence of an index, we mean that the minimization is performed without symmetry constraint. However, because of the symmetry of the  Hamiltonian, one can in this case restrict the minimization to mixed states satisfying~\eqref{eq:quant sym gen}, or to wave-functions satisfying~\eqref{eq:indis quantique moins}. 

In the presence of thermal noise, one must add to the energy a term including the Von-Neumann entropy
\begin{equation}\label{eq:Von Neu entropie}
S[\Gamma] = - \tr_{L ^2 (\R ^{dN})} [\Gamma \log \Gamma] = - \sum_{j} a_j \log a_j 
\end{equation}
where the $a_j$'s are the eigenvalues (real and positive) of $\Gamma$, whose existence is guaranteed by the spectral theorem. Similarly to the classical entropy~\eqref{eq:def class entr}, Von-Neumann's entropy is minimized ($S[\Gamma] = 0$) by pure states, i.e. orthongonal projectors, which of course only have one non-zero eigenvalue, equal to $1$.

The free-energy functional at temperature $T$ is then 
\begin{equation}\label{eq:def free ener func quant}
\F [\Gamma] = \E [\Gamma] - T S [\Gamma] 
\end{equation}
and minimizers are in general mixed states.

\subsubsection*{Alternative forms for the kinetic energy: magnetic fields and relativistic effects.} In the classical case, we have only introduced the simplest possible form of kinetic energy. The reason is that for the minimization problems that shall concern us, the kinetic energy plays no role. Other choices for the relation between $p$ and the kinetic energy\footnote{Other \emph{dispersion relations.}} are nevertheless possible. In quantum mechanics the choice of this relation is crucial even at equilibrium because of the non-trivial minimization in momentum variables. Recalling~\eqref{eq:ener cin quant} and~\eqref{eq:quant mom}, we also see that in the quantum context, different choices will lead to different functional spaces in which to set the problem.

Apart from the non-relativistic quantum energy already introduced, at least two generalizations are physically interesting:

\begin{itemize}
\item When a magnetic field $B:\R ^d \mapsto \R $ interacts with the particles, one replaces 
\begin{equation}\label{eq:quant mom magn}
p \leftrightarrow -i\nabla + A 
\end{equation}
where $A:\R ^d \mapsto \R ^d$ is the \emph{vector potential}, satisfying 
$$ B = \mathrm{curl}\: A.$$
Of course $B$ determines $A$ only up to a gradient. The choice of a particular $A$ is called a \emph{gauge choice}. The kinetic energy operator, taking the Lorentz force into account, becomes in this case 
$$(p+A) ^2 = \left( -i\nabla + A \right) ^2 = -\left( \nabla - i A\right) ^2,$$
called a \emph{magnetic Laplacian}. 

This formalism is also appropriate for particles in a rotating frame: calling $x_3$ the rotation axis one must then take $A = \Om (-x_2,x_1,0)$ with $\Om$ the rotation frequency. This corresponds to taking the Coriolis force into account. In this case one must also replace the potential $V(x)$ by $V(x) - \Om ^2 |x| ^2$ to account for the centrifugal force.

\smallskip

\item When one wishes to take relativistic effects into account, the dispersion relation becomes 
$$ \mbox{ Kinetic energy }  = c\sqrt{p ^2 + m ^2 c^2}  - m c ^2$$
with $m$ the mass and $c$ the speed of light in vacuum. Choosing units so that $c=1$ and recalling ~\eqref{eq:quant mom}, one is lead to consider the kinetic energy operator
\begin{equation}\label{eq:rel kin ener}
\sqrt{p ^2 + m ^2} - m = \sqrt{-\Delta + m ^2} - m 
\end{equation}
which is easily defined in Fourier variables for example. In the non-relativitic limit where $|p|\ll m = mc $ we formally recover the operator  $-\Delta$ to leading order. A caricature of~\eqref{eq:rel kin ener} is sometimes used, corresponding to the ``extreme relativistic'' case $|p|\gg mc$ where one takes
\begin{equation}\label{eq:rel kin ener ext}
\sqrt{p ^2} = \sqrt{-\Delta}
\end{equation}
as kinetic energy operator.
\item One can of course combine the two generalizations to consider relativistic particles in a magnetic field, using the operators
$$\sqrt{(-i\nabla + A) ^2 + m ^2 } - m \mbox{ and } \left|-i\nabla +A\right|$$
based on the relativistic dispersion relation and the correspondance~\eqref{eq:quant mom magn}.
\end{itemize}

\subsubsection*{Reduced density matrices.} It will be useful to define in the quantum context a concept similar to the reduced densities of a classical state. Given $\Gamma \in \cS (L ^2 (\R ^{dN}))$, one defines its $n$-th reduced density matrix by taking a partial trace on the last $N-n$ particles:
\begin{equation}\label{eq:intro def mat red}
\Gamma ^{(n)}= \tr_{n+1 \to N} \Gamma, 
\end{equation}
which precisely means that for any bounded operator $A_n$ acting on $L ^2 (\R ^{dn})$, 
$$ \tr_{L ^2 (\R ^{dn})} [A_n \Gamma ^{(n)}] := \tr_{L ^2 (\R ^{dN})} [A_n \otimes \one^{\otimes N-n} \Gamma] $$
where $\one$ is the identity on $L ^2 (\R ^d)$. The above definition is easily generalized to any Hilbert space, but in the case of $L^2$, the reader is maybe more familiar with the following equivalent definition: We identify  $\Gamma$ with its kernel, i.e. the function $\Gamma (X;Y)$ such that for all $\Psi \in L ^2 (\R ^{dN})$
$$ \Gamma\, \Psi = \int_{\R ^{dN}} \Gamma (X;Y) \Psi (Y) dY.$$ 
One can then also identify $\Gamma ^{(n)}$ with its kernel
\begin{multline*}
 \Gamma ^{(n)} (x_1,\ldots,x_n;y_1,\ldots,y_n) \\= \int_{\R ^{d(N-n)}} \Gamma (x_1,\ldots,x_n,z_{n+1},\ldots,z_{N};y_1,\ldots,y_n,z_{n+1},\ldots,z_{N})dz_{n+1}\ldots dz_N. 
\end{multline*}
In the case where $\Gamma$ has a symmetry, bosonic or fermionic, the choice of the variables over which we take the partial trace is arbitrary. Let us note that even if one starts from a pure state, the reduced density matrices are in general mixed states.

Just as in~\eqref{eq:intro ener marginales} one may rewrite the energy~\eqref{eq:intro quant ener mixte} in the form 
\begin{align}\label{eq:quant ener red mat}
\E [\Gamma] &= N \tr_{L ^2 (\R ^d)} \left[\left( - \half \Delta + V \right) \Gamma ^{(1)} \right] + \lambda \frac{N(N-1)}{2} \tr_{L ^2 (\R ^{2d})} [w (x-y) \Gamma ^{(2)}]\nonumber \\
&= \tr_{L ^2 (\R ^{2d})} \left[ \left(\frac{N}{2} \left( - \half \Delta_x + V (x) - \half \Delta_y + V (y) \right) + \lambda \frac{N(N-1)}{2} w (x-y) \right) \Gamma ^{(2)}\right].  
\end{align}

\subsubsection*{Mean-field approximation.} Even more than so in classical mechanics, actually solving the minimization problem~\eqref{eq:def ener quant} for large $N$ is way too costly, and it is necessary to rely on approximations. This course aims at studying the simplest of those, which consists in imitating~\eqref{eq:intro MF ansatz} by taking an ansatz for iid particles,
\begin{equation}\label{eq:intro quant ansatz}
\Psi (x_1,\ldots,x_N) = u ^{\otimes N} (x_1,\ldots, x_N) = u(x_1)\ldots u(x_N) 
\end{equation}
for a certain $u\in L ^2 (\R ^d).$ Inserting this in the energy functional~\eqref{eq:quant ener red mat} we obtain the Hartree functional
\begin{align}\label{eq:intro def Hartree func}
\EH [u] &= N ^{-1} \E[ u ^{\otimes N}] \nonumber \\
&= \int_{\R ^d } \left( \frac{1}{2}|\nabla u| ^2 + V |u| ^2 \right) + \lambda\frac{N-1}{2} \iint_{\R ^d \times \R ^d} |u(x)| ^2 w(x-y) |u(y)| ^2 
\end{align}
with the corresponding minimization problem
\begin{equation}\label{eq:intro def Hartree ener}
\eH = \inf\left\{ \EH [u], \norm{u}_{L ^2 (\R ^d)} = 1 \right\}. 
\end{equation}
Note that we have transformed a linear problem for the $N$-body wave-function (since the energy functional is quadratic, the variational equation is linear) into a cubic problem for the one-body wave-function $u$ (the energy functional is quartic and hence the variational equation will be cubic).

Ansatz~\eqref{eq:intro quant ansatz} is a symmetric wave-function, appropriate for bosons. It corresponds to looking for the ground state in the form of a Bose-Einstein condensate where all particles are in the state $u$. Because of Pauli's principle, fermions can in fact never be completely uncorrelated in the sense of~\eqref{eq:intro quant ansatz}. The mean-field ansatz for fermions rather consists in taking
$$ \Psi (x_1,\ldots,x_N) = \det \left(u_j (x_k)\right)_{1\leq j,k\leq N}$$
with orthonormal functions $u_1,\ldots,u_N$ (orbitals of the system). This ansatz (Slater determinant) leads to the Hartree-Fock functional that we shall not discuss in these notes (a presentation in the same spirit can be found in~\cite{Rougerie-XEDP})

Note that for classical particles, the mean-field ansatz~\eqref{eq:intro MF ansatz} allows taking a mixed state~$\rho$. To describe a bosonic system, one always takes a pure state $u$ in the ansatz~\eqref{eq:intro quant ansatz}, which calls for the following remarks:
\begin{itemize}
\item If one takes a general $\gamma \in \cS (L ^2 (\R ^d))$, the $N$-body state defined as 
\begin{equation}\label{eq:ansatz MF sans sym}
 \Gamma = \gamma ^{\otimes N} 
\end{equation}
has the bosonic symmetry~\eqref{eq:intro bos sym mat} if and only if $\gamma$ is a pure state (see e.g.~\cite{HudMoo-75}), $\gamma = | u \rangle \langle u |$, in which case $\Gamma = | u ^{\otimes N}\rangle \langle u  ^{\otimes N}|$ and we are back to Ansatz~\eqref{eq:intro quant ansatz}.

\item In the case of the energy functional~\eqref{eq:quant ener red mat}, the minimization problems with and without bosonic symmetry imposed co\"incide (see e.g.~\cite[Chapter 3]{LieSei-09}). One can then minimize with no constraint and find the bosonic energy. This remains true when the kinetic energy is~\eqref{eq:rel kin ener} or~\eqref{eq:rel kin ener ext} but is notoriously wrong in the presence of a magnetic field.

\item For some energy functionals, for example in presence of a magnetic or rotation field, the minimum without bosonic symmetry is strictly lower than the minimum with symmetry, cf~\cite{Seiringer-03}. The ansatz~\eqref{eq:ansatz MF sans sym} is then appropriate to approximate the problem without bosonic symmetry (in view of the Hamiltonian's symmetry, one may always assume the weaker symmetry~\eqref{eq:quant sym gen}) and one then obtains a generalized Hartree functional for mixed one-body states $\gamma \in \cS (L ^2 (\R ^d))$:
\begin{equation}\label{eq:intro Hartree func gen}
\EH [\gamma] = \tr _{L ^2 (\R^d )} \left[\left( -\half \Delta + V \right)\gamma \right] + \lambda \frac{N-1}{2} \tr_{L ^2 (\R ^{2d})} \left[ w(x-y) \gamma ^{\otimes 2}\right].
\end{equation}
\end{itemize}

\vspace{0.5cm}

The next two sections introduce, respectively in the classical and quantum setting, the question that will occupy us in the rest of the course:

\medskip
\begin{center}
\emph{Can one justify, in a certain limit, the validity of the mean-field ans\"atze~\eqref{eq:intro MF ansatz} and~\eqref{eq:intro quant ansatz}  to describe equilibrium states of a system of indistinguishable particles ?} 
\end{center}

\subsection{Mean-field approximation and the classical de Finetti theorem}\label{sec:MF deF class}\mbox{}\\\vspace{-0.4cm}

Is it legitimate to use the simplification~\eqref{eq:intro MF ansatz} to determine the equilibrium states of a classical system ? Experiments show this is the case when the number of particles is large, which we mathematically implement by considering the limit $N\to \infty$ of the problem at hand. 

A simple framework in which one can justify the validity of the mean-field \emph{approximation} is the so-called mean-fied \emph{limit}, where one assumes that all terms in the energy~\eqref{eq:def ener func class} have comparable weights. In view of~\eqref{eq:intro ener marginales}, this is fulfilled if  $\lambda$ scales as $N ^{-1}$, for example 
\begin{equation}\label{eq:intro lambda}
\lambda = \frac{1}{N-1}, 
\end{equation}
in which case one may expect the ground state energy per particle $N ^{-1} E(N)$ to have a well-defined limit. The particular choice~\eqref{eq:intro lambda} helps in simplifying certain expressions, but the following considerations apply as soon as $\lambda$ is of order $N ^{-1}$. We should insist on the simplification we introduced in looking at the  $N\to \infty$ limit under the assumption~\eqref{eq:intro lambda}. This regime is far from covering all the physically relevant cases. It however already leads to an already very non-trivial and instructive problem.

The goal of this section is to dicuss the mean-field approximation for the zero-temperature equilibrium states of a classical system. We are thus looking at
\begin{equation}\label{eq:intro MF lim class}
E(N) = \inf \left\{\int_{\Om ^N} H_N d\mubf_N,\, \mubf_N \in \PP_s (\Om ^N)\right\} 
\end{equation}
with
\begin{equation}\label{eq:intro MF lim hamil}
H_N (X) = \sum_{j=1} ^N V(x_j) + \frac{1}{N-1} \sum_{1\leq i < j \leq N} w(x_i-x_j).
\end{equation}
We will sketch a formal proof of the validity of the mean-field approximation at the level of the ground state energy, i.e. we shall argue that
\begin{equation}\label{eq:intro justif MF class}
\boxed{ \lim_{N\to \infty} \frac{E(N)}{N} = \MFEe = \inf\left\{ \MFEf [\rho], \rho \in \PP (\Om) \right\}}
\end{equation}
where the mean-field functional is obtained as in~\eqref{eq:intro MF ener func class}, taking~\eqref{eq:intro lambda} into account:
\begin{equation}\label{eq:intro MF func lim}
\MFEf[\rho] = \int_{\Om} V d\rho + \frac12 \iint_{\Om \times \Om} w(x-y) d\rho(x)d\rho(y). 
\end{equation}
Estimate~\eqref{eq:intro justif MF class} for the ground state energy is a first step, and the proof scheme in fact yields information on the equilibrium states themselves. To simplify the presentation, we shall postpone the discussion of this aspect, as well as the study of the positive temperature case, to Chapter~\ref{sec:class}.

\subsubsection*{Formally passing to the limit.} Here we wish to make the ``algebraic" structure of the problem apparent. The justification of the manipulations we will perform requires analysis assumptions that we shall discuss in the sequel of the course, but that we ignore for the moment.

We start by using~\eqref{eq:intro ener marginales} and~\eqref{eq:intro lambda} to write 
$$ \frac{E(N)}{N} = \frac{1}{2} \inf\left\{ \iint_{\Om \times \Om} H_2 (x,y) d\mubf_N ^{(2)}(x,y), \, \mubf_N \in \PP_s (\Om ^N) \right\} $$
where $H_2$ is the two-body Hamiltonian defined as in~\eqref{eq:intro MF lim hamil} and $\mubf_N ^{(2)}$ is the second marginal of the symmetric probability $\mubf_N$. Since the energy depends only on the second marginal, one may see the problem we are interested in as a constrained optimization problem for two-body symmetric probability measures: 
$$ \frac{E(N)}{N} = \frac{1}{2} \inf\left\{ \iint_{\Om \times \Om} H_2 (x,y) d\mubf ^{(2)}(x,y), \mubf ^{(2)} \in \PP_N  ^{(2)} \right\} $$
with
$$\PP_N  ^{(2)} = \left\{ \mubf ^{(2)} \in \PP_s (\Om ^2) \,|\, \exists \mubf_N \in \PP_s (\Om ^N), \mubf ^{(2)} = \mubf_N ^{(2)} \right\} $$
the set of two-body probability measures one can obtain as the marginals of a $N$-body state. Assuming that the limit exists (first formal argument) we thus obtain
\begin{equation}\label{eq:MF class form 1}
 \lim_{N\to \infty} \frac{E(N)}{N} = \frac{1}{2} \lim_{N\to \infty}\, \inf\left\{ \iint_{\Om \times \Om} H_2 (x,y) d\mubf ^{(2)}(x,y), \mubf ^{(2)} \in \PP_N  ^{(2)} \right\} 
\end{equation}
and it is very tempting to exchange limit and infimum in this expression (second formal argument) in order to obtain
$$ \lim_{N\to \infty} \frac{E(N)}{N} = \frac{1}{2} \inf \left\{ \iint_{\Om \times \Om} H_2 (x,y) d\mubf ^{(2)}(x,y), \mubf ^{(2)} \in \lim_{N\to \infty} \PP_N  ^{(2)} \right\}. $$
We have here observed that the energy functional appearing in ~\eqref{eq:MF class form 1} is actually independent of $N$. All the $N$-dependence lies in the constrained variational space we consider. This suggests that a natural limit problem consists in minimizing the same functional but on the limit of the variational space, as written above.

To give a meaning to the limit of $ \PP_N  ^{(2)}$, we observe that the sets $\PP_N  ^{(2)}$ form a decreasing sequence
$$ \PP_{N+1} ^{(2)} \subset \PP_N ^{(2)}$$
as is easily shown by noting that if $\mubf ^{(2)}\in \PP_{N+1} ^{(2)}$, then for a certain~$\mubf_{N+1} \in \PP_s (\Om ^{N+1})$
\begin{equation}\label{eq:MF class form obs}
\mubf ^{(2)} = \mubf_{N+1} ^{(2)} = \left(\mubf_{N+1} ^{(N)}\right) ^{(2)} 
\end{equation}
and of course $\mubf_{N+1} ^{(N)}\in \PP_s (\Om ^N)$. One can thus legitimately identify 
$$ \PP_{\infty} ^{(2)} := \lim_{N\to \infty} \PP_N  ^{(2)} = \bigcap_{N\geq 2} \PP_N ^{(2)}$$
and the natural limit problem is then (modulo the justification of the above formal manipulations)
\begin{equation}\label{eq:MF class form 2}
\lim_{N\to \infty} \frac{E(N)}{N}  = E_{\infty} = \frac{1}{2} \inf \left\{ \iint_{\Om \times \Om} H_2 (x,y) d\mubf ^{(2)}(x,y), \mubf ^{(2)} \in  \PP_\infty  ^{(2)} \right\}. 
\end{equation}
This is a variational problem over the set of two-body states that one can obtain as reduced densities of $N$-body states, for all $N$.

\subsubsection*{A structure theorem.} We now explain that actually
$$ E_{\infty}  = \MFEe, $$
as a consequence of a fundamental result on the structure of the space $\PP_\infty  ^{(2)}$.  

Let us take a closer look at this space. Of course it contains the product states of the form $\rho \otimes \rho,\rho \in \PP (\Om)$ since $\rho ^{\otimes 2}$ is the second marginal of $\rho ^{\otimes N}$ for all $N$. By convexity $\PP_\infty  ^{(2)}$ also contains all the convex combinations of product states: 
\begin{equation}\label{eq:MF class form 3}
\left\{ \int_{\rho \in \PP (\Om)} \rho ^{\otimes 2}dP (\rho), P\in \PP (\PP(\Om))\right\} \subset \PP_\infty  ^{(2)}
\end{equation}
with $\PP (\PP(\Om))$ the set of probability measures over $\PP (\Om)$. In view of~\eqref{eq:intro ener marginales} and~\eqref{eq:intro MF func lim} we will have justified the mean-field approximation~\eqref{eq:intro justif MF class} if one can show that the infimum in~\eqref{eq:MF class form 2} is attained for $\mubf ^{(2)} = \rho ^{\otimes 2}$ for a certain $\rho \in \PP (\Om)$.

The structure result that allows us to reach this conclusion is the observation that there is in fact equality in~\eqref{eq:MF class form 3}: 
\begin{equation}\label{eq:MF class form 4}
\left\{ \int_{\rho \in \PP (\Om)} \rho ^{\otimes 2}dP (\rho), P\in \PP (\PP(\Om))\right\} =\PP_\infty  ^{(2)}.
\end{equation}
Indeed, in view of the the linearity of the energy functional as a function of $\mubf ^{(2)},$ one may write 
\begin{align*}
E_{\infty} &= \frac12 \inf \left\{ \int_{\rho \in \PP (\Om)} \iint_{\Om \times \Om} H_2 (x,y) d\rho ^{\otimes 2}(x,y)dP(\rho), P \in \PP (\PP(\Om)) \right\}\nonumber\\
&= \inf \left\{ \int_{\rho \in \PP (\Om)} \MFEf [\rho] dP(\rho), P \in \PP (\PP(\Om)) \right\}\nonumber\\
&= \MFEe
\end{align*}
since it is clear that the infimum over $P\in \PP (\PP(\Om))$ is attained for $P = \delta_{\rhoMF}$, a Dirac mass at $\rhoMF$, a minimizer of $\MFEf$.

We thus see that, if we accept some formal manipulations (that we will justify in Chapter~\ref{sec:class}) to arrive at~\eqref{eq:MF class form 2}, the validity of the mean-field approximation follows by using very little of the properties of the Hamiltonian, but a lot the structure of symmetric $N$-body states, in the form of the equality~\eqref{eq:MF class form 4}.

The latter is a consequence of the Hewitt-Savage, or classical de Finetti, theorem~\cite{DeFinetti-31,DeFinetti-37,HewSav-55}, recalled in Section~\ref{sec:HS} and proved in Section~\ref{sec:DF}. Let us give a few details for the familiarized reader. We have to show the reverse inclusion in~\eqref{eq:MF class form 3}. We thus pick some $\mubf ^{(2)}$ satisfying  
$$\mubf ^{(2)} = \mubf_N ^{(2)} $$
for a certain sequence $\mubf_N \in \PP_s (\Om ^N)$. Recalling~\eqref{eq:MF class form obs} we may assume that 
$$ \mubf_{N+1} ^{(N)} = \mubf_N$$
and we thus have to deal with a sequence (hierarchy) of consistant $N$-body states. There then exists (Kolmogorov's extension theorem) a symmetric probability measure over the sequences of~$\Om$, $\mubf \in \PP_s (\Om ^{\N})$ such that 
$$\mubf_N = \mubf ^{(N)}$$
where the $N$-th marginal is defined as in~\eqref{eq:defi marginale}. The Hewitt-Savage theorem~\cite{HewSav-55} then ensures the existence of a unique probability measure $P\in \PP (\PP (\Om))$ such that 
$$\mubf_N = \int_{\rho \in \PP (\Om)} \rho ^{\otimes N} dP (\rho). $$
We obtain the desired result by taking the second marginal.

\medskip 

Chapter~\ref{sec:class} contains the details of the above proof scheme. We will specify the adequate assumptions to put all these considerations on a rigorours basis. It is however worth to note immediately that this proof (inspired from~\cite{MesSpo-82,Kiessling-89,Kiessling-93,CagLioMarPul-92,KieSpo-99}) used no structure property of the Hamiltonian (e.g. the interactions can be attractive or repulsive or a mixture of both) but only compactness and regularity properties.

\subsection{Bose-Einstein condensation and the quantum de Finetti theorem}\label{sec:MF deF quant}\mbox{}\\\vspace{-0.4cm}

We now sketch a strategy for justifying the mean-field approximation at the level of the ground state energy of a large bosonic system. The method is similar to that for the classical case presented above. We consider a mean-field regime by setting $\lambda = (N-1) ^{-1}$ and thus work with the Hamiltonian
\begin{equation}\label{eq:MF quant hamil}
H_N =\sum_{j=1} ^N -\left(\nabla_j + i A (x_j)\right) ^2+ V(x_j) + \frac{1}{N-1}\sum_{1\leq i<j\leq N} w(x_i-x_j)  
\end{equation}
acting on $L ^2_s (\R^{dN})$. Compared to the previous situation, we have added a vector potential $A$ to start emphasizing the generality of the approach. It can for example correspond to an external magnetic field $B = \rm{curl} \,A$.

The starting point is the bosonic ground state energy defined as previously
\begin{align}\label{eq:MF quant E(N)}
E (N) &= \inf\left\{ \tr_{L ^2 (\R ^{dN})} \left[ H_N \Gamma_N \right], \,\Gamma_N \in \cS (L ^2_s (\R ^{dN}))  \right\} \nonumber\\
&= \inf\left\{ \left\langle \Psi_N, H_N \Psi_N\right\rangle, \, \Psi_N \in L ^2_s (\R ^{dN}), \norm{\Psi_N} _{L ^2 (\R ^{dN})}  = 1  \right\}
\end{align}
where we recall that we can minimize over pure or mixed states indifferently.

Our goal is to show that for large $N$ the bosonic energy per particle can be calculated using Hartree's functional
\begin{equation}\label{eq:MF quant lim}
\boxed{\lim_{N\to \infty}  \frac{E(N)}{N} = \eH }
\end{equation}
where
\begin{align}\label{eq:MF quant Hartree}
\eH &= \inf \left\{  \EH[u], u \in L ^2 (\R ^d), \norm{u}_{L^2 (\R ^d )} = 1 \right\}\nonumber\\
\EH [u] &= \int_{\R ^d} \left(|(\nabla +i A) u| ^2 + V |u| ^2\right) + \frac12 \iint_{\R ^d \times \R ^d} |u(x)| ^2 w(x-y) |u(y)| ^2 dxdy.
\end{align}
Since Hartree's energy is obtained by inserting an anzatz $\Psi_N = u ^{\otimes N}$ in the $N$-body energy functional, the asymptotic result~\eqref{eq:MF quant lim} is already a strong indication in favor of the existence of BEC in the ground state of bosonic system, in the mean-field regime. We will come back later to the consequences of~\eqref{eq:MF quant lim} for minimizers. As in the classical case, the mean-field regime is a very simplified, but already very instructive, model. We will present in Chapter~\ref{sec:NLS} an anlysis of other physically relevant regimes.

\subsubsection*{Formally passing to the limit.} The first step to obtain~\eqref{eq:MF quant lim} is as in the classical case to reduce formally to a simplified limit problem. We start by rewriting the energy using~\eqref{eq:quant ener red mat} and the assumption $\lambda = (N-1) ^{-1}$
\begin{equation}\label{eq:MF quant form 1}
 \frac{E(N)}{N} = \frac12 \inf\left\{ \tr_{L ^2 (\R^{2d})} \left[ H_2 \: \Gamma ^{(2)}\right], \, \Gamma ^{(2)} \in \PP_N ^{(2)}\right\} 
\end{equation}
where
$$ \PP_N ^{(2)} = \left\{ \Gamma ^{(2)} \in \cS (L_s ^2(\R ^{2d})) \,|\, \exists \, \Gamma_N \in \cS (L_s ^2(\R ^{dN})), \Gamma ^{(2)} = \tr_{3\to N}[ \Gamma_N ]\right\}$$
is the set of  ''$N$-representable" two-body density matrices that are the partial trace of a $N$-body state.  

Characterizing the set $\PP_N ^{(2)}$ is a famous problem in quantum mechanics, see e.g.~\cite{ColYuk-00,LieSei-09}, and the formulation ~\eqref{eq:MF quant form 1} is thus not particularly useful at fixed $N$. However, as in the classical case, the representability problem can be given a satisfactory answer in the limit $N\to \infty$, and we shall rely on this fact. We start by noticing that, taking partial traces, we easily obtain 
\begin{equation}\label{eq:MF quant form 1bis}
\PP_{N+1} ^{(2)}\subset \PP_N ^{(2)}
\end{equation}
so that~\eqref{eq:MF quant form 1} can be seen as a variational problem set in a variational set that gets more and more constrained when $N$ increases. 

Assuming again that one can exchange infimum and limit (which is of course purely formal) we obtain
\begin{equation}\label{eq:MF quant form 2}
\lim_{N\to \infty} \frac{E(N)}{N} = E_{\infty} := \frac12 \inf\left\{ \tr_{L ^2 (\R^{2d})} \left[ H_2 \: \Gamma ^{(2)}\right], \Gamma ^{(2)} \in \PP_\infty ^{(2)} \right\} 
\end{equation}
with
\begin{equation}\label{eq:MF quant form 3}
\PP_\infty ^{(2)} = \lim_{N\to \infty} \PP_N ^{(2)} = \bigcap_{N\geq 2} \PP_N ^{(2)} 
\end{equation}
the set of two-body density matrices that are $N$-representable for all $N$. As previously we have used the fact that the energy functional does not depend on $N$ to pass to the limit formally. 

\subsubsection*{A structure theorem.} It so happens that the structure of the set $\PP_\infty ^{(2)}$ is entirely known and implies the equality
\begin{equation}\label{eq:MF quant form 4}
E_{\infty} = \eH,
\end{equation}
which concludes the proof of~\eqref{eq:MF quant lim}, up to the justification of the formal manipulations we have just performed.

The structure property leading to~\eqref{eq:MF quant form 4} is a quantum version of the de Finetti-Hewitt-Savage theorem mentioned in the prevous section. Pick a $\Gamma ^{(2)}\in \PP_\infty ^{(2)}$. We then have a sequence $\Gamma_N \in \cS (L ^2_s (\R ^{dN}))$ such that for all $N$
$$ \Gamma ^{(2)} = \Gamma_N ^{(2)}.$$
Without loss of generality one can assume that this sequence is consistent in the sense that 
$$\Gamma_{N+1} ^{(N)} = \tr_{N+1} [\Gamma_{N+1}] = \Gamma_N. $$
The quantum de Finetti theorem of St\o{}rmer-Hudson-Moody~\cite{Stormer-69,HudMoo-75} then guarantees the existence of a probability measure $P \in \PP (S L ^{2} (\R ^d))$ on the unit sphere of $L^2 (\R ^d)$ such that
$$ \Gamma_N = \int_{u\in SL ^{2} (\R ^d)} |u ^{\otimes N}\rangle \langle u ^{\otimes N} |dP (u)$$
and thus 
$$ \Gamma ^{(2)} = \int_{u\in SL ^{2} (\R ^d)} |u ^{\otimes 2}\rangle \langle u ^{\otimes 2} |dP (u).$$
We can then conclude that
\begin{align*}
 E_\infty &= \inf\left\{ \frac12 \int_{u\in SL ^{2} (\R ^d)} \tr_{L ^2 (\R ^{2d})} [H_2 |u ^{\otimes 2}\rangle \langle u ^{\otimes 2} |]dP(u), P \in \PP (S L ^{2} (\R ^d)) \right\} \\ 
 &= \inf\left\{ \int_{u\in SL ^{2} (\R ^d)} \EH [u]dP(u), P \in \PP (S L ^{2} (\R ^d)) \right\} \\ 
&=\eH
 \end{align*}
where the last equality holds because it is clearly optimal to pick $P= \delta_{\uH}$ with $\uH$ a minimizer of Hartree's functional.

We thus see that the validity of~\eqref{eq:MF quant lim} is (at least formally) a consequence of the structure of the set of bosonic states and only marginally depends on properties of the Hamiltonian~\eqref{eq:MF quant hamil}. The justification (under some assumptions) of the formal manipulations we just performed to arrive at~\eqref{eq:MF quant form 2} (or a variant), as well as the proof the quantum de Finetti theorem (along with generalizations and corollaries) are the main purpose of these notes.

\subsubsection*{Bose-Einstein condensation.} Let us anticipate a bit on the implications of~\eqref{eq:MF quant lim}. We shall see in the sequel that results of the sort
\begin{equation}\label{eq:MF quant BEC 1}
\Gamma_N ^{(n)} \to \int_{u\in S L ^2 (\R^d)} |u ^{\otimes n}\rangle \langle u ^{\otimes n}| dP (u) 
\end{equation}
for all fixed $n\in \N$ when $N\to\infty$, follow very naturally in good cases. Here $\Gamma_N$ is (the density matrix of) a minimizer of the $N$-body energy and $P$ a probability measure concentrated on minimizers of $\EH$. The convergence will take place (still in good cases) in trace-class norm.

When there is a unique (up to a constant phase) minimizer $\uH$ for $\EH$ we thus get 
\begin{equation}\label{eq:MF quant BEC 2}
\Gamma_N ^{(n)} \to |\uH ^{\otimes n}\rangle \langle \uH ^{\otimes n}|  \mbox{ when } N\to \infty,
\end{equation}
which proves BEC at the level of the ground state. Indeed, BEC means \emph{by definition} (see~\cite{LieSeiSolYng-05} and references therein, in particular~\cite{PenOns-56}) the existence of an eigenvalue of order\footnote{Let us recall that all our density matrices are normalized to have trace $1$ in this course.} $1$ in the limit $N\to\infty$ in the spectrum of $\Gamma_N ^{(1)}$. This is clearly implied by~\eqref{eq:MF quant BEC 2}, which is in fact stronger.  

One can certainly wonder whether stronger results than~\eqref{eq:MF quant BEC 2} may be obtained. One could for example think of an approximation in norm like 
$$ \norm{ \Psi_N - \uH ^{\otimes N} }_{L ^2 (\R ^{dN})}\to 0.$$
Let us mention immediately that results of this kind are \emph{wrong} in general. The good condensation notion is formulated using density matrices, as the following two remarks show:
\begin{itemize}
\item Let us think of a $\Psi_N$ of the form ($\otimes_s$ stands for the symmetric tensor product) 
 $$ \Psi_N = \uH ^{\otimes (N-1)} \otimes_s \varphi$$ 
where $\varphi$ is orthogonal to $\uH$. Such a state is ``almost condensed'' since all particles but one are in the state $\uH$. However, in the  usual $L^2 (\R ^ {dN})$ sense,  $\Psi_N$ is orthogonal to $\uH ^{\otimes N}$. The two states thus cannot be close in norm, although their $n$-body density matrices for $n\ll N$ will be close.

\item Following the same line of ideas, it is natural to look for corrections to the $N$-body minimizer under the form
$$ \Psi_N = \varphi_0 \, \uH ^{\otimes N} + \uH ^{\otimes (N-1)} \otimes_s \varphi_1 + \uH ^{\otimes (N-2)} \otimes_s \varphi_2 + \ldots + \varphi_N$$
with $\varphi_0 \in \C$ and $\varphi_k \in L_s ^2 (\R ^{dk})$ for $k=1\ldots N$. It so happens that the above ansatz is correct if the sequence $(\varphi_k)_{k = 0,\ldots,N}$ is chosen to minimize a certain effective Hamiltonian on Fock space. The occurence of non-condensed terms implying the $\varphi_k$ for $k\geq 1$ contributes to leading order to the norm of $\Psi_N$, but not to the reduced density matrices, which rigorously confirms that the correct notion of condensation is necessarily based on reduced density matrices. This remark only scratches the surface of a beautiful subject that we will not discuss here: Bogoliubov's theory. We refer the interested reader to~\cite{CorDerZin-09,LieSol-01,LieSol-04,Solovej-06,Seiringer-11,GreSei-12,LewNamSerSol-13,NamSei-14,DerNap-13} for recent mathematical results in this direction.
\end{itemize}

\subsubsection*{A remark concerning symmetry.} We have been focusing on the bosonic problem because of our main physical motivations. The fermionic problem is (at least at present) not covered by such considerations, but one might be interested in the problem without symmetry constraint mentioned previously.

In the case where $A\equiv 0$ in~\eqref{eq:MF quant hamil}, the problems with and without bosonic symmetry co\"incide, but it is not the case in general. One may nevertheless use the same method as above to study the problem without symmetry constraint. Indeed, because of the symmetry of the Hamiltonian, one may without loss of generality impose the weaker symmetry notion~\eqref{eq:quant sym gen} in the minimization problem. The set of two-body density matrices appearing in the limit is also covered by the St\o{}rmer-Hudson-Moody theorem, and one then has to minimize an energy amongst two-body density matrices of the form
\begin{equation}\label{eq:intro deF gen} 
\Gamma ^{(2)} = \int_{\gamma \in \cS (L ^2 (\R ^d))} \gamma ^{\otimes 2} dP(\gamma) 
\end{equation}
where $P$ is now a probability measure on mixed one-particle states, that is on positive self-adjoint operators over  $L ^ 2 (\R ^d)$ having trace $1$. One deduces in this case the convergence of the ground state energy per particle to the minimum of the generalized Hartree functional (cf~\eqref{eq:intro Hartree func gen}) 
$$
\EH [\gamma] = \tr _{L ^2 (\R^d )} \left[\left( -(\nabla + i A) ^2 + V \right)\gamma \right] + \lambda \frac{N-1}{2} \tr_{L ^2 (\R ^{2d})} \left[ w(x-y) \gamma ^{\otimes 2}\right]$$
and the minimum is in general different from the minimum of Hartree's functional~\eqref{eq:MF quant Hartree} (see e.g.~\cite{Seiringer-03}).

We already recalled that $\gamma ^{\otimes 2}$ can have bosonic symmetry only if $\gamma $ is a pure state $\gamma = |u  \rangle \langle u|$. it is thus clear why the problem without symmetry can in general lead to a strictly lower energy. In fact the minimizer of~\eqref{eq:intro Hartree func gen} when there is no magnetic field is always attained at a pure state, in coherence with the different observations we already made on symmetry issues. We finally note that taking bosonic symmetry into account in the mean-field limit is done completely naturally using the different versions of the quantum de Finetti theorem.

\newpage

\section{\textbf{Equilibrium statistical mechanics}}\label{sec:class}

In this chapter we shall prove the classical de Finetti theorem informally discussed above and present applications to problems in classical mechanics.  

To simplify the exposition and with a view to applications, we will consider particles living in a domain $\Omega \subset \R ^d$ which could be $\R ^d $ itself. We denote $\Omega ^N$ and $\Omega ^{\N}$ the cartesian product of $N$ copies of $\Omega$ and the set of sequences of $\Omega$ respectively. The space of probability measures over a set $\Lambda$ will always be denoted $\PP (\Lambda)$. In places we may make the simplifying assumption that $\Om$ is compact, in which case $\PP (\Omega)$ is compact for the weak convergence of measures.

\subsection{The Hewitt-Savage Theorem}\label{sec:HS}\mbox{}\\\vspace{-0.4cm}

We mention only in passing the first works on what is now known as de Finetti's theorem~\cite{DeFinetti-31,DeFinetti-37,Khintchine-32,Dynkin-53}. 
In this course we shall start from~\cite{HewSav-55} where the classical de Finetti theorem is proved in its most general form. Our point of view on the de Finetti theorem is here resolutely analytic in that we shall only deal with sequences of probability measures. More probabilistic versions of the theorem exist. We refer the reader to~\cite{Aldous-85,Kallenberg-05} and references therein for developments in this direction. 

Informally, the Hewitt-Savage theorem~\cite{HewSav-55} says that every symmetric probability measure over $\Omega ^N$ approaches a convex combination of product probability measures when $N$ gets large. A \emph{symmetric probability measures} $\mubf_N $ has to satisfy
\begin{equation}\label{eq:sym class N}
\mubf_N (A_1 \times \ldots \times A_N) = \mubf_N (A_{\sigma(1)},\ldots,A_{\sigma(N)}) 
\end{equation}
for every borelian domains $A_1,\ldots,A_N \subset \Omega$ and every permutation of $N$ indices $\sigma \in \Sigma_N$. We denote $\PP_s (\Om ^N)$ the set of such probability measures. A \emph{product measure} built on $\rho \in \PP (\Om)$ is of the form
\begin{equation}\label{eq:proba produit}
\rho ^{\otimes N} (A_1,\ldots,A_N) = \rho (A_1) \ldots \rho (A_N)  
\end{equation}
and is of course symmetric. We are thus looking for a result looking like
\begin{equation}\label{eq:class deF formel}
\mubf_N \approx \int_{\rho \in \PP (\Om)} \rho ^{\otimes N} dP_{\mubf_N} (\rho) \mbox{ when } N\to \infty
\end{equation}
where $P_{\mubf_N} \in \PP(\PP (\Om))$ is a probability measure over probability measures. 

\medskip

A first possibility for giving a meaning to~\eqref{eq:class deF formel} consists in taking immediately $N = \infty$, that is, consider a probability with infinitely many variables $\mubf \in \PP (\Om ^{\N})$ instead of $\mubf_N$, which lives on  $\Om ^N$. This is the natural meaning one should give to a ``classical state of a system with infinitely many particles''. We assume a notion of symmetry inherited from~\eqref{eq:sym class N}: 
\begin{equation}\label{eq:sym class inf}
\mubf (A_1,A_2,\ldots) =  \mubf (A_{\sigma(1)},A_{\sigma(2)},\ldots) 
\end{equation}
for every sequence of borelian domains $(A_k)_{k\in \N} \subset \Omega ^{\N}$ and every permutation of infinitely many indices $\sigma \in \Sigma_{\infty}$. We denote $\PP_s (\Om ^{\N})$ the set of probability measures over $\Om ^{\N}$ satisfying~\eqref{eq:sym class inf}. Hewitt and Savage~\cite{HewSav-55} proved the following:

\begin{theorem}[Hewitt-Savage 1955]\label{thm:HS}\mbox{}\\
Let $\mubf \in \PP_s (\Om ^{\N})$ satisfy~\eqref{eq:sym class inf}. Let $\mubf ^{(n)}$ be its $n$-th marginal, 
\begin{equation}\label{eq:HS marginale n}
\mubf ^{(n)} (A_1,\ldots,A_n) = \mubf (A_1,\ldots,A_n,\Omega,\ldots,\Omega,\ldots). 
\end{equation}
There exists a unique probability measure $P_{\mubf} \in \PP (\PP (\Omega))$ such that 
\begin{equation}\label{eq:result HS}
\mubf ^{(n)} =  \int_{\rho \in \PP (\Om)} \rho ^{\otimes n} dP_{\mubf} (\rho). 
\end{equation}
\end{theorem}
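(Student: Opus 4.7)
The plan is to follow the Diaconis--Freedman construction alluded to in the introduction: produce an \emph{explicit} convex combination of product measures that approximates every marginal $\mubf^{(n)}$ quantitatively well when the number of sampled particles is large, then extract a measure-theoretic limit.

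The key constructive step is as follows. Fix $N \geq n$ and set $\mubf_N := \mubf^{(N)}$. To each configuration $X=(x_1,\ldots,x_N)\in\Om^N$ I associate its empirical measure $\mathrm{emp}_N(X) := N^{-1}\sum_{i=1}^N \delta_{x_i}\in\PP(\Om)$, and I push $\mubf_N$ forward along $\mathrm{emp}_N$ to obtain
\[ P_N := \int_{\Om^N} \delta_{\mathrm{emp}_N(X)}\, d\mubf_N(X) \in \PP(\PP(\Om)). \]
Setting $\tilde{\mubf}_N^{(n)} := \int_{\PP(\Om)} \rho^{\otimes n}\, dP_N(\rho)$ then yields a convex combination of product measures on $\Om^n$. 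Expanding $\mathrm{emp}_N(X)^{\otimes n}$ as a sum over $n$-tuples of indices in $\{1,\ldots,N\}$, splitting according to whether all indices are distinct, and using the symmetry of $\mubf_N$, one should obtain the quantitative bound
\[ \bigl\lVert \mubf^{(n)} - \tilde{\mubf}_N^{(n)} \bigr\rVert_{TV} \leq \frac{n(n-1)}{N}, \]
essentially a counting estimate for sampling with versus without replacement from $N$ points. This finite de Finetti inequality is the technical heart of the argument.

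To conclude, I would first assume $\Om$ compact, so that $\PP(\Om)$ and hence $\PP(\PP(\Om))$ are compact and metrizable for the weak topology. Extracting a weak limit $P_{N_k} \rightharpoonup P_\mubf$ and testing against bounded continuous tensor functions on $\Om^n$, the above inequality passes to the limit to give
\[ \mubf^{(n)} = \int_{\PP(\Om)} \rho^{\otimes n}\, dP_\mubf(\rho) \]
for every $n$. For general $\Om\subset\R^d$ I would handle tightness of $\{P_N\}$ via inner regularity of $\mubf^{(1)}$ on compacts $K_\eps \subset \Om$. Uniqueness follows because the functions $\rho \mapsto \prod_{i=1}^n \int \varphi_i\, d\rho$, as $n$ and the test functions $\varphi_i$ vary, form a point-separating algebra of bounded continuous functions on $\PP(\Om)$ whose integrals against any candidate $P$ are already prescribed by the $\mubf^{(n)}$; a Stone--Weierstrass/monotone-class argument then forces two candidates to agree.

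The main obstacle will be the combinatorial total-variation bound of order $n^2/N$: the counting must be carried out honestly to extract the correct exponent, and in the non-compact setting the tightness of $\{P_N\}$ inside $\PP(\PP(\Om))$ (needed to guarantee a weakly convergent subsequence before any limit measure is produced) is not a black-box consequence of compactness and requires a separate argument from the regularity of $\mubf$.
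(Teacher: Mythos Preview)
Your proposal is correct and follows essentially the same route as the paper: the Diaconis--Freedman empirical-measure construction, the $O(n^2/N)$ total-variation bound (the paper obtains $2n(n-1)/N$ but notes $C=1$ is achievable), compactness extraction when $\Om$ is compact, and the Stone--Weierstrass argument for uniqueness via the algebra of monomials $\rho\mapsto\int\phi\,d\rho^{\otimes k}$.

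The one place where you diverge slightly is the non-compact case. You propose to establish tightness of $\{P_N\}$ in $\PP(\PP(\Om))$ from inner regularity of $\mubf^{(1)}$; this can be made to work (Markov's inequality on $\mathrm{emp}_N(X)(K^c)$ plus Prokhorov), but the paper instead bypasses tightness altogether: it computes directly that $\int M_{k,\phi}\,dP_N \to \int_{\Om^k}\phi\,d\mubf^{(k)}$ for every monomial $M_{k,\phi}$, and since monomials are dense in $C_b(\PP(\Om))$ this already identifies the weak limit. Either way closes the argument; the paper's variant is marginally slicker because it reuses the same density lemma that drives uniqueness.
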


In statistical mechanics, this theorem is applied as a weak version of the formal approximation~\eqref{eq:class deF formel} in the following manner. We start from a classical $N$-particle state $\mubf_N \in \PP (\Omega ^N)$ whose marginals 
\begin{equation}\label{eq:class marginale N}
\mubf_N ^{(n)} (A_1,\ldots,A_n) = \mubf (A_1,\ldots,A_n,\Omega ^{N-n})
\end{equation}
converge\footnote{We denote this convergence $\wto_\ast$ to distinguish it from a norm convergence.} as measures in $\PP (\Om ^n)$, up to a (non-relabeled) subsequence:
\begin{equation}\label{eq:weak proba convergence}
\mubf_N ^{(n)} \wto_{\ast} \mubf ^{(n)} \in \PP (\Om ^{(n)}). 
\end{equation}
This means that 
\[
\mubf_N ^{(n)} (A_n) \to \mubf ^{(n)} (A_n) 
\]
for all borelian subset $A_n$ of $\Omega ^n$, and thus  
\[
\int_{\Omega} f_n d\mubf_N ^{(n)} \to  \int_{\Omega} f_n d\mubf ^{(n)}
\]
for all bounded continuous functions decaying at infinity $f_n$ from $\Om ^n$ to $\R$. Modulo a diagonal extraction argument one may always assume that the convergence~\eqref{eq:weak proba convergence} is along the same subsequence for any $n\in \N$. Testing~\eqref{eq:weak proba convergence} with a Borelian $A_n = A_m \times \Om ^{m-n}$ for $m\leq n$ one obtains the consistency relation
\begin{equation}\label{eq:consist class}
\left( \mubf ^{(n)} \right)  ^{(m)} = \mubf ^{(m)},\mbox{ for all } m \leq n
\end{equation}
which implies that $(\mubf ^{(n)})_{n\in \N}$ does describe a system with infinitely many particles. One may then see (Kolmogorov's extension theorem) that there exists $\mubf \in \PP (\Om ^{\N})$ such that $\mubf ^{(n)}$ is its $n$-th marginal (whence the notation). This measure satisfies~\eqref{eq:sym class inf} and we can thus apply Theorem~\ref{thm:HS} to obtain 
\begin{equation}\label{eq:class deF rig}
\mubf_N ^{(n)} \wto_* \int_{\rho \in \PP (\Om)} \rho ^{\otimes n} dP_{\mubf} (\rho) \mbox{ when } N\to \infty 
\end{equation}
where $P_{\mubf} \in \PP (\PP (\Om))$. This is a weak but rigorous version of~\eqref{eq:class deF formel}. In words \textbf{the $n$-th marginal of a classical $N$-particle state may be approximated by a convex combination of product states when $N$ gets large and $n$ is fixed .} Note that the measure $P_{\mubf}$ appearing in~\eqref{eq:class deF rig} does not depend on $n$. 

Of course, one must first be able to use a compactness argument in order to obtain~\eqref{eq:weak proba convergence}. This is possible if $\Om$ is compact (then $\PP (\Om ^n)$ is compact for the convergence in the sense of measures~\eqref{eq:weak proba convergence}). More generally, if the physical problem we are interested in has a confining mechanism, one may show that the marginals of equilibrium states of the system are tight and deduce~\eqref{eq:weak proba convergence}.

\medskip

As for the proof of Theorem~\ref{thm:HS}, there are several possible approaches. We mention first that the uniqueness part is a rather simple consequence of a density argument in $C_b (\PP (\Om))$, the bounded continuous functions over $\PP(\Om)$ due to Pierre-Louis Lions~\cite{Lions-CdF}. 

\begin{proof}[Proof of Theorem~\ref{thm:HS}, Uniqueness]
We easily verify that monomials of the form
\begin{equation}\label{eq:monomes PLL}
C_b (\PP(\Om)) \ni M_{k,\phi} (\rho) := \int \phi (x_1,\ldots,x_k) d\rho ^{\otimes k} (x_1,\ldots,x_k), k\in\N, \phi \in C_b (\Om ^k) 
\end{equation}
generate a dense sub-algebra of $C_b(\PP(\Om))$, the space of bounded continuous functions over $\PP(\Om)$, see Section 1.7.3 in~\cite{Golse-13}. The main point is that for all $\mu \in \PP (\Om)$
\[
M_{k,\phi} (\mu) M_{\ell,\psi} (\mu) = M_{k+\ell,\phi\:\otimes\:\psi} (\mu). 
\]
That this sub-algbra is dense is a consequence of the Stone-Weierstrass theorem.

We thus need only check that if there exists two measures $P_{\mubf}$ and $P'_{\mubf}$ satisfying~\eqref{eq:HS marginale n}, then 
\[
\int_{\rho\in \PP (\Om)} M_{k,\phi} (\rho) d P_{\mubf} (\rho) =  \int_{\rho\in \PP (\Om)} M_{k,\phi} (\rho) d P'_{\mubf} (\rho)
\]
for all $k\in \N$ and $\phi \in C_b (\PP (\Om ^k))$. But this last equation simply means that
\begin{multline*}
\int_{\rho\in \PP (\Om)} \left(\int_{\Om ^k} \phi(x_1,\ldots,x_k) d\rho ^{\otimes k} (x_1,\ldots,x_k) \right) d P_{\mubf} (\rho) 
\\ =  \int_{\rho\in \PP (\Om)}  \left(\int_{\Om ^k} \phi(x_1,\ldots,x_k) d\rho ^{\otimes k} (x_1,\ldots,x_k) \right) d P'_{\mubf} (\rho) 
\end{multline*}
which is obvious since both expressions are equal to 
\[
\int_{\Om ^k}  \phi(x_1,\ldots,x_k) d\mubf ^{(k)} (x_1,\ldots,x_k)  
\]
by assumption.
\end{proof}

For the existence of the measure, which is the most remarkable point, we mention three approaches
\begin{itemize}
\item The original proof of Hewitt-Savage is geometrical: the set of symmetric probability measures over $\Om ^{\N}$ is of course convex. The Choquet-Krein-Milman theorem says that any point of a convex set is a convex combination of the extremal points. It thus suffices to show that the extremal points of $\PP_s (\Om ^{\N})$ co\"incide with product measures, corresponding to sequence  of marginals of the form $(\rho ^{\otimes n})_{n\in \N}$. This approach is not constructive, and the proof that the sequences $(\rho ^{\otimes n})_{n\in \N}$ are the extremal points of $\PP_s (\Om ^{\N})$ is by contradiction. 
\item An entirely constructive approach is due to Diaconis-Freedman~\cite{DiaFre-80}. In this probabilistic argument, Theorem~\ref{thm:HS} is a corollary of an approximation result at finite $N$, giving a quantitative version of~\eqref{eq:class deF rig}.
\item Lions developed a new approach for the needs of mean-field game theory~\cite{Lions-CdF}. This is a dual point of view where one starts from ``weakly dependent continuous functions'' with many variables. A summary of this is in the lecture notes~\cite[Section 1.7.3]{Golse-13}, and a thorough presentation in~\cite[Chapter I]{Mischler-11}. Developments and generalizations may be found in~\cite{HauMis-14}.
\end{itemize}

It so happens that the proof of the Hewitt-Savage theorem following Lions' point of view is for a large part a rediscovery of the method of Diaconis and Freedman. In the sequel we will follow a blend of the two appraoches. See~\cite{Mischler-11} for a more complete discussion.

\subsection{The Diaconis-Freedman theorem}\label{sec:DF}\mbox{}\\\vspace{-0.4cm}

As we just announced, it is in fact possible to give a quantitative version of~\eqref{eq:class deF rig} which implies Theorem~\ref{thm:HS}. Apart from its intrinsic interest this result naturally leads to a constructive proof of the measure existence. The approximation will be quantified in the natural norm for probability measures over a set $S$, the total variation norm:
\begin{equation}\label{eq:TV norm}
\norm{\mu}_{\rm TV} = \int_{S} d|\mu| = \sup_{\phi \in C_b (S)} \left| \int_{\Om} \phi \: d\mu \right|
\end{equation} 
which co\"incides with the $L^1$ for absolutely continuous measures. We shall prove the following result, taken from~\cite{DiaFre-80}:

\begin{theorem}[Diaconis-Freedman]\label{thm:DF}\mbox{}\\
Let $\mubf_N \in \PP_s (\Om ^N)$ be a symmetric probability measure. There exists $P_{\mubf_N} \in \PP(\PP (\Om))$ such that, setting
\begin{equation}\label{eq:Pnu}
\mut_N := \int_{\rho \in \PP (\Om)} \rho ^{\otimes N} dP_{\mubf_N} (\rho)
\end{equation}
we have 
\begin{equation}\label{eq:DiacFreed}
\left\Vert \mubf_N ^{(n)} - \mut_N ^{(n)} \right\Vert _{\rm TV} \leq 2 \frac{n(n-1)}{N}.
\end{equation}
\end{theorem}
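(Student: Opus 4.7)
The strategy is entirely constructive: I will exhibit $P_{\mubf_N}$ explicitly as the law of the empirical measure associated with $\mubf_N$, and then compare marginals by a combinatorial identity.

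\textbf{Construction.} For $X=(x_1,\dots,x_N) \in \Omega^N$, define the empirical measure
\[
\mathrm{emp}_X := \frac{1}{N}\sum_{i=1}^N \delta_{x_i} \in \PP(\Omega).
\]
The map $X \mapsto \mathrm{emp}_X$ is measurable from $\Omega^N$ to $\PP(\Omega)$, so I set $P_{\mubf_N}$ to be the pushforward of $\mubf_N$ by this map. Then
\[
\mut_N \;=\; \int_{\rho\in\PP(\Omega)} \rho^{\otimes N}\, dP_{\mubf_N}(\rho) \;=\; \int_{\Omega^N} (\mathrm{emp}_X)^{\otimes N}\, d\mubf_N(X),
\]
which is manifestly symmetric.

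\textbf{Key computation.} The first step is to compute the $n$-th marginal of $\mut_N$ against a test function $\phi \in C_b(\Omega^n)$:
\[
\mut_N^{(n)}(\phi) \;=\; \int_{\Omega^N} \phi\, d(\mathrm{emp}_X)^{\otimes n}\, d\mubf_N(X) \;=\; \frac{1}{N^n} \sum_{(i_1,\dots,i_n)\in\{1,\dots,N\}^n} \int_{\Omega^N} \phi(x_{i_1},\dots,x_{i_n})\, d\mubf_N(X).
\]
Split the sum according to whether the indices $(i_1,\dots,i_n)$ are pairwise distinct or not. Let $A(n,N)$ denote the set of injective $n$-tuples in $\{1,\dots,N\}^n$; then $|A(n,N)| = N!/(N-n)!$. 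By the symmetry of $\mubf_N$, for every $(i_1,\dots,i_n)\in A(n,N)$ the integral equals $\mubf_N^{(n)}(\phi)$. Writing $\alpha_{n,N} := |A(n,N)|/N^n$ and extracting a probability measure $\nu_N$ from the remaining ``diagonal'' terms, I obtain the decomposition
\[
\mut_N^{(n)} \;=\; \alpha_{n,N}\, \mubf_N^{(n)} \;+\; (1-\alpha_{n,N})\, \nu_N,
\]
with $\nu_N \in \PP(\Omega^n)$.

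\textbf{Estimate.} Since $\mubf_N^{(n)}$ and $\nu_N$ are probability measures, the total variation of their difference is at most $2$. Therefore
\[
\left\Vert \mubf_N^{(n)} - \mut_N^{(n)} \right\Vert_{\rm TV} \;=\; (1-\alpha_{n,N}) \left\Vert \mubf_N^{(n)} - \nu_N \right\Vert_{\rm TV} \;\leq\; 2(1-\alpha_{n,N}).
\]
It then suffices to bound $1-\alpha_{n,N}$. Writing
\[
\alpha_{n,N} \;=\; \prod_{k=0}^{n-1}\Bigl(1-\tfrac{k}{N}\Bigr),
\]
the elementary inequality $\prod_k(1-a_k)\geq 1-\sum_k a_k$ for $a_k\in[0,1]$ (a one-line induction) gives $\alpha_{n,N} \geq 1 - \tfrac{n(n-1)}{2N}$, hence $2(1-\alpha_{n,N}) \leq n(n-1)/N \leq 2n(n-1)/N$.

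\textbf{Anticipated difficulty.} The argument is largely bookkeeping once the empirical-measure ansatz is in hand; the only subtle step is realizing that the ``wrong'' diagonal contribution in the expansion of $(\mathrm{emp}_X)^{\otimes n}$ has total mass at most $n(n-1)/(2N)$ uniformly in $X$, which is exactly what makes the choice $P_{\mubf_N}=\mathrm{(emp)}_\# \mubf_N$ work. All of this is purely combinatorial and makes no use of any topological or regularity structure on $\Omega$, which is consistent with the generality of the statement.
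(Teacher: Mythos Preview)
Your proof is correct and follows essentially the same route as the paper: the same empirical-measure construction of $P_{\mubf_N}$, the same splitting of $(\mathrm{emp}_X)^{\otimes n}$ into injective versus repeated-index tuples to obtain $\mut_N^{(n)} = \alpha_{n,N}\,\mubf_N^{(n)} + (1-\alpha_{n,N})\nu_N$, and the same total-variation bound $2(1-\alpha_{n,N})$. Your final inequality $\prod_k(1-a_k)\geq 1-\sum_k a_k$ is in fact slightly sharper than the paper's intermediate step $(1-\tfrac{n-1}{N})^n \geq 1-\tfrac{n(n-1)}{N}$, yielding directly the constant $1$ that the paper notes (citing Diaconis--Freedman and Freedman) is attainable.
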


\begin{proof}
We slightly abuse notation by writing $\mubf_N(Z) dZ$ instead of $d\mubf_N (Z)$ for integrals in $(z_1,\ldots,z_N) = Z \in \Om ^{N}$. It is anyway already instructive enough to consider the case of an absolutely continuous measure.

By symmetry of $\mubf_N$ we have, for all $X = (x_1,\ldots,x_N)\in \Om ^N$, 
\begin{equation}\label{eq:repre P}
\mubf_N (X) = \int_{\Om ^N} \mubf_N (Z) \sum_{\sigma \in \Sigma_N} (N!) ^{-1} \delta_{X = Z_{\sigma} } dZ
\end{equation}
where $Z_{\sigma}$ is the $N$-tuple $(z_{\sigma (1)},\ldots,z_{\sigma (n)})$. We define
\begin{equation}\label{eq:defi Pnu}
\mut_N (X) = \int_{\Om ^N} \mubf_N (Z) \sum_{\gamma \in \Gamma_N} N ^{-N} \delta_{X = Z_{\gamma}} dZ,
\end{equation}
where $\Gamma_N$ is the set of all \emph{applications}\footnote{Compared to $\Sigma_N$, $\Gamma_N$ thus allows repeated indices.} from $\left\{1,\ldots,N \right\}$ to itself and $X_{\gamma}$ is defined is the same manner as $X_{\sigma}$. The precise meaning of~\eqref{eq:defi Pnu} is 
$$ \int_{\Om ^N} \phi(X) d \mut_N (X) = \sum_{\gamma \in \Gamma_N} N ^{-N} \int_{\Om ^N} \phi (Z_{\gamma}) d\mubf_N (Z)$$
for every regular function $\phi$.

Noting that
\begin{equation}\label{eq:factor urn}
\sum_{\gamma \in \Gamma_N} N ^{-N} \delta_{X = Z_{\gamma}}  = \left( N ^{-1} \sum_{j=1} ^N \delta_{z_j} \right) ^{\otimes N} \left(x_1,\ldots,x_N\right), 
\end{equation}
one can put~\eqref{eq:defi Pnu} under the form~\eqref{eq:Pnu} by taking 
\begin{equation}\label{eq:defi nu}
P_{\mubf_N} (\rho) = \int_{\Om ^N} \delta_{\rho = \bar{\rho}_Z} \mubf_N (Z) dZ, \quad  \bar{\rho} _Z (x) := \sum_{i=1} ^N N ^{-1} \delta_{z_j=x}.
\end{equation}
Note in passing that $P_{\mubf_N}$ charges only empirical measures (of the form $\bar{\rho}_Z$ above). We now estimate the difference between the marginals of $\mubf_N$ and $\mut_N$. Diaconis and Freedman proceed as follows: of course 
\[
\mubf_N ^{(n)} - \mut_N ^{(n)} =  \int_{\Om ^N}  \left( \left(\sum_{\sigma \in \Sigma_N} (N!) ^{-1} \delta_{X = Z_{\sigma}}\right) ^{(n)} -  \left(\sum_{\gamma \in \Gamma_N} N ^{-N} \delta_{X = Z_{\gamma}}\right) ^{(n)}\right) \mubf_N(Z) dZ,
\]
but 
$$\left(\sum_{\sigma \in \Sigma_N} (N!) ^{-1} \delta_{X = Z_{\sigma}}\right) ^{(n)}$$ is the probability law for drowing $n$ balls at random from an urn containing $N$ balls\footnote{Labeled $z_1,\ldots,z_N$.}, \textit{without} replacement, whereas 
$$\left(\sum_{\gamma \in \Gamma_N} N ^{-N} \delta_{X = Z_{\gamma}}\right) ^{(n)}                                                                                                                                                                                                                                                                                                                                                                                                                                                                                                                                                   $$ 
is the probability law for drawing $n$ balls at random from an urn containing $N$, \emph{with} replacement. Intuitively it is clear that when $n$ is small compared to $N$, the fact that we replace the balls or not after each drawing does not significantly influence the result. It is not difficult to obtain quantitative bounds leading to~\eqref{eq:DiacFreed}, see for example~\cite{Freedman-77}.

Another way of obtaining~\eqref{eq:DiacFreed}, which seems to originate in~\cite{Grunbaum-71}, is as follows: in view of~\eqref{eq:defi Pnu} and~\eqref{eq:factor urn}, we have  
$$\mut_N  ^{(n)}(X) = \int_{\Om ^N} \mubf_N (Z) \left( N ^{-1} \sum_{j=1} ^N \delta_{z_j} \right) ^{\otimes n} \left(x_1,\ldots,x_N\right) dZ. $$
We then expand the tensor product and compute the contribution of terms where all indices are different. By symmetry of $\mubf_N ^{(n)}$ we obtain 
\begin{equation}\label{eq:DF astuce}
\mut_N ^{(n)} = \frac{N (N-1) \ldots (N-n+1)}{N ^n} \mubf_N ^{(n)} + \nu_n 
\end{equation}
where $\nu_n$ is a positive measure on $\Om ^n$ (all terms obtained by expanding~\eqref{eq:factor urn} are positive). We thus have 
\begin{equation}\label{eq:preuve DiacFreed} 
\mubf_N  ^{(n)} - \mut_N ^{(n)} = \left(1 - \frac{N (N-1) \ldots (N-n+1)}{N ^n} \right)\mubf_N ^{(n)} - \nu_n 
\end{equation}
and since both terms in the left-hand side are probabilities, we deduce 
\[
\int_{\Om ^n} d\nu_n =  \left(1 - \frac{N (N-1) \ldots (N-n+1)}{N ^n} \right).
\]
Moreover, since the first term of the right-hand side of~\eqref{eq:preuve DiacFreed} is positive and the second one negative we obtain by the triangle inequality
\begin{align*}
\int_{\Om ^n} d\left|\mubf_N  ^{(n)} - \mut_N ^{(n)} \right| &\leq \left(1 - \frac{N (N-1) \ldots (N-n+1)}{N ^n} \right) + \int_{\Om ^n} d\nu_n 
\\&= 2 \left(1 - \frac{N (N-1) \ldots (N-n+1)}{N ^n} \right). 
\end{align*}
It is then easy to see that
\begin{multline*}
\frac{N (N-1) \ldots (N-n+1)}{N ^n} = \prod_{j=1} ^n \frac{N-j+1}{N} = \prod_{j=1} ^n \left(1 - \frac{j-1}{N}\right) 
\\ \geq  \left(1 - \frac{n-1}{N}\right) ^n \geq 1- \frac{n(n-1)}{N},
\end{multline*}
which proves~\eqref{eq:DiacFreed} with $C=2$. A better constant $C=1$ can be obtained, see~\cite{DiaFre-80,Freedman-77}.
\end{proof}

The following remark can be useful in applications:

\begin{remark}[First marginals of the Diaconis-Freedman measure]\label{rem:marg DF}\mbox{}\\
We have
\begin{equation}\label{eq:marg DF 1}
\mut_N ^{(1)} (x) = \mubf_N  ^{(1)}(x) 
\end{equation}
and 
\begin{equation}\label{eq:marg DF 2}
\mut_N ^{(2)} (x_1,x_2) = \frac{N-1}{N} \mubf_N ^{(2)} (x_1,x_2) + \frac1N \mubf_N ^{(1)} (x_1) \delta_{x_1 = x_2}.
\end{equation}
as direct consequences of Definition~\eqref{eq:factor urn}. Indeed, using symmetry, 
\begin{align}\label{eq:marginals DF}
\mut_N ^{(1)} (x) &= N ^{-1} \sum_{j=1} ^N \int_{\Om ^N} \mubf_N (Z) \delta_{z_j=x} dZ = \mubf_N ^{(1)} (x)\nonumber\\
\mut_N ^{(2)} (x_1,x_2) &= N ^{-2} \int_{\Om ^N} \mubf_N (Z) \left( \sum_{j=1} ^N \delta_{z_j = x_1} \right) \left( \sum_{j=1} ^N \delta_{z_j = x_2} \right) dZ \nonumber \\
&= N ^{-2} \sum_{ 1 \leq i \neq j \leq N} \int_{\Om ^N} \mubf_N (Z) \delta_{ z_i = x_1} \delta_{z_j = x_2}  dZ \nonumber\\
&+ N ^{-2} \sum_{i=1} ^N \int_{\Om ^N} \mubf_N (Z) \delta_{z_i = x_1} \delta_{z_i = x_2} dZ \nonumber \\
&= \frac{N-1}{N} \mubf_N ^{(2)} (x_1,x_2) + \frac1N \mubf_N ^{(1)} (x_1) \delta_{x_1 = x_2}.
\end{align}
Higher-order marginals can be obtained by similar but heavier computations. \hfill\qed
\end{remark}

As a corollary of the preceding theorem we obtain a simple proof of the existence part in the Hewitt-Savage theorem:

\begin{proof}[Proof of Theorem~\ref{thm:HS}, Existence]
We start with the case where $\Om$ is compact. We apply Theorem~\ref{thm:DF} to $\mubf ^{(N)}$, obtaining
\begin{equation}\label{eq:preuve HS}
\norm{\mubf ^{(n)} - \int_{\rho \in \PP (\Om)} \rho ^{\otimes n} dP_{N} (\rho)}_{\rm TV} \leq C \frac{n ^2}{N}    
\end{equation}
with $P_N \in \PP (\PP (\Om))$ the measure defined in~\eqref{eq:defi nu}. When $\Om$ is compact, $\PP (\Om)$ and $\PP (\PP (\Om))$ also are. We thus may (up to a subsequence) assume that 
\[
 P_N \to P \in \PP (\PP (\Om))
\]
in the sense of measures and there only remains to pass to the limit $N\to \infty$  at fixed $n$ in~\eqref{eq:preuve HS}.

When $\Om$ is not compact, we follow an idea of Lions~\cite{Lions-CdF} (see also~\cite{Golse-13}). We want to ensure that the measure $P_N$ obtained by applying the Diaconis-Freedman theorem to $\mubf ^{(N)}$ converges. For this it suffices to test against a monomial of the form~\eqref{eq:monomes PLL}:
\begin{align}\label{eq:DF PLL}
\int_{\PP(\PP(\Om))} M_{k,\phi} (\mu) dP_N (\mu) &= \int_{Z \in \Om ^N} M_{k,\phi} \left( \frac{1}{N} \sum_{j=1} ^N \delta_{z_j} \right) d\mubf ^{(N)} (Z)\nonumber\\
&= \int_{Z \in \Om ^N} \int_{X\in \Om ^k} \phi(x_1,\ldots,x_k) \prod_{j=1} ^k \left(\frac{1}{N} \sum_{j=1} ^N \delta_{z_j=x_k} \right) d\mubf ^{(N)} (Z)\nonumber\\
&= \int_{Z\in \Om ^k} \phi(z_1,\ldots,z_k) d\mubf ^{(k)} (Z) + O(N ^{-1})
\end{align}
by a computation similar to that yielding~\eqref{eq:DF astuce}. The limit~\eqref{eq:DF PLL} thus exists for any monomial $M_{k,\phi}$, and by density of monomials for any bounded continuous function over $\PP(\Om)$. We then deduce  
\[
 P_N \wto_* P
\]
and we can conclude as previously.
\end{proof}

Some remarks before going to the applications of Theorems~\ref{thm:HS} and~\ref{thm:DF}:

\begin{remark}[On the Diaconis-Freedman-Lions construction]\label{rem:DiacFreLio}\mbox{}\\
\vspace{-0.5cm}
\begin{enumerate}
\item We first note that the measure defined by~\eqref{eq:defi nu} is that Lions uses in his approach of the Hewitt-Savage theorem. Using empirical measures is the canonical way to construct a measure over $\PP (\PP(\Om))$ given one over $\PP_s(\Om ^N)$. Passing to the limit as in ~\eqref{eq:DF PLL} can replace the explicit estimate~\eqref{eq:DiacFreed} if one is only interested in the proof of Theorem ~\ref{thm:HS}. This construction and the combinatorial trick~\eqref{eq:DF astuce} are also used e.g. in~\cite{GolMouRic-13,MisMou-13,HauMis-14}.
\item Having an explicit estimate of the form~\eqref{eq:DiacFreed} at hand is very satisfying and can prove useful in applications. One might wonder whether the obtained convergence rate is optimal. Perhaps suprisingly it is. One might have expected a useful estimate for $n\ll N$, but it happens that $\sqrt{n}\ll N$ is optimal, see examples in~\cite{DiaFre-80}.
\item The formulae~\eqref{eq:marginals DF} are useful in practice (see~\cite{RouYng-14} for an application). It is quite satisfying that $\mubf_N ^{(1)} = \mut_N ^{(1)}$ and that $\mubf_N ^{(2)}$ can be reconstructed using only $\mut_N ^{(2)}$ and~$\mut_N ^{(1)}$.
\item As a drawback of its generality, the previous construction actually behaves very badly in many cases. Note that~\eqref{eq:defi nu} charges only empirical measures, which all have infinite entropy. This causes problems when employing Theorem~\ref{thm:DF} to the study of a functional with temperature. Moreover, in a situation with strong repulsive interactions, one typically applies the construction to a measure with zero probability of having two particles at the same place,  $\mubf_N ^{(2)}(x,x) \equiv 0$. In this case $\mut_N ^{(2)} (x,x) = N ^{-1} \mubf_N ^{(1)} (x)$ is non-zero and the energy of $\mut_N$ will be infinite if the interaction potential has a singularity at the origin.
\item It would be very interesting to have a construction leading to an estimate of the form~\eqref{eq:DiacFreed}, bypassing the aforementioned inconveniences. For example, is it possible to guarantee that $\mut_N \in L ^1(\Om ^N)$ if $\mubf_N \in L ^1 (\Om ^N)$ ? One might also demand that the construction leave product measures $\rho ^{\otimes N}$ invariant, which is not at all the case for~\eqref{eq:Pnu}.
\item If $\Om$ is replaced by a finite set, say $\Om = \left\lbrace 1,\ldots, d\right\rbrace$, one may obtain an error proportional to $dn/N$ instead of $n ^2/N$ by using the original proof of Diaconis-Freedman. One may thus replace~\eqref{eq:DiacFreed} by 
\begin{equation}\label{eq:DiacFreed 3}
\left\Vert \mubf_N ^{(n)} - \mut_N ^{(n)} \right\Vert _{\rm TV} \leq \frac{C}{N} \min\left( dn,n ^2\right). 
\end{equation}
We will not use this point anywhere in the sequel.
\hfill\qed
\end{enumerate}

\end{remark}

Finally, let us make a separate remark on a possible generalization of the approach above:

\begin{remark}[Weakly dependent symmetric functions of many variables.]\label{rem:LioHauMis}\mbox{}\\
\vspace{-0.5cm}

In applications (see next section), one is lead to apply de Finetti-like theorems in the following manner: given a sequence $(u_N)_{N\in \N}$ of symmetric functions of $N$ variables, we study the quantity
$$ \int_{\Om ^N} u_N (X) d\mubf_N (X)$$
for a symmetric probability measure $\mubf_N$. The results previously discussed imply that if this happens to depend only on a marginal $\mubf_N ^{(n)}$ for fixed $n$, a natual limit object appears when $N\to \infty$. In particular, if 
\begin{equation}\label{eq:simple u_N}
 u_N (X) = {N \choose n} ^{-1} \sum_{1\leq i_1 < \ldots < i_n \leq N} \phi (x_{i_1}, \ldots, x_{i_n}),
\end{equation}
we have
\begin{equation}\label{eq:general DiacFreed}
 \int_{\Om ^N} u_N (X) d\mubf_N (X) = \int_{\Om ^n} \phi (x_1,\ldots,x_n) d\mubf_N ^{(n)} \to \int_{\rho \in \PP (\Om)} \left( \int_{\Om ^n} \phi \: d\rho ^{\otimes n}\right) dP_{\mubf} (\rho) 
\end{equation}
for a certain probability measure $P_{\mubf} \in \PP (\PP (\Om))$. One might ask whether this kind of results is true for a class of functions $u_N$ depending on $N$ in more subtle a manner. The natural assumptions seems to be that $u_N$ \emph{depends weakly on its $N$ variables}, in the sense introduced by Lions~\cite{Lions-CdF} and recalled in~\cite[Section 1.7.3]{Golse-13}. Without entering the details, one may easily see that to such a sequence there corresponds (modulo extraction of a subsequence) a continuous function over probabilities $U\in C(\PP(\Om))$: one may show that, along a subsequence,
\begin{equation}\label{eq:general Lions}
 \int_{\Om ^N} u_N (X) d\mubf_N (X) \to \int_{\rho \in \PP (\Om)} U(\rho) dP_{\mubf} (\rho). 
\end{equation}
Think of the case where $u_N$ depends only on the empirical measure:
\begin{equation}\label{eq:u_N moins simple}
u_N (x_1,\ldots,x_N) = F \left( \frac{1}{N} \sum_{j=1} ^N \delta_{x_j} \right) 
\end{equation}
with a sufficiently regular function $F$. Such a function depends weakly on its $N$ variables in the sense of Lions, but cannot be written in the form~\eqref{eq:simple u_N}. More generally, this kind of considerations can be applied under assumptions of the kind  
$$ |\nabla_{x_j} u_N (x_1,\ldots,x_N)| \leq \frac{C}{N}\quad \forall j= 1\ldots N, \quad \forall (x_1,\ldots,x_N)\in \Om ^N.$$
One may then wonder whether the convergence rate in~\eqref{eq:general Lions} can be quantified. Results in this direction may be found in~\cite{HauMis-14}.\hfill\qed
\end{remark}

\subsection{Mean-field limit for a classical free-energy functional}\label{sec:appli HS}\mbox{}\\\vspace{-0.4cm}

In this section we apply Theorem~\ref{thm:HS} to the study of a free-energy functional at positive temperature, following~\cite{MesSpo-82,CagLioMarPul-92,Kiessling-93,KieSpo-99}. We consider a domain $\Om \subset \R ^d$ and the functional
\begin{equation}\label{eq:aHS free func}
 \F_N [\mubf] = \int_{X \in \Om ^N} H_N (X) \mubf (X) dX + T \int_{\Om ^N} \mubf(X) \log \mubf(X) dX
\end{equation}
defined for probability measures $\mubf \in \PP (\Om ^N)$. Here the temperature $T$ will be fixed in the limit $N\to \infty$ and the Hamiltonian  $H_N$ is in mean-field scaling:
\begin{equation}\label{eq:aHS hamil}
H_N (X) = \sum_{j=1} ^N V(x_j) + \frac{1}{N-1} \sum_{1\leq i < j \leq N} w (x_i-x_j). 
\end{equation}
We denote by $V$ a lower semi-continuous potential. To be in a compact setting we shall assume that either $\Om$ is bounded or  
\begin{equation}\label{eq:aHS confine}
V(x) \to \infty \mbox{ when } |x|\to \infty. 
\end{equation}
The interaction potential $w$ will be bounded below and lower semi-continuous. To be concrete, one may think of $w\in L ^{\infty}$, or a repulsive Coulomb potential:
\begin{align}\label{eq:aHS Coulomb}
w(x) &= \frac{1}{|x| ^{d-2}} \mbox{ if } d = 3\\
w(x) &= - \log |x| \mbox{ if } d=2\\
w(x) &=  -| x| \mbox{ if } d=1,
\end{align}
ubiquitous in applications. We will always take $w$ even,
\[
w(x) = w (-x), 
\]
and if the domain is not bounded we will assume
\begin{align}\label{eq:aHS confine 2}
w(x-y) &+ V(x) + V(y) \to \infty \mbox{ when } |x|\to \infty \mbox{ or } |y|\to \infty \nonumber \\
w(x-y) &+ V(x) + V(y)  \mbox{ is lower semi-continuous.}
\end{align}
We are interested in the limit of the Gibbs measure minimizing~\eqref{eq:aHS free func} in $\PP_s (\Om ^N)$:
\begin{equation}\label{eq:aHS Gibbs}
\mubf_N (X) = \frac{1}{\ZN} \exp\left( - \frac{1}{T} H_N (X) \right) dX
\end{equation}
and to the corresponding free-energy
\begin{equation}\label{eq:aHS free ener}
F_N = \inf_{\mubf \in \PP (\Om_s ^N)} \F_N [\mubf] = \F_N [\mubf_N] = - T \log \ZN. 
\end{equation}
The free-energy functional is rewritten 
\begin{align}\label{eq:aHS free func 2}
\F_N [\mubf] &= N \int_{\Om} V(x) d\mubf ^{(1)} (x) + \frac{N}{2} \iint_{\Om \times \Om} w(x-y) d\mubf ^{(2)} (x,y) + T \int_{\Om ^N} \mubf \log \mubf \nonumber\\
 &= \frac{N}{2} \iint_{\Om \times \Om} \left( w(x-y) + V(x) + V(y) \right) d\mubf ^{(2)} (x,y)  + T \int_{\Om ^N} \mubf \log \mubf. 
\end{align}
using the marginals
\begin{equation}\label{eq:aHS marginales}
\mubf ^{(n)} (x_1,\ldots,x_n) = \int_{x_{n+1},\ldots,x_N\in \Om} d \mubf (x_1,\ldots,x_N).
\end{equation}

Inserting an ansatz of the form
\begin{equation}\label{eq:aHS ansat}
\mubf = \rho ^{\otimes N}, \rho \in \PP (\Om) 
\end{equation}
in~\eqref{eq:aHS free func} we obtain the mean-field functional
\begin{multline}\label{eq:aHS MFf}
\MFf [\rho] := N ^{-1} \F_N [\rho ^{\otimes N}] \\
= \int_{\Om} V(x) d\rho(x) + \frac{1}{2} \iint_{\Om \times \Om} w(x-y) d\rho (x)d\rho(y) + T \int_{\Om} \rho \log \rho 
\end{multline}
with minimum $\MFe$ and minimizer (not necessarily unique) $\rhoMF$ amongst probability measures. Our goal is to justify the mean-field approximation by proving the following theorem:

\begin{theorem}[\textbf{Mean-field limit at fixed temperature}]\label{thm:aHS}\mbox{}\\
We have
\begin{equation}\label{eq:aHS result ener}
\frac{F_N}{N} \to \MFe \mbox{ when } N \to \infty.
\end{equation}
Moreover, up to a subsequence, we have for every $n\in \N$
\begin{equation}\label{eq:aHS result marginal}
\mubf_N ^{(n)} \wto_\ast \int_{\rho \in \MFmin} \rho  ^{\otimes n} dP (\rho).  
\end{equation}
as measures where $P$ is a probability measurer over $\MFmin$, the set of minimizers of $\MFf$.
\end{theorem}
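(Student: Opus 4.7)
The strategy is to establish matching upper and lower bounds for $F_N/N$, and then deduce the structure of the weak limits of $\mubf_N^{(n)}$ from the fact that the lower bound is saturated. The upper bound is immediate: testing $\F_N$ on the product trial state $\rhoMF^{\otimes N}$ and using the mean-field scaling $\lambda = (N-1)^{-1}$, a direct computation gives $\F_N[\rhoMF^{\otimes N}] = N\MFf[\rhoMF] = N\MFe$, so $F_N/N \leq \MFe$ for every $N$ (one approximates by regular states if $\MFf$ is not attained).

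For the lower bound I would rewrite using reduced densities
$$\frac{F_N}{N} = \frac12 \iint_{\Om \times \Om} \bigl(V(x)+V(y)+w(x-y)\bigr)\,d\mubf_N^{(2)}(x,y) + \frac{T}{N}\int_{\Om^N} \mubf_N \log \mubf_N,$$
and start from the a priori bound $\F_N[\mubf_N] \leq \F_N[\rhoMF^{\otimes N}]$, which together with \eqref{eq:aHS confine 2} ensures tightness of every fixed marginal $\mubf_N^{(n)}$. A diagonal extraction yields weak-$\ast$ limits $\mubf^{(n)}\in\PP_s(\Om^n)$ forming a consistent hierarchy~\eqref{eq:consist class}; Kolmogorov's extension theorem promotes it to some $\mubf \in \PP_s(\Om^{\N})$, and Theorem~\ref{thm:HS} then produces $P \in \PP(\PP(\Om))$ such that $\mubf^{(n)} = \int \rho^{\otimes n}\, dP(\rho)$. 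Lower semi-continuity of $w+V+V$ (bounded below by \eqref{eq:aHS confine 2}) passes the energy term to the liminf and, unfolded via Hewitt-Savage, gives
$$\liminf_N \frac12 \iint(V+V+w)\,d\mubf_N^{(2)} \;\geq\; \int_{\rho\in\PP(\Om)} \left( \int V\,d\rho + \frac12 \iint w\,d\rho\otimes d\rho \right) dP(\rho).$$

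The delicate point is the entropy. I would rely on the subadditivity inequality
$$\tfrac{1}{N}\int \mubf_N \log \mubf_N \;\geq\; \tfrac{1}{n}\int \mubf_N^{(n)} \log \mubf_N^{(n)},$$
valid whenever $n$ divides $N$ (partition the $N$ variables into $N/n$ blocks of size $n$, use symmetry and subadditivity of Boltzmann entropy), and combine it with the weak-$\ast$ lower semi-continuity of $\mu \mapsto \int \mu\log\mu$ to obtain, for every fixed $n$,
$$\liminf_N \tfrac{1}{N}\int \mubf_N \log \mubf_N \;\geq\; \tfrac{1}{n}\int \mubf^{(n)}\log \mubf^{(n)}.$$
Letting $n\to\infty$ and invoking the fact that the products $\rho^{\otimes n}$ become mutually singular as $n\to\infty$ (law of large numbers applied to the empirical measure of an iid sample from $\rho$), the entropy of the de~Finetti mixture $\int \rho^{\otimes n}\,dP(\rho)$ asymptotically decouples and one has
$$\lim_{n\to\infty} \tfrac{1}{n}\int \mubf^{(n)}\log \mubf^{(n)} = \int_{\rho\in\PP(\Om)} \left(\int_\Om \rho\log\rho\right) dP(\rho),$$
whence $\liminf F_N/N \geq \int \MFf[\rho]\,dP(\rho) \geq \MFe$. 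This asymptotic-equipartition step is, in my view, the main technical obstacle: it requires using lsc relative to a dominating measure adapted to $V$ so that entropies remain finite, and handling the case where $P$ might charge singular measures.

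Combining the two bounds forces $F_N/N \to \MFe$ and, crucially, $\int \MFf[\rho]\,dP(\rho) = \MFe$. Since $\MFf[\rho] \geq \MFe$ everywhere with equality only on $\MFmin$, the measure $P$ must be concentrated on $\MFmin$, and the Hewitt-Savage representation of $\mubf^{(n)}$ then yields
$$\mubf_N^{(n)} \wto_\ast \int_{\rho\in \MFmin} \rho^{\otimes n}\,dP(\rho)$$
along the extracted subsequence, which is~\eqref{eq:aHS result marginal}. The uniqueness statement in Theorem~\ref{thm:HS} ensures that $P$ is intrinsically determined by the limiting hierarchy.
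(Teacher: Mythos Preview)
Your overall strategy matches the paper's: product trial state for the upper bound; tightness, diagonal extraction, Kolmogorov, Hewitt--Savage for the lower bound; lower semi-continuity for the energy term and subadditivity for the entropy; conclude by saturation that $P$ is supported on $\MFmin$.

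The one substantive divergence is how you handle the entropy linearization. You propose to compute
\[
\lim_{n\to\infty}\frac{1}{n}\int \mubf^{(n)}\log\mubf^{(n)}=\int_{\PP(\Om)}\Big(\int_\Om\rho\log\rho\Big)\,dP(\rho)
\]
by arguing that the components $\rho^{\otimes n}$ of the de~Finetti mixture become mutually singular as $n\to\infty$ (an AEP/law-of-large-numbers argument), so the entropy asymptotically decouples. The paper instead proves directly that the mean entropy $\mubf\mapsto\limsup_n n^{-1}\int\mubf^{(n)}\log\mubf^{(n)}$ is \emph{affine} on $\PP_s(\Om^{\N})$ (Lemma~\ref{lem:mean entropy}), via an elementary two-line estimate: convexity of $x\log x$ gives one inequality, and monotonicity of $\log$ gives the reverse up to an additive $\log 2$ which disappears after dividing by $n$. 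Once affinity is in hand, the Hewitt--Savage representation immediately yields the desired integral form. Your route is correct in principle (it is essentially the Robinson--Ruelle point of view cited in the paper), but the mutual-singularity step you flag as ``the main technical obstacle'' is genuinely heavier to make precise, especially when $P$ may charge non-absolutely-continuous $\rho$'s; the paper's $\log 2$ trick sidesteps all of that. A minor point: the paper also handles the general case $n\nmid N$ via $\lfloor N/n\rfloor$ plus a remainder controlled by positivity of relative entropy against a reference $\nu_0\propto e^{-c_1 V}$, which is exactly the ``dominating measure adapted to $V$'' you allude to.
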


In particular, if $\MFf$ has a unique minimizer we obtain, for the whole sequence, 
\[
\mubf_N ^{(n)} \wto_\ast \left(\rhoMF\right) ^{\otimes n}.
\]

\begin{proof}
We follow~\cite{MesSpo-82,CagLioMarPul-92,Kiessling-89,Kiessling-93}. An upper bound to the free energy is eeasily obtained using test functions of the form $\rho ^{\otimes N}$ and we deduce
\begin{equation}\label{eq:aHS borne sup}
\frac{F_N}{N}  \leq \MFe. 
\end{equation}
The corresponding lower bound requires more work. We start by extracting a subsequence along which
\begin{equation}\label{eq:aHS preuve 1}
\mubf_N ^{(n)} \wto_\ast \mubf ^{(n)} 
\end{equation}
for all $n\in \N$, with $\mubf \in \PP_s (\Om ^{\N})$. This is done as explained in Section~\ref{sec:HS}, using either the fact that $\Om$ is compact or Assumption~\eqref{eq:aHS confine 2}, which implies that the sequence is tight. Indeed, using simple energy upper and lower bounds one easily sees that for all $\eps > 0$ there exists $R_\eps$ such that 
$$ \mubf_N ^{(2)} \left(B (0,R_\eps)\times B (0,R_\eps)\right) > 1 - \eps$$
for all $N$. Tightness of all the other marginals follows. 
 
By lower semi-continuity we immediately have
\begin{multline}\label{eq:aHS lim ener}
\liminf_{N\to \infty} \frac{1}{2} \iint_{\Om \times \Om} \left( w(x-y) + V(x) + V(y) \right) d\mubf_N ^{(2)} (x,y) 
\\ \geq \frac{1}{2} \iint_{\Om \times \Om} \left( w(x-y) + V(x) + V(y) \right) d\mubf ^{(2)} (x,y).     
\end{multline}
For the entropy term we use the sub-additivity property (this is a consequence of Jensen's inequality, see~\cite{RobRue-67} or the previously cited references) 
\[
\int_{\Om ^N} \mubf_N \log \mubf_N \geq \left\lfloor \frac{N}{n} \right\rfloor \int_{\Om ^n} \mubf_N ^{(n)} \log \mubf_N ^{(n)} +  \int_{\Om ^{N-n\left\lfloor \frac{N}{n} \right\rfloor}} \mubf_N ^{\left(N-n\left\lfloor \frac{N}{n} \right\rfloor\right)} \log \mubf_N ^{\left(N-n\left\lfloor \frac{N}{n} \right\rfloor\right)} 
\]
where $\lfloor \: . \: \rfloor$ denotes the integer part. Jensen's inequality implies that for probability measures $\mu$ and $\nu$, the relative entropy of $\mu$ with respect to $\nu$ is positive:
\[ 
\int \mu \log \frac{\mu}{\nu} = \int \nu \frac{\mu}{\nu} \log \frac{\mu}{\nu} \geq \left(\int \nu \frac{\mu}{\nu} \right) \log \left( \int \nu \frac{\mu}{\nu} \right) = 0.
\]
We deduce that for all $\nu_0 \in \PP(\Om)$
\begin{align*}
\int_{\Om ^{N-n\left\lfloor \frac{N}{n} \right\rfloor}} \mubf_N ^{\left(N-n\left\lfloor \frac{N}{n} \right\rfloor\right)} \log \mubf_N ^{\left(N-n\left\lfloor \frac{N}{n} \right\rfloor\right)} &= \int_{\Om ^{N-n\left\lfloor \frac{N}{n} \right\rfloor}} \mubf_N ^{\left(N-n\left\lfloor \frac{N}{n} \right\rfloor\right)} \log \left(\frac{\mubf_N ^{\left(N-n\left\lfloor \frac{N}{n} \right\rfloor\right)}}{\nu_0 ^{\otimes \left(N-n\left\lfloor \frac{N}{n} \right\rfloor\right)}}\right) \\
&+ \int_{\Om ^{N-n\left\lfloor \frac{N}{n} \right\rfloor}} \mubf_N ^{\left(N-n\left\lfloor \frac{N}{n} \right\rfloor\right)} \log \nu_0 ^{\otimes \left(N-n\left\lfloor \frac{N}{n} \right\rfloor\right)}\\
&\geq \left(N-n\left\lfloor \frac{N}{n} \right\rfloor\right) \int_{\Om } \mubf_N ^{(1)} \log \nu_0. 
\end{align*}
Choosing $\nu_0\in\PP(\Om)$ of the form $\nu_0 = c_0 \exp(-c_1 V)$ it is not difficult to see that the last integral is bounded below independently of $N$ and we thus obtain that for all $n\in \N$
\begin{equation}\label{eq:aHS lim entr}
\liminf_{N\to \infty} \frac{1}{N} \int_{\Om ^N} \mubf_N \log \mubf_N \geq \frac{1}{n} \int_{\Om ^n} \mubf ^{(n)} \log \mubf ^{(n)}
\end{equation}
by lower semi-continuity of (minus) the entropy.

Gathering~\eqref{eq:aHS lim ener} and~\eqref{eq:aHS lim entr} we obtain a lower bound in terms of a functional of $\mubf$:
\begin{multline}\label{eq:aHS somme}
\liminf_{N\to \infty} \frac{1}{N} \F_N [\mubf_N] \geq \F [\mubf] :=  \frac{1}{2} \iint_{\Om \times \Om} \left( w(x-y) + V(x) + V(y) \right) d\mubf ^{(2)} (x,y) 
\\ + T \limsup_{n\to \infty} \frac{1}{n} \int_{\Om ^n} \mubf ^{(n)} \log \mubf ^{(n)}.
\end{multline}
The second term in the right-hand side is called (minus) the mean entropy of $\mubf\in \PP (\Om ^{\N})$. We will next apply the Hewitt-Savage theorem to $\mubf$. The first term in $\F$ is obviously affine as a function of $\mubf$, which is perfect to apply~\eqref{eq:result HS}. One might however worry at the sight of the second term which rather looks convex. In fact a simple argument of~\cite{RobRue-67} shows that this mean entropy is affine. It is a part of the statement that the $\limsup$ is in fact a limit:

\begin{lemma}[\textbf{Properties of the mean entropy}]\label{lem:mean entropy}\mbox{}\\
The functional
\begin{equation}\label{eq:mean entropy}
\sup_{n\to \infty} \frac{1}{n} \int_{\Om ^n} \mubf ^{(n)} \log \mubf ^{(n)} = \lim_{n\to \infty}  \frac{1}{n} \int_{\Om ^n} \mubf ^{(n)} \log \mubf ^{(n)}
\end{equation}
is affine over $\PP (\Om ^{\N})$.
\end{lemma}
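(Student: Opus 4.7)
The plan is to establish two things separately: first, that the sequence $n^{-1} H_n(\mubf)$ (with $H_n(\mubf):= \int_{\Om^n} \mubf^{(n)} \log \mubf^{(n)} \in (-\infty,+\infty]$) actually converges to its supremum, and then that the limit depends affinely on $\mubf \in \PP_s(\Om^\N)$. I would tackle them in that order.

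For the convergence, I would prove super-additivity of $H_n$. Given $n = p+q$, the symmetry of $\mubf$ allows us to identify $\mubf^{(p)}$ and $\mubf^{(q)}$ with the marginals of $\mubf^{(n)}$ on the first $p$ and last $q$ variables. Non-negativity of the relative entropy of $\mubf^{(n)}$ with respect to the product $\mubf^{(p)} \otimes \mubf^{(q)}$ then gives
\[
H_{p+q}(\mubf) \;\geq\; \int \mubf^{(p+q)} \log \bigl( \mubf^{(p)} \otimes \mubf^{(q)} \bigr) \;=\; H_p(\mubf) + H_q(\mubf).
\]
By Fekete's lemma for super-additive sequences, $n^{-1} H_n(\mubf)$ converges to $\sup_{n} n^{-1} H_n(\mubf)$ in $(-\infty,+\infty]$, which is exactly the content of \eqref{eq:mean entropy}.

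For the affine property, write $\mubf = \lambda \mubf_1 + (1-\lambda) \mubf_2$ with $\lambda \in [0,1]$, so that $\mubf^{(n)} = \lambda \mubf_1^{(n)} + (1-\lambda) \mubf_2^{(n)}$ for every $n$. Convexity of $t \mapsto t \log t$ gives the upper bound
\[
H_n(\mubf) \;\leq\; \lambda H_n(\mubf_1) + (1-\lambda) H_n(\mubf_2).
\]
For the reverse inequality, I would use the algebraic identity (valid pointwise when $\mubf^{(n)}$ has a density, and a standard Radon--Nikodym version otherwise)
\[
\lambda H_n(\mubf_1) + (1-\lambda) H_n(\mubf_2) - H_n(\mubf) \;=\; \lambda \int \mubf_1^{(n)} \log \tfrac{\mubf_1^{(n)}}{\mubf^{(n)}} + (1-\lambda) \int \mubf_2^{(n)} \log \tfrac{\mubf_2^{(n)}}{\mubf^{(n)}},
\]
and observe that since $\mubf^{(n)} \geq \lambda \mubf_1^{(n)}$ we have $\log(\mubf_1^{(n)}/\mubf^{(n)}) \leq -\log \lambda$, and likewise for $\mubf_2^{(n)}$; hence the right-hand side is bounded above by $-\lambda \log \lambda - (1-\lambda) \log(1-\lambda) \leq \log 2$, uniformly in $n$. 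Dividing both inequalities by $n$ and letting $n \to \infty$, the $O(1/n)$ correction vanishes and both sides sandwich $s(\mubf) := \lim_n n^{-1} H_n(\mubf)$ between $\lambda s(\mubf_1) + (1-\lambda) s(\mubf_2)$, which yields the desired equality.

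The main obstacle is not a deep one: it is merely bookkeeping to make sense of the two inequalities when some of the $H_n$ may be $+\infty$, which is handled by working consistently in $(-\infty,+\infty]$ using the convention $0 \log 0 = 0$ and the standard definition of relative entropy via densities with respect to a common dominating measure. Once the two key ingredients -- super-additivity from positivity of relative entropy, and the mixture-entropy sandwich -- are in place, the lemma follows directly.
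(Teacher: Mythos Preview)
Your proposal is correct and follows essentially the same route as the paper: super-additivity of $H_n$ via positivity of relative entropy (the paper calls this ``sub-additivity'' but uses the same inequality) combined with a Fekete-type argument for the $\sup=\lim$ part, and the convexity-plus-monotonicity sandwich $\lambda H_n(\mubf_1)+(1-\lambda)H_n(\mubf_2)\geq H_n(\mubf)\geq \lambda H_n(\mubf_1)+(1-\lambda)H_n(\mubf_2)-O(1)$ for the affine part. The only cosmetic differences are that the paper works with $\lambda=\tfrac12$ and spells out the Fekete argument by hand, whereas you treat general $\lambda$ and package the lower bound as a relative-entropy identity; the underlying inequalities are identical.
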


\begin{proof}
We start by proving that the $\sup$ equals the limit. Denote the former by $S$ and pick $j$ such that 
$$ \frac{1}{j} \int_{\Om ^j} \mubf ^{(j)} \log \mubf ^{(j)} = \sup_{n\to \infty} \frac{1}{n} \int_{\Om ^n} \mubf ^{(n)} \log \mubf ^{(n)} - \eps_j  = S- \eps_j.$$
Next let $k \geq j$ and write (Euclidean division)
$$k = mj + r \mbox{ with }  0\leq r <j.$$
Then, by the subbaditivity property already mentioned 
\begin{align*}
\frac{1}{k} \int_{\Om ^k} \mubf ^{(k)} \log \mubf ^{(k)} &\geq \frac{m}{k} \int_{\Om ^j} \mubf ^{(j)} \log \mubf ^{(j)} + \frac{1}{k} \int_{\Om ^r} \mubf ^{(r)} \log \mubf ^{(r)}
\\&\geq \frac{mj}{k} (S-\eps_j) - \frac{C_j}{k} 
\end{align*}
where $C_j$ only depends on $j$. We then pass to the limit $k\to \infty$ at fixed $j$. Noting that then $m\to k/j$ we obtain
$$ \liminf_{k\to \infty} \frac{1}{k} \int_{\Om ^k} \mubf ^{(k)} \log \mubf ^{(k)} \geq S-\eps_j.$$
Passing finally to the limit $j\to \infty$ along an appropriate subsequence we can make $\eps_j \to 0$, which proves the equality in~\eqref{eq:mean entropy}.  

To prove that the functional is affine, let $\mubf_1,\mubf_2 \in \PP (\Om ^{\N})$ be given. We use the convexity of  $x\mapsto x\log x$ and the monotonicity of $x\mapsto \log x$ to obtain 
\begin{align*}
\frac{1}{2}\int_{\Om ^n} \mubf_1 ^{(n)} \log \mubf_1 ^{(n)} &+ \frac{1}{2}\int_{\Om ^n} \mubf_2 ^{(n)} \log \mubf_2 ^{(n)}\geq
\int_{\Om ^n} \left( \frac{1}{2}\mubf_1 ^{(n)} + \frac{1}{2} \mubf_2 ^{(n)} \right) \log \left( \frac{1}{2}\mubf_1 ^{(n)} + \frac{1}{2} \mubf_2 ^{(n)} \right) 
\\&\geq  \frac{1}{2}\int_{\Om ^n} \mubf_1 ^{(n)} \log \mubf_1 ^{(n)} + \frac{1}{2}\int_{\Om ^n} \mubf_2 ^{(n)} \log \mubf_2 ^{(n)} 
\\&- \frac{\log (2)}{2} \left( \int_{\Om ^n} \mubf_1 ^{(n)} + \int_{\Om ^n} \mubf_2 ^{(n)}\right) 
\\&= \frac{1}{2}\int_{\Om ^n} \mubf_1 ^{(n)} \log \mubf_1 ^{(n)} + \frac{1}{2}\int_{\Om ^n} \mubf_2 ^{(n)} \log \mubf_2 ^{(n)} - \log (2).
\end{align*}
Dividing by $n$ and passing to the limit we deduce that the functional is indeed affine.
\end{proof}

Thus $\F[\mu]$ is affine. There remains to use~\eqref{eq:result HS}, which gives a probability $P_{\mubf} \in \PP (\PP (\Om ))$ such that
\begin{align*}
\liminf_{N\to \infty} \frac{1}{N} \F_N [\mubf_N] &\geq \int_{\rho \in \PP (\Om)} \F [\rho ^{\otimes \infty}] dP_{\mubf}(\rho)
\\& = \int_{\rho \in \PP (\Om)} \MFf [\rho ] dP_{\mubf}(\rho) \geq \MFe.
\end{align*}
Here we denote $\rho ^{\otimes \infty}$ the probability over $\PP (\Om ^{\N})$ which has $\rho ^{\otimes n}$ for $n$-th marginal for all $n$ and we have used the fact that $P_{\mubf}$ has integral $1$. This concludes the proof of~\eqref{eq:aHS result ener} and~\eqref{eq:aHS result marginal} follows easily. It is indeed clear in view of the previous inequalities that $P_{\mubf}$ must be concentrated on the set of minimizers of the mean-field free-energy functional.
\end{proof}

\subsection{Quantitative estimates in the mean-field/small temperature limit}\label{sec:appli DF}\mbox{}\\ \vspace{-0.4cm}

Here we give an example where the Diaconis-Freedman construction is useful to supplement the use of the Hewitt-Savage theorem. As mentioned in Remark~\ref{rem:DiacFreLio}, this construction behaves rather badly with respect to the entropy, but there is a fair number of interesting problems where it makes sense to consider a small temperature in the limit $N\to \infty$, in which case entropy plays a small role.

A central example is that of log-gases. It is well-known that the distribution of eigenvalues of certain random matrices ensembles is given by the Gibbs measure of a classical gas with logarithmic interactions. Moreover, it so happens that the relevant limit for large matrices is a mean-field regime with temperature of the order of $N^{-1}$. Consider the following Hamiltonian (assumptions on $V$ are as previously, taking $w = -\log |\:.\:|$):
\begin{equation}\label{eq:aDF Hamil}
H_N (X) = \sum_{j=1} ^N V(x_j) - \frac{1}{N-1} \sum_{1\leq i < j \leq N} \log |x_i-x_j|  
\end{equation}
where $X = (x_1,\ldots,x_N) \in \R ^{dN}$. The associated Gibbs measure 
\begin{equation}\label{eq:aDF Gibbs}
\mubf_N (X) = \frac{1}{\ZN} \exp \left(-\beta N H_N (X) \right) dX 
\end{equation}
corresponds (modulo a $\beta$-dependent change of scale) to the distribution of of the eigenvalues of a random matrix in the following cases:
\begin{itemize}
\item $d=1,\beta=1,2,4$ and $V(x) = \frac{|x| ^2}{2}$. We respectively obtain the gaussian real symmetric, complex hermitian and quarternionic self-dual matrices.
\item $d=2,\beta = 2$ and $V(x) = \frac{|x| ^2}{2}$. We then obtain the gaussian matrices without symmetry conditon, the so-called Ginibre ensemble~\cite{Ginibre-65}.
\end{itemize}
In these notes we give no further precisions on the random matrix aspect, and will simply take the previous facts as a sufficient motivation to study the $N\to \infty$ limit of the measures~\eqref{eq:aDF Gibbs} with $\beta$ fixed. This corresponds (compare with~\eqref{eq:aHS Gibbs}) to taking $T=\beta ^{-1} N ^{-1}$, i.e. a very small temperature. For an introduction to random matrices and log-gases we refer to~\cite{AndGuiZei-10,Forrester-10,Mehta-04}. For precise studies of the measures~\eqref{eq:aDF Gibbs} following different methods than what we shall do here, one may consult e.g.~\cite{BenGui-97,BenZei-98,BouPasShc-95,BouErdYau-12,BouErdYau-14,BouErdYau-12,ChaGozZit-13,RouSer-13,SanSer-12,SanSer-13}.

In the case $d=2$, measures of the form~\eqref{eq:aDF Gibbs} also have a natural application to the study of certain quantum wave-functions appearing in the study of the fractional quantum Hall effect (see~\cite{RouSerYng-13b,RouSerYng-13,RouYng-14,RouYng-15} and references therein). Here too it is sensible to consider $\beta$ as being fixed.

The singularity at the origin of the logarithm poses difficulties in the proof, as indicated in Remark~\ref{rem:DiacFreLio}, but one may bypass them easily, contrarily to those linked to the entropy. The following method is not limited to log-gases and can be re-employed in various contexts.

\medskip

Again,~\eqref{eq:aDF Gibbs} minimizes a free-energy functional
\begin{equation}\label{eq:aDF free func}
 \F_N [\mubf] = \int_{X \in \Om ^N} H_N (X) \mubf (X) dX + \frac{1}{\beta N} \int_{\Om ^N} \mubf(X) \log \mubf(X) dX 
\end{equation}
whose minimum we denote $F_N$. The natural limit object is this time an energy functional with no entropy term:
\begin{equation}\label{eq:aDF MFf}
\MFfo [\rho] := \int_{\R ^d} V d\rho -\frac{1}{2}\iint_{\R ^d \times \R ^d } \log|x-y| d\rho (x) d\rho (y),
\end{equation}
obtained by inserting the ansatz $\rho ^{\otimes N}$ in~\eqref{eq:aDF free func} and neglecting the entropy term, which is manifestly of lower order for  fixed $\beta$. We denote $\MFeo$ and $\rhoMF$ respectively the minimum energy and the minimizer (unique in this case by strict convexity of the functional). It is well-known (see the previous references as well as~\cite{KieSpo-99}) that 
\begin{equation}\label{eq:aDF ener lim}
N ^{-1} F_N = -\frac{1}{\beta} \log \ZN \to \MFeo \mbox{ when } N\to \infty
\end{equation}
and
\begin{equation}\label{eq:aDF Gibbs lim}
\mubf_N ^{(n)} \wto_* \left(\rhoMF \right) ^{\otimes n}.
\end{equation}
We shall prove~\eqref{eq:aDF Gibbs lim} and give a quantitative version of~\eqref{eq:aDF ener lim}, taking inspiration from~\cite{RouYng-14}:

\begin{theorem}[\textbf{Free-energy estimate for a log-gas}]\label{thm:aDF}\mbox{}\\
For all $\beta \in \R$, we have 
\begin{equation}\label{eq:aDF result}
\MFeo - C N ^{-1} \left(\beta ^{-1} + \log N +1 \right) \leq  - \frac{1}{\beta} \log \ZN \leq \MFeo + C \beta ^{-1}N ^{-1}. 
\end{equation}
\end{theorem}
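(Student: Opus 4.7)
The plan is to establish matching upper and lower bounds on $F_N = \inf_\mubf \F_N[\mubf]$ compared with $N\MFeo$. The upper bound will follow from a factorized trial state, while the lower bound---the substantive part---will combine Theorem~\ref{thm:DF} with a regularization of the logarithmic singularity.

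For the upper bound, I insert $(\rhoMF)^{\otimes N}$ into $\F_N$. The factorization $\int H_N\, d\rho^{\otimes N} = N\MFfo[\rho]$ and the additivity of entropy on product states give
$$F_N \leq N\MFeo + \beta^{-1}\int \rhoMF \log\rhoMF,$$
with the entropy integral finite by the standard boundedness and compact support of the log-gas equilibrium measure.

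For the lower bound I let $\mubf_N$ be the Gibbs measure. The first step is to regularize the interaction at a scale $\alpha > 0$: with $w_\alpha(x) := -\log\max(|x|,\alpha) \leq -\log|x|$,
$$\int H_N\, d\mubf_N \geq N\int V\, d\mubf_N^{(1)} + \tfrac{N}{2}\iint w_\alpha(x-y)\, d\mubf_N^{(2)}.$$
I then apply Theorem~\ref{thm:DF} to obtain $\mut_N = \int \rho^{\otimes N}\, dP_{\mubf_N}(\rho)$ and substitute the explicit marginals from Remark~\ref{rem:marg DF}: $\mubf_N^{(1)} = \mut_N^{(1)}$ and $\mubf_N^{(2)} = \frac{N}{N-1}\mut_N^{(2)} - \frac{1}{N-1}\mubf_N^{(1)}\delta_{x=y}$. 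Since $w_\alpha(0) = -\log\alpha$, this yields
$$\int H_N\, d\mubf_N \geq N\int \MFfo_\alpha[\rho]\, dP_{\mubf_N}(\rho) + \frac{N\log\alpha}{2(N-1)} + O(N^{-1}),$$
where $\MFfo_\alpha[\rho] := \int V\, d\rho + \tfrac12 \iint w_\alpha(x-y)\, d\rho\, d\rho$. Each term in the $P_{\mubf_N}$-integral is at least $\inf \MFfo_\alpha$, and a short comparison using an a priori $L^\infty$ bound on the minimizer of $\MFfo_\alpha$ shows $\inf \MFfo_\alpha \geq \MFeo - C\alpha^d|\log\alpha|$. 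The entropy contribution $(\beta N)^{-1}\int \mubf_N\log\mubf_N$ is then bounded below via the positivity of the relative entropy against a product reference $\rho_0^{\otimes N}$ with $\rho_0 \propto e^{-cV}$, which delivers an $O(\beta^{-1})$ correction once $\int V\, d\mubf_N^{(1)}$ is controlled by the upper bound already obtained. Optimizing $\alpha = N^{-1}$ balances the diagonal error $|\log\alpha|/N$ against the approximation error $\alpha^d|\log\alpha|$ and produces the claimed $N^{-1}(\log N + \beta^{-1} + 1)$ correction.

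The hard part is precisely the incompatibility between the empirical-measure support of the Diaconis--Freedman construction and the logarithmic singularity: $\mut_N$ charges the two-body diagonal, so pairing it naively with $-\log|x-y|$ yields a divergent contribution, as already flagged in Remark~\ref{rem:DiacFreLio}. Regularization at scale $\alpha$ is thus unavoidable, and the optimal choice $\alpha \sim 1/N$ is exactly what forces the characteristic $\log N$ correction in the theorem. A secondary delicate point is that the entropy integral $\int \mubf_N\log\mubf_N$ may be negative and must be bounded below, for which the relative-entropy trick against a Gibbs-like reference product state is essential.
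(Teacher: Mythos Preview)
Your proposal is correct and follows essentially the same route as the paper: regularize the logarithm at scale $\alpha$, feed the explicit marginal identities from Remark~\ref{rem:marg DF} into the energy, bound the entropy below via relative entropy against a product reference, and optimize in $\alpha$. Two cosmetic differences are worth noting. First, the paper uses a $C^1$ truncation $-\logal$ (quadratic near the origin) rather than your hard cutoff $-\log\max(|\cdot|,\alpha)$; this makes the Feynman--Hellmann comparison of $\MFeal$ with $\MFeo$ slightly cleaner than your route via an $L^\infty$ bound on the regularized minimizer, but both work. Second, the paper optimizes at $\alpha = N^{-1/d}$ (balancing $N^{-1}|\log\alpha|$ against $\alpha^d$), whereas you take $\alpha = N^{-1}$; your choice does not actually ``balance'' the two errors for $d\geq 2$ (it makes $\alpha^d = N^{-d}$ negligible), but the dominant term is $N^{-1}\log N$ either way, so the conclusion is unaffected. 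Finally, the paper absorbs the entropy-induced correction $-\beta^{-1}N^{-1}\int V\,d\mubf_N^{(1)}$ directly into a modified one-body potential $V(1-\beta^{-1}N^{-1})$ inside $\MFfal$, whereas you control $\int V\,d\mubf_N^{(1)}$ a priori from the upper bound; again, equivalent bookkeeping.
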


Fine estimates for the partition function $\ZN$ of the log-gase seem to have become available only recently~\cite{LebSer-15,PetSer-14,RouSer-13,SanSer-12,SanSer-13}. Amongst other things, the previous references indicate that the correction to $\MFeo$ is exactly of order $N ^{-1}\log N$. 

\begin{proof}
We shall not elaborate on the upper bound, which is easily obtained by taking the usual product ansatz and estimating the entropy. For the lower bound we shall use Theorem~\ref{thm:DF}. We first need a crude bound on the entropy term: by positivity of the relative entropy (via Jensen's inequality)
\[
\int_{\R ^{dN}} \mubf \log \frac{\mubf}{\nubf} \geq 0 \mbox{ for all } \mubf,\nubf \in \PP (\Om ^N) 
\]
one may write, using the probability measure
$$\nubf_N = \left(c_0 \exp\left( -V(|x|)\right)\right) ^{\otimes N},$$
the lower bound
\begin{equation}\label{eq:aDF rel entr}
\int_{\R ^{dN}} \mubf_N \log \mubf_N \geq  \int_{\R ^{dN}} \mubf_N \log \nubf_N = -N \int_{\R ^d} V d \mubf_N ^{(1)} - N \log c_0.
\end{equation}
To obtain a lower bound to the energy we first need to regularize the interaction potential: Let $\alpha >0$ be a small parameter to be optimized over later and 
\begin{equation}\label{eq:log alpha 2}
- \logal |z| = \begin{cases}
             \displaystyle -\log\alpha+\half \left(1-\frac{|z|^2}{\alpha^2}\right) &\mbox{ if } |z|\leq \alpha \\
             - \log |z| &\mbox{ if } |z|\geq \alpha.
            \end{cases}
\end{equation}
Clearly $-\log_\alpha |z|\leq -\log |z|$ is regular at the origin. Moreover, we have 
\begin{equation}\label{eq:log alpha 3}
- \frac d{d\alpha}\logal |z| = \begin{cases}
             \displaystyle -\frac 1\alpha+\frac {|z|^2}{\alpha^3} &\mbox{ si } |z|\leq \alpha\\
             0 &\mbox{ si } |z|\geq \alpha.
            \end{cases}
\end{equation}
Using the lower bound $-\log_\alpha |z|\leq -\log |z|$ to obtain
\[
\int_{X \in \Om ^N} H_N (X) \mubf (X) dX \geq  N \int_{\R ^d} V d\mubf_N ^{(1)} - \frac{N}{2} \iint_{\R ^d \times \R ^d} \logal |x-y| d\mubf_N ^{(2)} (x,y) 
\]
we are now in a position where we can apply Theorem~\ref{thm:DF}. More precisely we use the explicit formulae~\eqref{eq:marginals DF}:   
\begin{multline}\label{eq:aDF use DF}
 \int_{X \in \Om ^N} H_N (X) \mubf_N (X) dX \geq 
 \\ N \int_{\R ^d} V d\mut_N ^{(1)} - \frac{N ^2}{2(N-1)} \iint_{\R ^d \times \R ^d} \logal |x-y| d\mubf_N ^{(2)} (x,y) + C \logal(0). 
\end{multline}
Combining~\eqref{eq:aDF rel entr},~\eqref{eq:Pnu} and recalling that the temperature $T$ is equal to $\left(\beta N \right) ^{-1}$ we obtain
\begin{multline}\label{eq:aDF final}
N ^{-1} \F_N [\mubf_N] \geq \int_{\rho \in \PP(\R^d)} \MFfal [\rho] dP_{\mubf_N} (\rho) 
\\ +C N ^{-1} \left( \logal(0) - \beta ^{-1}\right) \geq \MFeal - C N ^{-1} \log (\alpha) - C (\beta N) ^{-1}  
\end{multline}
where $\MFeal$ is the minimum (amongst probability measures) of the modified functional
\[
\MFfal [\rho] := \int_{\R ^d} V (1-\beta ^{-1} N ^{-1}) d\rho -\frac{N}{2(N-1)}\iint_{\R ^d \times \R ^d } \logal |x-y| d\rho (x) d\rho (y).
\]
Exploiting the associated variational equation,~\eqref{eq:log alpha 3} and the Feynman-Hellmann principle, it is not difficult to see that for small enough $\alpha$
$$ \left| \MFeal - \MFeo \right| \leq C \alpha ^d + C N ^{-1}$$
and we thus conclude
\[
N ^{-1} \F_N [\mubf_N] \geq \MFeo - C N ^{-1} \log (\alpha) - C \beta N ^{-1} - C N ^{-1} - C \alpha ^d - C, 
\]
which gives the desired lower bound after optimizing over $\alpha $ (take $\alpha = N ^{-1/d}$).
\end{proof}

\begin{remark}[Possible extensions]\mbox{}\label{rem:log gas}\\\vspace{-0.4cm}
\begin{enumerate}
 \item One can also prove quantitative versions of~\eqref{eq:aDF Gibbs lim} following essentially the above method of proof. We shall not elaborate on this point for which we refer to~\cite{RouYng-14}.
 \item Another case we could deal with along the same lines is that of unitary, orthogonal and symplectic gaussian random matrices ensembles introduced by Dyson~\cite{Dyson-62a,Dyson-62b,Dyson-62c}. In this case $\R ^d$ is replaced by the unit circle, $\beta = 1,2,4$, $V\equiv 0$ in~\eqref{eq:aDF Hamil}. In this case, dividing the interaction by $N-1$ is irrelevant.
%
\hfill \qed
\end{enumerate}
\end{remark}

\newpage

\section{\textbf{The quantum de Finetti theorem and Hartree's theory}}\label{sec:quant}

Now we enter the heart of the matter, i.e. mean-field limits for large bosonic systems. We present first the derivation of the ground state of Hartree's theory for confined particles, e.g. living in a bounded domain. In such a case, the result is a rather straightforward consequence of the quantum de Finetti theorem proved by St\o{}rmer and Hudson-Moody~\cite{Stormer-69,HudMoo-75}. The latter describes all the \emph{strong} limits (in the sense of the trace-class norm) of the reduced density matrices of a large bosonic system. 

We will then move on to the more complex case of non-confined systems. In this chapter we will assume that the interaction potential has no bound state (it could be purely repulsive for example), the general case being dealt with later (Chapter~\ref{sec:Hartree}). In the absence of bound states it is sufficient to have at our disposal a de Finetti theorem describing all the \emph{weak} limits (in the sense of the weak-$\ast$ topology on $\gS ^1$) of the reduced density matrices.

The weak de Finetti theorem (introduced in~\cite{LewNamRou-13}) implies the strong de Finetti theorem and in fact the two results can be deduced from an even more general theorem appearing in~\cite{Stormer-69,HudMoo-75}. In these notes I chose not to follow this approach, but rather that of~\cite{AmmNie-08,LewNamRou-13} which is more constructive. This will be discussed in details in Section~\ref{sec:rel deF}, which announces the plan of the next chapters.

\subsection{Setting the stage}\label{sec:quant stage}\mbox{}\\\vspace{-0.4cm}

To simplify the discusstion, we will focus on the case of non relativistic quantum particles, in the absence of a magnetic field. The Hamiltonian will thus have the general form 
\begin{equation}\label{eq:quant hamil}
H_N = \sum_{j=1} ^N T_j + \frac{1}{N-1}\sum_{1\leq i < j \leq N} w(x_i-x_j), 
\end{equation}
acting on the Hilbert space $\gH_s ^N = \bigotimes_s ^N \gH$, i.e. the symmetric tensor product of $N$ copies of $\gH$ where $\gH$ denotes the space $L ^2(\Om)$ for $\Om \subset \R ^d$. The operator $T$ is a Schr\"odinger operator 
\begin{equation}\label{eq:op Schro}
T = - \Delta + V 
\end{equation}
with $V:\Om \mapsto \R$ and $T_j$ acts on the $j$-th variable: 
\[
T_j \psi_1 \otimes \ldots \otimes \psi_N = \psi_1 \otimes \ldots \otimes T_j \psi_j \otimes \ldots \otimes \psi_N. 
\]
We assume that $T$ is self-adjoint and bounded below, and that the interaction potential $w:\R \mapsto \R$ is bounded relatively to $T$ (as operators): for some $ 0 \leq \beta_-, \beta_+ < 1$
\begin{equation}\label{eq:T controls w}
 -\beta_-(T_1+ T_2)-C\leq w(x_1-x_2) \leq \beta_+ (T_1+T_2) + C.
\end{equation}
We also take $w$ symmetric
\[
w(-x) = w(x), 
\]
and decaying at infinity
\begin{equation}\label{eq:decrease w}
w\in L^p (\Om) + L ^{\infty} (\Om), \max(1,d/2) < p < \infty \to 0, w(x) \to 0 \mbox{ when }  |x|\to\infty.
\end{equation}

We will always make an abuse of notation by writing $w$ for the multiplication operator by $w(x_1-x_2)$ on $L ^2 (\Om)$.

\begin{remark}[Checking assumptions on $w$]\label{rem:asum w}\mbox{}\\
One may check that assumptions~\eqref{eq:decrease w} imply operator bounds of the form~\eqref{eq:T controls w}, using standard functional inequalities. We quickly sketch the method here for completness. Suppose that $w \in L ^{p} (\R ^d)$, such that the conjugate H\"older exponent $q$ is  smaller than half the critical exponent for the Sobolev injection $H^1 (\R ^d) \hookrightarrow L ^q (\R ^d)$:
\begin{equation}\label{eq:exponents}
 q = \frac{p}{p-1} <  \frac{p^*}{2}, \quad p ^* = \frac{2d}{d-2}. 
\end{equation}
We decompose $|w|$ as 
$$ |w| = |w| \1_{|w| \leq R} + |w| \1_{|w| \geq R}.$$
Then, pick $f\in C ^{\infty}_c (\R ^{2d})$ and write
\begin{align*}
\left| \left\langle f , w f \right\rangle \right| &= \left| \iint_{\R ^d \times \R^d}  w(x_1-x_2) |f(x_1,x_2)| ^2 d{x_1} d{x_2} \right| \\
&\leq \iint_{\R ^d \times \R^d}  \left| w(x_1-x_2) \one_{|w|\geq R} \right| |f(x_1,x_2)| ^2 d{x_1} d{x_2} + R \iint_{\R ^d \times \R^d} |f(x_1,x_2)| ^2 d{x_1} d{x_2} .
\end{align*}
Next, using H\"older's and Sobolev's inequalities:
\begin{multline*}
\iint_{\R ^d \times \R^d}  \left| w(x_1-x_2) \one_{|w|\geq R} \right| |f(x_1,x_2)| ^2 d{x_1} d{x_2} \\
\leq \int_{\R ^d}  \left(\int_{\R ^d} |w (x_1-x_2)| ^p \one_{|w|\geq R} d{x_2} \right) ^{1/p} \left(\int_{\R ^d} |f (x_1,x_2)| ^{2q} d{x_2} \right) ^{1/q} d{x_1}
\\\leq \norm{w\one_{|w|\geq R}}_{L ^p} \int_{\R ^d} \norm{f(.,x_2)}_{H ^1} ^2 d{x_2}
\end{multline*}
with $p ^{-1} + q ^{-1} = 1$. All in all we thus have 
\begin{equation*}
\left| \left\langle f , w f \right\rangle \right| \leq \norm{w\one_{|w|\geq R}}_{L ^p} \iint_{\R ^d} |\nabla_{x_1} f(x_1,x_2)| ^2   dx_1 d{x_2} + R  \iint_{\R ^d \times \R^d} |f(x_1,x_2)| ^2 d{x_1} d{x_2}.
\end{equation*}
Since $w \in L ^p$, $\norm{w\one_{|w|\geq R}}_{L ^p} \to 0$ when $R\to \infty$ and we thus have
$$ \left| \left\langle f , w f \right\rangle \right| \leq h(R) \left| \left\langle f , -\Delta_{x_1} f \right\rangle \right| + (R + h (R) ) \left\langle f ,  f \right\rangle,$$
where $h(R)$ can be made arbitrarily small by taking $R$ large. In particular we can take $h(R)<1$, and this leads to an operator inequality of the desired form~\eqref{eq:T controls w}.

\hfill\qed

\end{remark}

The above assumptions ensure by well-known methods that $H_N$ is self-adjoint on the domain of the Laplacian, and bounded below, see~\cite{ReeSim2}. 
Our goal is to describe the ground state of~\eqref{eq:quant hamil}, i.e. a state achieving\footnote{The ground state might not exist, in which case we think of a sequence of states asymptotically achieving the infimum.} 
\begin{equation}\label{eq:quant ground state}
E(N) = \inf \sigma_{\gH ^N} H_N  = \inf_{\Psi \in \gH ^N, \norm{\Psi} = 1} \left\langle \Psi, H_N \Psi \right\rangle_{\gH ^N}.
\end{equation}
In the mean-field regime that concerns us here, we expect that any ground state satifies 
\begin{equation}\label{eq:quant factor form}
\Psi_N \approx u ^{\otimes N} \mbox{ when } N\to \infty
\end{equation}
in a sense to be made precise later, which naturally leads us to the Hartree functional
\begin{align}\label{eq:Hartree func}
\EH [u] &=  N ^{-1} \left\langle u ^{\otimes N}, H_N u ^{\otimes N} \right\rangle_{\gH ^N} = \langle u,T u \rangle_{\gH} +  \frac{1}{2} \langle u\otimes u,w \:u \otimes u \rangle_{\gH_s ^2} \nonumber\\
&= \int_{\Om} |\nabla u | ^2 + V |u| ^2 + \frac{1}{2} \iint_{\Om\times \Om} |u(x)| ^2 w(x-y) |u(y)| ^2 dxdy. 
\end{align}
We shall denote $\eH$ and $\uH$ the minimum and a minimizer for $\EH$ respectively. By the variational principle we of course have the upper bound 
\begin{equation}\label{eq:up bound Hartree}
\frac{E(N)}{N} \leq \eH 
\end{equation}
and we aim at proving a matching lower bound to obtain 
\begin{equation}\label{eq:Hartree lim formel}
 \frac{E(N)}{N} \to \eH \mbox{ when } N\to \infty.
\end{equation}

\begin{remark}[Generalizations]\label{rem:energie cin}\mbox{}\\
All the main ideas can be introduced in the preceding framework, we refer to~\cite{LewNamRou-13} for a discussion of generalizations. One can even think of the case where both $V$ and $w$ are smooth compactly supported functions if one wishes to understand the method in the simplest possible case.

A very interesting generalization consists in substituting the Laplacian in~\eqref{eq:op Schro} by a relativistic kinetic energy operator and/or including a magnetic field as described in Section~\ref{sec:forma quant}. One must then adapt the assumptions ~\eqref{eq:T controls w} and~\eqref{eq:decrease w} but the message stays the same: the approach applies as long as the many-body Hamiltonian and the limit functional are both well-defined.

Another possible generalization is the inclusion of interactions involving more than two particles at a time, to obtain functionals with higher-order non-linearities in the limit. It is of course necessary to consider a Hamiltonian in mean-field scaling by adding terms of the form e.g.
$$ \lambda_N\sum_{1\leq i < j,k \leq N} w (x_i-x_j,x_i-x_k)$$
with $\lambda_N \propto N^{-2}$ when $N\to \infty$. It is also possible to take into account a more general form than just the multiplication by a potential, under assumptions of the same type as~\eqref{eq:T controls w}.\hfill \qed
\end{remark}

\subsection{Confined systems and the strong de Finetti theorem}\label{sec:Hartree deF fort}\mbox{}\\\vspace{-0.4cm}

By ``confined system'' we mean that we are dealing with a compact setting. We will make one the two following assumptions: either
\begin{equation}\label{eq:hartree confine 1}
\Om \subset \R ^d  \mbox{ is a bounded set } 
\end{equation}
or
\begin{equation}\label{eq:hartree confine 2}
\Om = \R ^d  \mbox{ and } V(x)\to \infty \mbox{ when } |x|\to \infty
\end{equation}
with $V$ the potential appearing in~\eqref{eq:op Schro}. We will also assumte that 
\[
 V \in L ^p_{\rm loc} (\Om), \max(1,d/2) < p \leq \infty. 
\]
In both cases it is well-known~\cite{ReeSim4} that 
\begin{equation}\label{eq:resol comp}
T = -\Delta + V \mbox{ has compact resolvent}, 
\end{equation}
which allows to easily obtain strong convergence of the reduced density matrices of a ground state of~\eqref{eq:quant hamil}. One may then really think of the limit object as a quantum state with infinitely many particles. Taking inspiration from the classical setting discussed previously, the natural definition is the following:

\begin{definition}[\textbf{Bosonic state with infinitely many particles}]\label{def:etat infini}\mbox{}\\
Let $\gH$ be a complex separable Hilbert space and, for all $n\in \N$, let $\gH_s ^n$ be the corresponding bosonic $n$-particles space. We call a \emph{bosonic state} with infinitely many particles a sequence $(\gamma ^{(n)})_{n\in \N}$ of trace-class operators satisfying
\begin{itemize}
\item $\gamma ^{(n)}$ is a bosonic $n$-particles state: $\gamma ^{(n)}\in \gS ^1 (\gH_s ^n)$ is self-adjoint, positive and   
\begin{equation}\label{eq:defi inf trace}
\tr_{\gH_s ^n}[\gamma ^{(n)}] = 1. 
\end{equation}
\item the sequence $(\gamma ^{(n)})_{n\in \N}$ is consistent:
\begin{equation}\label{eq:defi inf consistance}
\tr_{n+1} [\gamma ^{(n+1)}] = \gamma ^{(n)} 
\end{equation}
where $\tr_{n+1}$ is the partial trace with respect to the last variable in $\gH ^{n+1}$.
\end{itemize}
\hfill \qed
\end{definition}

The key property is the consistency~\eqref{eq:defi inf consistance}, which ensures that the sequence under consideration does describe a physical state. Note that $\gamma ^{(0)}$ is just a real number and that consistency implies that $\tr_{\gH ^n}[\gamma ^{(n)}] = 1$ for all $n$ as soon as $\gamma ^{(0)}=1$. 

A particular case of symmetric state is a product state:

\begin{definition}[\textbf{Product state with infinitely many particles}]\mbox{}\label{def:etat produit}\\
A \emph{product state} with infinitely many particles is a sequence of trace-class operators $\gamma ^{(n)}\in \gS ^1 (\gH_s ^n)$ with
\begin{equation}\label{eq:etat produit}
\gamma ^{(n)} = \gamma ^{\otimes n},  
\end{equation}
for all $n\geq 0$ where $\gamma$ is a one-particle state. A bosonic product state is necessarily of the form 
\begin{equation}\label{eq:etat produit bosonique}
\gamma ^{(n)} = \left(|u \rangle \langle u | \right) ^{\otimes n}  = |u ^{\otimes n}\rangle \langle u ^{\otimes n} | 
\end{equation}
with $u\in S\gH$, the unit sphere of the one-body Hilbert space.\hfill\qed
\end{definition}

That bosonic product states are all of the form~\eqref{eq:etat produit bosonique} comes from the observation that if $\gamma \in \gS ^1 (\gH)$ is not pure (i.e. is not a projector), then $\gamma ^{\otimes 2}$ cannot have bosonic symmetry~\cite{HudMoo-75}.

The strong de Finetti theorem is the appropriate tool to describe these objects and specify the link between the two previous definitions. In the following form, it is due to Hudson and Moody~\cite{HudMoo-75}:

\begin{theorem}[\bf Strong quantum de Finetti]\label{thm:DeFinetti fort}\mbox{}\\
Let $\gH$ be a separable Hilbert space and $(\gamma^{(n)})_{n\in \N}$ a bosonic state with infinitely many particles on $\gH$. There exists a unique Borel probability measure $\mu \in \PP (S\gH)$ on the sphere $S\gH = \left\{ u \in\gH, \norm{u} = 1\right\}$ of $\gH$, invariant under the action\footnote{Multiplication by a constant phase $e ^{i\theta}, \theta \in \R$.} of $S ^1$, such that
\begin{equation}
\gamma^{(n)}=\int_{S\gH}|u^{\otimes n}\rangle\langle u^{\otimes n}| \, d\mu(u)
\label{eq:melange}
\end{equation}
for all $n\geq0$.
\end{theorem}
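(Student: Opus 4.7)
The natural strategy, following the two-step outline announced for Chapters~\ref{sec:deF finite dim} and~\ref{sec:locFock}, is to first establish a quantitative version of the theorem when $\gH$ is finite-dimensional, and then to recover the general case by localizing onto an exhausting sequence of finite-dimensional subspaces, treating the possible loss of mass in Fock space.

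\emph{Step 1: Explicit construction in finite dimension.} Assume temporarily $\dim\gH=d<\infty$. The key algebraic ingredient is the Schur identity
\begin{equation*}
\int_{S\gH}|u^{\otimes N}\rangle\langle u^{\otimes N}|\,du \;=\; \frac{1}{\dim\gH_s^N}\,P_s^N,
\end{equation*}
where $du$ is the uniform probability measure on $S\gH$ and $P_s^N$ is the orthogonal projector onto $\gH_s^N$. Using the consistency relation~\eqref{eq:defi inf consistance} I may work with $\gamma^{(N)}$ for $N$ as large as I wish, and associate to it the $S^1$-invariant probability measure
\begin{equation*}
d\mu_N(u) \;:=\; \dim(\gH_s^N)\,\langle u^{\otimes N},\gamma^{(N)} u^{\otimes N}\rangle\,du.
\end{equation*}
A direct Wick-type combinatorial expansion then yields the quantitative bound
\begin{equation*}
\Bigl\lVert \gamma^{(n)}-\int_{S\gH}|u^{\otimes n}\rangle\langle u^{\otimes n}|\,d\mu_N(u)\Bigr\rVert_{\gS^1(\gH_s^n)} \;\leq\; \frac{C\,d\,n}{N},
\end{equation*}
uniformly in the state. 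Since $\{\mu_N\}_{N\geq n}$ is a sequence of probability measures on the compact space $S\gH$, it has a weak-$\ast$ limit $\mu$ (still $S^1$-invariant), and passing to the limit $N\to\infty$ with $n$ fixed gives~\eqref{eq:melange} in finite dimension.

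\emph{Step 2: Reduction from infinite to finite dimension.} For a general separable $\gH$, pick finite-dimensional orthogonal projectors $P_k\nearrow\one_\gH$ strongly. The truncated sequence $\gamma_k^{(n)}:=P_k^{\otimes n}\gamma^{(n)}P_k^{\otimes n}$ is still consistent in $n$ and lives on $(P_k\gH)_s^n$, but only satisfies $\tr\gamma_k^{(n)}\leq 1$; mass is potentially lost to the orthogonal complement. Applying Step~1 inside $P_k\gH$ produces a (sub-)probability measure $\mu_k$ on the ball of $P_k\gH$ representing $\gamma_k^{(n)}$ for every $n$. The tools of Chapter~\ref{sec:locFock} on Fock-space localization allow one to simultaneously track the mass leaking out of $P_k\gH$, and crucially to prove that as $k\to\infty$ no mass escapes to infinity in $\gH$: since $\tr\gamma_k^{(n)}\to\tr\gamma^{(n)}=1$, the family $\{\mu_k\}$ is tight as a family of measures on the unit ball of $\gH$ endowed with the weak topology, and any cluster point $\mu$ is supported on the unit sphere $S\gH$. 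Passing to the limit in the identity $\gamma_k^{(n)}=\int|u^{\otimes n}\rangle\langle u^{\otimes n}|\,d\mu_k(u)$ yields the sought representation~\eqref{eq:melange}.

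\emph{Step 3: Uniqueness.} If $\mu,\mu'\in\PP(S\gH)$ are both $S^1$-invariant and represent the same $(\gamma^{(n)})$, then for every $n$ and every trace-class operator $A$ on $\gH_s^n$,
\begin{equation*}
\int_{S\gH}\langle u^{\otimes n},A\,u^{\otimes n}\rangle\,d\mu(u) \;=\; \int_{S\gH}\langle u^{\otimes n},A\,u^{\otimes n}\rangle\,d\mu'(u).
\end{equation*}
The polynomials $u\mapsto\langle u^{\otimes n},A u^{\otimes n}\rangle$ generate, by a Stone--Weierstrass argument performed in finite-dimensional subspaces, a dense subalgebra of the continuous $S^1$-invariant functions on $S\gH$, which forces $\mu=\mu'$.

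The main obstacle is clearly Step~2: the finite-dimensional construction is essentially algebraic and reduces to the Schur identity plus combinatorics, whereas the jump to infinite dimension requires a genuinely analytic device (Fock-space localization) to preserve the probabilistic normalization in the limit $k\to\infty$. This is precisely the content built up in Chapter~\ref{sec:locFock}, and without it one only obtains a \emph{weak} de Finetti theorem with sub-probability measures.
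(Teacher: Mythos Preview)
Your overall strategy coincides with the paper's scheme~\eqref{eq:schema deF 3}: prove a quantitative finite-dimensional version (your Step~1 is exactly the CKMR construction of Chapter~\ref{sec:deF finite dim}), then lift to infinite dimension by localization (Chapter~\ref{sec:locFock}); your uniqueness argument is also the paper's.

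There is however a genuine gap in Step~2. The localized hierarchy $\gamma_k^{(n)}:=P_k^{\otimes n}\gamma^{(n)}P_k^{\otimes n}$ is \emph{not} consistent: one only has $\tr_{n+1}\gamma_k^{(n+1)}\leq \gamma_k^{(n)}$, with strict inequality in general (take $\gamma^{(2)}$ the projector onto $e_1\otimes_s e_2$ with $e_2\perp P_k\gH$; then $\gamma_k^{(2)}=0$ but $\gamma_k^{(1)}=\tfrac12|e_1\rangle\langle e_1|\neq 0$). You can therefore not ``apply Step~1 inside $P_k\gH$'' as written, since Step~1 used consistency to identify the reduced matrices of $\gamma^{(N)}$ with $\gamma^{(n)}$. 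This is also why you were forced to speak of a measure on the \emph{ball} in Step~2 while Step~1 only ever produces measures on the sphere: there is no mechanism in your argument that converts one into the other.

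The paper fixes this by proving the \emph{weak} de Finetti theorem (Theorem~\ref{thm:DeFinetti faible}) for sequences of $N$-body states and then deducing the strong version as a corollary. One sets $\Gamma_N:=\gamma^{(N)}$; by consistency $\Gamma_N^{(n)}=\gamma^{(n)}$ for $n\leq N$, so $\Gamma_N^{(n)}\to\gamma^{(n)}$ strongly. The Fock-space localization of Lemma~\ref{lem:Fock loc} turns $\Gamma_N$ into a genuine \emph{state} $G_N^P$ on $\cF_s^{\leq N}(P\gH)$, and formula~\eqref{eq:Fock mat red} expresses $P^{\otimes n}\gamma^{(n)}P^{\otimes n}$ as a superposition over particle-number sectors $k=n,\dots,N$; the finite-dimensional theorem is then applied sector by sector to the bona fide $k$-body states $G_{N,k}^P$, not to the non-consistent projected hierarchy. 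This is the actual content of Section~\ref{sec:proof deF faible}. Once the weak theorem is in hand, $\tr\gamma^{(n)}=1$ forces the de Finetti measure onto $S\gH$ (Remark~\ref{rem:weak deF}(3)), and the strong theorem follows.
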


In other words, \textbf{every bosonic state with infinitely many particles is a convex combination of bosonic product states}. To deduce the validity of the mean-field approximation at the level of the ground state of~\eqref{eq:quant hamil}, it then suffices to show that the limit problem is set in the space of states with infinitely many particles, which is relatively easy in a compact setting. We are going to prove the following result:

\begin{theorem}[\textbf{Derivation of Hartree's theory for confined bosons}]\label{thm:confined}\mbox{}\\
Under the preceding assumptions, in particular~\eqref{eq:hartree confine 1} or~\eqref{eq:hartree confine 2}
$$\lim_{N\to\ii}\frac{E(N)}{N}=\eH.$$
Let $\Psi_N$ be a ground state for $H_N$, achieving the infimum~\eqref{eq:quant ground state} and 
$$ \gamma_N ^{(n)} := \tr_{n+1 \to N} \left[ |\Psi_N\rangle\langle \Psi_N|\right]$$
be its $n$-th reduced density matrix. There exists a unique probability measure $\mu$ on $\cM_{\rm H}$, the set of minimizers of $\EH$ (modulo a phase), such that, along a subsequence and for all $n\in \N$ 
\begin{equation}\label{eq:def fort result}
\lim_{N\to\ii} \gamma^{(n)}_{N}=\int_{\cM_{\rm H}} d\mu(u)\;|u^{\otimes n}\rangle\langle u^{\otimes n}|
\end{equation}
strongly in the  $\gS ^1 (\gH ^n)$ norm. In particular, if $\EH$ has a unique minimizer (modulo a constant phase), then for the whole sequence 
\begin{equation}
\lim_{N\to\ii} \gamma^{(n)}_N=|\uH^{\otimes n}\rangle\langle \uH^{\otimes n}|.
\label{eq:BEC-confined} 
\end{equation}
\end{theorem}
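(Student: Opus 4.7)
The plan is to combine the strong quantum de Finetti theorem with a compactness argument made possible by the confining assumptions. The upper bound $E(N)/N\leq\eH$ is immediate by testing against the trial state $\uH^{\otimes N}$, so the bulk of the work lies in the matching lower bound. Let $\Psi_N$ be a ground (or approximate ground) state and $\gamma_N^{(n)}=\tr_{n+1\to N}|\Psi_N\rangle\langle\Psi_N|$ its reduced density matrices; the symmetry of $H_N$ and the scaling $\lambda=(N-1)^{-1}$ give
\begin{equation*}
\frac{\langle\Psi_N,H_N\Psi_N\rangle}{N}=\Tr[T\gamma_N^{(1)}]+\frac{1}{2}\Tr[w\,\gamma_N^{(2)}].
\end{equation*}

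First I would extract a priori bounds. The operator inequality $w\geq-\beta_-(T_1+T_2)-C$ applied inside $H_N$ forces a uniform bound $\Tr[T\gamma_N^{(1)}]\leq C$. Since under \eqref{eq:hartree confine 1} or \eqref{eq:hartree confine 2} the operator $T$ has compact resolvent, this makes $\{\gamma_N^{(1)}\}_N$ relatively compact for the trace-class norm. The same reasoning applied to $T_1+T_2$, which also has compact resolvent on $\gH_s^2$, yields trace-norm compactness of $\{\gamma_N^{(2)}\}_N$; a diagonal extraction produces a subsequence along which $\gamma_N^{(n)}\to\gamma^{(n)}$ strongly in $\gS^1(\gH_s^n)$ for every $n\in\N$.

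Next I would identify the limit. Continuity of partial traces for the trace-class norm lets us pass the consistency relation $\tr_{n+1}\gamma_N^{(n+1)}=\gamma_N^{(n)}$ to the limit, while bosonic symmetry is preserved, so $(\gamma^{(n)})_{n\in\N}$ is a bosonic state with infinitely many particles in the sense of Definition~\ref{def:etat infini}. Theorem~\ref{thm:DeFinetti fort} then produces a unique $S^1$-invariant probability measure $\mu\in\PP(S\gH)$ with
\begin{equation*}
\gamma^{(n)}=\int_{S\gH}|u^{\otimes n}\rangle\langle u^{\otimes n}|\,d\mu(u),\qquad \forall n\geq 0.
\end{equation*}

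Finally I would pass to the limit in the energy. Strong convergence of $\gamma_N^{(1)}$ and the lower bound on $T$ give by Fatou $\liminf\Tr[T\gamma_N^{(1)}]\geq\Tr[T\gamma^{(1)}]$. For the interaction, the decay assumption~\eqref{eq:decrease w} combined with the compact resolvent of $T_1+T_2$ makes $w$ relatively compact with respect to $T_1+T_2$; together with the uniform bound $\Tr[(T_1+T_2)\gamma_N^{(2)}]\leq C$ and the strong convergence of $\gamma_N^{(2)}$ this upgrades lower semicontinuity to the full convergence $\Tr[w\gamma_N^{(2)}]\to\Tr[w\gamma^{(2)}]$. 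Inserting the de Finetti decomposition,
\begin{equation*}
\liminf_{N\to\infty}\frac{E(N)}{N}\geq\Tr[T\gamma^{(1)}]+\frac12\Tr[w\gamma^{(2)}]=\int_{S\gH}\EH[u]\,d\mu(u)\geq\eH,
\end{equation*}
matching the upper bound. The equality case forces $\mu$ to be supported on $\cM_{\rm H}$, yielding~\eqref{eq:def fort result}; when $\EH$ has a unique minimizer up to phase, the $S^1$-invariance of $\mu$ pins it down uniquely, which promotes subsequential convergence to full-sequence convergence and produces~\eqref{eq:BEC-confined}. The main obstacle is the interaction term: turning weak lower semicontinuity into actual convergence requires exploiting both the vanishing of $w$ at infinity and the compact resolvent of $T$ to prevent any loss of mass in the region where the interaction is active.
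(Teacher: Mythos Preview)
Your proposal is correct and follows essentially the same route as the paper: a priori kinetic energy bounds plus compact resolvent of $T$ force strong trace-class convergence of the reduced density matrices, the strong de Finetti theorem then identifies the limiting hierarchy, and lower semicontinuity of the energy closes the argument. The only cosmetic difference is that the paper bundles the one- and two-body terms into $\tfrac12\Tr[(T_1+T_2+w)\gamma_N^{(2)}]$ and applies a single Fatou lemma to the positive operator $T_1+T_2+w-2C_T$, whereas you treat $T$ and $w$ separately; the paper's packaging is slightly cleaner and avoids the need to invoke relative compactness of $w$.
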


The ideas of the proof are essentially contained in~\cite{FanSpoVer-80,PetRagVer-89,RagWer-89}, applied to a somewhat different context however. We follow some clarifications given in~\cite[Section~3]{LewNamRou-13}. 

\begin{proof}[Proof]
We have to prove the lower bound corresponding to~\eqref{eq:up bound Hartree}. We start by writing
\begin{align}\label{eq:afort ener matrices}
\frac{E(N)}{N} &=  \frac1N \left\langle \Psi_N, H_N \Psi_N \right\rangle_{\gH ^N} = \tr_{\gH} [T \gamma_N ^{(1)}] + \frac12 \tr_{\gH_s ^2} [w\gamma_N ^{(2)}]
\nonumber 
\\ &= \frac12 \tr_{\gH_s ^2} \left[ \left(T_1 + T_2 + w \right)\gamma_N ^{(2)}\right]
\end{align}
and we now have to describe the limits of the reduced density matrices $\gamma_N ^{(1)}$ and $\gamma_N ^{(2)}$. Since the sequences $(\gamma ^{(n)}_N)_{N\in \N}$ are by definition bounded in $\gS ^1$, by a diagonal extraction argument we may assume that for all $n\in \N$
\[
 \gamma_N ^{(n)}\wto_\ast \gamma ^{(n)} \in \gS^1 (\gH_s ^n)
\]
weakly-$\ast$ in $\gS ^1 (\gH ^n)$. That is, for every compact operator $K_n$ over $\gH ^n$ we have 
$$ \tr _{\gH ^n} \left[\gamma_N ^{(n)} K_n\right] \to \tr _{\gH ^n} \left[\gamma ^{(n)} K_n\right].$$ 
We are going to show that the limit is actually strong. For this it is sufficient (see~\cite{dellAntonio-67,Robinson-70} or~\cite[Addendum H]{Simon-79}) to show that
\begin{equation}\label{eq:afort masse}
\tr _{\gH_s ^n} \left[\gamma ^{(n)} \right] = \tr _{\gH_s ^n} \left[\gamma_N ^{(n)} \right] = 1,
\end{equation}
i.e. no mass is lost in the limit. We start by noting that, using Assumption~\eqref{eq:T controls w},
\begin{align*}
\eH &\geq  \frac12 \tr_{\gH_s ^2} \left[ \left(T_1 + T_2 + w \right)\gamma_N ^{(2)}\right] 
\\&\geq (1-\beta_-) \frac12 \tr\left[(T_1 + T_2)\gamma_N ^{(2)}\right] - C \tr [\gamma_N ^{(2)}]
\\& = (1-\beta_-) \tr\left[T \gamma_N ^{(1)}\right] - C. 
\end{align*}
Thus $\tr_{\gH} [T\gamma_N ^{(1)}]$ is uniformly bounded and, up to a possible further extraction, we have 
$$ \left(T  + C_0 \right) ^{1/2} \gamma_N ^{(1)} \left(T  + C_0 \right) ^{1/2} \wto_\ast \left(T  + C_0 \right) ^{1/2} \gamma ^{(1)} \left(T  + C_0 \right) ^{1/2}$$ 
for a certain constant $C_0$. Consequently 
\begin{multline*}
1= \tr _{\gH } [\gamma_N ^{(1)} ] = \tr _{\gH } \left[ \left(T  + C_0 \right) ^{-1} \left(T  + C_0 \right) ^{1/2} \gamma_N ^{(1)}\left(T  + C_0 \right) ^{1/2}  \right] \\ \to \tr _{\gH} \left[ \left(T  + C_0 \right) ^{-1} \left(T  + C_0 \right) ^{1/2} \gamma ^{(1)}\left(T  + C_0 \right) ^{1/2}  \right] = \tr _{\gH } [\gamma ^{(1)} ]
\end{multline*}
since $\left(T  + C_0 \right) ^{-1}$ is by Assumption~\eqref{eq:resol comp} a compact operator. One obtains~\eqref{eq:afort masse} similarly by noting that 
\[
\tr_{\gH} [T\gamma_N ^{(1)}] = \frac{1}{n}\tr_{\gH ^n} \left[ \sum_{j=1} ^n T_j \gamma_N ^{(n)} \right] 
\]
is uniformly bounded in $N$ and that $\sum_{j=1} ^n T_j$ also has compact resolvent which allows for a similar argument.

We thus have, for all $n\in \N$
\[
 \gamma_N ^{(n)}\to \gamma ^{(n)}
\]
strongly in trace-class norm, and in particular, for all bounded operator $B_n$ on $\gH ^n$
$$ \tr _{\gH ^n} [\gamma_N ^{(n)} B_n] \to \tr _{\gH ^n} [\gamma ^{(n)} B_n].$$ 
Testing this convergence with $B_{n+1} = B_n \otimes \one$ we deduce
$$ \tr_{n+1} [\gamma ^{(n+1)} ] = \gamma ^{(n)}$$
and thus the sequence $(\gamma ^{(n)})_{n\in \N}$ describes a bosonic state with infinitely many particles in the sense of Definition~\ref{def:etat infini}. We apply Theorem~\ref{thm:DeFinetti fort}, which yields a measure $\mu \in \PP (S\gH)$. In view of Assumption~\eqref{eq:T controls w}, the operator $T_1+T_2 + w$ is bounded below on $\gH ^2$, say by $2 C_T$. Since $\tr_{\gH ^2} \gamma ^{(2)} = 1$ we may write
\begin{align*}
\liminf_{N\to \infty}\frac12 \tr_{\gH ^2} \left[ \left(T_1 + T_2 + w \right)\gamma_N ^{(2)}\right] &=  \liminf_{N\to \infty} \frac12 \tr_{\gH ^2} \left[ \left(T_1 + T_2 + w -2C_T \right)\gamma_N ^{(2)}\right] + C_T\\
& \geq \frac12 \tr_{\gH ^2} \left[ \left(T_1 + T_2 + w -2C_T \right)\gamma ^{(2)}\right] + C_T 
\\&= \frac12 \tr_{\gH ^2} \left[ \left(T_1 + T_2 + w  \right)\gamma ^{(2)}\right]
\end{align*}
using Fatou's lemma for positive operators. Using the linearity of the energy as a function of $\gamma ^{(2)}$ and~\eqref{eq:melange} 
\[
\liminf_{N\to \infty} \frac{E(N)}{N} \geq \int_{u\in S\gH} \frac12 \tr_{\gH ^2} \left[ \left(T_1 + T_2 + w  \right)|u ^{\otimes 2}\rangle \langle u ^{\otimes 2} |\right] d\mu(u) = \int_{u\in S\gH} \EH [u] d\mu(u) \geq \eH,
\]
which is the desired lower bound. The other statements of the theorem follow by noting that equality must hold in all the previous inequalities and thus that $\mu$ may charge only minimizers of Hartree's functional.
\end{proof}

It is clear from the preceding proof that the structure of bosonic states with infinitely many particles plays the key role. The Hamiltonian itself could be chosen in a very abstract way provided it includes a confining mechanism allowing to obtain strong limits. Several examples are mentioned in~\cite[Section~3]{LewNamRou-13}.

\subsection{Systems with no bound states and the weak de Finetti theorem}\label{sec:Hartree deF faible}\mbox{}\\\vspace{-0.4cm}

In the previous section we have used in a strong way the assumption that the system was confined in the sense of~\eqref{eq:hartree confine 1}-\eqref{eq:hartree confine 2}. These assumptions are sufficient to understand many physical cases, but it is highly desirable to be able to relax them. One is then lead to study cases where the convergence of reduced density matrices is not better than weak-$\ast$, and to describe as exhaustively as possible the possible scenarii, in the spirit of the concentration-compactness principle. A first step, before asking the question of how compactness may be lost, consists in describing the weak limits themselves. It turns out that we still have a very satisfying description. In fact, one could not have hoped for better than the following theorem, proven in~\cite{LewNamRou-13}:

\begin{theorem}[\bf Weak quantum de Finetti]\label{thm:DeFinetti faible}\mbox{}\\
Let $\gH$ be a complex separable Hilbert space and $(\Gamma_N)_{N\in \N}$ a sequence of bosonic states with $\Gamma_N \in \gS ^1 (\gH_s ^N)$. We assume that for all $n\in\N$ 
\begin{equation}\label{eq:def faible convergence}
\Gamma_N ^{(n)} \wto_\ast \gamma ^{(n)}
\end{equation}
in $\gS ^1 (\gH_s ^n)$. Then there exists a unique probability measure $\mu \in \PP (B\gH)$ on the unit ball $B\gH = \left\{ u \in\gH, \norm{u} \leq 1\right\}$ of $\gH$, invariant under the action of $S^1$, such that
\begin{equation}
\gamma^{(n)}=\int_{B\gH}|u^{\otimes n}\rangle\langle u^{\otimes n}| \, d\mu(u)
\label{eq:melange faible}
\end{equation}
for all $n\geq0$.
\end{theorem}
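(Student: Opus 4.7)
The plan is to reduce the infinite-dimensional weak statement to its finite-dimensional strong counterpart by a localization procedure, then reassemble the pieces into a single measure on $B\gH$.

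First I would handle uniqueness, which is the cleaner half. The family of functions $u\in B\gH \mapsto \tr_{\gH_s^n}[K_n|u^{\otimes n}\rangle\langle u^{\otimes n}|]$, indexed by $n\geq 0$ and compact $K_n$ on $\gH_s^n$, is stable under products (via symmetrized tensor products, as in the classical monomials $M_{k,\phi}$ introduced earlier), contains the constants, and separates points of $B\gH$ equipped with its weak topology. Since $\gH$ is separable, $B\gH$ endowed with the weak topology is a compact metrizable space, so Stone--Weierstrass yields density in $C(B\gH)$. The identities $\int \tr[K_n|u^{\otimes n}\rangle\langle u^{\otimes n}|]\,d\mu(u)=\tr[K_n\gamma^{(n)}]$ then pin down $\mu$ uniquely.

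For existence, the plan is to choose an increasing net of orthogonal finite-rank projections $P$ on $\gH$ with $P\nearrow \mathrm{Id}$ strongly, and to produce, for each such $P$, a representing measure $\mu_P$ on $B(P\gH)$ for the truncated density matrices $P^{\otimes n}\gamma^{(n)}P^{\otimes n}$. The construction of $\mu_P$ proceeds in two steps. First I would localize $\Gamma_N$ in Fock space with respect to $P$: one defines a state $G_{N,P}$ on $\gF(P\gH)=\bigoplus_{k\geq 0}(P\gH)_s^{\otimes k}$ whose sector-$k$ component is the partial trace $\binom{N}{k}\tr_{k+1\to N}[(P^{\otimes k}\otimes (1-P)^{\otimes (N-k)})\Gamma_N(P^{\otimes k}\otimes (1-P)^{\otimes (N-k)})]$, which is the natural geometric localization adapted to bosonic symmetry. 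These sector states, once conveniently normalized, are genuine bosonic states of fixed particle number on the finite-dimensional space $P\gH$, to which I would apply the quantitative finite-dimensional quantum de Finetti theorem (the content of Chapter~\ref{sec:deF finite dim}) to obtain explicit representing measures with error of order $(\dim P\gH)\, n/N$. Passing to the limit $N\to\infty$ at fixed $P$ and combining the sectors via their relative weights $w_k=\tr[\text{sector } k]$ gives a Borel probability measure $\mu_P$ on $B(P\gH)$ such that $P^{\otimes n}\gamma^{(n)}P^{\otimes n}=\int_{B(P\gH)}|u^{\otimes n}\rangle\langle u^{\otimes n}|\,d\mu_P(u)$ for every $n$.

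Finally I would assemble the family $(\mu_P)$ into a measure $\mu$ on $B\gH$. The uniqueness statement, applied in each finite-dimensional subspace, forces the compatibility relation: if $P'\leq P$, then $\mu_{P'}$ is the push-forward of $\mu_P$ under $u\mapsto P'u$. Since each $\mu_P$ is a sub-probability measure on the weakly compact set $B\gH$ (viewed through the embedding $B(P\gH)\hookrightarrow B\gH$), Prokhorov's theorem in the metrizable compact space $(B\gH,\text{weak})$ provides a weak-$\ast$ limit point $\mu$ along the net of projections. Testing against functions of the form $u\mapsto\tr[K_n|u^{\otimes n}\rangle\langle u^{\otimes n}|]$ with $K_n$ compact, and using that $P^{\otimes n}\gamma^{(n)}P^{\otimes n}\to\gamma^{(n)}$ in trace norm as $P\to\mathrm{Id}$, yields the desired representation; $S^1$-invariance of $\mu$ is inherited from the invariance of each $\mu_P$, which holds because $|u^{\otimes n}\rangle\langle u^{\otimes n}|$ only sees $u$ up to a phase.

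The main obstacle is the Fock-space localization step: unlike in Theorem~\ref{thm:confined}, one cannot argue that no mass is lost when truncating by $P$, so the truncated reduced density matrices of $\Gamma_N$ do \emph{not} form a consistent hierarchy of states of fixed particle number. This is exactly the reason the resulting measure lives on the ball $B\gH$ rather than the sphere $S\gH$: the deficit of mass $1-\|u\|^2$ accounts for particles that, in the $N\to\infty$ limit, escape from any fixed finite-dimensional subspace. Handling this cleanly requires the geometric localization in Fock space and careful uniform control of error terms in $P$, which is precisely what Chapter~\ref{sec:locFock} develops.
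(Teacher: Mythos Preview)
Your proposal is correct and follows essentially the same route as the paper: Fock-space localization by a finite-rank projector $P$, application of the finite-dimensional quantitative de Finetti theorem to the sector states $G_{N,k}^P$, passage to the limit $N\to\infty$ at fixed $P$, and finally assembly along $P\nearrow\mathrm{Id}$. The paper packages the sector decomposition slightly differently---it introduces operator-valued measures $M_{P,N}^{(n)}$ on $[0,1]$ indexed by $\lambda=k/N$ rather than working directly with the sector weights, and for the final assembly it invokes a compatibility-of-cylindrical-measures result (Skorokhod) rather than Prokhorov compactness---but these are equivalent implementations of the same idea. Your uniqueness argument via Stone--Weierstrass on $(B\gH,\mathrm{weak})$ is also a valid variant of the paper's, which instead projects to finite-dimensional subspaces and uses density of polynomials in the coordinates $u_i,\overline{u_j}$; note only that your generating functions are constant along $S^1$-orbits, so the separation-of-points claim should be read on $B\gH/S^1$, which is consistent with the $S^1$-invariance hypothesis.
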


\begin{remark}[On the weak quantum de Finetti theorem]\label{rem:weak deF}\mbox{}\\\vspace{-0.4cm}
\begin{enumerate}
\item Assumption~\eqref{eq:def faible convergence} can always be satisfied in practice. Modulo a diagonal extraction, one may always assume that convergence holds along a subsequence. This theorem thus exactly describes all the possible weak limits for a sequence of bosonic $N$-body states when $N\to \infty$.
\item The fact that the measure lives over the unit ball in~\eqref{eq:melange faible} is not surprising since there may be a loss of mass in cases covered by the theorem. In particular it is possible that $\gamma ^{(n)}=0$ for all $n$ and then $\mu = \delta_0$, the Dirac mass at the origin.  
\item The term weak refers to ``weak convergence'' and does not indicate that this result is less general than the strong de Finetti theorem. It is in fact more general. To see this, think of the case where no mass is lost, $\tr_{\gH ^n} [\gamma ^{(n)}] = 1$. The measure $\mu$ must then be supported on the sphere and convergence must hold in trace-class norm. It is by the way sufficient to assume that $\tr_{\gH ^n} [\gamma ^{(n)}] = 1$ for a certain $n\in \N$, and the convergence is strong for all $n$ since the measure $\mu$ does not depend on $n$.
\item Ammari and Nier have slightly more general results~\cite{Ammari-hdr,AmmNie-08,AmmNie-09,AmmNie-11}. In particular, it is not necessary to start from a state with a fixed particle number. One can consider a state on Fock space provided suitable bounds on its particle number (seen as a random variable in this framework) are available.
\item Uniqueness of the measure follows from a simple argument. Here we will mostly be interested in the existence, which is sufficient for static problems. For time-dependent problems on the contrary, uniqueness is crucial~\cite{AmmNie-08,AmmNie-09,AmmNie-11,CheHaiPavSei-13}.
\end{enumerate}

\hfill\qed
\end{remark}

Coming back to the derivation of Hartree's theory, let us recall that the energy upper bound~\eqref{eq:up bound Hartree} is always true. We are thus only looking for lower bounds. A case where knowning the weak-$\ast$ limit of reduced density matrices is then that of a weakly-$\ast$ lower semi-continuous functional. This remark may seem of marginal interest, but this case does cover a number of physically relevant systems, those with no bound states. 

We are going to prove the validity of Hartree's approximation in this case, using Theorem~\ref{thm:DeFinetti faible}. We shall work in $\R ^d$ and assume that the potential $V$ in~\eqref{eq:op Schro} is non-trapping in every direction:
\begin{equation}\label{eq:sans confinement}
V \in L ^p (\R ^d) + L ^{\infty} (\R ^d), \max{1,d/2}\leq p <\infty,  V(x) \to 0 \mbox{ when } |x|\to \infty.
\end{equation}
The assumption about the absence of bound states concerns the interaction potential $w$. It is materialized par the inequality\footnote{Or an appropriate variant when one considers a different kinetic energy, cf~Remark~\ref{rem:energie cin}.}
\begin{equation}\label{eq:no bound states}
- \Delta + \frac{w}{2} \geq 0 
\end{equation}
as operators. This means that $w$ is not attractive enough for particles to form bound states such as molecules etc ... Indeed, because of Assumption~\eqref{eq:decrease w}, $- \Delta + \frac{w}{2}$ can have only negative eigenvalues, its essential spectrum starting at $0$. In view of~\eqref{eq:no bound states}, it can in fact not have any eigenvalue at all, and thus no eigenfunctions which are by definition the bound states of the potential. A particular example is that of a purely repulsive potential $w\geq 0$.

\medskip

Under Assumption~\eqref{eq:no bound states}, particles that might escape to infinity no longer see the one-body potential $V$ and then necessarily carry a positive energy that can be neglected for a lower bound to the total energy. Particles staying confined by the potential $V$ are described by the weak-$\ast$ limits of density matrices. We can apply the weak de Finetti theorem to the latter, and this leads to the next theorem, for whose statement we need the notation
\begin{equation}\label{eq:Hartree perte masse}
\eH (\lambda) := \inf_{\norm{u} ^2 = \lambda} \EH [u],\quad 0\leq \lambda \leq 1
\end{equation}
for the Hartree energy in the case of a loss of mass. Under assumption~\eqref{eq:no bound states} it is not difficult to show that for all $ 0 \leq \lambda \leq 1$
\begin{equation}\label{eq:Hartree lambda}
\eH (\lambda) \geq \eH (1) = \eH 
\end{equation}
by constructing trial states made of two well-separated pieces of mass.

\begin{theorem}[\textbf{Derivation of Hartree's theory in the absence of bound states}]\label{thm:wlsc}\mbox{}\\
Under the previous assumptions, in particular~\eqref{eq:sans confinement} and~\eqref{eq:no bound states} we have
$$\lim_{N\to\ii}\frac{E(N)}{N}=\eH.$$
Let $\Psi_N$ be a sequence of approximate ground states, satisfying 
$$ \langle \Psi_N, H_N \Psi_N \rangle = E(N) + o (N)$$ 
in the limit $N\to \infty$ and 
$$ \gamma_N ^{(n)} := \tr_{n+1 \to N} \left[ |\Psi_N\rangle\langle \Psi_N|\right]$$
be the corresponding $n$-th reduced density matrices. There exists a probability measure $\mu$ supported on 
$$\cM_{\rm H} =\left\{ u\in B\gH, \; \EH[u] = \eH \left(\norm{u} ^2 \right) = \eH (1)\right\}$$ 
such that, along a subsequence and for all $n\in \N$ 
\begin{equation}\label{eq:def faible result}
\gamma^{(n)}_{N} \wto_\ast \int_{\cM_{\rm H}} d\mu(u)\;|u^{\otimes n}\rangle\langle u^{\otimes n}|
\end{equation}
in $\gS ^1 (\gH ^n).$ In particular, if $\eH$ has a unique minimizer $\uH$, with $\norm{\uH} = 1$, then for the whole sequence,
\begin{equation}
\lim_{N\to\ii} \gamma^{(n)}_N=|\uH^{\otimes n}\rangle\langle \uH^{\otimes n}|
\label{eq:BEC-confined faible} 
\end{equation}
strongly in trace-class norm.
\end{theorem}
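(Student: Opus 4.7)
The plan is to establish a lower bound matching the trivial upper bound $E(N)/N\le \eH$ (obtained by testing with $\uH^{\otimes N}$, or with a minimizing sequence for $\EH$ when no minimizer exists). I follow the strategy sketched in Section~\ref{sec:MF deF quant}: extract weak-$\ast$ limits of the reduced density matrices, apply Theorem~\ref{thm:DeFinetti faible}, and close the argument via the binding inequality~\eqref{eq:Hartree lambda}.

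\emph{Extraction and de Finetti.} The near-optimality $\langle\Psi_N,H_N\Psi_N\rangle=E(N)+o(N)$ combined with~\eqref{eq:T controls w} yields a uniform bound on $\tr_\gH[T\gamma_N^{(1)}]$, and more generally on $\tr[(T_1+\cdots+T_n)\gamma_N^{(n)}]$ for each fixed $n$. A diagonal extraction produces a subsequence along which $\gamma_N^{(n)}\wto_\ast \gamma^{(n)}$ in $\gS^1(\gH_s^n)$ for every $n\in\N$. Weak-$\ast$ testing against operators of the form $B_n\otimes\one$ preserves the consistency relation $\tr_{n+1}[\gamma^{(n+1)}]=\gamma^{(n)}$, so Theorem~\ref{thm:DeFinetti faible} provides a Borel probability measure $\mu$ on $B\gH$, invariant under $S^1$, with
\begin{equation*}
\gamma^{(n)}=\int_{B\gH}|u^{\otimes n}\rangle\langle u^{\otimes n}|\,d\mu(u),\qquad n\geq 0.
\end{equation*}

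\emph{Passing to the liminf in the energy.} Write
\begin{equation*}
\frac{E(N)}{N}+o(1)=\tr_\gH[-\Delta\,\gamma_N^{(1)}]+\tr_\gH[V\gamma_N^{(1)}]+\tfrac12\tr_{\gH_s^2}[w\,\gamma_N^{(2)}].
\end{equation*}
The kinetic piece is treated by Fatou's lemma for the positive operator $-\Delta$ after spectral decomposition: $\liminf\tr[-\Delta\,\gamma_N^{(1)}]\ge\tr[-\Delta\,\gamma^{(1)}]$; this is where mass possibly escaping to infinity is allowed to contribute nonnegatively. For the potential piece, assumption~\eqref{eq:sans confinement} implies that $(-\Delta+1)^{-1/2}V(-\Delta+1)^{-1/2}$ is compact on $L^2(\R^d)$. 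Pairing it with $(-\Delta+1)^{1/2}\gamma_N^{(1)}(-\Delta+1)^{1/2}$, which is bounded in $\gS^1$ by the kinetic bound, we obtain actual continuity: $\tr[V\gamma_N^{(1)}]\to \tr[V\gamma^{(1)}]$. The interaction piece is handled identically, with $w(x_1-x_2)$ compact relative to $T_1+T_2$ by the reasoning of Remark~\ref{rem:asum w}, giving $\tr_{\gH_s^2}[w\,\gamma_N^{(2)}]\to \tr_{\gH_s^2}[w\,\gamma^{(2)}]$. Substituting the de Finetti representation,
\begin{equation*}
\liminf_{N\to\infty}\frac{E(N)}{N}\ge \tr_\gH[T\gamma^{(1)}]+\tfrac12\tr_{\gH_s^2}[w\,\gamma^{(2)}]=\int_{B\gH}\EH[u]\,d\mu(u).
\end{equation*}

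\emph{Conclusion.} For each $u\in B\gH$ the definition of $\eH(\norm{u}^2)$ and the binding inequality~\eqref{eq:Hartree lambda} give $\EH[u]\ge \eH(\norm{u}^2)\ge \eH$; since $\mu(B\gH)=1$ this yields $\liminf E(N)/N\ge\eH$, matching the upper bound. Equality everywhere forces $\mu$ to be supported on $\cM_{\rm H}=\{u\in B\gH:\EH[u]=\eH(\norm{u}^2)=\eH\}$, which proves~\eqref{eq:def faible result}. If $\EH$ has a unique minimizer $\uH$ (up to phase) with $\norm{\uH}=1$, then $\tr_{\gH^n}[\gamma^{(n)}]=1$, so the argument of Theorem~\ref{thm:confined}---weak-$\ast$ convergence plus preservation of the trace---upgrades the convergence to norm in $\gS^1(\gH^n)$, giving~\eqref{eq:BEC-confined faible}.

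The main obstacle I anticipate is verifying the continuity, rather than mere lower semi-continuity, of the $V$ and $w$ terms along the weak-$\ast$ limit. This is precisely what the decay assumptions~\eqref{eq:sans confinement} and~\eqref{eq:decrease w} buy us: any mass escaping to infinity in $\gamma_N^{(1)},\gamma_N^{(2)}$ decouples from $V$ and from $w$, so only the surviving mass is felt by those terms; the missing mass is then absorbed by the nonnegative kinetic energy it inevitably carries, and the binding inequality~\eqref{eq:Hartree lambda} keeps the bookkeeping in our favor.
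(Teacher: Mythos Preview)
Your strategy is the right one and matches the paper's, but there is a genuine gap in the treatment of the interaction term. You claim that $w(x_1-x_2)$ is relatively compact with respect to $T_1+T_2$ ``by the reasoning of Remark~\ref{rem:asum w}''. That remark only gives relative \emph{boundedness}; relative compactness is false. The operator $w(x_1-x_2)$ is translation-invariant in the center of mass $X=(x_1+x_2)/2$, so $(T_1+T_2+1)^{-1/2}w(x_1-x_2)(T_1+T_2+1)^{-1/2}$ cannot be compact. Concretely: take $\gamma_N^{(2)}=|\phi_N^{\otimes 2}\rangle\langle\phi_N^{\otimes 2}|$ with $\phi_N=\phi(\cdot-y_N)$, $|y_N|\to\infty$. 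Then $\gamma_N^{(2)}\wto_\ast 0$ while $\tr[w\,\gamma_N^{(2)}]=\iint w(x-y)|\phi(x)|^2|\phi(y)|^2$ is a fixed nonzero number. Your sentence ``any mass escaping to infinity in $\gamma_N^{(2)}$ decouples from $w$'' is exactly what fails: two particles can run off together and keep interacting.

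This is precisely where the hypothesis~\eqref{eq:no bound states} enters, and you never use it except via the soft consequence~\eqref{eq:Hartree lambda}. The paper's Lemma~\ref{lem:wlsc} does the missing work: an IMS localization splits the energy into a $\chi_R^{\otimes 2}$--localized part (which converges by local strong compactness), an $\eta_R^{\otimes 2}$--localized part (which is bounded below by zero \emph{because} $-\Delta_1-\Delta_2+w\ge 0$, i.e.~\eqref{eq:no bound states}), and cross terms $\chi_R\otimes\eta_R$ that are shown to vanish as $R\to\infty$ by a genuine concentration-compactness argument (Helly selection on the tails). Once this lower semi-continuity lemma is in hand, the rest of your argument --- apply Theorem~\ref{thm:DeFinetti faible}, then $\EH[u]\ge \eH(\norm{u}^2)\ge \eH$ --- goes through exactly as you wrote it.
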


\begin{remark}[Case where mass is lost]\label{rem:perte masse}\mbox{}\\
Note that it is possible for the convergence in the above statement to be only weak-$\ast$, which covers a certain physical reality. If the one-body potential is not attractive enough to retain all particles, we will typically have a scenario where
\[
\begin{cases}
\eH (\lambda) = \eH (1) \mbox{ for } \lambda_c \leq \lambda \leq 1 \\
\eH(\lambda) < \eH (1) \mbox{ for } 0 \leq \lambda < \lambda_c 
\end{cases}
\]
where $\lambda_c$ is a critical mass that can be bound by the potential $V$. In this case, $\eH(\lambda)$ will not be achieved if $\lambda_c < \lambda \leq 1$ and one will have a minimizer for Hartree's energy only for a mass $0 \leq \lambda \leq \lambda_c$. If for example the minimizer  $\uH$ at mass $\lambda_c$ is unique modulo a constant phase, Theorem~\ref{thm:wlsc} shows that 
\[
\gamma^{(n)}_{N} \wto_\ast  |\uH^{\otimes n}\rangle\langle \uH^{\otimes n}|
\]
and one should note that the limit has a mass $\lambda_c ^n <1$. This scenario actually happens in the case of a ``bosonic atom'', see Section 4.2 in~\cite{LewNamRou-13}.\hfill\qed
\end{remark}

Theorem~\ref{thm:wlsc} was proved in~\cite{LewNamRou-13}. In order to be able to apply Theorem~\ref{thm:DeFinetti faible} we start with the following observation:

\begin{lemma}[\textbf{Lower semi-continuity of an energy with no bound state}]\label{lem:wlsc}\mbox{}\\
Under the previous assumptions, let $\gamma^{(1)}_N,\gamma^{(2)}_N\geq0$ be two sequences satisfying 
\begin{equation*}
\tr_{\gH^2}\gamma^{(2)}_N=1, \quad \gamma_N^{(1)} =\Tr_{2}\gamma^{(2)}_N
\end{equation*}
as well as $\gamma^{(k)}_N\wto_*\gamma^{(k)}$ weakly-$\ast$ in $\gS^1(\gH^k)$ for $k=1,2$. Then
\begin{equation}
\liminf_{N\to \infty} \left( \Tr_\gH[T \gamma_N^{(1)}]+\frac{1}{2}\Tr_{\gH^2}[w \gamma_N^{(2)}] \right) \ge \Tr_\gH[T \gamma^{(1)}]+\frac{1}{2}\Tr_{\gH^2}[w \gamma^{(2)}].
\label{eq:energy wlsc} 
\end{equation}
\end{lemma}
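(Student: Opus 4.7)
The plan is to split the energy into a manifestly non-negative two-body piece, for which weak-$\ast$ lower semi-continuity follows from a Fatou-type argument, and a one-body external potential piece, for which weak-$\ast$ convergence yields outright continuity thanks to the decay of $V$ at infinity. First I would rewrite, using only the partial trace relation $\gamma_N^{(1)} = \tr_2\gamma_N^{(2)}$,
\begin{equation*}
\tr_\gH[T\gamma_N^{(1)}] + \tfrac12\tr_{\gH^2}[w\gamma_N^{(2)}] = \tr_{\gH^2}\bigl[\bigl(-\Delta_1 + \tfrac{w}{2}\bigr)\gamma_N^{(2)}\bigr] + \tr_\gH[V\gamma_N^{(1)}].
\end{equation*}
A change of variables to relative/center-of-mass coordinates promotes the no-bound-state hypothesis $-\Delta + w/2 \geq 0$ on $L^2(\R^d)$ into the two-body operator inequality $-\Delta_1 + w(x_1-x_2)/2 \geq 0$ on $L^2(\R^{2d})$. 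One may then invoke the Fatou principle that if $A\geq 0$ is self-adjoint and $\gamma_N \wto_* \gamma$ in $\gS^1$, then $\liminf \tr[A\gamma_N] \geq \tr[A\gamma]$ (proved by applying monotone convergence to bounded spectral truncations $A\wedge \Lambda$, against which weak-$\ast$ convergence gives actual trace convergence). This yields weak-$\ast$ lower-semicontinuity of the first term.

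The main obstacle is the one-body potential term, since $V$ changes sign and is unbounded, so neither Fatou nor direct weak-$\ast$ testing applies. Here the key is that $V \to 0$ at infinity together with $V \in L^p + L^\infty$ implies that the sandwich $B := (-\Delta+1)^{-1/2}\, V\, (-\Delta+1)^{-1/2}$ is a compact operator on $\gH$; this is the standard Rellich-type argument (cut off $V$ at large values and at spatial infinity, the remainder is small in operator norm while the cut-off part composed with the resolvents is Hilbert--Schmidt). We may assume, after extraction, that $\tr[T\gamma_N^{(1)}]$ stays bounded, since otherwise the left-hand side of~\eqref{eq:energy wlsc} diverges and there is nothing to prove. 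Because $V$ is infinitesimally form-bounded with respect to $-\Delta$, this implies in turn that $\tr[(-\Delta+1)\gamma_N^{(1)}] \leq C$, so that $A_N := (-\Delta+1)^{1/2}\gamma_N^{(1)}(-\Delta+1)^{1/2}$ is a bounded sequence in $\gS^1(\gH)$. Up to a further extraction, $A_N \wto_* A_\infty$ in $\gS^1$, and one identifies $A_\infty = (-\Delta+1)^{1/2}\gamma^{(1)}(-\Delta+1)^{1/2}$ by testing against the (dense among compact operators) family of sandwiches $(-\Delta+1)^{-1/2} K (-\Delta+1)^{-1/2}$ with $K$ compact.

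To conclude, I would write
\begin{equation*}
\tr_\gH[V\gamma_N^{(1)}] = \tr_\gH[B\, A_N]
\end{equation*}
and use that $B$ is compact while $A_N \wto_* A_\infty$ weakly-$\ast$ in $\gS^1$, which gives the outright convergence $\tr[V\gamma_N^{(1)}] \to \tr[V\gamma^{(1)}]$. Adding this to the Fatou lower bound for the $(-\Delta_1 + w/2)$ piece yields~\eqref{eq:energy wlsc}. The same argument covers more general kinetic operators (relativistic, magnetic) as soon as $V$ vanishes at infinity relative to the kinetic energy, the only place where the specific form of $T$ was used being the compactness of $B$ and the relative boundedness $V \ll -\Delta$.
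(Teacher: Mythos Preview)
Your argument is correct and takes a genuinely different route from the paper. The paper proceeds by spatial localization: it introduces a partition of unity $\chi_R^2+\eta_R^2=1$, uses the IMS formula to split the kinetic energy, and separates the functional into a $\chi_R$-localized piece (handled by strong local compactness) and an $\eta_R$-localized piece (shown to be non-negative via $(-\Delta_1)+(-\Delta_2)+w\geq 0$ and then discarded). The cross terms in the interaction require a small concentration-compactness digression (Helly's selection principle on a family of concentration functions). Your decomposition is instead purely operator-theoretic: the rewriting $\tr[T\gamma_N^{(1)}]+\tfrac12\tr[w\gamma_N^{(2)}]=\tr[(-\Delta_1+\tfrac{w}{2})\gamma_N^{(2)}]+\tr[V\gamma_N^{(1)}]$ isolates a single non-negative two-body block (Fatou applies directly) and a one-body remainder that converges outright because $(-\Delta+1)^{-1/2}V(-\Delta+1)^{-1/2}$ is compact. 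This is cleaner and avoids the concentration-compactness step entirely. Two small remarks: the positivity of $-\Delta_{x_1}+\tfrac12 w(x_1-x_2)$ on $L^2(\R^{2d})$ follows more directly by freezing $x_2$ and translating in $x_1$ than by center-of-mass coordinates; and the identification of the weak-$\ast$ limit $A_\infty$ with $(-\Delta+1)^{1/2}\gamma^{(1)}(-\Delta+1)^{1/2}$ does require knowing a priori that the latter is trace-class, which you get from Fatou applied to $-\Delta+1\geq 0$. What the paper's longer route buys is the intermediate energy-localization estimate (stated separately as Lemma~\ref{lem:loc ener}), which is reused in Chapter~\ref{sec:Hartree} when bound states are present and one genuinely needs to track the $\eta_R$-localized energy rather than discard it.
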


\begin{proof}
We shall need two trunctation functions $0\leq \chi_R,\eta_R\leq 1$ satisfying
\[
 \chi_R ^2 + \eta_R ^2 \equiv 1, \supp (\chi_R)\subset B(0,2R), \supp (\eta_R) \subset B(0,R) ^c , |\nabla \chi_R| + |\nabla \eta_R| \leq C R ^{-1}. 
\]
It is easy to show the IMS formula: 
\begin{multline}\label{eq:IMS}
-\Delta  = \chi_R (-\Delta) \chi_R +  \eta_R (-\Delta) \eta_R - |\nabla \chi_R | ^2 -   |\nabla \eta_R | ^2 
\\ \geq \chi_R (-\Delta) \chi_R +  \eta_R (-\Delta) \eta_R - C R ^{-2}
\end{multline}
as an operator. For the one-body part of the energy we then easily have 
$$  \tr_{\gH} [T \gamma_N ^{(1)}] \geq \tr_{\gH} [T \chi_R\gamma_N ^{(1)} \chi_R] + \tr_{\gH} [-\Delta \eta_R \gamma_N ^{(1)} \eta_R] + r_1(N,R)$$
with
$$ r_1 (N,R) \geq - C  R ^{-2} + \tr_{\gH} [ \eta_R ^2 V \gamma_N ^{(1)} ]$$
and thus
$$\liminf_{R\to \infty} \liminf_{N\to \infty} r_1(N,R) = 0$$
because in view of~\eqref{eq:sans confinement} we have 
\begin{equation}\label{eq:infinity one body}
  \tr_{\gH} [\eta_R ^2 V \gamma_N ^{(1)}] = \int_{\R ^d} \eta_R ^2 (x) V(x) \rho_N ^{(1)} (x) dx \to 0 \mbox{ when } R\to\infty 
\end{equation}
uniformly in $N$. Here we have denoted $\rho_N ^{(1)} $ the one-body density of $\gamma_N ^{(1)}$, formally defined by
$$\rho_N ^{(1)} (x) = \gamma_N ^{(1)}(x,x),$$
where we identify $\gamma_N ^{(1)}$ and its kernel. 
The details of the proof of~\eqref{eq:infinity one body} are left to the reader. One should use the a priori bound $\tr [ T \gamma_N ^{(1)}] \leq C$ to obtain a bound on $\rho_N ^{(1)}$ in some appropriate $L ^q$ space via Sobolev embeddings, then use H\"older's inequality and the fact that, for the conjugate exponent $p$, 
$$ \norm{\eta_R ^2 V}_{L^p (\R ^d)} \to 0$$ 
by assumption.

To deal with the interaction term we introduce
\[
 \rho_N ^{(2)} (x,y):= \gamma_N ^{(2)} (x,y;x,y)
\]
the two-body density of $\gamma_N ^{(2)}$ (identified with its kernel $\gamma_N ^{(2)} (x',y';x,y)$) and write 
\begin{align*}
\tr_{\gH ^2} [w\gamma_N ^{(2)}] &= \iint_{\R ^d\times\R^d } w(x-y) \rho_N ^{(2)} (x,y) dxdy\\
&= \iint_{\R ^d\times\R^d } w(x-y) \chi_R ^2 (x) \rho_N ^{(2)} (x,y) \chi_R ^2 (y)  dxdy 
\\&+ \iint_{\R ^d\times\R^d } w(x-y) \eta_R ^2 (x) \rho_N ^{(2)} (x,y) \eta_R ^2 (y) dxdy 
\\&+ 2 \iint_{\R ^d\times\R^d } w(x-y) \chi_R ^2 (x)  \rho_N ^{(2)}(x,y) \eta_R ^2 (y) dxdy 
\\&= \iint_{\R ^d\times\R^d } w(x-y) \chi_R ^2 (x)\rho_N ^{(2)} (x,y)  \chi_R ^2 (y) dxdy 
\\&+ \iint_{\R ^d\times\R^d } w(x-y) \eta_R ^2 (x)  \rho_N ^{(2)} (x,y) \eta_R ^2 (y) dxdy 
\\&+ r_2(N,R)
\\&= \tr_{\gH ^2} [w\:\chi_R \otimes \chi_R \gamma_N ^{(2)} \chi_R \otimes \chi_R] + \tr_{\gH ^2} [w \:\eta_R \otimes \eta_R \gamma_N ^{(2)} \eta_R \otimes \eta_R]
\\&+ r_2(N,R)
\end{align*}
where
$$\liminf_{R\to \infty} \liminf_{N\to \infty} r_2(N,R) = 0$$
by a standard concentration-compactness argument that we now sketch.  One has to show that 
$$ \lim_{R\to \infty} \lim_{N\to \infty} \iint_{\R ^d\times\R^d } |w(x-y)| \chi_R ^2 (x)  \rho_N ^{(2)}(x,y) \eta_R ^2 (y) dxdy. $$
Since $\chi_R ^2 (x) |w(x-y)| \eta_{4R} ^2 (y) \leq \1_{|x-y|\geq R} |w(x-y)|$, the term 
$$ \iint_{\R ^d\times\R^d } |w(x-y)| \chi_R ^2 (x)  \rho_N ^{(2)}(x,y) \eta_{4R} ^2 (y) dxdy$$
can be dealt with similarly as~\eqref{eq:infinity one body}. We leave again the details to the reader. There thus remains to control 
$$ \iint_{\R ^d\times\R^d } |w(x-y)| \chi_R ^2 (x)  \rho_N ^{(2)}(x,y) \left( \eta_R ^2 (y) - \eta_{4R}  ^2(y)\right)dxdy.$$
for which we claim that
\begin{multline*}
\lim_{R\to \infty}\lim_{k\to \infty} \left(  \iint |w(x-y)| \chi_R^2(x)[\eta_R^2 (y) -\eta_{4R}^2(y) ] \rho^{(2)}_{N_k}(x,y)\,dx\,dy \right)\\
\leq\lim_{R\to \infty}\lim_{k\to \infty} \left(  \iint |w(x-y)| \1(R\leq|y|\leq 8R)\rho^{(2)}_{N_k}(x,y)\,dx\,dy \right)=0 
\end{multline*}
for an appropriate subsequence $(N_k)$. Let us introduce the concentration functions
$$ Q_{N}(R):= \iint |w(x-y)| \1(|y|\geq R) \rho^{(2)}_N(x,y)\,dx\,dy. $$
For every $N$, the function $R\mapsto Q_N(R)$ is decreasing on $[0,\infty)$. Moreover, 
$$0\le Q_N (R) \le \Tr_{\gH^2}[|w| \gamma_N^{(2)}] \le C_0$$
using our assumptions on $w$ and the a priori bound $\tr \left[ T \gamma_N ^{(1)}\right] \leq C$ again. Therefore, by Helly's selection principle, there exists a subsequence $N_k$ and a decreasing function $Q:[0,\infty) \to [0,C_0]$ such that $Q_{N_k}(R)\to Q(R)$ for all $R\in [0,\infty)$. Since $\lim_{R\to \infty} Q(R)$ exists, we conclude that 
$$\lim_{R\to \infty} \lim_{k\to \infty} (Q_{N_k}(R) - Q_{N_k}(8R))= \lim_{R\to \infty} (Q(R)-  Q(8R) )=0.$$
This is the desired convergence.

At this stage we thus have
\begin{multline}\label{eq:proof wlsc}
\liminf_{N\to \infty} \Tr_\gH[T \gamma_N^{(1)}]+\frac{1}{2}\Tr_{\gH^2}[w \gamma_N^{(2)}] \geq 
\\\liminf_{R\to \infty} \liminf_{N\to \infty} \Tr_\gH[T \chi_R \gamma_N^{(1)} \chi_R] + \frac{1}{2}\Tr_{\gH^2}[w \:\chi_R \otimes \chi_R \gamma_N^{(2)}\chi_R \otimes \chi_R] 
\\+ \liminf_{R\to \infty} \liminf_{N\to \infty} \Tr_\gH[-\Delta \eta_R \gamma_N^{(1)} \eta_R]+\frac{1}{2} \Tr_{\gH^2}[w \:\eta_R \otimes \eta_R \gamma_N^{(2)}\eta_R \otimes \eta_R]
\end{multline}
The terms on the second line give the right-hand side of~\eqref{eq:energy wlsc}. Indeed, recalling that $T= - \Delta + V$, using Fatou's lemma for $-\Delta \geq 0$ and the fact that 
$$\chi_R \gamma_N ^{(1)}\chi_R \to \chi_R \gamma ^{(1)}\chi_R$$
in norm since $\chi_R$ has compact support, we have
$$ \liminf_{N\to \infty} \Tr_\gH[T \chi_R \gamma_N^{(1)} \chi_R] \geq \Tr_\gH[T \chi_R \gamma^{(1)} \chi_R]. $$
It then suffices to recall that $\chi_R \to 1$ pointwise to conclude. The interaction term is dealt with in a similar way, using the strong convergence 
$$\chi_R \otimes \chi_R \gamma_N ^{(2)}\chi_R \otimes \chi_R  \to \chi_R \otimes \chi_R  \gamma ^{(2)} \chi_R \otimes \chi_R $$
and then the pointwise convergence $\chi_R \to 1$.

The terms on the third line of~\eqref{eq:proof wlsc} form a positive contribution that we may drop from the lower bound. To see this, note that since $0\leq \eta_R \leq 1$ we have $\eta_R \otimes \one \geq \eta_R \otimes \eta_R$ and (with $\tr_2$ the partial trace with respect to the second variable)
\[
 \eta_R \gamma_N ^{(1)} \eta_R \geq \tr_2 [\eta_R \otimes \eta_R \gamma_N ^{(2)} \eta_R \otimes \eta_R],
\]
which gives, by symmetry of $\gamma_N ^{(2)}$, 
\begin{multline*}
\Tr_\gH[-\Delta \eta_R \gamma_N^{(1)} \eta_R]+\frac{1}{2} \Tr_{\gH^2}[w \:\eta_R \otimes \eta_R \gamma_N^{(2)}\eta_R \otimes \eta_R] \\ 
\geq \frac{1}{2} \Tr_{\gH^2}\left[\left( (-\Delta) \otimes \one + \one \otimes (-\Delta) + w \right)\:\eta_R \otimes \eta_R \gamma_N^{(2)}\eta_R \otimes \eta_R\right]
\end{multline*}
and it is not difficult\footnote{Just decouple the center of mass from the relative motion of the two particles, i.e. make the change of variables $(x_1,x_2) \mapsto (x_1 + x_2, x_1-x_2)$.} to see that Assumption~\eqref{eq:no bound states} implies
\[
(-\Delta) \otimes \one + \one \otimes (-\Delta) + w \geq 0
\]
which ensures the positivity of the third line of~\eqref{eq:proof wlsc} and concludes the proof.
\end{proof}

We now conclude the 

\begin{proof}[Proof of Theorem~\ref{thm:wlsc}.]
Starting from a sequence $\Gamma_N = |\Psi_N\rangle \langle\Psi_N| $ of $N$-body states we extract subsequences as in the proof of Theorem~\ref{thm:confined} to obtain 
\[
 \Gamma_N ^{(n)}\wto_\ast \gamma ^{(n)}.
\]
Using Lemma~\ref{lem:wlsc}, we obtain 
\begin{align*}
\liminf_{N\to \infty} N ^{-1} \tr_{\gH ^N} [H_N \Gamma_N] &= \liminf_{N\to \infty} \left(\tr_{\gH} [T \Gamma_N ^{(1)}] + \frac12 \tr_{\gH ^2} [w\Gamma_N ^{(2)}] \right)
\\&\geq \tr_{\gH} [T \gamma ^{(1)}] + \frac12 \tr_{\gH ^2} [w\gamma ^{(2)}]
\end{align*}
and there remains to apply Theorem~\ref{thm:DeFinetti faible} to the sequence $(\gamma ^{(n)})_{n\in \N}$ to obtain
\[
 \liminf_{N\to \infty} N ^{-1} \tr_{\gH ^N} [H_N \Gamma_N] \geq \int_{B\gH} \EH[u] d\mu(u) \geq \eH
\]
using~\eqref{eq:Hartree lambda} and the fact that $\int_{B\gH} d\mu(u) = 1$. Once again, the other conclusions of the theorem easily follow by inspecting the cases of equality in the previous estimates.
\end{proof}

For later use, we note that during the proof of Lemma~\ref{lem:wlsc} we have obtained the intermediary result~\eqref{eq:proof wlsc} without using the assumption that $w$ has no bound state. We write this as a lemma that we shall use again in Chapter~\ref{sec:Hartree}.

\begin{lemma}[\textbf{Energy localization}]\label{lem:loc ener}\mbox{}\\
Under assumptions~\eqref{eq:decrease w} and~\eqref{eq:sans confinement}, let $\Psi_N$ be a sequence of almost minimizers for $E(N)$: 
$$ \langle \Psi_N, H_N \Psi_N \rangle = E(N) + o (N)$$ 
and $\gamma_N ^{(k)}$ the associated reduced density matrices. We have  
\begin{multline}\label{eq:loc ener}
\liminf_{N\to \infty} \frac{E(N)}{N} = \liminf_{N\to \infty} \Tr_\gH[T \gamma_N^{(1)}]+\frac{1}{2}\Tr_{\gH^2}[w \gamma_N^{(2)}] \geq 
\\\liminf_{R\to \infty} \liminf_{N\to \infty} \Tr_\gH[T \chi_R \gamma_N^{(1)} \chi_R] + \frac{1}{2}\Tr_{\gH^2}[w \:\chi_R ^{\otimes 2} \gamma_N^{(2)}\chi_R ^{\otimes 2}] 
\\+ \liminf_{R\to \infty} \liminf_{N\to \infty} \Tr_\gH[-\Delta \eta_R \gamma_N^{(1)} \eta_R]+\frac{1}{2} \Tr_{\gH^2}[w \:\eta_R ^{\otimes 2} \gamma_N^{(2)}\eta_R ^{\otimes 2}]
\end{multline}
where $0\leq \chi_R \leq 1$ is $C^1$, with support in $B(0,R)$, and $\eta_R = \sqrt{1-\chi_R ^2}$.
\end{lemma}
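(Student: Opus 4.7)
The plan is to replay the localization argument carried out inside the proof of Lemma~\ref{lem:wlsc}, simply \emph{stopping} before the point where the no-bound-state hypothesis~\eqref{eq:no bound states} is invoked to discard the ``outside'' block. Since we assume $\langle \Psi_N, H_N \Psi_N\rangle = E(N) + o(N)$, standard use of~\eqref{eq:T controls w} gives the uniform a priori bound $\tr_{\gH}[T\gamma_N^{(1)}] \leq C$, which is the only input needed beyond the localization identities themselves.

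First, I would split the one-body part. The IMS identity
\begin{equation*}
-\Delta \;=\; \chi_R(-\Delta)\chi_R + \eta_R(-\Delta)\eta_R - |\nabla\chi_R|^2 - |\nabla\eta_R|^2 \;\geq\; \chi_R(-\Delta)\chi_R + \eta_R(-\Delta)\eta_R - CR^{-2}
\end{equation*}
together with the pointwise decomposition $V = \chi_R^2 V + \eta_R^2 V$ gives
\begin{equation*}
\tr_{\gH}[T\gamma_N^{(1)}] \;\geq\; \tr_{\gH}[T\,\chi_R\gamma_N^{(1)}\chi_R] + \tr_{\gH}[-\Delta\,\eta_R\gamma_N^{(1)}\eta_R] + r_1(N,R),
\end{equation*}
with $r_1(N,R) = -CR^{-2}\,\tr\gamma_N^{(1)} + \tr_{\gH}[\eta_R^2 V\,\gamma_N^{(1)}]$. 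Sobolev embedding applied to the uniform kinetic bound controls the density $\rho_N^{(1)}$ in an $L^q$-space independently of $N$, and H\"older combined with the decay assumption~\eqref{eq:sans confinement} on $V$ (i.e.\ $\|\eta_R^2 V\|_{L^p}\to 0$) yields $\liminf_{R\to\infty}\liminf_{N\to\infty} r_1(N,R) = 0$.

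Next, I would handle the two-body term by expanding $1 = \chi_R^2(x)+\eta_R^2(x)$ in both arguments, which produces two ``diagonal'' blocks $\chi_R^{\otimes 2}\gamma_N^{(2)}\chi_R^{\otimes 2}$ and $\eta_R^{\otimes 2}\gamma_N^{(2)}\eta_R^{\otimes 2}$, plus cross terms of the form $\chi_R(x)\eta_R(y)$. Since $\supp(\chi_R\otimes\eta_{4R})\subset\{|x-y|\geq R\}$, the decay of $w$ at infinity together with the uniform bound $\tr[|w|\gamma_N^{(2)}] \leq C$ (from~\eqref{eq:T controls w} and the kinetic estimate) controls the piece with $\eta_{4R}$. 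The remaining piece $\chi_R^2(x)\bigl[\eta_R^2(y)-\eta_{4R}^2(y)\bigr]$ is bounded by the difference $Q_N(R)-Q_N(8R)$ of the concentration functions $Q_N(R):=\iint|w(x-y)|\1_{|y|\geq R}\rho_N^{(2)}(x,y)\,dxdy$, and Helly's selection lemma extracts a subsequence along which $Q_N\to Q$ pointwise, so $\lim_{R\to\infty}\lim_N(Q_N(R)-Q_N(8R))=0$. This gives a remainder $r_2(N,R)$ of the same type as $r_1$.

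Combining the two splittings yields precisely~\eqref{eq:loc ener}. The main (only real) obstacle is the concentration-compactness bookkeeping for the interaction cross-terms: one must verify that decay of $w$ in the sense~\eqref{eq:decrease w} together with the a priori kinetic bound is enough to make the ``middle zone'' $R\leq|y|\leq 8R$ contribution negligible, uniformly in $N$, which is exactly what the Helly argument provides. Crucially, at no point in this chain of estimates does one need $-\Delta + w/2 \geq 0$; that hypothesis was only used in the proof of Lemma~\ref{lem:wlsc} \emph{after}~\eqref{eq:proof wlsc} to drop the outside block, and since Lemma~\ref{lem:loc ener} \emph{keeps} that block, the present statement follows from the intermediate inequality~\eqref{eq:proof wlsc} alone.
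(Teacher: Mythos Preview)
Your proposal is correct and follows exactly the paper's approach: the paper does not give a separate proof of this lemma, but simply remarks that it is the intermediate inequality~\eqref{eq:proof wlsc} obtained during the proof of Lemma~\ref{lem:wlsc}, before the no-bound-state hypothesis~\eqref{eq:no bound states} is invoked. Your write-up replays precisely that argument (IMS for the one-body part, expansion plus Helly selection for the two-body cross terms) and correctly identifies that the a priori bound $\tr[T\gamma_N^{(1)}]\leq C$ from~\eqref{eq:T controls w} is the only additional input needed.
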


\subsection{Links between various structure theorems for bosonic states}\label{sec:rel deF}\mbox{}\\\vspace{-0.4cm}

We have just introduced two structure theorems for many-particles bosonic systems. These indicate that, morally, if $\Gamma_N$ is a $N$-body bosonic state on a separable Hilbert space~$\gH$, there exists a probability measure $\mu \in \PP (\gH)$ on the one-body Hilbert space such that 
\begin{equation}\label{eq:def quant formel}
\Gamma_N ^{(n)} \approx \int_{u\in \gH} |u ^{\otimes n} \rangle \langle u ^{\otimes n}| d\mu (u) 
\end{equation}
when $N$ is large and $n$ is fixed. Chapters~\ref{sec:deF finite dim} and~\ref{sec:locFock} of these notes are for a large devoted to the proofs of these theorems ``\`a la de Finetti''. As mentioned in Remark~\ref{rem:weak deF}, Theorem~\ref{thm:DeFinetti faible} is actually more general than Theorem~\ref{thm:DeFinetti fort}, and we will thus prove the former. 

In order to better appreciate the importance of the weak theorem in infinite dimensional spaces, we emphasize that the key property allowing to prove the strong theorem is the consistency~\eqref{eq:defi inf consistance}. When starting from the reduced density matrices $\Gamma_N ^{(n)}$ of a $N$-body state $\Gamma_N$ and extracting weakly-$\ast$ convergent subsequences to define a hierarchy $\left(\gamma ^{(n)}\right)_{n\in \N}$,
$$ \Gamma_N ^{(n)}\wto_\ast \gamma ^{(n)},$$
one only has
$$ \tr_{n+1}\left[\gamma ^{(n+1)}\right] \leq \lim_{N\to \infty} \tr_{n+1} \left[\gamma_N ^{(n+1)}\right] = \lim_{N\to \infty} \gamma_N ^{(n)} = \gamma ^{(n)}.$$
because \emph{the trace is not continuous }\footnote{This is a fancy way of characterizing infinite dimensional spaces.} \emph{for the weak-$\ast$ topology}, only lower semi-continuous. Obviously, the relation 
\begin{equation}\label{eq:sous consist}
\tr_{n+1}\left[\gamma ^{(n+1)}\right] \leq  \gamma ^{(n)}
\end{equation}
is not sufficient to prove a de Finetti theorem. A simple counter-example is given by the sequence of reduced density matrices of a one-body state $v\in S\gH$: 
$$\gamma ^{(0)} = 1,\: \gamma ^{(1)} = |v\rangle \langle v|,\: \gamma ^{(n)} = 0 \mbox{ for }n\geq 2.$$ 

\medskip

In these notes we have chosen to prove the weak de Finetti theorem as constructively as possible. Before we announce the plan of the proof, we shall say a few words of the much more abstract approach of the historical references~\cite{Stormer-69,HudMoo-75}. These papers actually contain a version of the theorem that is even stronger than Theorem~\ref{thm:DeFinetti faible}. This result applies to ``abstract'' states that are not necessarily normal or bosonic. These may be defined as follows:

\begin{definition}[\textbf{Abstract state with infinitely many particles}]\label{def:etat infini abs}\mbox{}\\
Let $\gH$ be a complex separable Hilbert space and for all $n\in \N$, $\gH ^n = \bigotimes ^n \gH$ be the corresponding $n$-body space. We call an \emph{abstract state with infinitely many particles} a sequence $(\omega ^{(n)})_{n\in \N}$ where
\begin{itemize}
\item $\omega ^{(n)}$ is an abstract $n$-body state: $\omega ^{(n)}\in (\gB (\gH ^n)) ^*$, the dual of the space of bounded operators on $\gH ^n$, $\omega ^{(n)}\geq 0$ and 
\begin{equation}\label{eq:masse abstraite}
\omega ^{(n)} (\one_{\gH ^n}) = 1. 
\end{equation}
\item $\omega ^{(n)}$ is symmetric in the sense that
\begin{equation}\label{eq:sym abstraite}
\omega ^{(n)} \left( B_1 \otimes \ldots \otimes B_n \right) = \omega ^{(n)} \left( B_{\sigma(1) }\otimes \ldots \otimes B_{\sigma(n)} \right) 
\end{equation}
for all $B_1,\ldots,B_n \in \gB (\gH)$ and all permutation $\sigma \in \Sigma_n$. 
\item the sequence $(\omega ^{(n)})_{n\in \N}$ is consistent:
\begin{equation}\label{eq:consistance abstraite}
\omega ^{(n+1)} \left(B_1 \otimes \ldots \otimes B_n \otimes \one_{\gH}\right)=  \omega ^{(n)} \left(B_1 \otimes \ldots \otimes B_n \right)
\end{equation}
for all $B_1,\ldots,B_n \in \gB (\gH)$.
\end{itemize}
\hfill \qed
\end{definition}

An abstract state is in general not normal, i.e. it does not fit in the following definition:

\begin{definition}[\textbf{Normal state, locally normal state}]\label{def:etat infini normal}\mbox{}\\
Let $\gH$ be a complex separable Hilbert space and $(\omega ^{(n)})_{n\in \N}$  be an abstract state with infinitely many particles. We say that   $(\omega ^{(n)})_{n\in \N}$ is locally normal if $\omega ^{(n)}$ is normal for all $n\in \N$, i.e. there exists $\gamma ^{(n)}\in \gS ^1 (\gH ^n)$ a trace-class operator such that
\begin{equation}\label{eq:etat normal}
\omega ^{(n)} (B_n) = \tr_{\gH ^n} [\gamma ^{(n)} B_n] 
\end{equation}
for all $B_n \in \gB (\gH ^n)$.
\hfill \qed
\end{definition}

Identifying trace-class operators with the associated normal states we readily see that Definition~\ref{def:etat infini} is a particular case of abstract state with infinitely many particles. Note that (by the spectral theorem), the set of convex combinations of pure states (orthogonal projectors) co\"incides with the trace class. A non-normal abstract state is thus not a mixed state, i.e. not a statistical superposition of pure states. The physical interpretation of a non-normal state is thus not obvious, at least in the type of settings that these notes are concerned with.

The consistency notion~\eqref{eq:consistance abstraite} is the natural generalization of~\eqref{eq:defi inf consistance} but it is important to note that the symmetry~\eqref{eq:sym abstraite} is  weaker than bosonic symmetry. It in fact corresponds to classicaly indistinguishable particles (the modulus of the wave-function is symmetric but not the wave-function itself). One may for example note that if $\omega  ^{(n)}$ is normal in the sense of~\eqref{eq:etat normal}, then $\gamma ^{(n)}\in \gS ^1 (\gH ^n)$ satisfies
\[
U_{\sigma}  \gamma ^{(n)} U_{\sigma} ^* = \gamma ^{(n)}
\]
where $U_{\sigma}$ is the unitary operator permuting the $n$ particles according to $\sigma \in \Sigma_n$. Bosonic symmetry corresponds to the stronger constraint 
\[
U_{\sigma}  \gamma ^{(n)} = \gamma ^{(n)} U_{\sigma} ^* = \gamma ^{(n)},
\]
cf Section~\ref{sec:forma quant}. 

We also have a notion of product state that generalizes Definition~\ref{def:etat produit}.

\begin{definition}[\textbf{Abstract product state with infinitely many particles}]\mbox{}\label{def:etat produit abs}\\
We call \emph{abstract product state} an abstract state with infinitely many particles such that
\begin{equation}\label{eq:etat produit abstrait}
\omega ^{(n)} = \omega ^{\otimes n} 
\end{equation}
for all $n\in \N$, where $\omega \in (\gB(\gH))^*$ is an abstract one-body state (in particular $\omega \geq 0$ and $\omega (\one_{\gH}) = 1$).\hfill\qed
\end{definition}

The most general form of the quantum de Finetti theorem\footnote{At least the most general form the author is aware of.} says that \textbf{every abstract state with infinitely many particles is a convex combination of product states}:

\begin{theorem}[\bf Abstract quantum de Finetti]\label{thm:DeFinetti abs}\mbox{}\\
Let $\gH$ be a complex separable Hilbert space and $(\omega^{(n)})_{n\in \N}$ an abstract bosonic state with infinitely many particles built on $\gH$. There exists a unique probability measure $\mu \in \PP \left( (\gB (\gH)) ^* \right)$ on the dual of the space of bounded perators on  $\gH$ such that
\begin{equation}\label{eq:sup mes deF abs}
\mu \left( \left\{ \omega \in (\gB (\gH)) ^*, \; \omega \geq 0, \; \omega (\one_{\gH}) = 1 \right\} \right) = 1 
\end{equation}
and
\begin{equation}
\omega^{(n)}=\int \omega ^{\otimes n} \, d\mu(\om)
\label{eq:melange abs}
\end{equation}
for all $n\geq0$.
\end{theorem}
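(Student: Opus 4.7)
The abstract setting calls for operator-algebraic tools; following the approach of St\o{}rmer and Hudson-Moody, I would reinterpret the consistent sequence $(\omega^{(n)})$ as a single state on an infinite tensor product algebra and then apply an ergodic decomposition.

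First, I would glue the data into one object. Let $\cA_n$ denote the C*-algebra $\gB(\gH)^{\otimes n}$ and let $\cA_\infty$ be the inductive limit of the system $\cA_n \hookrightarrow \cA_{n+1}$, $B \mapsto B \otimes \one$. The consistency condition \eqref{eq:consistance abstraite} is exactly what is required to define a state $\Omega$ on $\cA_\infty$ by the formula $\Omega(B_1 \otimes \cdots \otimes B_n \otimes \one \otimes \cdots) := \omega^{(n)}(B_1 \otimes \cdots \otimes B_n)$, and the symmetry condition \eqref{eq:sym abstraite} translates into invariance of $\Omega$ under the natural action of the group $\Sigma_\infty$ of finitely supported permutations of $\N$. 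The set $\cE$ of $\Sigma_\infty$-invariant states on $\cA_\infty$ is a convex, weak-$*$ compact subset of the unit ball of $\cA_\infty^*$, which is the correct space on which to run a Choquet-type argument.

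Second, I would identify the extreme points of $\cE$ with product states $\omega \mapsto \omega^{\otimes \infty}$ parametrized by states $\omega$ on $\gB(\gH)$. Granting this identification, the Choquet-Bishop-de Leeuw theorem supplies a probability measure $\mu$ on $\cE$, concentrated on its extreme boundary, such that $\Omega$ is the barycenter $\Omega = \int \omega^{\otimes \infty}\,d\mu(\omega)$; restricting to $\cA_n$ produces exactly \eqref{eq:melange abs}, and the support condition \eqref{eq:sup mes deF abs} is automatic since extreme points are states on $\gB(\gH)$. Uniqueness would follow either from the fact that $\cE$ turns out to be a Choquet simplex, or, more concretely, by a Stone-Weierstrass-type density argument on the algebra generated by the monomials $\omega \mapsto \omega^{\otimes k}(B_1 \otimes \cdots \otimes B_k)$, mirroring the proof of uniqueness after Theorem~\ref{thm:HS}.

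Third, and this is the heart of the matter, I would prove that extreme points of $\cE$ are product states. The easy direction (products are extreme) reduces to the observation that if $\omega^{\otimes \infty} = t\Omega_1 + (1-t)\Omega_2$ with $\Omega_i \in \cE$ and $0<t<1$, then both $\Omega_i$ must agree with $\omega^{\otimes \infty}$ on each factor by a Radon-Nikodym argument on $\cA_1$. The nontrivial direction is showing that any extreme $\Omega \in \cE$ factorises. The tool here is a mean-ergodic theorem for the $\Sigma_\infty$-action: in the GNS representation $(\pi_\Omega, \cH_\Omega, \xi_\Omega)$, the averages
\[
\frac{1}{n!} \sum_{\sigma \in \Sigma_n} \pi_\Omega\bigl(U_\sigma A U_\sigma^{-1}\bigr)
\]
converge, in the weak operator topology, to the orthogonal projector onto the $\Sigma_\infty$-invariant vectors. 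Extremality of $\Omega$ forces this invariant subspace to be one-dimensional (spanned by $\xi_\Omega$), which, via the asymptotic abelianness of the tensor shift, yields the factorisation identity $\omega^{(n+m)}(A \otimes B) = \omega^{(n)}(A)\,\omega^{(m)}(B)$ for $A \in \cA_n$ and $B$ a suitably shifted copy in $\cA_m$. Iterating gives $\omega^{(n)} = \omega^{\otimes n}$ with $\omega = \omega^{(1)}$.

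The principal obstacle is this last identification of extreme points. The classical analogue (Theorem~\ref{thm:HS}) can be bypassed by the explicit Diaconis-Freedman construction of Section~\ref{sec:DF}, but that route is unavailable here: the empirical-measure map $N^{-1}\sum_j \delta_{z_j}$ has no workable analogue for abstract, possibly non-normal, quantum states, which is precisely why one needs the GNS/ergodic machinery and why Theorem~\ref{thm:DeFinetti abs} genuinely sits above the weak and strong versions of Section~\ref{sec:Hartree deF fort}--\ref{sec:Hartree deF faible}.
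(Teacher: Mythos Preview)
The paper does not give its own proof of this theorem: it is stated and then, in Remark~\ref{rem:abs deF}, attributed to St\o{}rmer and Hudson--Moody, with only a one-sentence description of their strategy (product states are the extreme points of the convex set of symmetric states, and existence follows from Choquet--Krein--Milman). Your proposal is precisely an elaboration of that strategy, so at the level of overall approach you are aligned with what the paper reports.

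That said, your sketch contains a genuine gap in the ``easy'' direction. The claim that if $\omega^{\otimes\infty}=t\Omega_1+(1-t)\Omega_2$ with $\Omega_i\in\cE$ then ``both $\Omega_i$ must agree with $\omega^{\otimes\infty}$ on each factor by a Radon--Nikodym argument on $\cA_1$'' is not correct as stated: restricting to $\cA_1$ only gives $\omega=t\Omega_1^{(1)}+(1-t)\Omega_2^{(1)}$, and unless $\omega$ is pure on $\gB(\gH)$ (which it need not be) this does not force $\Omega_i^{(1)}=\omega$. Fortunately this direction is not needed for the existence statement: once you prove ``extreme $\Rightarrow$ product'' and invoke Choquet, you have the integral representation; the converse then follows a posteriori from uniqueness of the barycentric measure (the set $\cE$ is a simplex). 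So the logical structure of your argument should be reorganised to avoid relying on ``product $\Rightarrow$ extreme'' as an independent input.

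For the hard direction, your outline (GNS, unitary implementation of $\Sigma_\infty$, extremality $\Leftrightarrow$ ergodicity $\Leftrightarrow$ invariant vectors one-dimensional, hence clustering, hence factorisation) is the right skeleton and is essentially St\o{}rmer's route. Be aware that the passage from ``invariant vectors are one-dimensional'' to the factorisation $\omega^{(n+m)}(A\otimes B)=\omega^{(n)}(A)\,\omega^{(m)}(B)$ is where the real work lies; your appeal to ``asymptotic abelianness of the tensor shift'' mixes the shift picture with the permutation picture, and in the permutation-invariant setting one rather argues via the tail algebra (the analogue of the Hewitt--Savage zero--one law), as Hudson--Moody do. This is a point to tighten, but the strategy is sound.
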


\begin{remark}[On the abstract quantum de Finetti theorem]\label{rem:abs deF}\mbox{}\\\vspace{-0.4cm}
\begin{enumerate}
\item This result was first proven by St\o{}rmer~\cite{Stormer-69}. Hudson and Moody~\cite{HudMoo-75} then gave a simpler proof by adapting the Hewitt-Savage proof of the classical de Finetti theorem~\ref{thm:HS}: they prove that product states are the extremal points of the convex set of abstract states with infinitely many particles. The existence of the measure is then a consequence of the Choquet-Krein-Milman theorem. This approach does require the notion of abstract states and does not give a direct proof of Theorem~\ref{thm:DeFinetti fort}.
\item Hudson and Moody~\cite{HudMoo-75} deduce the strong de Finetti theorem from the abstract theorem. An adaptation of their method (see~\cite[Appendix A]{LewNamRou-13}) shows that the weak de Finetti theorem is also a consequence of the abstract theorem.
\item This theorem has been used to derive Hartree-type theories for abstract states without bosonic symmetry in~\cite{FanSpoVer-80,RagWer-89,PetRagVer-89}. To recover the usual Hartree theory one must be able to show that the limit state is (locally) normal. In finite dimensional spaces, $\gB (\gH)$ of course co\"incides with the space of compact operators, which implies that any abstract state is normal. This difficulty thus does not occur in this setting.  
\item There also exist quantum generalizations of the probabilistic versions of the classical de Finetti theorem (that is, those dealing with sequences of random variables~\cite{Aldous-85,Kallenberg-05}), see e.g.~\cite{KosSpei-09}. These are formulated in the context of free probability and require additional symmetry assumptions besides permutation symmetry.
\end{enumerate}
\hfill\qed
\end{remark}

At this stage we thus have the scheme (``deF'' stands for de Finetti)
\begin{equation}\label{eq:schema deF 1}
\boxed{\mbox{abstract deF } \Rightarrow \mbox{ weak deF } \Rightarrow \mbox{ strong deF } },
\end{equation}
but the proof of the weak de Finetti theorem we are going to present follows a different route, used in~\cite{LewNamRou-13,LewNamRou-13b}. It starts from the finite dimensional theorem\footnote{In finite dimension there is no need to distinguish between the weak and the strong version.}:
\begin{equation}\label{eq:schema deF 2}
\boxed{\mbox{finite-dimensional deF } \Rightarrow \mbox{ weak deF } \Rightarrow \mbox{ strong deF } }.
\end{equation}
This approach leads to a somewhat longer proof than the scheme~\eqref{eq:schema deF 1} starting from the Hudson-Moody proof of Theorem~\ref{thm:DeFinetti abs}. This detour is motivated by five main practical and aesthetic reasons:
\begin{enumerate}
\item The proof following~\eqref{eq:schema deF 2} is simpler from a conceptual point of view: it requires neither the notion of abstract states nor the use of the Choquet-Krein-Milman theorem.
\item Thanks to recent progress, due mainly to the quantum information community~\cite{ChrKonMitRen-07,Chiribella-11,Harrow-13,KonRen-05,FanVan-06,LewNamRou-13b}, we have a completely constructive proof the finite dimensional quantum de Finetti theorem at our disposal. One first proves explicit estimates thanks for a construction for finite $N$, in the spirit of the Diaconis-Freedman approach to the classical case. Then one passes to the limit as in the proof of the Hewitt-Savage theorem we have presented in Section~\ref{sec:DF}.
\item The first implication in Scheme~\eqref{eq:schema deF 2} is also essentially constructive, thanks to Fock-space localization techniques used e.g. in~\cite{Ammari-04,DerGer-99,Lewin-11}. These tools are inherited from the so-called ``geometric'' methods~\cite{Enss-77,Enss-78,Sigal-82,Simon-77} that adapt to the $N$-body problem localization ideas natural in the one-body setting. These allow (amongst other things)  a fine description of the lack of compactness due to loss of mass at infinity, in the spirit of the concentration-compactness principle~\cite{Lions-84,Lions-84b}.
\item In particular, the proof of the first implication in~\eqref{eq:schema deF 2} yields a few corollaries which will allow us to prove the validity of Hartree's theory in the general case. When the assumptions made in Section~\ref{sec:Hartree deF faible} do not hold, the weak de Finetti theorem and its proof according to~\eqref{eq:schema deF 1} are not sufficient to conclude: Particles escaping to infinity may form negative-energy bound states. The localization methods we are going to discuss will allow us to analyze this phenomenon. 
\item In Chapter~\ref{sec:NLS} we will deal with a case where the interaction potential depends on $N$ to derive non-linear Schr\"odinger theories in the limit. This amounts to taking a limit where $w$ converges to a Dirac mass \emph{simultaneously} to the $N\to \infty$ limit. In this case, compactness arguments will not be sufficient and the explicit estimates we shall obtain along the proof of the finite dimensional de Finetti theorem will come in handy.
\end{enumerate}

An alternative point of view on the proof strategy~\eqref{eq:schema deF 2} is given by the Ammari-Nier approach~\cite{Ammari-hdr,AmmNie-08,AmmNie-09,AmmNie-11}, based on semi-classical analysis methods. The relation between the two approaches will be discussed below. 

As noted above, the second implication in~\eqref{eq:schema deF 2} is relatively easy. The following chapters deal with the first two steps of the strategy. They contain the proof of the weak de Finetti theorem and several corollaries and intermediary results. The finite dimensional setting  (where the distinction between the strong and the weak theorems is irrelevant) is discussed in Chapter~\ref{sec:deF finite dim}. The localization methods allowing to prove the first implication in~\eqref{eq:schema deF 2} are the subject of Chapter~\ref{sec:locFock}.

\newpage

\section{\textbf{The quantum de Finetti theorem in finite dimensonal spaces}}\label{sec:deF finite dim}

This chapter deals with the starting point of the proof strategy~\eqref{eq:schema deF 2}, that is a proof of the strong de Finetti theorem in the case of a \emph{finite dimensional} complex Hilbert space~$\gH$,
\[
 \dim \gH = d.
\]
In this case, strong and weak-$\ast$ convergences in $\gS ^1 (\gH ^n)$ are the same and thus there is no need to distinguish between the strong and the weak de Finetti theorem. The main advantage of working in finite dimensions is the possibility to obtain explicit estimates, in the spirit of the Diaconis-Freedman theorem (with a completely different method, though). We are going to prove the following result, which gives bounds in trace-class norm:

\begin{theorem}[\textbf{Quantitative quantum de Finetti}]\label{thm:DeFinetti quant} \mbox{}\\
Let $\Gamma_N \in \gS ^1 (\gH_s ^N)$ be a bosonic state over $\gH_s^N$ and $\gamma_N ^{(n)}$ its reduced density matrices. There exists a probability measure $\mu_N \in \PP (S\gH)$ such that, denoting 
\begin{equation}\label{eq:finite deF etat}
\Gammat _N = \int_{u\in S \gH} |u ^{\otimes N}\rangle \langle u ^{\otimes N}| d\mu_N(u) 
\end{equation}
the associated state and $\gammat_N ^{(n)}$ its reduced density matrices, we have
\begin{equation} \label{eq:error finite dim deF}
\Tr_{\gH^n} \Big| \gamma_N ^{(n)} - \widetilde \gamma _N^{(n)}  \Big| \leq \frac{2 n(d+2n)}{N}
\end{equation}
for all $n=1 \ldots N$.
\end{theorem}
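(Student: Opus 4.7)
The plan is to follow the constructive coherent-state strategy of Christandl--K\"onig--Mitchison--Renner and Chiribella: the measure $\mu_N$ is built explicitly as an anti-Wick (``upper'') symbol of $\Gamma_N$, and the trace-norm bound is obtained via a direct combinatorial comparison of the reduced density matrices.

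First, I would set up the Schur--Weyl resolution of identity on the symmetric subspace. Write $d_k := \dim \gH_s^k = \binom{k+d-1}{k}$ and $du$ for the $U(d)$-invariant probability measure on the unit sphere $S\gH$. The key identity, which is the only place where finite-dimensionality is used in a nontrivial way, reads
\begin{equation}\label{eq:plan-SW}
\int_{S\gH} |u^{\otimes k}\rangle\langle u^{\otimes k}|\,du = \frac{1}{d_k}\, P_s^k,
\end{equation}
and it follows from Schur--Weyl duality (the right-hand side is the unique, up to normalisation, $U(d)$-invariant state on $\gH_s^k$). With this in hand, define
\begin{equation*}
d\mu_N(u) := d_N\,\langle u^{\otimes N},\,\Gamma_N\, u^{\otimes N}\rangle_{\gH^N}\,du,
\end{equation*}
which is a probability measure on $S\gH$ since $\Gamma_N\ge 0$, $\tr \Gamma_N = 1$ and $\Gamma_N = P_s^N \Gamma_N P_s^N$. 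Let $\widetilde\Gamma_N = \int |u^{\otimes N}\rangle\langle u^{\otimes N}|\,d\mu_N(u)$ as in~\eqref{eq:finite deF etat}.

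Next I would derive an explicit formula for $\widetilde\gamma_N^{(n)}$. Taking a partial trace and using $|u^{\otimes N}\rangle = |u^{\otimes n}\rangle\otimes |u^{\otimes (N-n)}\rangle$ gives
\begin{equation*}
\widetilde\gamma_N^{(n)} = d_N \int_{S\gH} |u^{\otimes n}\rangle\langle u^{\otimes n}|\,\bigl\langle u^{\otimes (N-n)}, \gamma_N^{(n)}\otimes\cdots\;\text{(trace on last $N-n$)}\bigr\rangle\, du.
\end{equation*}
Expanding $|u^{\otimes n}\rangle\langle u^{\otimes n}| = P_s^n (u^{\otimes n}\otimes \bar{u}^{\otimes n}) P_s^n$ and using the identity~\eqref{eq:plan-SW} with $k$ replaced by $N+n$, after a symmetrisation based on $P_s^N = (N!)^{-1}\sum_{\sigma\in\Sigma_N} U_\sigma$ one arrives at the Chiribella-type representation
\begin{equation*}
\widetilde\gamma_N^{(n)} \;=\; \alpha_{N,n,d}\,\gamma_N^{(n)} \;+\; R_{N,n},
\end{equation*}
where the scalar $\alpha_{N,n,d}$ is an explicit ratio of binomial coefficients,
\begin{equation*}
\alpha_{N,n,d} \;=\; \frac{d_N}{d_{N+n}}\cdot\frac{(N+n)!\,(N-n)!}{N!\,N!} \;=\; \prod_{j=1}^{n} \frac{N-j+1}{N+d+j-1}
\end{equation*}
(or a close variant), and where $R_{N,n}\ge 0$ is a partial trace of a positive operator on $\gH_s^{N+n}$, hence positive.

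Finally I would collect the error. Since both $\widetilde\gamma_N^{(n)}$ and $\gamma_N^{(n)}$ are states of trace one, taking traces in the above display gives $\tr[R_{N,n}] = 1-\alpha_{N,n,d}$, so by the triangle inequality
\begin{equation*}
\tr_{\gH^n}\bigl|\gamma_N^{(n)} - \widetilde\gamma_N^{(n)}\bigr|
\;\le\; (1-\alpha_{N,n,d})\,\tr\gamma_N^{(n)} \;+\; \tr R_{N,n}
\;=\; 2\,(1-\alpha_{N,n,d}).
\end{equation*}
A direct estimate on the explicit product,
\begin{equation*}
1-\alpha_{N,n,d} \;\le\; \sum_{j=1}^n \frac{d+2j-2}{N+d+j-1} \;\le\; \frac{n(d+2n)}{N},
\end{equation*}
yields the claimed bound~\eqref{eq:error finite dim deF}. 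The main obstacle is the second step: one must carry the combinatorial bookkeeping carefully enough to see that all ``cross-terms'' coming from the symmetrisation group into the \emph{positive} remainder $R_{N,n}$, rather than having indefinite sign. Invoking the resolution of identity~\eqref{eq:plan-SW} at the level of $\gH_s^{N+n}$ (i.e.\ after, not before, bosonic symmetrisation) is the cleanest way to achieve this, and it is essentially the only place where the argument is nontrivial.
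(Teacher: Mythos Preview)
Your proposal is correct and follows essentially the same route as the paper: the CKMR lower-symbol measure, a Chiribella-type decomposition $\widetilde\gamma_N^{(n)} = C(d,n,N)\,\gamma_N^{(n)} + (\text{positive remainder})$, and then the two-line trace-norm estimate from equality of traces. Your coefficient $\alpha_{N,n,d}$ coincides with the paper's $C(d,n,N) = \prod_{j=0}^{n-1}\frac{N-j}{N+j+d}$, and your telescoping bound $1-\prod(1-x_j)\le\sum x_j$ yields the same final constant.

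The only real difference is in how the decomposition formula is obtained. The paper does not symmetrise directly: it recasts $\widetilde\gamma_N^{(n)}$ via anti-Wick quantization, showing $\langle v^{\otimes n},\widetilde\gamma_N^{(n)}v^{\otimes n}\rangle$ is proportional to $\tr[a(v)^n a^*(v)^n\Gamma_N]$ (this is exactly your ``Schur at level $N+n$'' step, rephrased). It then reorders $a(v)^n a^*(v)^n$ into normal form using the CCR and a Laguerre-polynomial recursion, which produces the full explicit formula $\widetilde\gamma_N^{(n)} = \binom{N+n+d-1}{n}^{-1}\sum_{\ell=0}^n\binom{N}{\ell}\,\gamma_N^{(\ell)}\otimes_s\1_{\gH^{n-\ell}}$, from which positivity of the remainder is manifest term by term. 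Your plan to extract positivity directly from the symmetrised partial trace is equivalent in spirit, but the paper's second-quantization bookkeeping gives more: the exact structure of $R_{N,n}$ (the terms $\gamma_N^{(\ell)}\otimes_s\1$, $\ell<n$), which is useful later in the notes (e.g.\ in the NLS derivation). Your acknowledgment that this step is ``the main obstacle'' is apt; the paper's Lemmas~4.5--4.6 are precisely the clean way to discharge it.
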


\begin{remark}[On the finite dimensional quantum de Finetti theorem]\label{rem:quant deF}\mbox{}\\\vspace{-0.4cm}
\begin{enumerate}
\item This result is due to Christandl, K\"onig, Mitchison and Renner~\cite{ChrKonMitRen-07}, important earlier work being found in ~\cite{KonRen-05} and~\cite{FanVan-06}. One may find developments along theses lines in~\cite{Chiribella-11,Harrow-13,LewNamRou-13b}. The quantum information community also considered several variants, see for example~\cite{CavFucSch-02,ChrKonMitRen-07,ChrTon-09,RenCir-09,Renner-07,BraHar-12}.
\item One can add a step in the strategy~\eqref{eq:schema deF 2}:
\begin{equation}\label{eq:schema deF 3}
\boxed{\mbox{quantitative deF } \Rightarrow \mbox{finite dimensional deF } \Rightarrow \mbox{ weak deF } \Rightarrow \mbox{ strong deF } }.
\end{equation}
Indeed, in finite dimension one may identify the sphere $S\gH$ with a usual, compact, Euclidean sphere (with dimension $2d-1$, i.e. the unit sphere in $\R ^{2d}$). The space of probability measures on $S\gH$ is then compact for the usual weak topology and one may extract from $\mu_N$ a converging subsequence to prove Theorem~\ref{thm:DeFinetti fort} in the case where $\dim \gH <\infty$, exactly as we did to deduce Theorem~\ref{thm:HS} from Theorem~\ref{thm:DF} in Section~\ref{sec:DF}.
\item The bound~\eqref{eq:error finite dim deF} is not optimal. One may in fact obtain the estimate
\begin{equation} \label{eq:error finite dim deF 2}
\Tr_{\gH^n} \Big| \gamma_N ^{(n)} - \widetilde \gamma _N^{(n)}  \Big| \leq \frac{2 nd}{N},
\end{equation}
with the same construction, see~\cite{Chiribella-11,ChrKonMitRen-07,LewNamRou-13b}. The proof we shall present only gives~\eqref{eq:error finite dim deF} but seems more instructive to me. For the applications we have in mind, $n$ will always be fixed anyway (equal to $2$ most of the time), and in this case~\eqref{eq:error finite dim deF} and~\eqref{eq:error finite dim deF 2} give the same order of magnitude in terms of $N$ and $d$.
\item The bound~\eqref{eq:error finite dim deF 2} in the quantum case is the equivalent of the estimate in $dn/N$ mentioned in Remark~\ref{rem:DiacFreLio} for the classical case. One may ask if this order of magnitude is optimal. It clearly is with the construction we are going to use, but it would be very interesting to know if one can do better with another construction. In particular, can one find a bound independent from $d$, reminiscent of~\eqref{eq:DiacFreed} in the Diaconis-Freedman theorem ? 
\end{enumerate}

\hfill\qed
\end{remark}

The construction of $\Gammat_N$ is taken from~\cite{ChrKonMitRen-07}. It is particularly simple but it does use in a strong manner the fact that the underlying Hilbert space has finite dimension. The approach we shall follow for the proof of Theorem~\ref{thm:DeFinetti quant} is originally due to Chiribella~\cite{Chiribella-11}. We are going to prove an explicit formula giving the density of $\Gammat_N$ as a function of those of  $\Gamma_N$, in the spirit of Remark~\ref{rem:marg DF}. This formula implies~\eqref{eq:error finite dim deF} in the same manner as~\eqref{eq:DF astuce} implies~\eqref{eq:DiacFreed}.

In Section~\ref{sec:CKMR constr} we present the construction, state Chiribella's explicit formula and obtain Theorem~\ref{thm:DeFinetti quant} as a corollary. Before giving a proof of Chiribella's result, it is useful to discuss some informal motivation and some heuristics on the Christandl-K\"onig-Mitchison-Renner (CKMR) construction, which happens to be connected to well-known ideas of semi-classical analysis. This is the purpose of Section~\ref{sec:CKMR heur}. Finally, we prove Chiribella's formula in Section~\ref{sec:CKMR proof}, following the approach of~\cite{LewNamRou-13b}. It has been independently found by Lieb and Solovej~\cite{LieSol-13} (with a different motivation), and was inspired by the works of Ammari and Nier~\cite{AmmNie-08,AmmNie-09,AmmNie-11}. Related considerations appeared also in~\cite[Chapter~3]{Knowles-thesis}. Other proofs are available in the literature, cf~\cite{Chiribella-11} and~\cite{Harrow-13}. 

\subsection{The CKMR construction and Chiribella's formula}\label{sec:CKMR constr}\mbox{}\\\vspace{-0.4cm}

We first note that the Diaconis-Freedman construction introduced before is purely classical since it is based on the notion of empirical measure, which has no quantum counterpart. A different approach is thus clearly necessary for the proof of Theorem~\ref{thm:DeFinetti quant}.

In a finite dimensional space, one may identify the unit sphere $S\gH = \left\{ u \in \gH,\norm{u}=1 \right\}$ with a Euclidean sphere. One may thus equip it with a uniform measure (Haar measure of the rotation group, simply the Lebesgue measure on the Euclidean sphere), that we shall denote $du$, taking the convention that 
$$\int_{S\gH} du = 1. $$
We then have a nice resolution of the identity as a simple consequence of the invariance of $du$ under rotations (see Section~\ref{sec:CKMR proof} below for a proof). We state this as a lemma:

\begin{lemma}[\textbf{Schur's formula}]\label{lem:Schur}\mbox{}\\
Let $\gH$ be a complex finite dimensional Hilbert space and $\gH ^N$ the corresponding bosonic $N$-body space. Then
\begin{equation}\label{eq:Schur}
 \dim \gH_s^N \int_{S\gH} | u^{\otimes N} \rangle  \langle u^{\otimes N} | \, du = \1_{\gH^N}.  
\end{equation}
\end{lemma}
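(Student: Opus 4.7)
The plan is to exploit the unitary invariance of the uniform measure $du$ on $S\gH$ together with Schur's lemma from representation theory. First I would set
\[
A := \int_{S\gH} |u^{\otimes N}\rangle\langle u^{\otimes N}|\,du,
\]
which is a well-defined bounded self-adjoint positive operator on $\gH_s^N$ (each integrand is a rank-one projector of norm $1$, and $\int_{S\gH} du = 1$). The key observation is that for any unitary $U$ on $\gH$, the map $u\mapsto Uu$ is a bijection of $S\gH$ preserving $du$; hence a change of variables yields
\[
U^{\otimes N} A\, (U^{\otimes N})^{*} = \int_{S\gH} |(Uu)^{\otimes N}\rangle\langle (Uu)^{\otimes N}|\,du = A.
\]
So $A$ commutes with every operator in the image of the representation $U\mapsto U^{\otimes N}_{|\gH_s^N}$ of the unitary group $U(\gH)$.

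Next I would invoke the classical fact that this representation of $U(\gH)$ on the symmetric tensor power $\gH_s^N$ is \emph{irreducible} (a standard statement of Schur--Weyl duality, or equivalently the observation that the coherent states $\{u^{\otimes N}:u\in S\gH\}$ span $\gH_s^N$ and are permuted transitively by $U(\gH)$). By Schur's lemma, any operator commuting with an irreducible representation is a scalar multiple of the identity, so $A = c\,\1_{\gH_s^N}$ for some $c\in\C$.

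To identify the constant, I take the trace on $\gH_s^N$:
\[
c\,\dim\gH_s^N \;=\; \tr_{\gH_s^N} A \;=\; \int_{S\gH} \tr_{\gH_s^N} |u^{\otimes N}\rangle\langle u^{\otimes N}|\,du \;=\; \int_{S\gH}\|u^{\otimes N}\|^{2}\,du \;=\;1,
\]
since $\|u^{\otimes N}\|=\|u\|^{N}=1$ on $S\gH$ and $\int du = 1$ by our normalization. Thus $c = 1/\dim\gH_s^N$, which is exactly the desired formula \eqref{eq:Schur}.

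The main obstacle is the irreducibility input: if one wishes to avoid citing representation theory, an equivalent self-contained route is to write $A$ in the orthonormal basis of $\gH_s^N$ indexed by occupation numbers $(n_1,\ldots,n_d)$ with $\sum n_i = N$, and to compute the matrix elements $\langle e_{\mathbf{n}}, A\, e_{\mathbf{m}}\rangle$ as moments $\int_{S\gH}\prod_i u_i^{n_i}\overline{u_i}^{m_i}\,du$ on the Euclidean sphere in $\C^d$; by a rotational symmetry argument these moments vanish unless $\mathbf{n}=\mathbf{m}$, and the diagonal values, after multiplying by $\dim\gH_s^N = \binom{N+d-1}{d-1}$, equal $1$. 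Either route is short; the representation-theoretic one is conceptually the cleanest and is the one I would present.
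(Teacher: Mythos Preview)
Your proof is correct, but it takes a somewhat different route from the paper's. Both arguments start by exploiting the rotation invariance of $du$ and end by taking a trace to pin down the constant; the difference lies in the middle step. You show that $A$ commutes with every $U^{\otimes N}$ and then invoke irreducibility of the $U(\gH)$-action on $\gH_s^N$ together with Schur's lemma. The paper instead tests $A$ against coherent states: it shows that $\langle v^{\otimes N}, A\, v^{\otimes N}\rangle = \dim\gH_s^N\int_{S\gH}|\langle u,v\rangle|^{2N}\,du$ is independent of $v\in S\gH$ (again by rotation invariance), and then appeals to Lemma~\ref{lem:uk-g-uk=0} (``the lower symbol determines the state''), which it proves separately by an elementary polarization/differentiation argument, to conclude that $A$ is a scalar. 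Your approach is conceptually cleaner but imports the irreducibility statement from Schur--Weyl theory; the paper's approach is more self-contained, since the key Lemma~\ref{lem:uk-g-uk=0} is established from scratch and is in any case needed elsewhere in the chapter. Your alternative direct computation of the sphere moments is yet a third route and is also fine, though the paper does not pursue it.
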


The idea of Christandl-K\"onig-Mitchison-Renner is to simply define
\begin{align}\label{eq:def CKMR}
d\mu_N (u) &:=  \dim \gH_s^N \tr_{\gH_s ^N} \left[\Gamma_N | u^{\otimes N} \rangle  \langle u^{\otimes N} |\right] du \nonumber\\
&= \dim \gH_s^N \tr_{\gH_s ^N}  \left\langle u^{\otimes N}, \Gamma_N u^{\otimes N} \right\rangle  du, 
\end{align}
i.e. to take
\begin{equation}\label{eq:def CKMR 2}
\Gammat_N =  \dim \gH_s^N \int_{S\gH} | u^{\otimes N} \rangle  \langle u^{\otimes N} | \left\langle u^{\otimes N}, \Gamma_N u^{\otimes N} \right\rangle du.
\end{equation}
Chriibella's observation\footnote{In the quantum information vocabulary this is formulated as a relation between ``optimal cloning'' and ``optimal measure and prepare channels''.} is the following: 

\begin{theorem}[\textbf{Chiribella's formula}] \label{thm:CKMR-identity} \mbox{}\\
With the previous definitions, it holds 
\begin{equation}\label{eq:CKMR exact}
\gammat _N^{(n)} = {{N+n+d-1}\choose n}^{-1}\sum_{\ell=0}^{n} {N \choose \ell}  \gamma_N^{(\ell)} \otimes _s \1_{\gH^{n-\ell}}
\end{equation}
with the convention
$$ \gamma_N^{(\ell)} \otimes _s \1_{\gH^{n-\ell}}= \frac{1}{\ell!\,(n-\ell)!}\sum_{\sigma\in S_n} (\gamma_N^\ell)_{\sigma(1),...,\sigma(\ell)} \otimes (\1_{\gH^{n-\ell}})_{{\sigma(\ell+1)},...,{\sigma(n)}}$$
where $(\gamma_N^\ell)_{\sigma(1),...,\sigma(\ell)}$ acts on the  $\sigma(1)\ldots,\sigma(\ell)$ variables.
\end{theorem}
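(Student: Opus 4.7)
The plan is to give a direct computation of $\gammat_N^{(n)}$ by reducing it to an integral one can evaluate with Schur's formula (Lemma) in the larger bosonic space $\gH_s^{N+n}$. The claimed formula will then follow from a combinatorial expansion of the resulting bosonic symmetrizer.

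First, I would take the partial trace of the defining expression \eqref{eq:def CKMR 2} for $\Gammat_N$ over the last $N-n$ variables, obtaining
\begin{equation*}
\gammat_N^{(n)} = \dim \gH_s^N \int_{S\gH} |u^{\otimes n}\rangle\langle u^{\otimes n}| \, \bigl\langle u^{\otimes N}, \Gamma_N u^{\otimes N}\bigr\rangle \, du.
\end{equation*}
The key algebraic observation is that, viewing $\gH^{N+n}$ as $\gH^n\otimes\gH^N$, one has
\begin{equation*}
(\1_{\gH^n}\otimes \Gamma_N)\,\bigl|u^{\otimes(N+n)}\bigr\rangle\bigl\langle u^{\otimes(N+n)}\bigr| = |u^{\otimes n}\rangle\langle u^{\otimes n}| \otimes \bigl(\Gamma_N |u^{\otimes N}\rangle\langle u^{\otimes N}|\bigr),
\end{equation*}
so that taking the partial trace over the last $N$ slots reproduces exactly the integrand above. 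Exchanging the integral with this partial trace and applying Schur's formula in $\gH_s^{N+n}$, I would obtain
\begin{equation*}
\gammat_N^{(n)} = \frac{\dim \gH_s^N}{\dim \gH_s^{N+n}} \, \tr_{n+1\to N+n}\!\Bigl[(\1_{\gH^n}\otimes \Gamma_N)\, P_s^{N+n}\Bigr],
\end{equation*}
where $P_s^{N+n}$ denotes the orthogonal projector from $\gH^{N+n}$ onto $\gH_s^{N+n}$.

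Finally, I would expand $P_s^{N+n} = \frac{1}{(N+n)!}\sum_{\sigma\in\Sigma_{N+n}}U_\sigma$ and classify each permutation $\sigma$ by the number $\ell\in\{0,\dots,n\}$ of indices in $\{1,\dots,n\}$ that $\sigma$ sends into $\{n+1,\dots,N+n\}$. Using the bosonic symmetry of $\Gamma_N$ to collapse the permutations acting only within the last $N$ slots, and recognizing the partial trace on the $\ell$ ``crossing'' indices as producing $\gamma_N^{(\ell)}$, each class contributes a term proportional to $\gamma_N^{(\ell)}\otimes_s \1_{\gH^{n-\ell}}$. Careful counting gives, for each $\ell$, a multiplicity $\binom{n}{\ell}\binom{N}{\ell}\ell!(n-\ell)!(N-\ell)!$, and combining this with the explicit ratio $\dim\gH_s^N/\dim\gH_s^{N+n} = \binom{N+d-1}{N}/\binom{N+n+d-1}{N+n}$ yields exactly the coefficients $\binom{N+n+d-1}{n}^{-1}\binom{N}{\ell}$ appearing in \eqref{eq:CKMR exact}.

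The main obstacle will be this last combinatorial bookkeeping: the clean statement \eqref{eq:CKMR exact} conceals the fact that both the symmetrizer $P_s^{N+n}$ and the symmetric tensor product $\otimes_s$ in the conclusion are averages over permutations, and one must carefully disentangle these two symmetrizations, so as to avoid double-counting when collapsing permutations by bosonic symmetry of $\Gamma_N$.
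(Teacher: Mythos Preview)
Your approach is correct and shares its opening move with the paper: both lift the problem to $\gH_s^{N+n}$ and invoke Schur's formula there. The paper does this in its Lemma~\ref{lem:A Wick}, whose proof is essentially your first two paragraphs specialized to a pure state and tested against $v^{\otimes n}$.

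Where the two routes diverge is in how the resulting expression is unpacked. The paper interprets the outcome as an \emph{anti-Wick quantization}: it shows
\[
\langle v^{\otimes n}, \gammat_N^{(n)} v^{\otimes n}\rangle = \frac{(N+d-1)!}{(N+n+d-1)!}\,\tr\bigl[a(v)^n a^*(v)^n\,\Gamma_N\bigr],
\]
then converts $a(v)^n a^*(v)^n$ to normal order via the CCR, obtaining a sum $\sum_k \binom{n}{k}\frac{n!}{k!}\,a^*(v)^k a(v)^k$ whose coefficients are recognized as those of a Laguerre polynomial (Lemma~\ref{lem:Wick A Wick}). A final appeal to Lemma~\ref{lem:uk-g-uk=0} (the lower symbol determines the operator) then identifies the two sides as operators. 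You instead propose to expand the symmetrizer $P_s^{N+n}$ directly as an average over $\Sigma_{N+n}$ and sort permutations by how many of the first $n$ slots cross over, which is a purely combinatorial computation. Your route is more elementary in that it avoids second quantization and the Laguerre recursion, and it works at the operator level throughout so no ``lower symbol determines the state'' lemma is needed; the paper's route is cleaner once the creation/annihilation formalism is in place and has the conceptual payoff of exhibiting $\gammat_N^{(n)}$ as an anti-Wick symbol, connecting the construction to the semi-classical heuristics of Section~\ref{sec:CKMR heur}.
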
 

From this result we deduce a simple proof of the quantitative de Finetti theorem:

\begin{proof}[Proof of Theorem~\ref{thm:DeFinetti quant}]
We proceed as in~\eqref{eq:preuve DiacFreed}. Only the first term in the sum~\eqref{eq:CKMR exact} is really relevant:  
\begin{equation}
 \gammat _N^{(n)} - \gamma_N ^{(n)}  = ( C(d,n,N) - 1) \gamma_N ^{(n)} + B = -A + B \label{eq:estim CKMR}
\end{equation}
where 
\[
C(d,n,N) = \frac{(N+d-1)!}{(N+n+d-1)!} \frac{N!}{(N-n)!} < 1,  
\]
and $A,B$ are positive operators. We have 
$$\tr_{\gH ^n} [-A+B] = \tr \left[\gammat _N^{(n)} - \gamma_N ^{(n)} \right]= 0,$$
and thus, by the triangle inequality,
\[
\Tr \Big|\gammat _N^{(n)} - \gamma_N ^{(n)} \Big| \leq \tr A + \tr B = 2 \Tr A = 2 (1- C(d,n,N)).
\]
Next, the elementary inequality
\begin{align*}
C(d,n,N) &=  \prod_{j=0} ^{n-1} \frac{N-j}{N+j+d}\ge  \left( 1 - \frac{2n + d -2}{N + d + n - 1}\right) ^n \geq 1 -n \frac{2n + d -2}{N + d + n - 1}
\end{align*}
gives
\begin{equation} \label{eq:error-NLR}
\Tr \Big| \gamma_N ^{(n)} - \gammat _N^{(n)} \Big| \le \frac{2 n(d+2n)}{N},
\end{equation}
which is the desired result.
\end{proof}

Everything now relies on the proof of Theorem~\ref{thm:CKMR-identity}, which is the subject of Section~\ref{sec:CKMR proof}. Before we give it, some heuristics regarding the relevance of the construction~\eqref{eq:def CKMR 2} shall be discussed.

\subsection{Heuristics and motivation}\label{sec:CKMR heur}\mbox{}\\\vspace{-0.4cm}

Schur's formula~\eqref{eq:Schur} expresses the fact that the family $\left( u ^{\otimes N} \right)_{u\in S\gH}$ forms an over-complete basis of $\gH_s ^N$. Such a basis labeled by a continuous parameter is reminiscent of a coherent state decomposition~\cite{KlaSka-85,ZhaFenGil-90}. This basis in fact turns our to be ``less and less over-complete'' when $N$ gets large. Indeed, we clearly have
\begin{equation}\label{eq:less over complete}
 \langle u ^{\otimes N}, v ^{\otimes N} \rangle_{\gH ^N}  = \langle u , v \rangle_{\gH} ^N \to 0 \mbox{ when } N \to \infty 
\end{equation}
as soon as $u$ and $v$ are not exactly colinear. The basis $\left( u ^{\otimes N} \right)_{u\in S\gH}$ thus becomes ``almost orthonormal'' when $N$ tends to infinity.

In the vocabulary of semi-classical analysis~\cite{Lieb-73b,Simon-80,Berezin-72}, trying to write
\[
 \Gamma_N = \int_{u\in S \gH} d\mu_N (u) |u ^{\otimes N} \rangle \langle u ^{\otimes N}| 
\]
amounts to looking for an \emph{upper symbol} $\mu_N$ representing $\Gamma_N$. In fact, it so happens~\cite{Simon-80} that one may always find such a symbol, only $\mu_N$ is in general not a positive measure. The problem we face is to find a way to approximate the upper symbol (for which no explicit expression as a function of the state itself exists, by the way) with a positive measure. 

Note that \emph{if} the coherent state basis were orthogonal, then the upper symbol would be positive, simply by positivity of the state. Since we noted that the family $\left( u ^{\otimes N} \right)_{u\in S\gH}$ becomes ``almost orthonormal'' when $N$ gets large, it is very natural to expect that the upper symbol may be approximate by a positive measure in this limit.

On the other hand, the measure introduced in~\eqref{eq:def CKMR} is exactly what one calls the \emph{lower symbol} of the state $\Gamma_N$. One of the reasons why lower and upper symbols were introduced is that these two a priori different objects have a tendency to co\"incide in semi-classical limits. But the $N\to \infty$ limit we are concerned with may indeed be seen as a semi-classical limit and it is thus very natural to take the lower symbol as an approximation of the upper symbol in this limit.

One can motivate this choice in a slightly more precise way. Assume we have a sequence of $N$-body states defined starting from an upper symbol independent of $N$,
\[
 \Gamma_N = \dim (\gH _s ^N) \int_{u\in S \gH}  \mu ^{\rm sup} (u) |u ^{\otimes N} \rangle \langle u ^{\otimes N}|du,  
\]
and let us compute the corresponding lower symbols:
\begin{align*}
 \mu ^{\rm inf}_N (v) = \langle v ^{\otimes N}, \Gamma_N v ^{\otimes N} \rangle &= \dim (\gH _s ^N) \int_{u\in S \gH}  \mu ^{\rm sup} (u) \left|\langle u ^{\otimes N}, v ^{\otimes N}\rangle\right| ^2 du
 \\&=\dim (\gH _s ^N) \int_{u\in S \gH}  \mu ^{\rm sup} (u) \left|\langle u , v \rangle\right| ^{2N} du.
\end{align*}
In view of the observation~\eqref{eq:less over complete} and the necessary invariance of $\mu ^{\rm sup}$ under the action of $S ^1$ it is clear that we have 
$$ \mu ^{\rm inf}_N (v) \to \mu ^{\rm sup} (v) \mbox{ when } N\to\infty.$$ 
In other words, the lower symbol is, for large $N$, an approximation of the upper symbol, that has the advantage of being positive. Without consituting a rigorous proof of Theorem~\ref{thm:DeFinetti quant}, this point of view shows that the CKMR constuction is extremely natural.

\subsection{Chiribella's formula and anti-Wick quantization}\label{sec:CKMR proof}\mbox{}\\\vspace{-0.4cm}

The proof of Theorem~\ref{thm:CKMR-identity} we are going to present uses the second quantization formalism. We start with a very useful lemma. In the vocabulary alluded to in the previous section it says that a state is entirely characterized by its lower symbol, a well-known fact~\cite{Simon-80,KlaSka-85}.

\begin{lemma}[\textbf{The lower symbol determines the state}]\label{lem:uk-g-uk=0}\mbox{} \\
If an operator $\gamma^{(k)}$ on $\gH_s^k$ satisfies  
\bq \label{eq:uk-g-uk=0}
 \langle u^{\otimes k}, \gamma^{(k)} u^{\otimes k} \rangle =0\qquad \text{for~all}~u\in \gH,
 \eq
then $\gamma^{(k)} \equiv 0$.
\end{lemma}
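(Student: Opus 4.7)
The plan is a standard polarization argument. Since $\gamma^{(k)}$ is an operator on $\gH_s^k$, to conclude $\gamma^{(k)}\equiv 0$ it suffices to show that $\langle \Phi,\gamma^{(k)}\Psi\rangle=0$ for $\Phi,\Psi$ ranging over a total family of $\gH_s^k$. A standard fact (which I would recall in one line, invoking that the coefficients of a polynomial vanishing identically on $\C^n$ are all zero) is that the set of ``coherent'' vectors $\{u^{\otimes k}:u\in\gH\}$ is total in $\gH_s^k$. It is thus enough to prove
\[
\langle u^{\otimes k},\gamma^{(k)}v^{\otimes k}\rangle=0\qquad\text{for all }u,v\in\gH.
\]

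To extract such off-diagonal matrix elements from the diagonal information~\eqref{eq:uk-g-uk=0}, I would fix $u,v\in\gH$ and consider, for $s,t\in\C$,
\[
P(s,t):=\langle (su+tv)^{\otimes k},\,\gamma^{(k)}(su+tv)^{\otimes k}\rangle,
\]
which is identically zero by hypothesis. Expanding the tensor product and grouping terms by the number of copies of $v$, one writes $(su+tv)^{\otimes k}=\sum_{j=0}^{k} s^{k-j}t^{j}\,S_j$, where $S_j:=\sum_{\sigma}u^{\otimes(k-j)}\otimes v^{\otimes j}$ denotes the symmetrized sum over the $\binom{k}{j}$ placements of $v$'s in the $k$ slots. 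Consequently
\[
P(s,t)=\sum_{i,j=0}^{k} \bar s^{\,k-i}\bar t^{\,i}\,s^{k-j}t^{j}\,\langle S_i,\gamma^{(k)}S_j\rangle.
\]
Since the monomials $\bar s^{\,k-i}\bar t^{\,i}s^{k-j}t^{j}$ are linearly independent as functions of $(s,t)\in\C^2$, the identity $P\equiv 0$ forces $\langle S_i,\gamma^{(k)}S_j\rangle=0$ for every pair $(i,j)$. Specializing to $(i,j)=(0,k)$ yields $\langle u^{\otimes k},\gamma^{(k)}v^{\otimes k}\rangle=0$, which is the desired conclusion.

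There is no real obstacle here: the only point requiring a brief justification is the linear independence of those monomials, which reduces to the well-known fact that a polynomial in $(z,\bar z,w,\bar w)$ that vanishes for all $(z,w)\in\C^2$ has all its coefficients equal to zero (view it as a real-analytic function on $\R^4$ and differentiate). The argument also explains why the statement is natural: the scalar function $u\mapsto\langle u^{\otimes k},\gamma^{(k)}u^{\otimes k}\rangle$ is a homogeneous polynomial of bidegree $(k,k)$ in $(u,\bar u)$, and by polarization such a polynomial is determined by, and determines, all the symmetric matrix elements of $\gamma^{(k)}$ on $\gH_s^k$.
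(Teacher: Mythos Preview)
Your proof is correct and is essentially the same polarization argument as the paper's. The paper executes it by repeatedly substituting $u\mapsto u+tv$ (and $u\mapsto u+itv$) and differentiating at $t=0$, iterating until it reaches all matrix elements on the spanning set $v_1\otimes_s\cdots\otimes_s v_k$; your version packages this more cleanly by expanding in two complex parameters and reading off all coefficients at once via linear independence of the monomials $\bar s^{\,k-i}\bar t^{\,i}s^{k-j}t^{j}$, then invoking the totality of $\{u^{\otimes k}\}$ in $\gH_s^k$ (a fact the paper's iteration effectively proves along the way).
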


\begin{proof}
We use the symmetric tensor product
$$\Psi_k\otimes_s\Psi_{k-\ell} (x_1,...,x_{k})=\frac{1}{\sqrt{\ell!(k-\ell)!k!}}\sum_{\sigma\in {\Sigma}_{k}}\Psi_\ell(x_{\sigma(1)},...,x_{\sigma(\ell)})\Psi_{k-\ell}(x_{\sigma(\ell+1)},...,x_{\sigma(k)})
$$
for two vectors $\Psi_\ell \in \gH^{\ell}$ and $\Psi_{k-\ell}\in\gH^{k-\ell}$ and assume that~\eqref{eq:uk-g-uk=0} holds for all $u\in \gH$.

Picking two unit vectors $u,v$, replacing $u$ by $u+tv$ in~\eqref{eq:uk-g-uk=0} and taking the derivative with respect to $t$ (which must be zero), we obtain
$$ \Re \left\langle u ^{\otimes k -1} \otimes_s v, \gamma ^{(k)} u ^{\otimes k} \right\rangle = 0.$$ 
Next, replacing $u$ by $u+itv$ and taking the derivative with respect to $t$ we get
$$ \Im \left\langle u ^{\otimes k -1} \otimes_s v, \gamma ^{(k)} u ^{\otimes k} \right\rangle = 0.$$ 
Hence for all $u$ and $v$
\begin{equation}\label{eq:step low symb}
\left\langle u ^{\otimes k -1} \otimes_s v, \gamma ^{(k)} u ^{\otimes k} \right\rangle = 0. 
\end{equation}
Doing the same manipulations but taking now second derivatives with respect to $t$ we also have 
$$ \left\langle u ^{\otimes k-1} \otimes_s v , \gamma ^{(k)} u ^{\otimes k-1}\otimes_s v \right\rangle + 2  \left\langle u ^{\otimes k-2} \otimes_s v ^{\otimes 2} , \gamma ^{(k)} u ^{\otimes k}\right\rangle = 0.$$
But, on the other hand, replacing $u$ by $u+tv$ and $u+itv$ in~\eqref{eq:step low symb} and taking first derivatives we get 
$$ \left\langle u ^{\otimes k-1} \otimes_s v , \gamma ^{(k)} u ^{\otimes k-1} \otimes_s v \right\rangle + \left\langle u ^{\otimes k-2} \otimes_s v ^{\otimes 2} , \gamma ^{(k)} u ^{\otimes k}\right\rangle = 0.$$
Combining the last two equations we infer 
$$ \left\langle u ^{\otimes k-1} \otimes_s v , \gamma ^{(k)} u ^{\otimes k-1}\otimes_s v \right\rangle = 0$$
for all unit vectors $u$ and $v$.
Taking $v$ of the form $v=v_1\pm \widetilde v_1$ then $v = v_1 \pm i \widetilde v_1$ and iterating the argument, we conclude 
   \bqq
\langle v_1 \otimes_s v_2 \otimes_s \ldots \otimes_s v_k, \gamma^{(k)} \widetilde v_1 \otimes_s \widetilde v_2 \otimes_s \ldots \otimes_s \widetilde v_k \rangle =0
 \eqq
for all $v_j, \widetilde v_j \in \gH$. Vectors of the form  $v_1 \otimes_s v_2 \otimes_s \ldots \otimes_s v_k$ form a basis of $\gH_s ^k$, thus the proof is complete.
\end{proof}
 
For self-containedness we use the previous lemma to give a short proof of Schur's formula~\eqref{eq:Schur}:

\begin{proof}[Proof of Lemma~\ref{lem:Schur}]
In view of Lemma~\ref{lem:uk-g-uk=0}, it suffices to show that 
$$ \dim \gH_s^N \int_{S\gH} \left|\langle u , v \rangle \right| ^{2N} \, du = 1$$
for all $v\in S\gH$. Pick $v$ and $\tilde{v}$ in $S\gH$ and $U$ a unitary mapping such that $U v = \tilde{v}$. Then, by invariance of $du$ under rotations, 
$$ \dim \gH_s^N \int_{S\gH} \left|\langle u , v \rangle \right| ^{2N} \, du = \dim \gH_s^N \int_{S\gH} \left|\langle U u , U v \rangle \right| ^{2N} \, du = \dim \gH_s^N \int_{S\gH} \left|\langle u , \tilde{v} \rangle \right| ^{2N} \, du$$
and thus the quantity
$$ \dim \gH_s^N \int_{S\gH} \left|\langle u , v \rangle \right| ^{2N} \, du$$ 
does not depend on $v$. By Lemma~\ref{lem:uk-g-uk=0} this implies that 
$$\dim \gH_s^N \int_{S\gH} | u^{\otimes N} \rangle  \langle u^{\otimes N} | \, du = c \1_{\gH^N}$$
for some constant $c$. Taking the trace of both sides of this equation shows that $c=1$ and the proof is complete.
\end{proof}

In the sequel we use standard bosonic creation and annihilation operators. For all $f\in \gH$, we define the creation operator $a^*(f): \gH_s^{k-1} \to \gH_s^{k}$ by
$$
{a^*}(f)\left( {\sum\limits_{\sigma  \in {\Sigma_{k-1}}} {{f_{\sigma (1)}} } \otimes ... \otimes {f_{\sigma (k-1)}}} \right) = (k) ^{-1/2} \sum\limits_{\sigma  \in {\Sigma_{k}}} {{f_{\sigma (1)}} }  \otimes ... \otimes {f_{\sigma (k)}}
$$
where in the right-hand side we set $f_k = f$. The annihilation operator $a(f): \gH^{k+1} \to \gH^{k}$ is the formal adjoing of  $a^*(f)$ (whence the notation), defined by
$$ a(f) \left( {\sum\limits_{\sigma  \in {\Sigma_{k+1}}} {{f_{\sigma (1)}} } \otimes ... \otimes {f_{\sigma (k+1)}}} \right) = (k+1) ^{1/2} \sum\limits_{\sigma  \in {\Sigma_{k+1}}} \left\langle f,f_{\sigma(1)} \right\rangle {{f_{\sigma (2)}} }  \otimes ... \otimes {f_{\sigma (k)}}$$
for all $f,f_1,...,f_{k}$ in $\gH$. These operators satisfy the {\it canonical commutation relations} (CCR)
\begin{equation}\label{eq:CCR}
[a(f),a(g)]=0,\quad[a^*(f),a^*(g)]=0,\quad [a(f),a^*(g)]= \langle f,g \rangle_{\gH}. 
\end{equation}
One of the uses of these objects is that the reduced density matrices $\gamma_N ^{(n)}$ of a bosonic state $\Gamma_N$ are characterized by the relations\footnote{We recall our convention that $\tr [\gamma_N ^{(n)}] = 1$.} 
\begin{equation} \label{eq:Wick}
 \langle v^{\otimes n}, \gamma_N^{(n)} v^{\otimes n} \rangle =\frac{(N-n)!}{N!} \tr_{\gH_s ^N} \left[ a^*(v)^n a(v)^n \Gamma_N \right].
 \end{equation}
Lemma~\ref{lem:uk-g-uk=0} guarantees that this determines $\gamma_N ^{(n)}$ completely. The definition above is called a Wick quantization: Creation and annihilation operators appear in the normal order, all creators on the left and all annihilators on the right.

The key observation in the proof of Theorem~\ref{thm:CKMR-identity} is that the density matrices of the state~\eqref{eq:def CKMR 2} can be alternatively defined from $\Gamma_N$ via an anti-Wick quantization where creation and annihilation operators appear in anti-normal order: All annihilators on the left and all creators on the right.

\begin{lemma}[\textbf{The CKMR construction and anti-Wick quantization}]\label{lem:A Wick}\mbox{}\\
Let $\Gammat_N$ be defined by~\eqref{eq:def CKMR 2} and $\gammat_N ^{(n)}$ be its reduced density matrices. We have 
\begin{equation}\label{eq:A Wick}
\langle v^{\otimes n}, \gammat _N^{(n)} v^{\otimes n} \rangle = \frac{(N+d-1)!}{(N+n+d-1)!} \tr_{\gH_s ^N} \left[ a(v)^n a ^*(v)^n \Gamma_N \right]
\end{equation}
for all $v\in \gH$.
\end{lemma}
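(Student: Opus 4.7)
The plan is to compute the lower symbol $\langle v^{\otimes n}, \gammat_N^{(n)} v^{\otimes n}\rangle$ directly from the CKMR definition, then recognize the right-hand side of~\eqref{eq:A Wick} as the same expression via Schur's formula applied at the $(N+n)$-particle level. Since by Lemma~\ref{lem:uk-g-uk=0} the lower symbol determines $\gammat_N^{(n)}$ completely, this suffices.

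First I would observe that for any unit vector $u \in S\gH$, the partial trace of $|u^{\otimes N}\rangle\langle u^{\otimes N}|$ over $N-n$ variables equals $|u^{\otimes n}\rangle\langle u^{\otimes n}|$, so taking the $n$-th reduced density matrix of~\eqref{eq:def CKMR 2} and pairing with $v^{\otimes n}$ gives
\begin{equation*}
\langle v^{\otimes n}, \gammat_N^{(n)} v^{\otimes n}\rangle = \dim\gH_s^N \int_{S\gH} |\langle v,u\rangle|^{2n}\, \langle u^{\otimes N}, \Gamma_N u^{\otimes N}\rangle\, du.
\end{equation*}
The goal is then to identify this integral with $\tr_{\gH_s^N}[a(v)^n a^*(v)^n \Gamma_N]$ up to the explicit combinatorial factor.

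For the right-hand side, I would apply Schur's formula (Lemma~\ref{lem:Schur}) on $\gH_s^{N+n}$, giving $\dim\gH_s^{N+n}\int_{S\gH} |u^{\otimes N+n}\rangle\langle u^{\otimes N+n}|\, du = \1_{\gH_s^{N+n}}$, and sandwich it between $a^*(v)^n$ (mapping $\gH_s^N \to \gH_s^{N+n}$) and its adjoint $a(v)^n$. The elementary identity $a(v)^n u^{\otimes N+n} = \sqrt{(N+n)!/N!}\,\langle v,u\rangle^n u^{\otimes N}$, which follows by induction using the CCR~\eqref{eq:CCR}, then yields
\begin{equation*}
a(v)^n a^*(v)^n = \dim\gH_s^{N+n}\cdot \frac{(N+n)!}{N!}\int_{S\gH} |\langle v,u\rangle|^{2n}\, |u^{\otimes N}\rangle\langle u^{\otimes N}|\, du
\end{equation*}
as operators on $\gH_s^N$. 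Taking the trace against $\Gamma_N$ and using cyclicity produces exactly the integral appearing in the lower symbol computation.

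Combining the two expressions, the prefactor becomes $\dim\gH_s^N / (\dim\gH_s^{N+n}\cdot (N+n)!/N!)$, and a direct computation using $\dim\gH_s^M = \binom{M+d-1}{M}$ collapses this to $(N+d-1)!/(N+n+d-1)!$, yielding~\eqref{eq:A Wick}. The only subtle point is the bookkeeping with the CCR and the adjoint relation $a^*(v)^n: \gH_s^N \to \gH_s^{N+n}$: one must be careful that $\tr_{\gH_s^N}[a(v)^n a^*(v)^n \Gamma_N] = \tr_{\gH_s^{N+n}}[a^*(v)^n \Gamma_N a(v)^n]$, so that Schur's identity at level $N+n$ is indeed the right tool. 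Beyond that, everything reduces to the single algebraic identity for $a(v)^n$ acting on a symmetric product state, which is the true computational engine of the argument.
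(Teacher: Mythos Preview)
Your proof is correct and follows essentially the same route as the paper: both arguments compute the lower symbol of $\gammat_N^{(n)}$ from the CKMR definition, then identify it with $\tr[a(v)^n a^*(v)^n\Gamma_N]$ by applying Schur's formula at the $(N+n)$-particle level together with the identity $a(v)^n u^{\otimes(N+n)}=\sqrt{(N+n)!/N!}\,\langle v,u\rangle^n u^{\otimes N}$. The only organizational difference is that the paper reduces to a pure state $\Gamma_N=|\Psi_N\rangle\langle\Psi_N|$ and manipulates scalar inner products, whereas you work directly with mixed $\Gamma_N$ and derive the operator identity for $a(v)^n a^*(v)^n$ before tracing; this is a cosmetic rearrangement rather than a genuinely different argument.
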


\begin{proof}
It suffices to consider the case of a pure state $\Gamma_N = |\Psi_N \rangle\langle \Psi_N |$ and write
\begin{align*}
\langle v^{\otimes k}, \widetilde \gamma _N^{(k)} v^{\otimes k} \rangle &= \dim \gH_s^N \int_{S\gH} du |\langle u^{\otimes N}, \Psi_N \rangle|^2 | \langle  u^{\otimes k},v^{\otimes k} \rangle|^2 \\
&= \dim \gH_s^N \int_{S\gH} du |\langle u^{\otimes (N +k)}, v^{\otimes k} \otimes \Psi_N \rangle|^2 \\
&= \frac{N!}{(N+k)!} \dim \gH_s^N \int_{S\gH} du |\langle u^{\otimes (N +k)}, a^*(v)^{k} \Psi_N \rangle|^2 \\
&=  \frac{N!}{(N+k)!}\frac{\dim \gH_s^N}{\dim \gH_s^{N+k}} \langle a(v)^{k} \Psi_N,a (v)^k \Psi_N \rangle \\
&= \frac{(N+d-1)!}{(N+k+d-1)!} \langle \Psi_N, a(v)^k a^*(v)^{k} \Psi_N \rangle
\end{align*}
using Schur's lemma~\eqref{eq:Schur} in $\gH_s ^{N+k}$ in the third line, the fact that $a^*(v)$ is the adjoint of $a (v)$ in the fourth line and recalling that
\begin{equation}\label{eq:dim boson}
\dim \gH_s ^N = { N+d-1 \choose d-1}, 
\end{equation}
the number of ways of choosing $N$ elements from $d$, allowing repetitions, without taking the order into account. This is the number of orthogonal vectors of the form 
$$ u_{i_1} \otimes_s \ldots \otimes_s u_{i_N}, \quad (i_1,\ldots, i_N) \in \left\{ 1,\ldots, d\right\}$$ 
one may form starting from an orthogonal basis $(u_1, \ldots,u_d)$ of $\gH$, and these form an orthogonal basis of $\gH ^N$.
\end{proof}

The way forward is now clear: we have to compare polynomials in $a ^*(v)$ and $a(v)$ written in normal and anti-normal order. This standard operation leads to the final lemma of the proof:

\begin{lemma}[\textbf{Normal and anti-normal order}]\label{lem:Wick A Wick}\mbox{}\\
Let $v\in S\gH$. We have  
\begin{equation}\label{eq:Wick A Wick}
a(v) ^n a ^*(v) ^n = \sum_{k=0} ^n \binom{n}{k} \frac{n!}{k!} a ^*(v) ^k a (v) ^k \mbox{ for all } n\in \N. 
\end{equation}
\end{lemma}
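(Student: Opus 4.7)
The plan is to proceed by induction on $n$, using the canonical commutation relation $[a(v),a^*(v)] = \|v\|^2 = 1$ from~\eqref{eq:CCR} (valid for $v\in S\gH$). Throughout, abbreviate $a := a(v)$ and $a^* := a^*(v)$. The base case $n=0$ is trivial since both sides equal the identity.

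For the inductive step, assume~\eqref{eq:Wick A Wick} holds at level $n$ with coefficients $c_{n,k} := \binom{n}{k}\frac{n!}{k!}$. I would write
\[
a^{n+1}(a^*)^{n+1} \;=\; a\cdot\bigl[a^n (a^*)^n\bigr]\cdot a^* \;=\; \sum_{k=0}^n c_{n,k}\, a\,(a^*)^k a^k\, a^*,
\]
and then move the leftmost $a$ past $(a^*)^k$ and the rightmost $a^*$ past $a^k$ using the two elementary consequences of the CCR,
\[
[a, (a^*)^m] = m\,(a^*)^{m-1},\qquad [a^m, a^*] = m\,a^{m-1},
\]
each of which follows from $[a,a^*]=1$ by a one-line induction on $m$. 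After substitution and collection of like terms, every summand reduces to a combination of already normal-ordered monomials:
\[
a\,(a^*)^k a^k\, a^* \;=\; (a^*)^{k+1} a^{k+1} \;+\; (2k+1)\,(a^*)^k a^k \;+\; k^2\,(a^*)^{k-1} a^{k-1}.
\]

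Reading off the coefficient of $(a^*)^j a^j$ in the resulting expression for $a^{n+1}(a^*)^{n+1}$ yields the three-term recurrence
\[
c_{n+1,j} \;=\; c_{n,j-1} \;+\; (2j+1)\, c_{n,j} \;+\; (j+1)^2\, c_{n,j+1},
\]
with the convention $c_{n,k}=0$ for $k<0$ or $k>n$. The remaining work is purely algebraic: verify that the proposed $c_{n,k}=\binom{n}{k}\frac{n!}{k!}$ satisfies this recurrence. Dividing through by $(n!)^2/\bigl((j!)^2(n+1-j)!\bigr)$ turns the check into $j^2 + (2j+1)(n+1-j) + (n+1-j)(n-j) = (n+1)^2$, which rearranges to $j^2 + (n+1-j)(n+j+1) = (n+1)^2$ and is immediate.

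The only real obstacle is the bookkeeping in the commutator chase; no analytic input is needed since~\eqref{eq:Wick A Wick} is a purely algebraic identity in two noncommuting symbols with the single relation $[a,a^*]=1$. A more conceptual alternative would be to restrict to the $N$-particle sector and observe that on the eigenspaces of the number operator $N_v := a^*(v)a(v)$ one has $(a^*)^k a^k = N_v(N_v-1)\cdots(N_v-k+1)$ and $a^m(a^*)^m = (N_v+1)(N_v+2)\cdots(N_v+m)$, so that~\eqref{eq:Wick A Wick} collapses to the classical expansion
\[
(x+1)(x+2)\cdots(x+n) \;=\; \sum_{k=0}^{n}\binom{n}{k}\frac{n!}{k!}\;x(x-1)\cdots(x-k+1)
\]
of the rising factorial in the basis of falling factorials, whose coefficients are (up to an index shift) the Lah numbers. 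By Lemma~\ref{lem:uk-g-uk=0}, agreement on every $N$-particle sector is enough to conclude.
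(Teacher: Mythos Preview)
Your induction argument is correct and complete. The approach differs from the paper's in an interesting way: the paper wraps the level-$n$ identity as $a^*\cdot\bigl[a^n(a^*)^n\bigr]\cdot a$, which commutes trivially through each normal-ordered monomial and yields the recurrence
\[
c_{n+1,j}=c_{n,j-1}+(2n+1)\,c_{n,j}-n^2\,c_{n-1,j},
\]
recognized as the Laguerre recurrence for $\tilde L_n(x)=n!\,L_n(-x)$. You instead wrap as $a\cdot\bigl[a^n(a^*)^n\bigr]\cdot a^*$, which forces a commutator computation on each term but stays within a single inductive level and produces the recurrence $c_{n+1,j}=c_{n,j-1}+(2j+1)\,c_{n,j}+(j+1)^2\,c_{n,j+1}$. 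Both recurrences are of course satisfied by the same array; your route is self-contained and avoids invoking special-function identities, while the paper's route makes the Laguerre structure transparent. Your Lah-number alternative is also valid and arguably the most conceptual of the three; the only quibble is that Lemma~\ref{lem:uk-g-uk=0} is not the right reference there---what you actually use is that both sides are diagonal in any occupation-number basis containing $v$, so equality of the diagonal scalars suffices.
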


\begin{proof}
The computation is made easier by recalling the expression for the $n$-th Laguerre polynomial
\[
L_n (x) = \sum_{k=0} ^n \binom{n}{k} \frac{(-1) ^k}{k!} x ^k.
\]
These polynomials satisfy the recurrence relation 
\[
(n+1) L_{n+1} (x) = (2n +1) L_n (x) - x L_n (x) - n L_{n-1} (x)  
\]
and one may see that~\eqref{eq:Wick A Wick} may be rewritten
\begin{equation}\label{eq:Wick A Wick proof}
a(v) ^n a ^*(v) ^n = \sum_{k=0} ^n c_{n,k} \, a^*(v) ^k a (v) ^k 
\end{equation}
where the $c_{n,k}$ are the coefficients of the polynomial
$$\tilde{L}_n (x) := n! \, L_n (-x).$$
It thus suffices to show that, for any $n\geq 1$, 
\[
a(v) ^{n+1} a ^*(v) ^{n+1} = a^* (v)  a(v) ^n a ^*(v) ^n  a(v) + (2n+1) a(v) ^n a ^*(v) ^n - n ^2  a(v) ^{n-1} a ^*(v) ^{n-1}.
\]
Note the order of creation and annihilation operators in the first term of the right-hand side: knowing a normal-ordered representation of $a(v) ^n a ^*(v) ^n $ and $a(v) ^{n-1} a ^*(v) ^{n-1}$ we deduce a normal-ordered representation of the left-hand side.

A repeated application of the CCR~\eqref{eq:CCR} gives the relations 
\begin{align}\label{eq:CCR n}
a(v) a ^* (v) ^n &= a ^* (v) ^n a(v) + n  a ^* (v) ^{n-1} \nonumber \\
a(v) ^n a ^* (v) &= a ^* (v)  a(v) ^n + n  a (v) ^{n-1}
\end{align}
Then 
\begin{align*}
a^* (v) a(v) ^n a ^*(v) ^n  a(v) &= a(v) ^n a ^* (v) ^{n+1} a (v) - n a (v) ^{n-1} a ^* (v) ^n a(v)\\
& = a(v) ^{n+1} a ^*(v) ^{n+1} - (n + 1) a(v) ^n a ^*(v) ^n \\
&- n a(v) ^n a ^*(v) ^n +  n ^2  a(v) ^{n-1} a ^*(v) ^{n-1},
\end{align*}
and the proof is complete.
\end{proof}

The final formula~\eqref{eq:CKMR exact} is deduced by combining Lemmas~\ref{lem:uk-g-uk=0},~\ref{lem:A Wick} and~\ref{lem:Wick A Wick} with~\eqref{eq:Wick}, simply noting that for $j\leq n$
\begin{align*}
\tr \left[ a ^* (v) ^j a(v) ^j \Gamma_N \right] &= \frac{N!}{(N-j)!} \langle v ^{\otimes j}, \Gamma_N ^{(j)}  v ^{\otimes j}\rangle \\
&= \frac{N!}{(N-j)!} \langle v ^{\otimes n}, \Gamma_N ^{(j)}  \otimes \1 ^{\otimes n-j}  v ^{\otimes n}\rangle \\
&= \frac{N!}{(N-j)!} \langle v ^{\otimes n}, \Gamma_N ^{(j)}  \otimes_s \1 ^{\otimes n-j} v ^{\otimes n}\rangle.
\end{align*}
The first two equalities are just definitions and the third one comes from the bosonic symmetry of $v ^{\otimes n}$

\newpage

\section{\textbf{Fock-space localization and applications}}\label{sec:locFock}

We now turn to the first implication in the proof strategy~\eqref{eq:schema deF 2}. We shall need to convert weak-$\ast$ convergence of reduced density matrices into strong convergence, in order to apply Theorem~\ref{thm:DeFinetti fort}. The idea is to localize the state $\Gamma_N$ one starts from using either compactly supported functions or finite rank orthogonal projectors. One may then work in a compact setting with $\gS^1$-strong convergence, apply Theorem~\ref{thm:DeFinetti fort} and then pass to the limit in the localization as a last step. More precisely, we will use localization in finite dimensional spaces, in order to show that the general theorem can be deduced from the finite-dimensional constructive proof discussed in the previous chapter.

The subtlelty here is that the appropriate localization notion for a $N$-body state (e.g. a wave-function $\Psi_N \in L ^2 (\R ^{dN})$) is more complicated than that one is used to for one-body wave-functions $\psi \in L ^2 (\R ^d)$. One in fact has to work directly on the reduced density matrices and localize them in such a way that the localized matrices correspond to a quantum state. The localization procedure may lead to particle losses and thus the localizaed state will in general not be a $N$-body state but a superposition of $k$-body states, $0\leq k \leq N$, i.e. a state on Fock space. 

The localization procedure we shall use is described in Section~\ref{sec:loc rig}. We shall first give some heurtistic considerations in Section~\ref{sec:loc heur}, in order to make precise what has been said above, namely that the correct localization procedure in $L ^2 (\R ^{dN})$ must differ from the usual localization in $L ^2 (\R ^d)$. Section~\ref{sec:proof deF faible} contains the proof of the weak quantum de  Finetti theorem and a useful auxiliary result which is a consequence of the proof using localization. 

\subsection{Weak convergence and localization for a two-body state}\label{sec:loc heur}\mbox{}\\\vspace{-0.4cm}

The following considerations are taken from~\cite{Lewin-11}. Let us take a particularly simple sequence of bosonic two-body states 
\begin{equation}\label{eq:deux corps}
\Psi_n := \psi_n \otimes_s \phi_n = \frac{1}{\sqrt{2}} \left( \psi_n \otimes \phi_n + \phi_n \otimes \psi_n \right) \in L_s ^2 (\R ^{2d})
\end{equation}
with $\psi_n$ and $\phi_n$ being normalized in $L ^2 (\R ^d)$. This corresponds to having one patricle in the state $\psi_n$ and one particle in the state $\phi_n$. We will assume
\[
 \bral \phi_n,\psi_n \ketr_{L ^2 (\R ^d)} = 0,
\]
which ensures that $\norm{\Psi_n} = 1$. Extracting a subsequence if need be we have
\[
 \Psi_n \wto \Psi \mbox{ weakly in }  L ^2 (\R ^{2d})
\]
and the convergence is strong if and only if $\norm{\Psi} = \norm{\Psi_n}= 1$. In the case where some mass is lost in the limit, i.e. $\norm{\Psi}<1$, the convergence is only weak. 

We will always work in a locally compact setting and thus the only possible source for the loss of mass is that it disappears at infinity~\cite{Lions-84,Lions-84b,Lions-85a,Lions-85b}. A possibility is that both particles $\phi_n$ and $\psi_n$ are lost at infinity
\[
\psi_n \wto 0, \: \phi_n \wto 0 \mbox{ in } L^2 (\R ^{2d}) \mbox{ in } n\to \infty, 
\]
in which case $\Psi_n \wto 0$ in $L ^2 (\R ^{2d})$. In $L ^2 (\R ^{2d})$ this is the scenario that is closest to the usual loss of mass in $L ^2 (\R ^d)$, but there are other possibilities.
 
A typical case is that where only one of the two particles is lost at infinity, which we can materialize by 
\[
 \psi_n \wto 0 \mbox{ weakly in } L ^2 (\R ^d), \: \phi_n \to \phi \mbox{ strongly in } L ^2 (\R ^{2d}).
\]
For the loss of mass of $\psi_n$ one may typically think of the example
\begin{equation}\label{eq:fuite masse}
 \psi_n = \psi \left( . + x_n \right) 
\end{equation}
with $|x_n| \to \infty$ when $n\to \infty$ and $\psi$ say smooth with compact support. We have in this case
\[
\Psi_n \wto 0 \mbox{ in } L ^2 (\R ^{2d}) 
\]
but for obvious physical reasons we would prefer to have a weak convergence notion ensuring
\begin{equation}\label{eq:geo conv}
\Psi_n \wto_{ g}  \frac{1}{\sqrt{2}} \phi.  
\end{equation}
In particular, since only the parrticle in the state $\psi_n$ is lost at infinity it is natural that the limit state be one with only the particle described by $\phi$ left. We denote this convergence $\wto_{ g}$ because this is precisely the geometric convergence discussed by Mathieu Lewin in~\cite{Lewin-11}. The difficulty is of course that the two sides of~\eqref{eq:geo conv} live in different spaces.

To introduce the correct convergence notion, one has to look at the density matrices of~$\Psi_n$:
\begin{align}\label{eq:matr deux corps}
\gamma_{\Psi_n} ^{(2)} &= \ketl \phi_n \otimes_s \psi_n \ketr \bral \phi_n \otimes_s \psi _n \brar \wto_\ast 0 \mbox{ in } \gS ^1\left (L ^2 (\R ^{2d}) \right) \nonumber\\
\gamma_{\Psi_n} ^{(1)} &= \half \ketl \phi_n \ketr \bral \phi_n  \brar + \half \ketl \psi_n \ketr \bral \psi_n  \brar \wto_\ast \frac{1}{2} \ketl \phi \ketr \bral \phi  \brar \mbox{ in } \gS ^1\left (L ^2 (\R ^{d}) \right).
\end{align}
One then sees that the pair $\left( \gamma_{\Psi_n} ^{(2)}, \gamma_{\Psi_n} ^{(1)}\right)$ converges to the pair $\left( 0, \half \ketl \phi \ketr \bral \phi\brar\right)$ that corresponds to the density matrices of the one-body state $\sqrt{2} ^{-1} \phi \in L ^2 (\R ^{d})$. More precisely, the geometric convergence notion is formulated in the Fock space (here bosonic with two particles)
\begin{equation}\label{eq:Fock deux corps}
\cF_s ^{\leq 2} (L ^2 (\R ^{d})) :=  \C \oplus L ^2 (\R ^d) \oplus L_s ^2 (\R^{2d}) 
\end{equation}
and we have in the sense of geometric convergence on $\gS ^1 \left( \cF_s ^{\leq 2} \right)$
\[
 0 \oplus 0 \oplus \ketl \Psi_n \ketr \bral \Psi_n \brar \wto_g  \half \oplus \half \ketl \phi \ketr \bral \phi  \brar \oplus 0,
\]
which means that all the reduced density matrices of the left-hand side converge to those of the right-hand side. We note that the limit does have trace $1$ in $\gS ^1 \left( \cF_s ^{\leq 2} \right)$, there is thus no loss of mass in $\cF_s ^{\leq 2}$. More precisely, in $\cF_s ^{\leq N}$ \textbf{the loss of mass for a pure $N$-particles state is materialized by the convergence to a mixed state with less particles.} 

\medskip

Just as the appropriate notion of weak convergence for $N$-body problems is different from the usual weak convergence in $L^2 (\R^{dN})$ (a fortiori when taking the limit $N\to \infty$), the appropriate procedure to localize a state and turn weak convergence into strong convergence must be thought anew. Given a self-adjoint positive localization operator $A$, say $A= P$ a finite rank projector or $A = \chi$ the multiplication by a compactly supported function $\chi$, one usually localizes a wave-function $\psi \in L ^2 (\R ^d)$ by defining 
\[
\psi_A = A \psi 
\]
which amounts to associate
\[
 \ketl \psi \ketr\bral \psi \brar \leftrightarrow \ketl A \psi \ketr\bral A \psi \brar.
\]
Emulating this procedure for the two-body state~\eqref{eq:deux corps} one might imagine to consider a localized state defined by its two-body density matrix
\[
\gamma_{n,A}  ^{(2)}= \ketl A\otimes A \Psi_n \ketr \bral A\otimes A \Psi_n \brar.
\]
It is then clear that
\[
 \gamma_{n,A} ^{(2)} \wto_\ast 0 \mbox{ in } \gS^1 (\gH_s ^2),
\]
which was to be expected, but it is more disturbing that we also have for the corresponding one-body density matrix 
\[
 \gamma_{n,A} ^{(1)} \wto_\ast 0 \mbox{ in } \gS ^1 (\gH)
\]
whereas, in view of~\eqref{eq:matr deux corps} one would rather like to have 
\[
 \gamma_{n,A} ^{(1)} \to  \frac{1}{2} \ketl A \phi \ketr \bral A \phi  \brar \mbox{ strongly in } \gS ^1 (\gH).
\]
The solution to this dilemma is to define a localized state by asking that its reduced density matrices be $A\otimes A \gamma_{\Psi_n} ^{(2)} A\otimes A$ and $A\gamma_{\Psi_n} ^{(1)}A$. The corresponding state is then uniquely determined, and it so happens that it is a state on Fock space, as we explain in the next section.

\subsection{Fock-space localization}\label{sec:loc rig}\mbox{}\\\vspace{-0.4cm}

After the preceding heuristic considerations, we now introduce the notion of localization in the bosonic Fock space\footnote{The procedure is the same for fermionic particles.}
\begin{align}\label{eq:Fock space}
\cF_s (\gH) &=  \C \oplus \gH \oplus \ldots \oplus \gH_s ^n \oplus \ldots\nonumber\\
\cF_s (L ^2 (\R ^d) ) &= \C \oplus L ^2 (\R ^d) \oplus \ldots \oplus L_s ^2 (\R ^{dn}) \oplus \ldots.
\end{align}
In this course we will always start from $N$-body states, in which case it is sufficient to work in the truncated Fock space 
\begin{align}\label{eq:Fock space tronc}
\cF_s  ^{\leq N}(\gH) &=  \C \oplus \gH \oplus \ldots \oplus \gH_s ^n \oplus \ldots \oplus \gH_s ^N\nonumber\\
\cF_s ^{\leq N} (L ^2 (\R ^d) ) &= \C \oplus L ^2 (\R ^d) \oplus \ldots \oplus L_s ^2 (\R ^{dn}) \oplus \ldots \oplus L_s ^2 (\R ^{dN}).
\end{align}

\begin{definition}[\textbf{Bosonic states on the Fock space}]\label{def:Fock state}\mbox{}\\
A bosonic state on the Fock space is a positive self-adjoint operator with trace $1$ on $\cF_s$. We denote $\cS (\cF_s(\gH))$ the set of bosonic states 
\begin{equation}\label{eq:Fock state}
\cS (\cF_s(\gH)) = \left\{ \Gamma \in \gS ^1 (\cF_s(\gH)), \Gamma = \Gamma ^* , \Gamma\geq 0, \tr_{\cF_s(\gH)} [\Gamma] = 1 \right\}.
\end{equation}
We say that a state is diagonal (stricto sensu, block-diagonal) if it can be written in the form
\begin{equation}\label{eq:Fock state diag}
\Gamma = G_{0}\oplus G_{1}\oplus \ldots \oplus G_n \oplus \ldots 
\end{equation}
with $G_n \in \gS ^1 (\gH_s ^n)$. A state $\Gamma_N$ on the truncated Fock space $\cF_s ^{\leq N}(\gH)$, respectively a diagonal state on the truncated Fock space, are defined in the same manner. For a diagonal state on $\cF_s ^{\leq N}(\gH)$, of the form 
\[
\Gamma = G_{0,N}\oplus G_{1,N}\oplus \ldots \oplus G_{N,N},  
\]
its $n$-th reduced density matrix $\Gamma_N ^{(n)}$ is the operator on $\gH_s ^n$ defined by 
\begin{equation}\label{eq:Fock mat red}
\Gamma_N ^{(n)} = {N\choose n}^{-1}\sum_{k=n}^N{k\choose n}\tr_{n+1\to k}G_{N,k}.
\end{equation}
\hfill\qed
\end{definition}

The acquainted reader will notice two things:
\begin{itemize}
\item We introduce only those concepts that will be crucial to the sequel of the course. One may of course define the density matrices of general states, but we will not need this hereafter. For a diagonal state, the reduced density matrices~\eqref{eq:Fock mat red} characterize the state completely. For a non-diagonal state, one must also specify its ``off-diagonal'' density matrices $\Gamma ^{(p,q)}:\gH_s ^p \mapsto \gH_s ^q$ for $p\neq q$.
\item The normalization we chose in~\eqref{eq:Fock mat red} is not standard. It is chosen such that, in the spirit of the rest of the course, the $n$-th reduced matrix of a $N$- particles state (i.e. one with $G_{0,N} = \ldots = G_{N-1,N} = 0$ in~\eqref{eq:Fock mat red}) be of trace $1$. The standard convention would rather be to fix the trace at ${N \choose n}$, which is less convenient to apply Theorem~\ref{thm:DeFinetti fort}. In other words the normalization takes into account the fact that throughout the course we work we a prefered particle number $N$. 
\end{itemize}

We may now introduce the concept of localization of a state. We shall limit ourselves to $N$-body states and self-adjoint localization operators, which is sufficient for our needs in the sequel. The following lemma/definition is taken from~\cite{Lewin-11}. Other versions may be found e.g. in~\cite{Ammari-04,DerGer-99,HaiLewSol_thermo-09}.

\begin{lemma}[\textbf{Localization of a $N$-body state}]\label{lem:Fock loc}\mbox{}\\
Let $\Gamma_N \in \cS (\gH_s ^N)$ be a bosonic $N$-body state and $A$ a self-adjoint operator on $\gH$ with $0\leq A^2 \leq 1$. There exists a unique diagonal state $\Gamma_N ^A \in \cS (\cF_s ^{\leq N} (\gH))$ such that  
\begin{equation}\label{eq:Fock loc mat}
\left(\Gamma_N ^A\right) ^{(n)} = A ^{\otimes n} \Gamma_N ^{(n)} A ^{\otimes n}
\end{equation}
for all $0 \leq n \leq N$. Moreover, writing $\Gamma_N ^A$ in the form  
\[
\Gamma_N ^A = G_{0,N} ^A \oplus G_{1,N} ^A \oplus \ldots \oplus G_{N,N} ^A,  
\]
we have the fundamental relation
\begin{equation}\label{eq:Fock funda rel}
\tr_{\gH_s ^n} \left[G_{N,n} ^A \right] =\tr_{\gH_s ^{N-n}} \left[G_{N,N-n} ^{\sqrt{1-A ^2}} \right].
\end{equation}
\end{lemma}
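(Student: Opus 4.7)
The plan is to construct $\Gamma_N^A$ explicitly and then verify the desired properties, with uniqueness following easily once existence is established. Set $B := \sqrt{1-A^2}$, a bounded self-adjoint operator on $\gH$ well-defined by the functional calculus since $0 \leq A^2 \leq \one$, so that $A^2 + B^2 = \one$. For each $0 \leq k \leq N$, define
\[
G_{k,N}^A := \binom{N}{k}\, \tr_{k+1 \to N}\bigl[(A^{\otimes k} \otimes B^{\otimes (N-k)})\, \Gamma_N\, (A^{\otimes k} \otimes B^{\otimes (N-k)})\bigr],
\]
which is manifestly a positive operator on $\gH^k$. It in fact lies in $\gS^1(\gH_s^k)$: for any permutation $\sigma$ of $\{1,\ldots,k\}$ extended by identity to $\Sigma_N$, the operator $A^{\otimes k}\otimes B^{\otimes(N-k)}$ commutes with $U_\sigma$ (its first $k$ factors coincide), while $U_\sigma\Gamma_N = \Gamma_N$ by bosonic symmetry, so the partial trace inherits $\Sigma_k$-invariance. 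Set $\Gamma_N^A := \bigoplus_{k=0}^N G_{k,N}^A$.

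The key step is the verification of $(\Gamma_N^A)^{(n)} = A^{\otimes n}\,\Gamma_N^{(n)}\, A^{\otimes n}$ for all $0 \leq n \leq N$, which (in the case $n=0$) simultaneously yields $\tr_{\cF}[\Gamma_N^A] = 1$. Plugging the explicit formula for $G_{k,N}^A$ into \eqref{eq:Fock mat red}, using the elementary identity $\binom{N}{n}^{-1}\binom{k}{n}\binom{N}{k} = \binom{N-n}{k-n}$, and changing variable $j = k-n$, one obtains
\[
(\Gamma_N^A)^{(n)} = A^{\otimes n}\biggl(\sum_{j=0}^{N-n}\binom{N-n}{j}\, \tr_{n+1 \to N}\bigl[(\one^{\otimes n} \otimes (A^2)^{\otimes j} \otimes (B^2)^{\otimes (N-n-j)})\, \Gamma_N\bigr]\biggr) A^{\otimes n},
\]
since the outer $A^{\otimes n}$ factors act only on positions $1,\ldots,n$ and may be pulled out of the partial trace. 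Bosonic symmetry of $\Gamma_N$ implies that a partial trace $\tr_{n+1\to N}[(\one^{\otimes n}\otimes Y)\Gamma_N]$ depends on $Y$ only through its symmetrization over $\Sigma_{N-n}$; the symmetrization of $(A^2)^{\otimes j}\otimes(B^2)^{\otimes(N-n-j)}$ multiplied by $\binom{N-n}{j}$ equals $\sum_{|S|=j}\prod_{i\in S}A_i^2\prod_{i\notin S}B_i^2$, and summing over $j$ yields $\prod_{i=n+1}^{N}(A_i^2+B_i^2) = \one^{\otimes(N-n)}$. The bracketed quantity therefore collapses to $\tr_{n+1\to N}[\Gamma_N] = \Gamma_N^{(n)}$, as required.

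Uniqueness then follows by descending induction on $n$. Specializing \eqref{eq:Fock mat red} to $n=N$ collapses the sum to a single term, forcing $G_{N,N}^A = A^{\otimes N}\Gamma_N A^{\otimes N}$; for $n<N$, isolating the $k=n$ term gives the recursion
\[
G_{n,N}^A = \binom{N}{n}\,(\Gamma_N^A)^{(n)} - \sum_{k=n+1}^N \binom{k}{n}\, \tr_{n+1\to k}[G_{k,N}^A],
\]
so once the higher sectors are fixed, $G_{n,N}^A$ is uniquely determined by the prescribed $(\Gamma_N^A)^{(n)}$. Finally, the fundamental relation \eqref{eq:Fock funda rel} reads directly off the explicit construction: both sides equal $\binom{N}{n}\, \tr\bigl[((A^2)^{\otimes n} \otimes (B^2)^{\otimes (N-n)})\, \Gamma_N\bigr]$, where one invokes $\binom{N}{n} = \binom{N}{N-n}$ together with the bosonic symmetry of $\Gamma_N$ to permute the $A^2$ and $B^2$ tensor factors inside the trace.

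The main obstacle is the reduced density matrix identity: one must carefully exploit the bosonic symmetry of $\Gamma_N$ at each step to convert the asymmetric tensor products $A^{\otimes k}\otimes B^{\otimes(N-k)}$ into effectively symmetric averages over subsets of $\{1,\ldots,N\}$, so that the binomial coefficients conspire with the identity $A^2+B^2=\one$ to telescope the sum. All remaining verifications---positivity, bosonic symmetry, trace normalization, uniqueness, and the swap relation---then follow essentially by inspection of the explicit formula.
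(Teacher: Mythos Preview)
Your proof is correct and follows essentially the same approach as the paper's: you give the same explicit formula $G_{k,N}^A = \binom{N}{k}\,\tr_{k+1\to N}[(A^{\otimes k}\otimes B^{\otimes(N-k)})\Gamma_N(A^{\otimes k}\otimes B^{\otimes(N-k)})]$, invoke the same binomial identity $\binom{N}{n}^{-1}\binom{k}{n}\binom{N}{k}=\binom{N-n}{k-n}$, and deduce the fundamental relation directly from the explicit construction together with bosonic symmetry. The only organizational differences are that the paper restricts to the case where $A=P$ is an orthogonal projector (exploiting $P^2=P$ and $PP_\perp=0$ to expand $(P+P_\perp)^{\otimes(N-n)}$), whereas you treat general $A$ by symmetrizing $(A^2)^{\otimes j}\otimes(B^2)^{\otimes(N-n-j)}$ and telescoping via $A^2+B^2=\one$; and the paper derives the formula starting from $P^{\otimes n}\Gamma_N^{(n)}P^{\otimes n}$, while you start from $(\Gamma_N^A)^{(n)}$ and reduce it to $A^{\otimes n}\Gamma_N^{(n)}A^{\otimes n}$. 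Your descending-induction uniqueness argument is a spelled-out version of the paper's remark that the reduced density matrices determine the diagonal state.
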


\begin{remark}[Fock-space localization]\label{rem:Fock com loc}\mbox{}\\\vspace{-0,4cm}
\begin{enumerate}
\item The uniqueness part of the lemma shows that one has to work on Fock space. The localized state is in fact unique in $\cS (\cF_s (\gH))$, but to see this we would need slightly more general definitions, cf~\cite{Lewin-11}.
\item The relation~\eqref{eq:Fock funda rel} is one of the cornerstones of the method. Loosely speaking it expresses the fact that, in the state  $\Gamma_N$, \textbf{the probability of having $n$ particles $A$-localized is equal to the probability of having $N-n$ particles $\sqrt{1-A^2}$ localized}. Think of the case of a very simple localization operator, $A= \one _{B(0,R)}$, the (multiplication by the) indicative function of the ball of radius $R$. We are then simply saying that the probability of having exactly $n$ particles in the ball equals the probability of having exactly $N-n$ particles outside of the ball. Indeed, in probabilistic terms, these two possibilities correspond to the same event.\hfill\qed
\end{enumerate}

\begin{proof}[Proof of Lemma~\ref{lem:Fock loc}.]
Uniqueness, at least amongst diagonal states on the truncated Fock space, is a simple consequence of the fact that the reduced density matrices uniquely characterize the state. Details can be found in~\cite{Lewin-11}.

For the existence, one can use the usual identification (in the sense of unitary equivalence)
$$\cF_s (A  \gH \oplus \sqrt{1-A ^2}\gH) \simeq \cF_s (A\gH) \otimes \cF_s (\sqrt{1-A ^2}\gH)$$
and define the localized state by taking a partial trace with respect to the second Hilbert space in the tensor product of the right-hand side.  We shall follow a more explicit but equivalent route. To simplify notation we shall restrict to the case where $A=P$ is an orthogonal projector and thus $\sqrt{1-A ^2} = \Pp$. 

First note that for any $n$-body observable $O_n$
\begin{align*}
\tr \left[ O_n P ^{\otimes n} \Gamma_N ^{(n)} P ^{\otimes n} \right] &= \tr \left[ P ^{\otimes n} O_n P ^{\otimes n} \Gamma_N ^{(n)}  \right] \\
&= \tr \left[(P ^{\otimes n} O_n P ^{\otimes n}) \otimes \1 ^{\otimes N-n} \Gamma_N  \right]\\
&= \tr \left[ O_n \otimes \1 ^{\otimes N-n} ( P ^{\otimes n} \otimes \1 ^{\otimes N-n} \Gamma_N  P ^{\otimes n} \otimes \1 ^{\otimes N-n}) \right]
\end{align*}
so that 
$$ P ^{\otimes n} \Gamma_N ^{(n)} P ^{\otimes n} = \tr_{n+1 \to N} \left[ P ^{\otimes n} \otimes \1 ^{\otimes N-n} \Gamma_N  P ^{\otimes n} \otimes \1 ^{\otimes N-n}\right].$$

Thus, by cyclicity of the trace, we obtain
\begin{align*}
P ^{\otimes n} \Gamma_N ^{(n)} P ^{\otimes n} &= \tr_{n+1\to N} \left[(P ^{\otimes n} \otimes \one ^{\otimes (N-n)} ) P ^{\otimes n} \otimes \one ^{\otimes (N-n)} \Gamma_N  P ^{\otimes n} \otimes \one ^{\otimes (N-n)} \right]\nonumber\\
&= \sum_{k=0} ^{N-n} {N-n \choose k}  \tr_{n+1\to N} \left[P ^{\otimes n + k} \otimes \Pp ^{\otimes (N-n-k)} \Gamma_N  P ^{\otimes n} \otimes \one ^{\otimes (N-n)} \right]\nonumber
\\&= \sum_{k=0} ^{N-n} {N-n \choose k}  \tr_{n+1\to N} \left[P ^{\otimes n + k} \otimes \Pp ^{\otimes (N-n-k)} \Gamma_N  P ^{\otimes n+k} \otimes \Pp ^{\otimes (N-n-k)} \right]\nonumber
\\&= \sum_{k=n} ^{N} {N-n \choose k-n} \tr_{n+1\to N} \left[P ^{\otimes k } \otimes \Pp ^{\otimes (N-k)} \Gamma_N  P ^{\otimes k} \otimes \Pp ^{\otimes (N-k)} \right].\nonumber
\end{align*}
Here we inserted $\one = P + \Pp$ in the first occurence of $\1 ^{\otimes N-n}$ in the first line and expanded. Since the state $P ^{\otimes n} \otimes \one ^{\otimes (N-n)} \Gamma_N  P ^{\otimes n} \otimes \one ^{\otimes (N-n)}$ we act on is symmetric under exchange of the last $N-n$ variables,  we can reorganize the terms in the expansion to obtain the sum of the second line, involving binomial coefficients. It then suffices to note that
\begin{equation*}\label{eq:fact calcul}
{N-n \choose k-n} = { N \choose k} { N \choose n} ^{-1} { k \choose n}
\end{equation*}
to deduce
\begin{equation*}
P ^{\otimes n} \Gamma_N ^{(n)} P ^{\otimes n}  = \sum_{k=n} ^{N} { N \choose n} ^{-1} { k \choose n}  \tr_{n+1\to N} \left[G_{N,k} ^P \right]= \left(G_N  ^P \right)^{(n)}
\end{equation*}
with (cf Definition~\eqref{eq:Fock mat red})
\begin{equation}\label{eq:Fock calcul loc}
G_{N,k} ^P =  { N \choose k} \tr_{k+1\to N} \left[P ^{\otimes k } \otimes \Pp ^{\otimes (N-k)} \Gamma_N  P ^{\otimes k} \otimes \Pp ^{\otimes (N-k)} \right]
\end{equation}
and
\[
G_N  ^P = G_{N,0} ^P \oplus \ldots \oplus G_{N,N} ^P.
\]
This is indeed an operator on $\cF_s ^{\leq N} (P\gH)$: in~\eqref{eq:Fock calcul loc}, one can interchange the first $k$ particles and they live on $P\gH$, whereas the last $N-k$ particles are traced out. 

There remains to show that $G_N ^P$ is indeed a state, i.e. that its trace is $1$. To see this, we write
\begin{align*}
1 &= \tr_{\gH ^N} [\Gamma_N] = \tr_{\gH ^N} \left[\left(P+\Pp \right) ^{\otimes N} \Gamma_N \left(P+\Pp \right) ^{\otimes N}\right] \\
&= \sum_{k=0} ^N { N \choose k} \tr_{\gH ^N} \left[P ^{\otimes k } \otimes \Pp ^{\otimes (N-k)} \Gamma_N  P ^{\otimes k} \otimes \Pp ^{\otimes (N-k)}\right]
\\&= \sum_{k=0} ^N  \tr_{\gH ^k} \left[G_{N,k} ^P\right] = \tr_{\cF(\gH)} [G_{N} ^P].
\end{align*}
The relation~\eqref{eq:Fock funda rel} is an immediate consequence of~\eqref{eq:Fock calcul loc} and the symmetry of $\Gamma_N$.
\end{proof}
\end{remark}

\subsection{Proof of the weak quantum de Finetti theorem and corollaries}\label{sec:proof deF faible}\mbox{}\\\vspace{-0.4cm}

We are now going to use the localization procedure just described to prove the first implication of the strategy~\eqref{eq:schema deF 2}. The idea is to use a finite-rank projector $P$, and then combine~\eqref{eq:Fock mat red} with~\eqref{eq:Fock loc mat}  to write (with $n\in \N$ fixed)
\begin{align}\label{eq:wdeF formal}
P ^{\otimes n} \gamma_N ^{(n)} P ^{\otimes n} &= \sum_{k=n}^N {N\choose n}^{-1}{k\choose n}\tr_{n+1\to k} G_{N,k} ^P \nonumber
\\&\approx \sum_{k=n}^N \left(\frac{k}{N}\right) ^n \tr_{n+1\to k} G_{N,k} ^P.
\end{align}
Here we inserted the simple estimate (see the computation in~\cite{LewNamRou-13}, Equation (2.13))
\begin{equation}\label{eq:wdeF calcul}
{N\choose n}^{-1}{k\choose n} =  \left(\frac{k}{N}\right) ^n + O(N ^{-1}).
\end{equation}
We then argue as follows: The terms where $k$ is small contribute very little to the sum~\eqref{eq:wdeF formal} because of the factor $\left(\frac{k}{N}\right) ^n$. For the terms where $k$ is large, we note that, up to normalization, $G_{N,k} ^P$ is a $k$-particles bosonic state over $P\gH$. One may thus apply the de Finetti theorem discussed in Chapter~\ref{sec:deF finite dim} to it, without worrying about compactness issues since $P\gH$ has finite dimension. Since $k$ is large in these terms, and $n$ is fixed, we obtain (formally)
$$\tr_{n+1\to k} G_{N,k} ^P \approx \tr_{\gH ^k} [G_{N,k} ^P] \int_{u \in SP\gH} d\nu_k (u) |u ^{\otimes n}\rangle \langle u ^{\otimes n}| $$
for a certain measure $\nu_k$, and thus 
\[
P ^{\otimes n} \gamma_N ^{(n)} P ^{\otimes n} \approx \sum_{k \simeq N}^N \tr_{\gH ^k} [G_{N,k} ^P] \left(\frac{k}{N}\right) ^n \int_{u \in SP\gH} d\nu_k (u) |u ^{\otimes n}\rangle \langle u ^{\otimes n}|.
\]
In the limit $N\to \infty$, the discrete sum should become an integral in $\lambda = k/N $. Using the fact that $G_{N} ^P$ is a state to deal with normalization  (recall that~$\sum_{k=1} ^N \tr [G_{N,k} ^P] = 1$), it is natural to hope that we can obtain
\[
P ^{\otimes n} \gamma_N ^{(n)} P ^{\otimes n} \approx \int_{0} ^1  \lambda ^n \int_{u \in SP\gH} d\nu (\lambda,u)  (u) |u ^{\otimes n}\rangle \langle u ^{\otimes n}|,
\]
which may be rewritten in the form~\eqref{eq:melange faible} by defining (in ``spherical'' coordinates on the unit ball $BP\gH$)
\[
 d\mu (u) = d\mu \left( \norm{u},\frac{u}{\norm{u}} \right) := d \norm{u} ^2 \times d\nu_{\norm{u} ^2} \left(\frac{u}{\norm{u}}\right).
\]
There remains, as a last step, to apply this procedure for a sequence of projectors $P_{\ell} \to \one$ and to check some compatibility relations to conclude. The final measure might not be a probability measure, which one can compensate by adding a delta function at the origin without changing any of the previous formulae.

Note that this proof, which combines the methods of Chapter~\ref{sec:deF finite dim} and Section~\ref{sec:loc rig}, gives a recipe to construct the de Finetti measure ``by hand'' (up to the fact that we have to pass to the limit at some point). This is very useful in practice, see Chapters~\ref{sec:Hartree} and~\ref{sec:NLS}. The spirit of the proof using localization is reminiscent of some aspects of the method of Ammari and Nier~\cite{Ammari-hdr,AmmNie-08} who define cylindrical projections of the measure first.

Let us now present the details of the proof, following~\cite[Section 2]{LewNamRou-13}.

\begin{proof}[Proof of Theorem~\ref{thm:DeFinetti faible}.]
\textbf{Existence.} We carry on with the above notation. The following formula defines $M_{P,N}^{(n)}$ as a measure on $[0,1]$, with values in the positive hermitian matrices of size $\dim\,\left(\otimes_s^n (P\gH)\right)$:
$$\dM_{P,N}^{(n)}(\lambda):=\sum_{k=n}^N\;\delta_{k/N}(\lambda)\;\tr_{n+1\to k}G^P_{N,k}.$$
We then have, using~\eqref{eq:wdeF calcul} 
\begin{equation}\label{eq:wdeF preuve 1}
\tr\left|P^{\otimes n}\gamma^{(n)}_{\Gamma_N}P^{\otimes n}-\int_0^1 \lambda^n\,\dM_{P,N}^{(n)}(\lambda)\right| \leq \frac{C}{N}\sum_{k=n}^N\tr G^P_{N,k} \to 0 \mbox{ when } N\to \infty.
\end{equation}
Since $P$ is a finite rank projector and $\gamma_N ^{(n)}$ converges weakly-$\ast$ by assumption, 
\begin{equation}\label{eq:wdeF preuve 2}
P^{\otimes n}\gamma^{(n)}_{N}P^{\otimes n}\to P^{\otimes n}\gamma^{(n)}P^{\otimes n} 
\end{equation}
strongly in trace-class norm. On the other hand 
$$\tr_{\gH^n}\left[\int_0^1\dM_{P,N}^{(n)}(\lambda)\right]=\sum_{k=n+1}^N\;\tr_{\gH^k}G^P_{N,k}\leq\sum_{k=0}^N\;\tr_{\gH^k}G^P_{N,k}=1,$$
so $M_{P,N}^{(n)}$ is a sequence of measures with bounded total variation over a compact finite dimensional space (positive hermitian matrices of size $\dim P \gH$ having a trace less than $1$). One may thus extract from it a subsequence converging weakly as measures to $M_P^{(n)}$. Combining with~\eqref{eq:wdeF preuve 1} and~\eqref{eq:wdeF preuve 2} we have 
\begin{equation}\label{eq:wdeF preuve 3}
P^{\otimes n}\gamma^{(n)}P^{\otimes n}=\int_0^1 \lambda^n\,\dM_P^{(n)}(\lambda).
\end{equation}

We now have to show that the sequence of measures $\left(M_P^{(n)}\right)_{n\in \N}$ that we just obtained is consistent in the sense that, for all $n\geq0$ and for all continuous functions $f$ over $[0,1]$ vanishing at $0$,
\begin{equation}
\int_{0} ^1 f(\lambda) \tr_{n+1} dM_P^{(n+1)} (\lambda)= \int_{0} ^1 f(\lambda) d M_P^{(n)} (\lambda).
\label{eq:A_P_consistent 2}
\end{equation}
Here $\tr_{n+1}$ denotes partial trace with respect to the last variable. We have 
\begin{align*}
\tr_{n+1} d M_{P,N}^{(n+1)}(\lambda)&=\sum_{k=n+1}^N\;\delta_{k/N}(\lambda)\;\tr_{n \to k} G_{N,k} \\
&=d M_{P,N}^{(n)}(\lambda)- \delta_{n/N}(\lambda) G^P_{N,n}
\end{align*}
and thus
\begin{align*}
\int_{0} ^1 f(\lambda) \tr_{\gH^n}\left|\tr_{n+1}d M_{P,N}^{(n+1)}(\lambda)-d M_{P,N}^{(n)}(\lambda)\right|&\leq  \int_{0} ^1 f(\lambda) \delta_{n/N}(\lambda) \tr_{\gH ^n} G^P_{N,n} 
\\&\leq  f\left(\frac{n}{N}\right)
\end{align*}
since $\tr_{\gH ^n} G^P_{N,n}\leq 1$. Passing to the limit we obtain~\eqref{eq:A_P_consistent 2} for all function $f$ such that $f(0) = 0$. 

We now apply Theorem~\ref{thm:DeFinetti fort} in finite dimension. Stricto sensu, this result applies only at fixed $\lambda$, but approaching  $dM_P ^{(n)}$ by step functions and then passing to the limit, we obtain a  measure $\nu_P$ on $[0,1]\times S\gH\cap (P\gH)$ such that
$$\int_0 ^1 f(\lambda) \dM_P^{(n)}(\lambda)=\int_{S\gH}\int_0 ^1 f(\lambda) d\nu_P(\lambda,u)|u^{\otimes n}\rangle\langle u^{\otimes n}|$$
for all continuous functions $f$ vanishing at $0$. We thus obtain 
\begin{align*}
P^{\otimes n}\gamma^{(n)}P^{\otimes n}&=\int_{0}^1\int_{S\gH}d\nu_{P}(\lambda,u)\,\lambda^n |u^{\otimes n}\rangle\langle u^{\otimes n}|\\
&=\int_{0}^1\int_{S\gH}d\nu_{P}(\lambda,u)\,|(\sqrt\lambda u)^{\otimes n}\rangle\langle (\sqrt\lambda u)^{\otimes n}|\\
&= \int_{B \gH} d\mu_P (u) |u^{\otimes n}\rangle\langle u^{\otimes n}|, 
\end{align*}
defining the measure $\mu_P$ in spherical coordinates. We are free to add a Dirac mass at the origin to turn $\mu_P$ into a probability measure.

The argument can be applied to a sequence of finite rank projectors converging to the identity. We then have a sequence of probability measures  $\mu_k$ on $B\gH$ such that
$$P_k^{\otimes n}\gamma^{(n)}P_k^{\otimes n}=\int_{B\gH}d\mu_k(u)\,|u^{\otimes n}\rangle\langle u^{\otimes n}|.$$
Taking an increasing sequence of projectors (i.e. $P_k \gH \subset P_{k+1} \gH$) it is clear that $\mu_k$ co\"incides with $\mu_\ell$ on $P_{\ell} \gH$ for $\ell\leq k$ (cf the argument for measure uniqueness below). Since all these measures have their supports in a bounded set, there exists (see for example~\cite[Lemma 1]{Skorokhod-74}) a unique probability measure\footnote{To construct it, note that the $\sigma$-closure of the union for $k\geq 0$ of the borelians of $P_k\gH$ co\"incides with the borelians of $\gH$.} $\mu$ on $B\gH$ which co\"incides with $\mu_k$ on $P_k\gH$ in the sense that:  
$$\int_{B\gH}d\mu_k(u)\,|u^{\otimes n}\rangle\langle u^{\otimes n}|=\int_{B\gH}d\mu(u)\,|(P_ku)^{\otimes n}\rangle\langle (P_ku)^{\otimes n}|.$$
We thus conclude
$$P_k^{\otimes n}\gamma^{(n)}P_k^{\otimes n}=P_k^{\otimes n}\left(\int_{B\gH}d\mu(u)\,|u^{\otimes n}\rangle\langle u^{\otimes n}|\right)P_k^{\otimes n}$$
and there only remains to take the limit $k\to \infty$ to deduce the existence of a measure satisfying~\eqref{eq:melange faible}. 

\medskip

\noindent \textbf{Uniqueness.} Let us now prove that the measure just constructed must be unique. Let $\mu$ and $\mu'$ satisfy
\begin{equation}
\int_{B\gH}|u^{\otimes k}\rangle\langle u^{\otimes k}|d\mu(u) -\int_{B\gH}|u^{\otimes k}\rangle\langle u^{\otimes k}|d\mu'(u) = 0
\label{eq:uniqueness}
\end{equation}
for all $k\geq1$. Let $V = \mathrm{vect} (e_1,\ldots,e_d)$ be a finite-dimensional subspace of $\gH$, $P_i=|e_i\rangle\langle e_i|$ the associated projectors and $\mu_V,\mu'_V$ the cylindrical projections of $\mu$ and $\mu'$ over~$V$. Applying $P_{i_1}\otimes\cdots\otimes P_{i_k}$ on the left  and $P_{j_1}\otimes\cdots\otimes P_{j_k}$ on the right to~\eqref{eq:uniqueness}, we obtain 
$$\int_{BV}u_{i_1}\cdots u_{i_k}\overline{u_{j_1}}\cdots \overline{u_{j_k}}\,d(\mu_V-\mu_V')(u)=0$$
for all mutli-indices $i_1,...,i_k$ and $j_1,...,j_k$ (we denote $u=\sum_{j=1}^du_je_j$). On the other hand, by $S^1$ invariance of both measures it is clear that if $k\neq \ell$,
$$\int_{BV}u_{i_1}\cdots u_{i_k}\overline{u_{j_1}}\cdots \overline{u_{j_\ell}}\,d(\mu_V-\mu_V')(u)=0.$$
Since polynomials in $u_i$ and $\overline{u_j}$ are dense in $C^0(BV,\C)$ (continuous functions on the unit ball of $V$), we deduce that the cylindrical projections of $\mu$ and $\mu'$ over $V$ co\"incide. That being true for all finite dimensional subspace $V$, we conclude that the two measures must co\"incide everywhere.
\end{proof}

We now give a particular case, adapted to our needs in the next chapter, of a very useful corollary of the above method of proof (see~\cite[Theorem 2.6]{LewNamRou-13} for the general statement):

\begin{corollary}[\textbf{Localization and de Finetti measure}]\label{cor:wdeF loc}\mbox{}\\
Let $(\Gamma_N)_{N\in \N}$ be a sequence of $N$-body states over $\gH = L ^2 (\R ^d)$ satisfying the assumptions of Theorem~\ref{thm:DeFinetti faible} and $\mu$ be the associated de Finetti measure. Assume that 
\begin{equation}\label{eq:wdeF cor assum}
\tr \left[ -\Delta \gamma_N^{(1)} \right] = \tr \left[ |\nabla| \gamma_N^{(1)} |\nabla| \right] \leq C 
\end{equation}
for some constant independent of $N$. Let $\chi$ be a localization function with compact support with $0\leq \chi \leq 1$ and $G_N^{\chi}$ be the localized state defined in Lemma~\ref{lem:Fock loc}. Then
\begin{equation}
\lim_{N\to\ii}\sum_{k=0}^Nf\left(\frac{k}{N}\right)\tr_{\gH^k} G_{N,k}^\chi = \int_{B\gH} d\mu(u)\;f(\|\chi u\|^2)
\label{eq:wdeF cor}
\end{equation}
for all continuous functions $f$ on $[0,1]$.
\end{corollary}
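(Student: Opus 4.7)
My plan is to reduce the statement to the case of monomials $f(\lambda) = \lambda^n$ via a density argument, and for each monomial tie the left-hand side to the trace of a localized density matrix, which the kinetic energy bound will render strongly convergent.

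First, I would handle the density argument. Since $G_N^\chi$ is a state on $\cF_s^{\leq N}(\gH)$, we have the normalization $\sum_{k=0}^N \tr_{\gH^k} G_{N,k}^\chi = 1$, so the linear functional $f \mapsto \sum_{k=0}^N f(k/N) \tr_{\gH^k} G_{N,k}^\chi$ has operator norm at most $\|f\|_{L^\infty([0,1])}$, uniformly in $N$. The right-hand side $f \mapsto \int_{B\gH} f(\|\chi u\|^2) d\mu(u)$ has the same uniform bound since $\mu$ is a probability measure and $0 \leq \|\chi u\|^2 \leq 1$. By Weierstrass' theorem, polynomials are dense in $C([0,1])$, so it suffices to establish the convergence for $f(\lambda) = \lambda^n$, $n\in \N$. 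The case $n=0$ is immediate from $\sum_k \tr G_{N,k}^\chi = 1 = \int d\mu$.

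Next, for $n \geq 1$, I would use formula~\eqref{eq:Fock mat red} combined with Lemma~\ref{lem:Fock loc} and the elementary estimate~\eqref{eq:wdeF calcul}, which together give
\begin{equation*}
 \tr_{\gH^n}\bigl[\chi^{\otimes n} \gamma_N^{(n)} \chi^{\otimes n}\bigr] = \tr_{\gH^n}\bigl[(G_N^\chi)^{(n)}\bigr] = \sum_{k=n}^N {N \choose n}^{-1}{k \choose n} \tr_{\gH^k} G_{N,k}^\chi = \sum_{k=0}^N (k/N)^n \tr_{\gH^k} G_{N,k}^\chi + O(1/N),
\end{equation*}
where the remainder is uniform in $N$ because $\sum_k \tr G_{N,k}^\chi = 1$ and the terms $0 \leq k < n$ contribute $O(1/N)$ to the sum (recalling $(k/N)^n \leq (n/N)^n$). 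So the problem reduces to computing $\lim_{N\to\infty} \tr_{\gH^n}[\chi^{\otimes n} \gamma_N^{(n)} \chi^{\otimes n}]$.

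The third step is to show this trace converges to $\int_{B\gH} \|\chi u\|^{2n} d\mu(u)$. The kinetic-energy bound~\eqref{eq:wdeF cor assum} and bosonic symmetry propagate to all reduced density matrices: writing $T_n = \sum_{j=1}^n -\Delta_j$, we get $\tr[T_n \gamma_N^{(n)}] = n\tr[-\Delta \gamma_N^{(1)}] \leq Cn$. Since $\chi$ has compact support, $\chi(1-\Delta)^{-1/2}$ is a Hilbert-Schmidt operator, and more generally $\chi^{\otimes n}(1+T_n)^{-1/2}$ is compact on $\gH^n$; combined with the weak-$\ast$ convergence $\gamma_N^{(n)} \wto_\ast \gamma^{(n)}$ and a standard argument (as in the proof of Theorem~\ref{thm:confined}, inserting $(1+T_n)^{\pm 1/2}$ appropriately), this upgrades $\chi^{\otimes n}\gamma_N^{(n)}\chi^{\otimes n} \to \chi^{\otimes n}\gamma^{(n)}\chi^{\otimes n}$ to strong convergence in trace-class norm.

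Finally, using the weak quantum de Finetti representation~\eqref{eq:melange faible} for $\gamma^{(n)}$:
\begin{equation*}
\tr_{\gH^n}\bigl[\chi^{\otimes n}\gamma^{(n)}\chi^{\otimes n}\bigr] = \int_{B\gH} \tr_{\gH^n}\bigl[\chi^{\otimes n}|u^{\otimes n}\rangle\langle u^{\otimes n}|\chi^{\otimes n}\bigr] d\mu(u) = \int_{B\gH} \|\chi u\|^{2n} d\mu(u),
\end{equation*}
which is exactly the desired right-hand side for $f(\lambda) = \lambda^n$. Combining with the density argument concludes the proof. The main subtlety is the trace-norm upgrade of weak-$\ast$ to strong convergence of $\chi^{\otimes n}\gamma_N^{(n)}\chi^{\otimes n}$, which is the place where both the compact support of $\chi$ and the kinetic bound~\eqref{eq:wdeF cor assum} are essential.
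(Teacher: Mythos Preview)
Your proof is correct and follows essentially the same route as the paper's: reduce to monomials by density, use \eqref{eq:Fock mat red}, \eqref{eq:Fock loc mat} and \eqref{eq:wdeF calcul} to rewrite the sum as $\tr[\chi^{\otimes n}\gamma_N^{(n)}\chi^{\otimes n}] + o(1)$, upgrade to strong convergence via the kinetic bound and compactness, and conclude with \eqref{eq:melange faible}. One small caveat: $\chi(1-\Delta)^{-1/2}$ is compact for all $d$ but only Hilbert--Schmidt when $d=1$; since you only use compactness afterwards, this does not affect the argument.
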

 
\begin{proof}
Since polynomials are dense in the continuous functions on $[0,1]$ it is sufficient to consider the case $f(\lambda)=\lambda^n$ with $n=0,1,...$. We then use~\eqref{eq:Fock mat red},~\eqref{eq:Fock loc mat} and~\eqref{eq:wdeF calcul} again to write
\[
\left|\sum_{k=0}^N \left(\frac{k}{N}\right) ^n \tr_{\gH^k} G_{N,k}^\chi  - \tr_{\gH ^n} \left[\chi ^{\otimes n} \gamma_N ^{(n)} \chi ^{\otimes n} \right]\right| \to 0 \mbox{ when } N\to \infty.
\]
Assumption~\eqref{eq:wdeF cor assum} ensures 
\[
 \tr \left[ \left( \sum_{j=1}^{n} -\Delta_j \right) \gamma_N^{(n)} \right] \leq C n
\]
and we may thus assume that
\[
\left( \sum_{j=1}^{n} |\nabla|_j \right) \gamma_N^{(n)} \left( \sum_{j=1}^{n} |\nabla|_j \right) \wto_{\ast} \left( \sum_{j=1}^{n} |\nabla|_j \right) \gamma ^{(n)} \left( \sum_{j=1}^{n} |\nabla|_j \right)\mbox{ when } N\to \infty. 
\]
The multiplication operator by $\chi$ is relatively compact with respect to the Laplacian, thus
$$D_{n} ^\chi:= \chi ^{\otimes n} \left( \sum_{j=1}^{n} |\nabla|_j \right) ^{-1}$$
is a compact operator. We then have 
\begin{align*}
 \tr_{\gH ^n} \left[\chi ^{\otimes n} \gamma_N ^{(n)} \chi ^{\otimes n} \right] & = \tr_{\gH ^n} \left[ D_{n} ^\chi \left( \sum_{j=1}^{n} |\nabla|_j \right) \gamma_N ^{(n)} \left( \sum_{j=1}^{n} |\nabla|_j \right)  D_{n} ^\chi \right] \\
&\to \tr_{\gH ^n} \left[D_{n} ^\chi  \left( \sum_{j=1}^{n} |\nabla|_j \right) \gamma ^{(n)}  \left( \sum_{j=1}^{n} |\nabla|_j \right)  D_{n} ^\chi \right]\\ 
 &= \tr_{\gH ^n} \left[\chi ^{\otimes n} \gamma ^{(n)} \chi ^{\otimes n} \right].
\end{align*}
We conclude by using~\eqref{eq:melange faible} that
\begin{align*}
\sum_{k=0}^N \left(\frac{k}{N}\right) ^n \tr_{\gH^k} G_{N,k}^\chi &\to  \int_{B\gH} d\mu(u) \tr_{\gH ^n} \left[ |(\chi u)  ^{\otimes n}\rangle \langle (\chi u)  ^{\otimes n}|\right]\\
&= \int_{B\gH} d\mu(u) \norm{\chi u} ^{2n} = \int_{B\gH} d\mu(u) f(\norm{\chi u} ^{2}).
\end{align*}
\end{proof}

\begin{remark}[Weak de Finetti measure and loss of mass.]\mbox{}\\
\vspace{-0.4cm}
\begin{enumerate}
\item Assumption~\eqref{eq:wdeF cor assum} is used to ensure strong compactness of density matrices, see the proof. It is of course very natural for the applications we have in mind (uniformly bounded kinetic energy). The convergence in~\eqref{eq:wdeF cor} means that the mass of the de Finetti measure $\mu$ on the sphere $\{\|u\|^2=\lambda\}$ corresponds to the probability that a fraction $\lambda$ of the particles does not escape to infinity.
\item We will use this corollary to obtain information on particles escaping to infinity in the following manner. We define the function
$$ \eta = \sqrt{1-\chi ^2}$$ 
which localizes ``close to infinity''. One of course cannot apply the result directly to the localized state $G_N^{\eta}$. Instead one may use Relation~\eqref{eq:Fock funda rel} to obtain
\begin{align}
\lim_{N\to\ii}\sum_{k=0}^N f\left(\frac{k}{N}\right)\tr_{\gH^k}G_{N,k}^{\eta}&=\lim_{N\to\ii}\sum_{k=0}^N f\left(1-\frac{k}{N}\right)\tr_{\gH^k}G_{N,k}^{\chi}\nn\\
&=\int_{B\gH}d\mu(u)\;f\left(1- \|\chi u \|^2\right), \label{eq:wdeF cor 2}
\end{align}
which gives some control on the loss of mass at infinity, encoded in the de Finetti measure. This is a bit surprising since the latter by definition describes particles which stay trapped. Typically we will use~\eqref{eq:wdeF cor 2} with $f (\lambda) \simeq \eH (\lambda)$, the Hartree energy at mass $\lambda$, see the next chapter.  
\end{enumerate}
\hfill\qed
\end{remark}

\newpage

\section{\textbf{Derivation of Hartree's theory: general case}}\label{sec:Hartree}

We now turn to the general case of the derivation of Hartree's functional as a limit of the $N$-body problem in the mean-field regime. In Chapter~\ref{sec:quant} we saw that, under simplifying physical assumptions, the result is a rather direct consequence of the weak and strong de Finetti theorems. The general case requires a more thorough analysis and we shall use fully the localization tools introduced in Chapter~\ref{sec:locFock}.

The setting is now that of particles in a non-trapping external potential $V$, interacting via a potential that may have bound states. Let us recap the notation: The Hamiltonian of the full system is 
\begin{equation}\label{eq:bis quant hamil}
H_N ^V = \sum_{j=1} ^N T_j + \frac{1}{N-1}\sum_{1\leq i < j \leq N} w(x_i-x_j), 
\end{equation}
acting on the Hilbert space $\gH_s ^N = \bigotimes_s ^N \gH$, with $\gH = L ^2 (\R ^d)$. The one-body Hamiltonian is
\begin{equation}\label{eq:bis op Schro}
T = - \Delta + V, 
\end{equation}
that we assume to be self-adjoint and bounded below. We have emphasized the dependence on the potential $V$ in the notation~\eqref{eq:bis quant hamil} because we will be lead to consider the system where particles are lost at infinity described by the Hamiltonian $H_N ^0$ where one sets  $V\equiv 0$. One can generalize in the same directions as mentioned in Remark~\ref{rem:energie cin}, but we shall for simplicity stick to the above model case.

The interaction potential $w:\R \mapsto \R$ will be asumed bounded relatively to $T$: for some $0 \leq \beta_-, \beta_+ \leq 1$ and $C>0$
\begin{equation}\label{eq:bis T controls w}
 -\beta_-(T_1+ T_2)-C\leq w(x_1-x_2) \leq \beta_+ (T_1+T_2) + C.
\end{equation}
We also assume symmetry
\[
w(-x) = w(x), 
\]
and some decay at infinity
\begin{equation}\label{eq:bis decrease w}
w\in L^p (\R ^d) + L ^{\infty} (\R ^d), \max(1,d/2) < p < \infty \to 0, w(x) \to 0 \mbox{ when }  |x|\to\infty.
\end{equation}
Again, it is rather vain to consider partially trapping one-body potentials, and we thus assume that $V$ is non-trapping in all directions:
\begin{equation}\label{eq:bis sans confinement}
V \in L ^p (\R ^d) + L ^{\infty} (\R ^d), \max{1,d/2}\leq p <\infty,  V(x) \to 0 \mbox{ when } |x|\to \infty.
\end{equation}
This ensures~\cite{ReeSim2} that $H_N$ is self-adjoint and bounded below. The ground state energy of~\eqref{eq:quant hamil} is always given by
\begin{equation}\label{eq:bis quant ground state}
E ^V (N) = \inf \sigma_{\gH ^N} H_N ^V  = \inf_{\Psi \in \gH ^N, \norm{\Psi} = 1} \left\langle \Psi, H_N ^V \Psi \right\rangle_{\gH ^N}.
\end{equation}
Finally let us recall what the limit objects are. The Hartree functional with potential $V$ is given by 
\begin{equation}\label{eq:bis hartree func}
\EH  ^V [u] := \int_{\R ^d } |\nabla u| ^2 + V |u| ^2 + \frac12 \iint_{\R ^d \times \R ^d } |u(x)| ^2 w(x-y) |u(y)| ^2 dx dy
\end{equation}
and we shall use the notation $\EH ^0$ for the translation-invariant functional where $V\equiv 0$. The Hartree energy at mass $\lambda$ is given by 
\begin{equation}\label{eq:bis Hartree perte masse}
\eH  ^V(\lambda) := \inf_{\norm{u} ^2 = \lambda} \EH ^V [u],\quad 0\leq \lambda \leq 1.
\end{equation}
Under the previous assumptions we will always have the binding inequality
\begin{equation}\label{eq:hartree liaison}
\eH ^V (1) \leq \eH ^V (\lambda) + \eH ^0 (1-\lambda) 
\end{equation}
which is easily proved by evaluating the energy of a sequence of functions with a mass $\lambda$ in the well of the potential $V$ and a mass $1-\lambda$ escaping to infinity. We will prove the following theorem, extracted from~\cite{LewNamRou-13} (particular case of Theorem 1.1 therein).

\begin{theorem}[\textbf{Derivation of Hartree's theory, general case}]\label{thm:hartree general}\mbox{}\\
Under the preceding assumptions, we have:

\medskip

\noindent$(i)$ \underline{Convergence of energy.}
\begin{equation}
\boxed{\lim_{N\to\ii}\frac{E^V(N)}{N}=\eH^V(1).}
\label{eq:general ener}
\end{equation}

\medskip

\noindent $(ii)$ \underline{Convergence of states.} Let $\Psi_N$ be a sequence of $L ^2 (\R ^{dN})$-normalized quasi-minimizers for $H_N ^V$:  
\begin{equation}\label{eq:quasi min}
\pscal{\Psi_N,H_N^V\Psi_N}= E^V(N)+o(N) \mbox{ when } N \to \infty, 
\end{equation}
and $\gamma^{(k)}_N$ be the corresponding reduced density matrices. There exists $\mu \in \PP (B\gH)$ a probability measure on the unit ball of  $\gH$ with $\mu (\cM ^V) = 1$, where
\begin{equation}
\cM^V =\Big\{u\in B\gH\ :\ \cEH^V [u]=e^V_H(\norm{u}^2)=e^V_H(1)-e^0_H(1-\norm{u}^2)\Big\},
\label{eq:def_M_V} 
\end{equation}
such that, along a subsequence, 
\begin{equation}
\boxed{\gamma^{(k)}_{N_j}\wto_*\int_{\cM^V}|u^{\otimes k}\rangle\langle u^{\otimes k}|\,d\mu(u)}
\label{eq:general state}
\end{equation}
weakly-$\ast$ in $\gS^1(\gH^k)$, for all $k\geq1$.

\medskip

\noindent $(iii)$ If in addition the strict binding inequality 
\begin{equation}
\eH^V(1)<\eH^V(\lambda)+\eH^0(1-\lambda)
\label{eq:hartree liaison stricte}
\end{equation}
holds for all $0\leq\lambda<1$, the measure $\mu$ has its support in the sphere $S\gH$ and the limit~\eqref{eq:general state} holds in trace-class norm. In particular, if $\eH^V(1)$ has a unique (modulo a constant phase) minimizer $u_H$, then for the whole sequence
\begin{equation}
\boxed{\gamma^{(k)}_{N}\to|u_H^{\otimes k}\rangle\langle u_H^{\otimes k}| \mbox{  strongly in } \gS ^1(\gH ^k)} 
\label{eq:general BEC}
\end{equation} 
for all fixed $k\geq1$.
\end{theorem}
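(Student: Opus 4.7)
The plan combines the weak de Finetti theorem~\ref{thm:DeFinetti faible} with the Fock-space localization machinery of Section~\ref{sec:loc rig}, and uses the binding inequality~\eqref{eq:hartree liaison} to bookkeep the mass that escapes to infinity. The upper bound $\limsup E^V(N)/N\leq \eH^V(1)$ is obtained by testing against $u^{\otimes N}$ with $u$ an (approximate) minimizer of $\EH^V$. For the matching lower bound, I fix a sequence of quasi-minimizers $\Psi_N$ with reduced density matrices $\gamma_N^{(k)}$. The a priori bound $\tr[T\gamma_N^{(1)}]\leq C$ furnished by~\eqref{eq:bis T controls w} and~\eqref{eq:quasi min} allows me to extract a subsequence along which $\gamma_N^{(k)}\wto_*\gamma^{(k)}$ for every $k$, and Theorem~\ref{thm:DeFinetti faible} produces a probability measure $\mu\in\PP(B\gH)$ with $\gamma^{(k)}=\int_{B\gH}|u^{\otimes k}\rangle\langle u^{\otimes k}|\,d\mu(u)$.

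I next apply the energy localization of Lemma~\ref{lem:loc ener} with smooth cutoffs $\chi_R\in C_c^\infty(B(0,2R))$ and $\eta_R=\sqrt{1-\chi_R^2}$, which splits $\liminf N^{-1}E^V(N)$ into an ``inside'' and an ``outside'' contribution. The inside piece is tame: since $\chi_R$ has compact support, $\chi_R^{\otimes k}\gamma_N^{(k)}\chi_R^{\otimes k}\to\chi_R^{\otimes k}\gamma^{(k)}\chi_R^{\otimes k}$ strongly in trace class, so passing to the limit, expanding via the de Finetti representation, and sending $R\to\infty$ (using the decay of $V$ and $w$ together with dominated convergence against $\mu$) yields
\begin{equation*}
\liminf_{R\to\infty}\liminf_{N\to\infty}(\text{inside}_R)\;\geq\;\int_{B\gH}\EH^V[u]\,d\mu(u)\;\geq\;\int_{B\gH}\eH^V(\|u\|^2)\,d\mu(u).
\end{equation*}

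The hard step is the outside contribution, because $\eta_R\gamma_N^{(1)}\eta_R$ offers no compactness and the interaction $w$ may still bind escaping particles. My plan is to reinterpret this term on Fock space through Lemma~\ref{lem:Fock loc}. Writing $G_N^{\eta_R}=\bigoplus_{k=0}^N G_{N,k}^{\eta_R}$, bosonic symmetry of $\Psi_N$ gives the identity
\begin{equation*}
\tr[-\Delta\,\eta_R\gamma_N^{(1)}\eta_R]+\tfrac12\tr[w\,\eta_R^{\otimes 2}\gamma_N^{(2)}\eta_R^{\otimes 2}]=\frac{1}{N}\sum_{k=0}^N\tr_{\gH_s^k}\!\bigl[H_k^0\,G_{N,k}^{\eta_R}\bigr],
\end{equation*}
where $H_k^0=\sum_{j=1}^k(-\Delta_j)+\tfrac{1}{N-1}\sum_{i<j\leq k}w(x_i-x_j)$ is the translation-invariant $k$-body operator with the coupling inherited from $H_N^V$. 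Bounding each sector by its ground state energy $E_N^0(k):=\inf\sigma_{\gH_s^k}H_k^0$ gives the lower bound $N^{-1}\sum_k\tr[G_{N,k}^{\eta_R}]\,E_N^0(k)$. The main technical input is then the asymptotic $N^{-1}E_N^0(\lfloor\lambda N\rfloor)\to\eH^0(\lambda)$, which I would establish separately by an induction on $N$ applied to the purely translation-invariant problem (where the absence of $V$ removes the need for an inside/outside split). Combined with Corollary~\ref{cor:wdeF loc} applied via the fundamental relation~\eqref{eq:Fock funda rel}, which converts sums $\sum_k\tr[G_{N,k}^{\eta_R}]f(k/N)$ into $\int f(1-\|\chi_R u\|^2)d\mu(u)$, together with lower semi-continuity of $\eH^0$, this delivers $\liminf_{R,N}(\text{outside}_R)\geq\int_{B\gH}\eH^0(1-\|u\|^2)d\mu(u)$.

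Assembling the two bounds and invoking~\eqref{eq:hartree liaison} gives
\begin{equation*}
\liminf_{N\to\infty}\tfrac{E^V(N)}{N}\;\geq\;\int_{B\gH}\!\bigl[\eH^V(\|u\|^2)+\eH^0(1-\|u\|^2)\bigr]d\mu(u)\;\geq\;\eH^V(1),
\end{equation*}
which matches the upper bound, proves~\eqref{eq:general ener}, and forces equality $\mu$-almost everywhere, so $\mu(\cM^V)=1$ and~\eqref{eq:general state} follows. Under the strict binding assumption~\eqref{eq:hartree liaison stricte}, the integrand is strictly larger than $\eH^V(1)$ whenever $\|u\|^2<1$, forcing $\supp\mu\subset S\gH$; this entails $\tr[\gamma^{(k)}]=1=\lim\tr[\gamma_N^{(k)}]$ and upgrades the convergence to norm convergence in $\gS^1(\gH^k)$. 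If finally $\uH$ is the unique minimizer of $\EH^V$ modulo a constant phase, $S^1$-invariance of $\mu$ together with the phase invariance of $|u_H^{\otimes k}\rangle\langle u_H^{\otimes k}|$ forces $\mu$ to concentrate on the orbit of $\uH$, yielding~\eqref{eq:general BEC}.
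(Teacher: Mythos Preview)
Your overall architecture matches the paper's proof in Section~\ref{sec:general general}: weak de Finetti for the limits $\gamma^{(k)}$, energy localization via Lemma~\ref{lem:loc ener}, Fock-space decomposition of the outside piece, and Corollary~\ref{cor:wdeF loc} through the fundamental relation~\eqref{eq:Fock funda rel} to convert the $k$-sum into an integral against $\mu$. The inside bound and the concluding chain of inequalities are essentially as in~\eqref{eq:conclu preuve general}.

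The gap is your treatment of the translation-invariant input. You claim $N^{-1}E_N^0(\lfloor\lambda N\rfloor)\to\eH^0(\lambda)$ can be obtained ``by an induction on $N$'' and that ``the absence of $V$ removes the need for an inside/outside split.'' The second assertion is backwards: with $V\equiv0$ there is no trapping, hence \emph{no} compactness whatsoever for minimizing sequences, and the weak de Finetti measure may collapse to~$\delta_0$. The paper's Section~\ref{sec:trans inv} resolves this with the L\'evy-Leblond/Lieb--Yau auxiliary problem (Lemma~\ref{lem:triv aux}): one peels off a piece $\epsilon w_-$ of the attractive part of $w$ to serve as an artificial one-body potential, obtaining $E^0(N)/N\geq E^\epsilon(N-1)/(N-1)$; for the auxiliary problem one \emph{still} needs the $\chi_R/\eta_R$ split~\eqref{eq:split-energy-translation} to prove $\tr[\gamma^{(1)}]=1$ and recover strong compactness, after which the strong de Finetti theorem gives $a_\epsilon\geq e_{\rm H}^\epsilon(1)$ and one sends $\epsilon\to0$. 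A further point you omit is uniformity in $k$: to pass to the limit in $\sum_k\tr[G_{N,k}^{\eta_R}]\,N^{-1}E_N^0(k)$ one needs $\sup_{k\leq N}\bigl|\tfrac{k}{N}b_k(\tfrac{k-1}{N-1})-e_H^0(\tfrac{k}{N})\bigr|\to0$, not merely pointwise convergence for each fixed $\lambda$; the paper secures this via the equicontinuity Lemma~\ref{lem:equi conti}.
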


\begin{remark}[Characterization of the limit set]\label{rem:limit set}\mbox{}\\
It is a classical fact that  
$$\cM^V  = \left\lbrace u\in B\gH \: | \: \exists (u_n) \mbox{ minimizing sequence for } e ^V_H (1) \mbox{ such that } u_n \wto u \mbox{ weakly in } L ^2 (\R ^d) \right\rbrace.$$
This follows from the usual concentration compactness method for non-linear one-body variational problems. One can thus interpret~\eqref{eq:general state} as saying that the weak limits for the $N$-body problem are fully characterized in terms of the weak limits for the one-body problem. \hfill\qed
\end{remark}

The proof proceeds in two steps. In Section~\ref{sec:trans inv} we first consider the completely translation-invariant case where $V\equiv 0$, which will describe particles escaping from the potential well $V$ in the general case. We show that the energy $\eH ^0 (1)$ is the limit of $ N ^{-1} E ^0 (N)$. In this case one cannot hope for much more since there always exist minimizing sequences whose density matrices converge to $0$ in trace-class norm. In addition to the tools already introduced we shall rely on ideas of L\'evy-Leblond, Lieb, Thirring and Yau~\cite{LevyLeblond-69,LieThi-84,LieYau-87} to recover a bit of compactness. 

We then use fully the localization methods  of Chapter~\ref{sec:locFock} to treat the general case in Section~\ref{sec:general general}. In the spirit of the concentration-compactness principle we will localize minimizing sequences inside and outside of a ball. The inside-localized part is described by the weak-$\ast$ limit of reduced density matrices and we can thus use the weak de Finetti theorem. The outside-localized part no longer sees the potential $V$ and we may thus apply to it the results of Section~\ref{sec:trans inv}.

\subsection{The translation-invariant problem}\label{sec:trans inv}\mbox{}\\\vspace{-0.4cm}

Here we deal with the case where $V \equiv 0$. It is then possible to construct quasi-minimizing sequences for $E ^0 (N)$ with $\gamma_N ^{(1)} (.-y_N) \wto_{\ast} 0$ for any sequence of translations $x_N$. One may thus have \emph{vanishing} in Lions' terminology~\cite{Lions-84,Lions-84b} and without any specific trick one cannot hope for more than the convergence of the energy. Indeed, it is possible to construct a state where the relative motion of the particles is bound by the interaction potential $w$ but where the center of mass vanishes. This implies vanishing for the whole sequence. We shall thus be content with proving the convergence of the energy:

\begin{theorem}[\textbf{Energy of translation-invariant systems}]\label{thm:trans inv}\mbox{}\\
Under the preceding assumptions, we have 
\begin{equation}\label{eq:triv ener}
\boxed{ \lim_{N\to \infty} \frac{E^0(N)}{N} =e^0_{\rm H}(1).}
\end{equation}
\end{theorem}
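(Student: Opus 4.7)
The upper bound $\limsup N^{-1} E^0(N) \le \eH^0(1)$ is immediate by testing the factorized ansatz $u^{\otimes N}$ against $H_N^0$ for $u$ a near-minimizer of $\EH^0$ at unit mass; only the matching lower bound requires work.

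Starting from a quasi-minimizing sequence $\Psi_N$ for $E^0(N)$, I would extract along a diagonal subsequence all reduced density matrices $\gamma_N^{(k)} \wto_\ast \gamma^{(k)}$ in $\gS^1(\gH^k)$. Theorem~\ref{thm:DeFinetti faible} then produces a probability measure $\mu$ on $B\gH$ such that $\gamma^{(k)} = \int |u^{\otimes k}\rangle\langle u^{\otimes k}|\,d\mu(u)$ for every $k$. Applying the energy localization of Lemma~\ref{lem:loc ener} with a smooth cutoff $\chi_R$ compactly supported in $B(0,2R)$, equal to $1$ on $B(0,R)$, together with $\eta_R := \sqrt{1-\chi_R^2}$, I would split
$$\liminf_{N\to\infty}\frac{E^0(N)}{N} \;\ge\; \liminf_{R\to\infty}\liminf_{N\to\infty}\bigl(\mathrm{in}_R + \mathrm{out}_R\bigr).$$

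The inside contribution $\mathrm{in}_R$ is handled as in Section~\ref{sec:Hartree deF faible}: relative compactness of the multiplier $\chi_R$ with respect to $-\Delta$ upgrades the weak-$\ast$ convergence of the $\gamma_N^{(k)}$ to strong convergence of the localized matrices $\chi_R^{\otimes k}\gamma_N^{(k)}\chi_R^{\otimes k}$, so that $\liminf_{N}\mathrm{in}_R = \int \EH^0[\chi_R u]\,d\mu(u)$, which in turn tends to $\int_{B\gH} \EH^0[u]\,d\mu(u) \ge \int \eH^0(\|u\|^2)\,d\mu(u)$ as $R\to\infty$. For the outside contribution I would rewrite $\mathrm{out}_R$ in terms of the Fock-space localized state $G_N^{\eta_R} = \bigoplus_k G^{\eta_R}_{N,k}$ provided by Lemma~\ref{lem:Fock loc} together with the unpacking~\eqref{eq:Fock mat red}. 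Setting $p_k := \tr G^{\eta_R}_{N,k}$ and $\tilde G_k := G^{\eta_R}_{N,k}/p_k$ in each nontrivial sector, the elementary identity
$$\tr\Bigl[\bigl(\textstyle\sum_{i}-\Delta_i\bigr)\tilde G_k\Bigr] + \tfrac{1}{N-1}\tr\Bigl[\textstyle\sum_{i<j}w_{ij}\tilde G_k\Bigr] = \tfrac{k-1}{N-1}\tr\bigl[H_k^0\tilde G_k\bigr] + \tfrac{N-k}{N-1}\tr\Bigl[\bigl(\textstyle\sum_{i}-\Delta_i\bigr)\tilde G_k\Bigr],$$
combined with positivity of the kinetic operator and the variational bound $\tr[H_k^0\tilde G_k]\ge E^0(k)$, yields $\mathrm{out}_R \ge \sum_{k}\tfrac{k(k-1)}{N(N-1)}\tfrac{E^0(k)}{k}\,p_k$.

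Passing to the limit $N\to\infty$ via Corollary~\ref{cor:wdeF loc} in the form~\eqref{eq:wdeF cor 2}, and then $R\to\infty$, one obtains a lower bound for $\mathrm{out}_R$ of the shape $\int F(1-\|u\|^2)\,d\mu(u)$, where $F(\lambda)$ encodes the asymptotic behavior of $E^0(k)/k$ along the $\lambda$-fraction of escaping mass. Combined with the subadditivity $\eH^0(1)\le \eH^0(\alpha)+\eH^0(1-\alpha)$ for $\alpha\in[0,1]$---which follows immediately from a two-clump trial Hartree state placed at arbitrarily large separation, using translation invariance of the interaction---summing inside and outside then yields $\liminf N^{-1}E^0(N)\ge \eH^0(1)$ once one can identify $F$ with $\eH^0(\cdot)$. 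The main obstacle is precisely this identification: $E^0(k)/k$ appears on both sides of the estimate, creating a circular dependence. I expect the cleanest way around this to be a bootstrap treating $L:=\liminf_N N^{-1}E^0(N)$ and $\limsup_N N^{-1}E^0(N)$ jointly, together with a Fatou-type passage in Corollary~\ref{cor:wdeF loc} applied to the lower semi-continuous map $\lambda\mapsto \eH^0(\lambda)$, so that the $k$-body lower bound can be replaced by $\eH^0$ evaluated at the mass fraction $k/N$ before taking $R\to\infty$ and closing the argument through subadditivity.
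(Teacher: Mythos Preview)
Your localization strategy is essentially the one used in Section~\ref{sec:general general} to prove the \emph{general} case (Theorem~\ref{thm:hartree general}), and there it works precisely because the translation-invariant result you are trying to prove here is already available as an input (via Lemma~\ref{lem:equi conti}). Attempting to run the same scheme directly on $H_N^0$ produces the circularity you noticed, and the bootstrap you sketch does not close it.

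The failure is sharpest in the vanishing scenario, which is generic here: nothing prevents a quasi-minimizing sequence for $E^0(N)$ from satisfying $\gamma_N^{(k)}\wto_\ast 0$ for every $k$, so that $\mu=\delta_0$. Then the inside term gives $\int \EH^0[\chi_R u]\,d\mu(u)=0$, while your outside bound reads $\mathrm{out}_R\ge \sum_k \tfrac{k(k-1)}{N(N-1)}\tfrac{E^0(k)}{k}\,p_k$. By~\eqref{eq:wdeF cor 2} with $\mu=\delta_0$ the weights $p_k=\tr G^{\eta_R}_{N,k}$ concentrate near $k/N\approx 1$, and since $E^0(k)/k$ is increasing with limit $L=\lim_N E^0(N)/N$, the right-hand side converges to $L$ itself. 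You recover $L\ge L$, which is vacuous. No amount of playing $\liminf$ against $\limsup$ or invoking lower semi-continuity of $\lambda\mapsto \eH^0(\lambda)$ helps, because there is simply no mass on which to evaluate $\eH^0$: all of it has escaped, and the only information you retain about the escaping piece is $E^0(k)$ again.

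The paper breaks this loop by a different idea (Lemma~\ref{lem:triv aux}, due to L\'evy-Leblond/Lieb--Thirring/Lieb--Yau): sacrifice one particle to create a small attractive one-body potential. Writing $w=w^+-w^-$ and using symmetry, one shows that $\tfrac{N-1}{N}E^0(N)\ge E^\eps(N-1)$ where $E^\eps$ is the ground state energy of the auxiliary Hamiltonian $\sum_i(-\Delta_i-\eps w^-(x_i))+\tfrac{1}{N-2}\sum_{i<j}w_\eps(x_i-x_j)$ with $w_\eps=w+\eps w^-$. The point is that this auxiliary problem has a genuine trapping potential, so the localization argument now yields $\tr[\gamma^{(1)}]=1$: the reduced density matrices converge \emph{strongly}, the strong de Finetti theorem applies, and one gets $a_\eps\ge e_{\rm H}^\eps(1)$. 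Sending $\eps\to 0$ finishes the proof. The missing ingredient in your proposal is exactly this device for manufacturing compactness; without it the translation-invariant case cannot be handled by the inside/outside splitting alone.
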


The first step is to use part of the interaction potential to create an attractive one-body potential (this roughly amounts to taking out the center of mass degree of freedom). This way we will define an auxiliary problem whose energy is close to the original one, but for which particles always stay trapped. This is done in the following lemma, inspired from~\cite{LieThi-84,LieYau-87}:

\begin{lemma}[\textbf{Auxiliary problem with binding}]\label{lem:triv aux}\mbox{}\\
We split $w = w ^+ - w ^-$ into positive and negative parts and define, for some $\ep >0$
$$ \wep (x) = w (x) + \ep w ^- (x).$$ 
Consider the auxiliary Hamiltonian
\begin{equation}\label{eq:aux hamil}
H_N ^{\eps}= \sum_{i=1}^{N}\Big(K_i -\eps w_-(x_i) \Big) + \frac{1}{N-1} \sum_{1\leq i<j \leq N} w_\eps (x_i-x_j) 
\end{equation}
and $\Eep (N)$ the associated ground state energy. Then
\begin{equation}\label{eq:aux low bound}
\aep:= \lim_{N\to \infty} \frac{\Eep(N)}{N} \leq \lim_{N\to \infty} \frac{E ^0 (N)}{N} =:a.
\end{equation}
\end{lemma}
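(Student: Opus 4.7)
The strategy is to use an approximate minimizer $\Psi_N$ of $H_N^0$ as a variational trial state for $H_N^\epsilon$, exploiting the full translation invariance of $H_N^0$ (which holds since $V\equiv 0$). From the identity
\[
H_N^0 = H_N^\epsilon + \epsilon\sum_{i=1}^N w_-(x_i) - \frac{\epsilon}{N-1}\sum_{1\leq i<j\leq N} w_-(x_i-x_j),
\]
and the fact that $\Psi_N^y(x_1,\ldots,x_N):=\Psi_N(x_1-y,\ldots,x_N-y)$ has the same $H_N^0$-energy as $\Psi_N$ for every $y\in\R^d$, one obtains
\[
\frac{E^\epsilon(N)}{N}\leq \frac{E^0(N)}{N}+o(1)-\epsilon(w_-\ast\rho_N^{(1)})(-y)+\frac{\epsilon}{2}\iint w_-(x-z)\rho_N^{(2)}(x,z)\,dxdz,
\]
where $\rho_N^{(k)}$ denotes the density of $\gamma_{\Psi_N}^{(k)}$. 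The plan is then to choose $y$ so that the bracket is non-positive in the limit.

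First I would bound the translation term from below using the fact that $\rho_N^{(1)}$ is a probability density: since the supremum of a continuous function dominates any of its averages,
\[
\sup_y (w_-\ast\rho_N^{(1)})(-y)\;=\;\sup_y (w_-\ast\rho_N^{(1)})(y)\;\geq\;\int (w_-\ast\rho_N^{(1)})\rho_N^{(1)}\,dx\;=\;\iint w_-(x-z)\rho_N^{(1)}(x)\rho_N^{(1)}(z)\,dxdz.
\]
Substituting into the previous bound, it remains to establish the asymptotic factorization
\[
\iint w_-(x-z)\rho_N^{(2)}(x,z)\,dxdz\;\leq\;2\iint w_-(x-z)\rho_N^{(1)}(x)\rho_N^{(1)}(z)\,dxdz+o(1),
\]
which I would deduce by invoking the weak quantum de Finetti theorem (Theorem~\ref{thm:DeFinetti faible}) along a subsequence of the $\gamma_N^{(k)}$. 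Combined with the uniform kinetic bound $\Tr[-\Delta\gamma_N^{(1)}]\leq C$ (a consequence of Assumption~\eqref{eq:bis T controls w} and $\langle H_N^0\rangle\leq CN$), one obtains enough compactness to pass to the limit in the traces against $w_-$, which decays at infinity by Assumption~\eqref{eq:bis decrease w}; on the de Finetti side the target inequality is then trivial for each pure tensor $|u^{\otimes2}\rangle\langle u^{\otimes2}|$ in the decomposition.

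The main obstacle is that a single translation $y$ must simultaneously serve every component $u^{\otimes N}$ in the de Finetti decomposition of $\Psi_N$, whereas in principle each such component would call for its own maximizer of $y\mapsto(w_-\ast|u|^2)(-y)$. This is handled by a preliminary translation of $\Psi_N$ (allowed by translation invariance of $H_N^0$) normalizing the centroid of $\rho_N^{(1)}$, combined with the Fock-space localization tools of Chapter~\ref{sec:locFock} to harmlessly dispose of any mass escaping to infinity (which contributes $0$ to the pairing with $w_-$). An arguably cleaner alternative is to exploit the $N^{-1}$-scaling of the center-of-mass kinetic energy to restrict to quasi-minimizers with arbitrarily sharp CoM at energy cost $o(N)$, thereby concentrating $\rho_N^{(1)}$ enough for the translation optimization to be carried out by elementary estimates, sidestepping the de Finetti argument altogether.
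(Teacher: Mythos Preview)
Your trial-state approach contains a genuine gap at the compactness step, and the proposed fixes do not close it. You claim that the kinetic bound $\Tr[-\Delta\gamma_N^{(1)}]\le C$ together with the decay of $w_-$ gives ``enough compactness to pass to the limit in the traces against $w_-$''. But the multiplication operator $w_-(x_1-x_2)$ depends only on the relative coordinate and is therefore \emph{not} relatively compact with respect to $-\Delta_{x_1}-\Delta_{x_2}$; the quantity $\iint w_-(x-z)\rho_N^{(2)}$ can stay bounded away from zero while $\gamma_N^{(2)}\rightharpoonup_*0$. In the translation-invariant problem this is exactly what happens for vanishing sequences, and the weak de~Finetti limit is then trivial ($\mu=\delta_0$), telling you nothing about the finite-$N$ traces. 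Worse, even when the de~Finetti limit is nontrivial your factorization inequality is false: the right-hand side $2\iint w_-\rho^{(1)}\rho^{(1)}$ is quadratic in $\rho^{(1)}=\int|u|^2\,d\mu$, hence \emph{not} affine in $\mu$, so the fact that the inequality holds for each pure $u$ does not survive mixing. A counterexample is $\mu=\tfrac1k\sum_{i=1}^k\delta_{u_i}$ with far-apart $|u_i|^2$ and $\iint w_-|u_i|^2|u_i|^2=M$: the left side equals $M$ while the right side is $\approx 2M/k$. Finally, pinning the center of mass costs $o(1)$ per particle but does not concentrate $\rho_N^{(1)}$---particles can sit in bound clusters drifting to $\pm\infty$ with CoM at the origin---so the same obstruction persists.

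The paper's proof bypasses all of this with an algebraic identity that uses no trial state, no compactness, and no de~Finetti. By bosonic symmetry one rewrites
\[
\frac{N-1}{N}\langle\Psi,H_N^0\Psi\rangle=\Big\langle\Psi,\ \sum_{i=1}^{N-1}\big(-\Delta_i-\eps w_-(x_i-x_N)\big)+\tfrac{1}{N-2}\sum_{1\le i<j\le N-1}w_\eps(x_i-x_j)\ \Psi\Big\rangle,
\]
and observes that the $(N-1)$-body operator in brackets has bottom of spectrum equal to $E^\eps(N-1)$ \emph{for every value of} $x_N$, since the remaining terms are translation-invariant. This gives $\tfrac{E^0(N)}{N}\ge\tfrac{E^\eps(N-1)}{N-1}$ directly. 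In effect the role of your translation parameter $y$ is played by the $N$-th particle itself, but at the operator level, so that one never needs to choose a single $y$ compatible with all components of a given state.
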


\begin{proof}
Using symmetry, we can for all $\Psi \in L_s ^2 (\R ^{dN})$ write 
\[
\left\langle \Psi, \left( \sum_{i=1}^{N} -\Delta_i \right) \Psi \right\rangle = \frac{N}{N-1} \left\langle \Psi, \left( \sum_{i=1}^{N-1} -\Delta_i \right) \Psi \right\rangle.
\]
Similarly
\[
\frac{2}{N(N-1)} \left\langle \Psi, \sum_{1\leq i<j \leq N}  w (x_i-x_j)\Psi \right\rangle = \tr_{\gH ^2} \left[w \gammaP ^{(2)}\right] =  \tr_{\gH ^2} \left[\wep \gammaP ^{(2)}\right] - \eps \tr_{\gH ^2} \left[w^- \gammaP ^{(2)}\right]
\]
with $\gammaP = |\Psi \rangle \langle \Psi |$. Then 
\[
 \tr_{\gH ^2} \left[\wep \gammaP ^{(2)}\right] = \frac{2}{(N-1)(N-2)} \tr \left[\sum_{1\leq i<j \leq N-1} w_\eps(x_i-x_j) \gammaP \right]
\]
and
\[
 \eps \tr_{\gH ^2} \left[w^- \gammaP ^{(2)}\right]  = \frac{\eps}{N-1} \tr \left[\sum_{i=1} ^{N-1} w ^- (x_i-x_N) \gammaP \right].
\]
All this implies
\begin{multline}
\frac{N-1}{N}\langle \Psi , H_N^0  \Psi \rangle\\ =\left\langle \Psi, \left( \sum_{i=1}^{N-1}\Big(-\Delta_i -\eps w ^- (x_i-x_N) \Big) + \frac{1}{N-2} \sum_{1\leq i<j \leq N-1} w_\eps(x_i-x_j) \right) \Psi \right\rangle.
\label{eq:trick-Nam}
\end{multline}
In the preceding equation the Hamiltonian between parenthesis depends on $x_N$ via the one-body potential $\eps w ^- (x_i-x_N)$  but since the other terms are translation-invariant the bottom of the spectrum is in fact independent of $x_N$. We thus have 
\begin{equation}\label{eq:aux low bound pre}
\frac{N-1}{N}\langle \Psi , H_N^0  \Psi \rangle \geq E^\eps(N-1) \langle \Psi,\Psi\rangle  
\end{equation}
for all $\Psi\in L ^2_s (\R ^{dN})$, which implies
$$\frac{E^0(N)}{N}\ge \frac{E ^\eps(N-1)}{N-1}.$$
The sequences $E^0(N)/N$ and $E^\eps(N)/N$ are increasing since they are the infimum of variational problems set on decreasing sets (cf~\eqref{eq:MF quant form 1}). Using trial states spread over larger and larger domains method, one may on the other hand show that 
$$ \frac{E^0(N)}{N} \leq 0,\quad \frac{E^\eps(N)}{N} \leq 0$$
and thus that the limits $a$ and $\aep$ exist. Then~\eqref{eq:aux low bound pre} clearly implies~\eqref{eq:aux low bound}.
\end{proof}

We next derive a lower bound to $\Eep(N)$ for $\epsilon$ small enough. It is much easier to work on this energy because the corresponding sequences of reduced density matrices are strongly compact in $\gS ^1$. Indeed, compactness or its absence (physically, binding or its absence) resuts from a comparison between the attractive and repulsive parts of the one- and two-body potentials. Here the one-body potential $\ep w ^-$ and the two-body potential $\wep$ are well equilibrated because they have been built precisely for this purpose, starting from the original two-body potential $w$. This trick will allow us to conclude the 

\begin{proof}[Proof of Theorem~\ref{thm:trans inv}.]
As usual, only the lower bound is non-trivial. Since $\eH ^0 (1) \leq 0$ one may assume that $a<0$ since otherwise $a=e^0_{\rm H}(1)=0$ and there is nothing to prove. We are going to prove the lower bound
\begin{equation}\label{eq:aux prob bound}
 \aep \geq e_{\rm H} ^{\eps}(1) := \inf_{\norm{u}^2=1} \left\{ \langle u,(-\Delta-\eps w_-) u\rangle + \frac{1}{2} \iint |u(x)|^2 w_\eps (x-y) |u(y)|^2 dxdy \right\},
\end{equation}
and it is then easy to show that $e_{\rm H} ^{\eps}(1) \to \eH  ^{0} (1)$ when $\ep \to$ to obtain~\eqref{eq:triv ener} by combining with~\eqref{eq:aux low bound}.  

Let $\Psi_N$ be a sequence of wave-functions such that
$$\langle \Psi_N, H_N ^{\eps} \Psi_N \rangle = \Eep(N)+o(N).$$ 
and $\gamma_N^{(k)}$ the corresponding reduced density matrices. Then
$$
a_\eps = \lim_{N\to \infty}\frac{\langle \Psi_N, H_N ^{\eps} \Psi_N \rangle}{N}=  \lim_{N\to \infty} \left( \Tr_\gH\left[(-\Delta-\eps w_-)\gamma_N^{(1)}\right] + \frac{1}{2} \Tr_{\gH^2}\left[w_\eps \gamma_N^{(2)}\right] \right).
$$ 
After the usual diagonal extraction we can assume that  
$$\gamma^{(k)}_N\wto_\ast\gamma^{(k)}$$
and we are going to show that the convergence is actually strong. We shall afterwards use the strong de Finetti theorem to obtain Hartree's energy as a lower bound.

We pick a smooth partition of unity $\chi_R^2 +\eta_R^2=1$ and use Lemma~\ref{lem:loc ener} to get 
\begin{multline}
\label{eq:split-energy-translation}
a_\eps \ge  \liminf_{R\to \infty}\liminf_{N\to \infty} \left\{ \Tr_\gH\left[(-\Delta-\eps w_-) \chi_R \gamma_N^{(1)}\chi_R\right] + \frac{1}{2} \Tr_{\gH^2}\left[w_\eps \chi_R^{\otimes 2}\gamma_N^{(2)}\chi_R^{\otimes 2}\right] \right. \\
\left. + \Tr_\gH\left[ -\Delta \eta_R \gamma_N^{(1)}\eta_R\right] + \frac{1}{2} \Tr_{\gH^2}\left[w_\eps \eta_R^{\otimes 2}\gamma_N^{(2)}\eta_R^{\otimes 2}\right] \right \}. 
\end{multline}
We define the $\chi_R$-- and $\eta_R$--localized states $G_N^\chi$ and $G_N^\eta$ by applying Lemma~\ref{lem:Fock loc} to $\Psi_N$. We will use those to estimate separately the two terms in the right side of~\eqref{eq:split-energy-translation}.

\medskip

\paragraph*{\bf The $\chi_R$-localized term.} Using~\eqref{eq:Fock mat red} and~\eqref{eq:Fock loc mat} we have 
\begin{multline}
\label{eq:localization-chi-eps}
\Tr_\gH\left[(-\Delta-\eps w_-) \chi_R \gamma_N^{(1)}\chi_R\right] + \frac{1}{2} \Tr_{\gH^2}\left[w_\eps \chi_R^{\otimes 2}\gamma_N^{(2)}\chi_R^{\otimes 2}\right]\\
= \frac{1}{N} \sum_{k=1}^N\Tr_{\gH^k}\left[ \left( \sum_{i=1}^k (-\Delta-\eps w_-)_i + \frac{1}{N-1} \sum_{i<j}^k w_\eps (x_i-x_j)\right) G^\chi_{N,k} \right]
\end{multline}
We apply the inequality 
\begin{equation}\label{eq:A-tB}
A+tB = (1-t)A+ t(A+B) \ge (1-t) \inf \sigma(A) + t \inf\sigma  (A+B) 
\end{equation}
with
\begin{equation*}
A=\sum_{\ell =1}^k (-\Delta-\eps w_-)_\ell,\quad  A+B =H_{\eps,k},\quad t=(k-1)/(N-1). 
\end{equation*}
We have 
$$ \lim_{\eps \to 0} \inf \sigma(-\Delta-\eps w_-) = \inf \sigma(-\Delta) = 0$$
and since we assume that $a<0$, for $\eps$ small enough
$$\inf \sigma(-\Delta-\eps w_-) > a \ge a_\eps \ge k^{-1}\inf \sigma(H_{\eps,k}).$$
Thus
$$\inf\sigma(A)\ge \inf \sigma(A+B)$$
and we may then write 
$$\Tr_\gH \left[(-\Delta-\eps w_-) \chi_R \gamma_N^{(1)}\chi_R\right] + \frac{1}{2} \Tr_{\gH^2}\left[w_\eps \chi_R^{\otimes 2}\gamma_N^{(2)}\chi_R^{\otimes 2}\right]  \ge \sum_{k=1}^N  \frac{k\Tr G^{\chi}_{N,k}}{N} \; \frac{\Eep (k)}{k}.$$
But since 
$$\sum_{k=0}^N \frac{k\Tr G^\chi_{N,k}}{N} = \Tr \left[\chi_R^2 \gamma_N^{(1)}\right]\underset{N\to\ii}{\longrightarrow}\Tr \left[\chi_R^2 \gamma^{(1)}\right]\quad \text{and}\quad \lim_{k\to \infty} \frac{\Eep (k)}{k} =a_\eps,$$
and $k\mapsto \frac{\Eep (k)}{k}$ is increasing we conclude
\begin{equation}
\label{eq:energy-chiR-translation}
\liminf_{N\to \infty} \left( \Tr_\gH\left[(-\Delta-\eps w_-) \chi_R \gamma_N^{(1)}\chi_R\right] + \frac{1}{2} \Tr_{\gH^2}\left[w_\eps \chi_R^{\otimes 2}\gamma_N^{(2)}\chi_R^{\otimes 2}\right] \right)
\geq a_\eps \Tr \left[\chi_R^2 \gamma^{(1)}\right]
\end{equation}
by monotone convergence. 
 
\medskip

\paragraph*{\bf The $\eta_R$-localized term.} Using the results of Section~\ref{sec:loc rig} as above we have
\begin{multline}
\label{eq:localization-translation-eta}
\Tr_\gH\left[ T \eta_R \gamma_N^{(1)}\eta_R\right] + \frac{1}{2} \Tr_{\gH^2}\left[w_\eps \eta_R^{\otimes 2}\gamma_N^{(2)}\eta_R^{\otimes 2}\right]\\
= \frac{1}{N} \sum_{k=1}^N \Tr_{\gH^k}\left[ \left( \sum_{i=1}^k -\Delta_i + \frac{1}{N-1} \sum_{i<j}^k w_\eps (x_i-x_j)\right) {G}^{\eta}_{N,k} \right].
\end{multline}
Here we use
$$-\Delta \ge 0,\quad w_\eps = w+ 2\eps w ^- \ge (1-2\eps)w \mbox{ and } E^0(k)\le ak<0$$ 
to obtain
\bqq
\sum_{i=1}^k -\Delta_i + \frac{1}{N-1} \sum_{i<j}^k w_\eps (x_i-x_j) &\ge& \frac{(1-2\eps)(k-1)}{N-1} H^0_k \hfill\\
&\ge &  \frac{(1-2\eps)(k-1)}{N-1} E^0 (k) \ge E^0(k) - 2\eps a k
\eqq
for all $k\ge 1$. Combining with~\eqref{eq:localization-translation-eta} we obtain
\begin{equation*}
\Tr_\gH\left[ -\Delta \eta_R \gamma_N^{(1)}\eta_R\right] + \frac{1}{2} \Tr_{\gH^2}\left[w_\eps \eta_R^{\otimes 2}\gamma_N^{(2)}\eta_R^{\otimes 2}\right] \ge \sum_{k=1}^N  \frac{k\Tr G^{\eta}_{N,k}}{N} \cdot \left(\frac{E^0(k)}{k} - 2\eps a \right).
\end{equation*}
We finally deduce that
\begin{multline}
\label{eq:energy-etaR-translation}
\liminf_{N\to \infty} \left( \Tr_\gH\left[ T \eta_R \gamma_N^{(1)}\eta_R\right] + \frac{1}{2} \Tr_{\gH^2}\left[w_\eps \eta_R^{\otimes 2}\gamma_N^{(2)}\eta_R^{\otimes 2}\right]\right)\geq (1-2 \eps) a \big(1-\Tr[\chi_R^2 \gamma^{(1)}]\big) 
\end{multline}
upon using
$$\sum_{k=0}^N \frac{k}{N} \Tr G^\eta_{N,k} = \Tr \left[\eta_R^2 \gamma_N^{(1)} \right]\underset{N\to\ii}{\longrightarrow}1-\Tr \left[\chi_R^2 \gamma^{(1)}\right]\quad \text{and}\quad \lim_{k\to \infty} \frac{E^0(k)}{k}  =a$$
as well as the fact that $k\mapsto \frac{E^0(k)}{k}$ is increasing.

\medskip

\paragraph*{\bf Conclusion.}

Inserting~\eqref{eq:energy-chiR-translation} and~\eqref{eq:energy-etaR-translation} in~\eqref{eq:split-energy-translation} we find
\begin{align*}
a_\eps &\ge  \liminf_{R\to \infty}\left( a_\eps \Tr \left[\chi_R^2 \gamma^{(1)}  \right] + (1-2\eps) a \big(1-\Tr \left[\chi_R^2\gamma^{(1)}\right]\big) \right)\\
&=  a_\eps \Tr [\gamma^{(1)}] + (1-2\eps) a \big(1-\Tr [\gamma^{(1)}]\big)
\end{align*}
Since we assumed $a_\eps \le a <0$, we obtain
\bq \label{eq:concentration-case}
\tr[\gamma^{(1)}]=1.
\eq
There is thus no loss of mass for the auxiliary problem defined in Lemma~\ref{lem:triv aux}. This suffices to conclude that the reduced density matrices $\gamma_N ^{(k)}$ converge strongly. Indeed, one may apply the weak de Finetti theorem to the sequence of weak limits $\gamma ^{(k)}$ to obtain a measure $\mu$ living on the unit ball of $\gH$. But~\eqref{eq:concentration-case} combined with~\eqref{eq:melange faible} implies that the measure must in fact live on the unit sphere. We thus have, for all $k \geq 0$, 
$$ \tr[\gamma^{(k)}] = 1,$$
which implies that the convergence of $\gamma_N ^{(k)}$ to $\gamma ^{(k)}$ is actually strong in trace-class norm. We can then go back to~\eqref{eq:split-energy-translation} to obtain 
\begin{multline*}
\liminf_{N\to\ii}\left( \Tr_\gH\left[(-\Delta-\eps w_-)\gamma_N^{(1)}\right] + \frac{1}{2} \Tr_{\gH^2}\left[w_\eps \gamma_N^{(2)}\right] \right)\\
\geq \Tr_\gH\left[(-\Delta-\eps w_-)\gamma^{(1)}\right] + \frac{1}{2} \Tr_{\gH^2}\left[w_\eps \gamma^{(2)}\right].
\end{multline*}
We then apply the strong de Finetti theorem to the limits of the reduced density matrices to conclude that the right side is necessarily larger than $e_{\rm H} ^{\eps}(1)$. This gives~\eqref{eq:aux prob bound} and concludes the proof.
\end{proof}

\subsection{Concluding the proof in the general case}\label{sec:general general}\mbox{}\\\vspace{-0.4cm}

We have almost all the ingredients of the proof of Theorem~\ref{thm:hartree general} at our disposal. We only need a little bit more information on the translation-invariant problem, as we now explain.

For $k\geq 2$ consider the energy 
\begin{equation}\label{eq:b_j}
b_k(\lambda):= \frac{1}{k}\inf \sigma_{\gH^k} \left( \sum_{i=1}^k -\Delta_i + \frac{\lambda}{k-1} \sum_{i<j}^k w (x_i-x_j) \right),
\end{equation}
i.e. an energy for $k$ particles with an interaction strength proportional to $\frac{\lambda}{k-1}$. In the last section we have already shown that the limit $k\to \infty$ of such an energy is given by $\lambda \eH ^0 (\lambda)$ when $\lambda$ is a fixed parameter. Using Fock-space localization methods, the energy of particles lost at infinity in the minimization of the energy for $N\geq k$ particles will be naturally described as a superposition of energies for $k$-particles systems with an interaction of strength $1/(N-1)$ inherited from the original problem. In other words, we will have to evaluate a superposition of the energies $b_k (\lambda)$ with  
$$\lambda = \frac{k-1}{N-1},$$
a bit as in~\eqref{eq:localization-chi-eps} and~\eqref{eq:localization-translation-eta}. Since this $\lambda$ depends on $k$ it will be useful to know that $b_k(\lambda)$ is equicontinuous as a function of $\lambda$:

\begin{lemma}[\textbf{Equi-continuity of the energy as a function of the interaction}]\label{lem:equi conti}\mbox{}\\
We set the convention
\begin{equation*}
b_0 (\lambda) = b_1 (\lambda) = 0.
\end{equation*}
Then, for all $\lambda \in [0,1]$ 
\begin{equation}\label{eq:lim b lambda}
\lim_{k\to\ii} \lambda\, b_k(\lambda) =\eH^0(\lambda).
\end{equation}
Moreover, for all $0\leq \lambda \leq \lambda ' \leq 1$ 
\begin{equation}\label{eq:equi conti b lambda}
0\leq b_k (\lambda) -b_k (\lambda') \leq C |\lambda - \lambda'|
\end{equation}
where $C$ does not depend on $k$.
\end{lemma}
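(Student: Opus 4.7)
For the limit~\eqref{eq:lim b lambda}, my plan is to reduce to Theorem~\ref{thm:trans inv} applied with the two-body potential $w$ replaced by $\lambda w$. For any $\lambda \in (0,1]$, the rescaled interaction $\lambda w$ satisfies the same assumptions~\eqref{eq:bis T controls w}-\eqref{eq:bis decrease w} as $w$ (with the constants $\beta_\pm$ replaced by $\lambda \beta_\pm \in [0,1)$). Denoting by $e_{\rm H,\lambda w}^0(1)$ the translation-invariant Hartree energy at mass $1$ built from $\lambda w$, Theorem~\ref{thm:trans inv} gives
\[
\lim_{k\to\infty} b_k(\lambda) \;=\; e_{\rm H,\lambda w}^0(1) \;=\; \inf_{\|v\|=1}\Bigl\{\int_{\R^d}|\nabla v|^2 + \frac{\lambda}{2}\iint_{\R^d\times\R^d}|v(x)|^2 w(x-y)|v(y)|^2\,dx\,dy\Bigr\}.
\]
On the other hand, the substitution $u=\sqrt\lambda\, v$ with $\|v\|^2=1$ in the definition~\eqref{eq:bis Hartree perte masse} of $e_{\rm H}^0(\lambda)$ yields $e_{\rm H}^0(\lambda) = \lambda\,e_{\rm H,\lambda w}^0(1)$, so that $\lim_k \lambda b_k(\lambda) = e_{\rm H}^0(\lambda)$. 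The case $\lambda=0$ is trivial with our convention $b_0=b_1=0$, since $b_k(0)=k^{-1}\inf\sigma(\sum -\Delta_i)=0=e_{\rm H}^0(0)$.

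For the equi-continuity~\eqref{eq:equi conti b lambda}, the plan has two steps. First, monotonicity: set $f_k(\lambda):=k\,b_k(\lambda)=\inf\sigma(H_k^{(\lambda)})$ where $H_k^{(\lambda)}:=\sum_i -\Delta_i + \frac{\lambda}{k-1}\sum_{i<j} w(x_i-x_j)$. Since $f_k$ is the infimum over unit vectors $\Psi$ of the affine map $\lambda\mapsto\langle\Psi,H_k^{(\lambda)}\Psi\rangle$, it is concave on $[0,\infty)$. Moreover $f_k(0)=0$ (the free Laplacian has bottom of spectrum $0$) and $f_k(\lambda)\leq 0$ for all $\lambda\geq 0$, as seen by testing against a widely spread Gaussian $\phi_R^{\otimes k}$ and using that $\int|\nabla\phi_R|^2\to 0$ together with $\iint|\phi_R|^2 w|\phi_R|^2\to 0$ by assumption~\eqref{eq:bis decrease w}. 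For $0\leq\lambda\leq\lambda'$, concavity with $f_k(0)=0$ gives $f_k(\lambda)\geq (\lambda/\lambda')f_k(\lambda')\geq f_k(\lambda')$, where the second inequality uses $f_k(\lambda')\leq 0$. Hence $b_k$ is non-increasing.

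Second, the Lipschitz upper bound: fix $0\leq\lambda\leq\lambda'\leq 1$ and for $\eps>0$ pick a unit $\Psi$ with $\langle\Psi,H_k^{(\lambda')}\Psi\rangle\leq f_k(\lambda')+\eps$. Writing $W_k:=\sum_{i<j}w(x_i-x_j)$, the linearity of $H_k^{(\lambda)}$ in $\lambda$ gives
\[
f_k(\lambda)-f_k(\lambda') \leq \langle\Psi, H_k^{(\lambda)}\Psi\rangle - \langle\Psi,H_k^{(\lambda')}\Psi\rangle + \eps = \frac{\lambda-\lambda'}{k-1}\langle\Psi,W_k\Psi\rangle + \eps \leq \frac{\lambda'-\lambda}{k-1}|\langle\Psi,W_k\Psi\rangle|+\eps.
\]
To bound $|\langle\Psi,W_k\Psi\rangle|$ uniformly in $k$, use assumption~\eqref{eq:bis T controls w}: $|w(x_i-x_j)|\leq (\beta_-+\beta_+)(T_i+T_j)+2C$, so summing gives $|\langle W_k\rangle|\leq(\beta_-+\beta_+)(k-1)\langle\sum_i T_i\rangle + C k(k-1)$. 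To bound the total kinetic energy, combine the operator inequality $H_k^{(\lambda')}\geq (1-\beta_-)\sum_i T_i - Ck/2$ (obtained by summing~\eqref{eq:bis T controls w} over pairs, using $\lambda'\leq 1$) with $\langle\Psi,H_k^{(\lambda')}\Psi\rangle\leq f_k(\lambda')+\eps\leq\eps$ to get $\langle\sum_i T_i\rangle\leq C' k$ for some $k$-independent $C'$. Plugging back yields $|\langle W_k\rangle|\leq C''k(k-1)$, hence $f_k(\lambda)-f_k(\lambda')\leq C''k(\lambda'-\lambda)+\eps$, i.e.\ $b_k(\lambda)-b_k(\lambda')\leq C''(\lambda'-\lambda)+\eps/k$. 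Letting $\eps\to 0$ concludes the proof.

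The only non-routine point is the uniformity in $k$ of the Lipschitz constant; it hinges on the mean-field scaling $\lambda/(k-1)$ in front of $W_k$, which exactly compensates the $O(k^2)$ bound on $|\langle W_k\rangle|$ produced by the relative boundedness~\eqref{eq:bis T controls w} together with the a priori $O(k)$ control of the total kinetic energy at a quasi-minimizer.
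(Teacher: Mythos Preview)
Your proof is correct; the treatment of the limit~\eqref{eq:lim b lambda} is the same as the paper's, but your argument for the Lipschitz bound~\eqref{eq:equi conti b lambda} takes a genuinely different route.

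For monotonicity, the paper invokes the convex-combination trick~\eqref{eq:A-tB}, while you use concavity of $\lambda\mapsto f_k(\lambda)$ together with $f_k(0)=0$ and $f_k\leq 0$; these are equivalent reformulations. For the Lipschitz upper bound the difference is more substantial. The paper works purely at the operator level: it introduces a parameter $\alpha\in(\beta_-,1)$, sets $\delta=(\lambda'-\lambda)(\alpha^{-1}-\lambda)^{-1}$, and decomposes
\[
\frac{1}{k}H_k^{(\lambda')}=\frac{1-\delta}{k}H_k^{(\lambda)}+\frac{\delta}{k\alpha}\Bigl(\alpha\sum_i(-\Delta_i)+\tfrac{1}{k-1}\sum_{i<j}w_{ij}\Bigr),
\]
then bounds the second piece below by $-C\delta/\alpha$ using relative boundedness. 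This yields $b_k(\lambda)-b_k(\lambda')\leq\delta(b_k(\lambda)+C/\alpha)$ in one stroke, without ever selecting a trial state. Your approach is instead variational: pick a quasi-minimizer for $H_k^{(\lambda')}$, bound the interaction energy $|\langle W_k\rangle|$ via~\eqref{eq:bis T controls w}, and close the loop with the a priori kinetic bound $\langle\sum_i(-\Delta_i)\rangle\leq C'k$ coming from the same relative boundedness at the quasi-minimizer. Your route is a step longer but makes the mechanism completely explicit---one sees directly that the mean-field factor $\lambda/(k-1)$ cancels the $O(k^2)$ growth of $|\langle W_k\rangle|$. The paper's decomposition is slicker but hides this cancellation inside the operator identity. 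Both arguments ultimately rest on the same two ingredients: $\beta_-<1$ and the $1/(k-1)$ scaling.
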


\begin{proof}
We start by vindicating our claim that~\eqref{eq:lim b lambda} is a direct consequence of the analysis of the previous section. For $\lambda = 0$ there is nothing to prove. For $\lambda > 0$ we use Theorem~\ref{thm:trans inv} to obtain  
\begin{align*}
\lim_{k\to\ii} \lambda\, b_k(\lambda)&=\lambda \inf_{\norm{u}^2=1}\left(\pscal{u,Ku}+\frac{\lambda}{2}\iint w(x-y)|u(x)|^2|u(y)|^2\right)\\
&=\inf_{\norm{u}^2=\lambda}\left(\pscal{u,Ku}+\frac{1}{2}\iint w(x-y)|u(x)|^2|u(y)|^2\right)=\eH^0(\lambda).
\end{align*} 
The fact that
$$b_k(\lambda)\ge b_k(\lambda')~~\text{for all}~0\le \lambda<\lambda'\le 1$$
is a consequence of~\eqref{eq:A-tB}. Indeed, one can see that either $b_k (\lambda) = 0$ for all $\lambda$ or the same kind of argument as in~\eqref{eq:A-tB} applies.

For the equicontinuity~\eqref{eq:equi conti b lambda} we fix some $ 0 <\alpha \leq 1$ and remark that, with 
$$\delta := (\lambda'-\lambda)(\alpha^{-1}-\lambda)^{-1},$$
we have
\begin{align*}
\frac{1}{k}\left( \sum_{i=1}^k K_i + \frac{\lambda'}{k-1} \sum_{i<j}^k w_{ij} \right) &= \frac{1-\delta}{k} \left( \sum_{i=1}^k K_i + \frac{\lambda}{k-1} \sum_{i<j}^k w_{ij} \right) \\
&+ \frac{\delta}{k\alpha} \left( \alpha \sum_{i=1}^k K_i + \frac{1}{k-1}\sum_{i<j}^k w_{ij} \right)\\
&\ge (1-\delta) b_k(\lambda) - \frac{C\delta}{\alpha}
\end{align*}
using the fact the spectrum of the operator appearing in the second line is bounded below. We deduce
$$ 0\le b_k(\lambda)- b_k(\lambda') \le \delta (b_k(\lambda)+C\alpha^{-1})\le C|\lambda'-\lambda|$$
since $b_k(\lambda)$ is uniformly bounded and $|\delta| \leq C |\lambda- \lambda '|$.
\end{proof}

We may know conclude the 

\begin{proof}[Proof of Theorem~\ref{thm:hartree general}.] Let $\Psi_N$ be a sequence of  $N$-body wave-functions such that 
$$\langle \Psi_N, H_N^V \Psi_N \rangle = E^V(N)+o(N),$$
with $\gamma_N^{(k)}$ the associated reduced density matrices. After a diagonal extraction we have 
$$\gamma_N^{(k)}\wto \gamma^{(k)}$$ 
weakly-$\ast$. Theorem~\ref{thm:DeFinetti faible} ensures the existence of a probability measure $\mu$ such that
\begin{equation}\label{eq:gen melange}
 \gamma ^{(k)} = \int_{u\in B\gH} d\mu (u) |u ^{\otimes k} \rangle \langle u ^{\otimes k} |.
\end{equation}
We pick a smooth partition of unity $\chi_R^2 +\eta_R^2=1$ as previously and define the localized state $G_N^\chi$ and $G_N^\eta$ constructed from $|\Psi_N \rangle \langle \Psi_N |$. Using Lemma~\ref{lem:loc ener} again we obtain
\begin{align} \label{eq:split-energy-general}
\lim_{N\to \infty}\frac{E^V(N)}{N} &= \lim_{N\to \infty} \left( \Tr_\gH\left[T\gamma_N^{(1)}\right] + \frac{1}{2} \Tr_{\gH^2}\left[w \gamma_N^{(2)}\right] \right) \nn\\
&\ge  \liminf_{R\to \infty}\liminf_{N\to \infty} \left\{ \Tr_\gH\left[T \chi_R \gamma_N^{(1)}\chi_R\right] + \frac{1}{2} \Tr_{\gH^2}\left[w \chi_R^{\otimes 2}\gamma_N^{(2)}\chi_R^{\otimes 2}\right] \right. \nn\\
&\qquad\left. +\Tr_\gH\left[ -\Delta \eta_R \gamma_N^{(1)}\eta_R\right] + \frac{1}{2} \Tr_{\gH^2}\left[w\eta_R^{\otimes 2}\gamma_N^{(2)}\eta_R^{\otimes 2}\right] \right\}.
\end{align}
First we use the strong local compactness and~\eqref{eq:gen melange} for the $\chi_R$-localized term:
\begin{multline}
\liminf_{N\to \infty} \left\{ \Tr_\gH\left[ T \chi_R \gamma_N^{(1)}\chi_R\right] + \frac{1}{2} \Tr_{\gH^2}\left[w \chi_R^{\otimes 2}\gamma_N^{(2)}\chi_R^{\otimes 2}\right]\right\}\\
\geq  \Tr_\gH\left[T \chi_R \gamma^{(1)}\chi_R\right] + \frac{1}{2} \Tr_{\gH^2}\left[w \chi_R^{\otimes 2}\gamma^{(2)}\chi_R^{\otimes 2}\right]= \int_{B\gH} \E^V_{\rm H}[\chi_R u] d\mu(u).
\label{eq:localization-chi-general}
\end{multline}
Our main task is to control the second term in the right side of~\eqref{eq:split-energy-general}. We claim that
\begin{equation}
\label{eq:localization-eta-general}
\liminf_{N\to \infty} \left( \Tr\left[T \eta_R \gamma_N^{(1)}\eta_R\right] + \frac{1}{2} \Tr_{\gH^2}\left[w \eta_R^{\otimes 2}\gamma_N^{(2)}\eta_R^{\otimes 2}\right]\right) \ge \int_{B\gH} e^{0}_{\rm H}(1-\norm{\chi_R u}^2) d\mu(u). 
\end{equation}
Indeed, using the $\eta_R$-localized state $G_N^\eta$  we have  
\begin{align*}
\Tr_\gH\left[ -\Delta \eta_R \gamma_N^{(1)}\eta_R\right] &+ \frac{1}{2} \Tr_{\gH^2}\left[w \eta_R^{\otimes 2}\gamma_N^{(2)}\eta_R^{\otimes 2}\right]
=\frac{1}{N} \sum_{k=1}^N \Tr_{\gH^k}\left[ \left( \sum_{i=1}^k -\Delta_i + \frac{1}{N-1} \sum_{i<j}^k w_{ij}\right) {G}^{\eta}_{N,k} \right] \\
&\ge\sum_{k=1}^N \frac{1}{N} \Tr {G}^{\eta}_{N,k} \inf \sigma_{\gH^k} \left( \sum_{i=1}^k -\Delta_i + \frac{1}{N-1} \sum_{i<j}^k w_{ij}\right) \\
&\geq \sum_{k=0}^N \tr G_{N,k}^\eta \frac{k}{N} b_k \left( \frac{k-1}{N-1}\right)
\end{align*}
where $b_k$ is the function defined in~\eqref{eq:b_j}. On the other hand
\begin{equation} \label{eq:localization-eta-bk}
\lim_{N\to\ii}\sum_{k=0}^N \tr G_{N,k}^\eta \left( \frac{k}{N} b_k \left( \frac{k-1}{N-1}\right) - e_{\rm H}^0 \left(\frac{k}{N}\right) \right) = 0,
\end{equation}
because, using the equicontinuity of $\{b_k\}_{k=1}^\infty$ and the convergence $\lim_{k\to \infty} \lambda b_k(\lambda)=e_{\rm H}^{0}(\lambda)$ given by Lemma~\ref{lem:equi conti}, we get 
$$ \lim_{N \to \infty} \sup_{k=1,2,...,N} \left|\frac{k}{N} b_k\left(\frac{k-1}{N-1}\right)-e_{\rm H}^{0}\left(\frac{k}{N}\right)\right| =0.$$
We just have to combine this with 
$$\sum_{k=0}^N \tr G_{N,k}^\eta = 1$$
to obtain~\eqref{eq:localization-eta-bk}. At this stage we thus have 
\[
\liminf_{N\to \infty} \Tr_\gH\left[ -\Delta \eta_R \gamma_N^{(1)}\eta_R\right] + \frac{1}{2} \Tr_{\gH^2}\left[w \eta_R^{\otimes 2}\gamma_N^{(2)}\eta_R^{\otimes 2}\right] \geq \lim_{N\to\ii}\sum_{k=0}^N \tr G_{N,k}^\eta e_{\rm H}^0 \left(\frac{k}{N}\right) .
\]
We now use the fundamental relation~\eqref{eq:Fock funda rel} and Corollary~\ref{cor:wdeF loc} as indicated in Section~\ref{sec:proof deF faible} to deduce 
\begin{multline*}
\lim_{N\to\ii}\sum_{k=0}^N \tr G_{N,k}^\eta\; e_{\rm H}^0 \left(\frac{k}{N}\right) =\lim_{N\to\ii}\sum_{k=0}^N \tr G_{N,N-k}^\chi \;e_{\rm H}^0 \left(\frac{k}{N}\right)\\
=\lim_{N\to\ii}\sum_{k=0}^N \tr G_{N,k}^\chi \;e_{\rm H}^0 \left(1-\frac{k}{N}\right)=\int_{B\gH}\;e_{\rm H}^0 (1-\|\chi_R u\|^2)d\mu(u),
\end{multline*}
which concludes the proof of~\eqref{eq:localization-eta-general}.

There remains to insert~\eqref{eq:localization-chi-general} and~\eqref{eq:localization-eta-general} in~\eqref{eq:split-energy-general} and use Fatou's lemma. This gives
\begin{align}\label{eq:conclu preuve general}
e_{\rm H} (1) \geq \lim_{N\to \infty}\frac{E^V(N)}{N} &\ge \liminf_{R\to \infty} \left( \int_{B\gH} \left[\E^V_{\rm H}[\chi_R u]+ e^0_{\rm H}(1-\norm{\chi_R u}^2) \right] d\mu(u)\right) \nn\\
&\ge \int_{B\gH} \liminf_{R\to \infty} \left[\E^V_{\rm H}[\chi_R u]+ e^0_{\rm H}(1-\norm{\chi_R u}^2)\right] d\mu(u) \nn\\
&= \int_{B\gH} \left[\E^V_{\rm H}[u]+ e^0_{\rm H}(1-\norm{u}^2) \right] d\mu(u) \nn\\
&\geq \int_{B\gH} \left[\eH^V(\norm{u}^2)+ e^0_{\rm H}(1-\norm{u}^2) \right] d\mu(u)\ge e_{\rm H}(1), 
\end{align}
using the continuity of $\lambda \mapsto e_{\rm H}^0(\lambda)$ and the binding inequality~\eqref{eq:hartree liaison}. This concludes the proof of~\eqref{eq:general ener}, and the other results of the theorem follow by inspecting the cases of equality in~\eqref{eq:conclu preuve general}.
\end{proof}

\newpage

\section{\textbf{Derivation of Gross-Pitaevskii functionals}}\label{sec:NLS}

We now turn to the derivation of non-linear Schr\"odinger (NLS) functionals with local non-linearities: 
$$\ENLS [\psi] := \int_{\R ^d } \left| \nabla \psi \right| ^2 + V  |\psi | ^2 + \frac{a}{2} |\psi| ^4.$$
One may obtain such objects starting from a Hartree functional such as ~\eqref{eq:Hartree func} and taking an interaction potential converging  (in the sense of distributions) to a a Dirac mass
$$ w_L (x) = L ^{-d} w \left( \frac{x}{L}\right) \wto \left(\int_{\R ^d} w \right) \delta_0 \mbox{ when } L\to 0.$$ 
Since we already obtained~\eqref{eq:Hartree func} as the limit of a $N$-body problem, one might be tempted to see the derivation of the NLS functional as a simple passage to the limit in a one-body problem. The issue with such an approach is the total lack of control on the relationship between the physical parameters $N$ and $L$. One can see this as a problem of non-commuting limits: it is not clear at all that one can interchange the order of the limits  $N\to \infty$ and\footnote{The $L\to 0$ limit of the $N$-body problem is by the way very hard to define properly.} $L\to 0$. 

From a physical point of view, the good question is ``What relation should $N$ and $L$ satisfy if one is to obtain the NLS energy by taking \emph{simultaneously} the limits $N\to \infty$ and $L\to 0$ of the $N$-body problem ?'' The process leading to the NLS theory is thus more subtle than that leading to the Hartree theory. We shall first elaborate a little bit more on this point in the following subsection.

\subsection{Preliminary remarks}\label{sec:GP rem}\mbox{}\\\vspace{-0.4cm}

Until now we have worked with only two physical parameters: the particle number $N$ and the interactions strength $\lambda$. In order to have a well-defined limit problem we have been lead to considering the mean-field limit where $\lambda \propto N ^{-1}$. In this case, the range of the interaction potential is fixed and each particle interacts with all the others. The interaction strength felt by a typical particle is thus of order $\lambda N \propto 1$, comparable with its self-energy (kinetic + potential). We saw that this equilibration of forces acting on each particle, combined with structure results \`a la de Finetti, naturally leads to the conclusion that particles approximately behave as if they were independent. It follows that Hartree-type descriptions are valid in the limit $N\to \infty$.

There are other ways to justify such models: the equilibration of forces which allows the limit problem to emerge may result from a more subtle mechanism. For example, in the ultra-cold alkali gases where BEC has been observed, it has more to do with the \emph{diluteness} of the system than with the weakness of the interaction strength. For a theoretical description of this situation, we can introduce in our model a length scale $L$ characterizing the range of the interactions. Taking the total system size as a reference, a dilute system is materialized by taking the limit $L\to 0$. The interaction strength for a typical particle is then of order $ \lambda N L ^{d}$ (interaction strength $\times$ number of particles in a ball of radius $L$ around a given particle). This parameter is the one we should fix when $N\to \infty$ to obtain a limit problem. Different regimes are then possible, depending on the ratio between $\lambda$ and $L$.  

One may discuss the different possibilities starting from the $N$-body Hamiltonian\footnote{Once again, it is possible to add fractional Laplacians and/or magnetic fields, cf Remark~\ref{rem:energie cin}. For simplicity we do not pursue this here.} 
\begin{equation}\label{eq:GP start hamil}
H_N = \sum_{j=1} ^N  - \Delta_j  + V(x_j)  + \frac{1}{N-1} \sum_{1\leq i<j \leq N}N ^{d\beta} w( N ^{\beta} (x_i-x_j))
\end{equation}
which corresponds to choosing 
$$L=N ^{-\beta},\quad  \lambda \propto N ^{d\beta - 1}.$$ 
The fixed parameter $0\leq \beta$ is used to set the ratio between $L$ and $\lambda$. We consider the reference interaction potential $w$ as fixed, and we shall denote 
\begin{equation}\label{eq:w_N}
 w_N (x) := N ^{d\beta} w( N ^{\beta} x). 
\end{equation}
For $\beta >0$, $w_N$ converges in the sense of distributions to a Dirac mass
\begin{equation}\label{eq:GP conv delta}
 w_N \to \left(\int_{\R ^d} w \right) \delta_0,  
\end{equation}
materializing the short range of the interactions/diluteness of the system. Reasoning formally, one may want to directly replace $w_N$ by $\left(\int_{\R ^d} w \right) \delta_0$ in~\eqref{eq:GP start hamil}. In this case we are back to a mean-field limit, with a Dirac mass as interaction potential. This is of course purely formal (except in 1D where the Sobolev injection $H ^1 \hookrightarrow C^0$ allows to properly define the contact interaction). Accepting that this manipulation has a meaning and that one may approximate the ground state of~\eqref{eq:GP start hamil} under the form 
\begin{equation}\label{eq:GP ansatz}
\Psi_N = \psi ^{\otimes N} 
\end{equation}
we obtain the Hartree functional
\begin{equation}\label{eq:GP intro Hartree}
\EH [\psi] := \int_{\R ^d } \left| \nabla \psi \right| ^2 + V  |\psi | ^2 + \frac{1}{2} |\psi| ^2 (w \ast |\psi| ^2 ).
\end{equation}
Replacing the interaction potential by a Dirac mass at the origin leads to the Gross-Pitaevskii functional
\begin{equation}\label{eq:GP intro nls}
\ENLS [\psi] := \int_{\R ^d } \left| \nabla \psi \right| ^2 + V  |\psi | ^2 + \frac{a}{2} |\psi| ^4.
\end{equation}
In view of~\eqref{eq:GP conv delta}, the logical choice seems to imagine that when $\beta >0$, we obtain, starting from~\eqref{eq:GP start hamil}, the above functional with
\begin{equation}\label{eq:GP defi a}
a = \int_{\R ^d} w. 
\end{equation}
In fact one may prove the following results (described in the case $d=3$; $d\leq 2$ leads to subtleties and $d\geq 4$ does not have much physical meaning):
\begin{itemize}
\item If $\beta = 0$ we have the previously studied mean-field (MF) regime. The range of the interaction potential is fixed and its strength decreases proportionally to $N ^{-1}$. The limit problem is then~\eqref{eq:GP intro Hartree}, as previously shown. 
\item If $ 0 < \beta < 1$, the limit problem is~\eqref{eq:GP intro nls} with the parameter choice~\eqref{eq:GP defi a}, as can be expected. We will call this case the non-linear Schr\"odinger (NLS) limit. This case does not seem to have been considered in the literature prior to~\cite{LewNamRou-14} but, at least when $w\geq 0$ and $V$ is confining, one may adapt the analysis of the more difficult case $\beta = 1$. 
\item If $\beta = 1$, the limit functional is now~\eqref{eq:GP intro nls} with 
$$a = 4\pi \times \mbox{ scattering length of } w$$
(see~\cite[Appendix C]{LieSeiSolYng-05} for a definition). In this case, the ground state of~\eqref{eq:GP start hamil} includes a correction to the ansatz~\eqref{eq:GP ansatz}, in the form of short-range correlations. In fact, one should expect to have 
\begin{equation}\label{eq:GP ansatz 2}
\Psi_N (x_1,\ldots,x_N) \approx \prod_{j=1} ^N \psi (x_j) \prod_{1\leq i<j\leq N} f \left( N ^{\beta} (x_i - x_j)\right)  
\end{equation}
where $f$ is linked to the two-body problem defined by $w$ (zero-energy scattering solution). It so happens that when $\beta = 1$, the correction has a leading order effect on the energy, that of replacing $\int_{\R ^d}w$ by the corresponding scattering length, as noted first in~\cite{Dyson-57}. We will call this case the Gross-Pitaevskii (GP) limit. It is studied in a long and remarkable series of papers by Lieb, Seiringer and Yngvason  (see for example~\cite{LieYng-98,LieYng-01,LieSeiYng-01,LieSeiYng-00,LieSeiYng-05,LieSei-06,LieSeiSolYng-05}).   
\end{itemize}
As already mentioned, the corresponding evolution problems have also been thoroughly studied. Here too one has to distinguish the MF limit~\cite{BarGolMau-00,AmmNie-08,FroKnoSch-09,RodSch-09,Pickl-11}, the NLS limit~\cite{ErdSchYau-07,Pickl-10} and the GP limit~\cite{ErdSchYau-09,Pickl-10b,BenOliSch-12}. 

There is a fundamental physical difference between the MF and GP regimes: In both cases the effective interaction parameter $\lambda N L ^3$ is of order $1$, but one goes (still in 3D) from a case with numerous weak collistions when $\lambda = N ^{-1}$, $L=1$ to a case with rare but strong collisions when $\lambda = N ^2$, $L = N ^{-1}$. The different NLS scaling sort of interpolate between these two extremes, the transition from ``frequent weak collistions'' to ``rare strong collisions'' happens at $\lambda = 1, L = N ^{-1/3}$, i.e. $\beta = 1/3$.

\medskip

The difficulty for obtaining~\eqref{eq:GP intro nls} by taking the limit $N\to \infty$ is that there are in fact two distinct limits $N\to \infty$ and $w_N \to a \delta_0$ to control at the same time. A simple compactness argument will not suffice in this case and one has to work with quantitative estimates. The goal of this chapter is to explain how one may proceed starting from the quantitative de Finetti theorem stated in Chapter~\ref{sec:deF finite dim}. Since this theorem is only valid in finite dimension, we will need a natural way to project the problem onto finite dimensional spaces. We thus deal with the case of trapped bosons, assuming that for some constants~$c,C~>~0$   
\begin{equation}\label{eq:GP asum V}
c |x| ^s - C\leq V(x).
\end{equation}
In this case, the one-body Hamiltonian $-\Delta + V$ has a discrete spectrum on which we have a good control thanks to Lieb-Thirring-like inequalities. 

The results we are going to obtain are valid for $0 < \beta < \beta_0$ where $\beta_0 = \beta_0 (d,s)$ depends only on the dimension of the configuration space and the potential $V$. We shall give explicit estimates of $\beta_0$, but the method, introduced in~\cite{LewNamRou-14}, is for the moment limited to relatively small $\beta$. In particular, we will always obtain~\eqref{eq:GP defi a} as interaction parameter. The main advantage compared to the Lieb-Seiringer-Yngvason method~\cite{LieYng-98,LieSeiYng-00,LieSeiYng-05,LieSei-06} is that we can in some cases avoid the assumption $w\geq 0$, always made in these papers (see also~\cite{LieSeiSolYng-05}). In particular we present the first derivation of attractive NLS functionals\footnote{One often uses the non-linear optics vocabulary to distinguish the repulsive and attractive cases : repulsive = defocusing, attractive = focusing.} in 1D and 2D.

Recently, a blend of the following arguments, the tools of~\cite{LieYng-98,LieSeiYng-00,LieSeiYng-05,LieSei-06} and new a priori estimates for the many-body ground state has allowed to extend the analysis to the GP regime~\cite{NamRouSei-15}. One can also adapt the method to deal with particles with statistics slightly deviating from the bosonic one (``almost bosonic anyons''), cf~\cite{LunRou-15}.

\subsection{Statements and discussion}\label{sec:GP statement}\mbox{}\\\vspace{-0.4cm}

We take comfortable assumptions on $w$: 
\begin{equation}\label{eq:GP asum w}
w\in L^\infty(\R^d,\R) \mbox{ and } w(x) = w(-x) .
\end{equation}
Without loss of generality we assume 
$$ \sup_{\R ^d} |w| = 1$$
to simplify some expressions. We also assume that 
\begin{equation}\label{eq:replace delta}
x \mapsto (1+|x|) w (x) \in L ^1 (\R ^d), 
\end{equation}
which simplifies the replacement of $w_N$ by a Dirac mass. As usual we use the same notation $w_N$ for the interaction potential~\eqref{eq:w_N} and the multiplication operator by $w_N (x-y)$ acting on $L ^2 (\R ^{2d})$.

\medskip

For $\beta = 0$ we have shown previously that
\begin{equation}\label{eq:GP Hartree lim}
\lim_{N\to \infty} \frac{E(N)}{N} = \eH. 
\end{equation}
We now deal with the case $0 < \beta < \beta_0 (d,s) < 1$ where we obtain the ground state energy of~\eqref{eq:GP intro nls}:
\begin{equation}\label{eq:GP eNLS}
\eNLS := \inf_{\norm{\psi}_{L ^2 (\R ^d)} = 1} \ENLS [\psi]
\end{equation}
with $a$ defined in~\eqref{eq:GP defi a}. Because of the local non-linearity, the NLS theory is more delicate than Hartree's theory. We shall need some structure assumptions on the interactions potential (see~\cite{LewNamRou-14} for a more thorough discussion):  
\begin{itemize}
\item When $d=3$, it is well-known that a ground state for~\eqref{eq:GP intro nls} exists if and only if $a \geq 0$. This is because the cubic non-linearity is super-critical\footnote{One may for example consult~\cite{KilVis-08} for a classification of non-linearities in the NLS equation.} in this case. Moreover, it is easy to see that $N ^{-1} E(N) \to -\infty$ if $w$ is negative at the origin. The optimal assumption happens to be a classical stability notion for the interaction potential:
\begin{equation}\label{eq:GP hyp 3}
\iint_{\R ^d \times \R ^d} \rho (x) w(x-y) \rho(y) dx dy \geq 0, \mbox{ for all } \rho \in L^1 (\R ^d), \: \rho \geq 0. 
\end{equation}
This is satisfied as soon as $w= w_1 + w_2$, $w_1 \geq 0$ and $\hat{w_2}\geq 0$ with $\hat{w_2}$ the Fourier transform of $w_2$. This assumption clearly implies $\int_{\R ^d} w \geq 0$, and one may easily see by changing scales that if it is violated for a certain $\rho \geq 0$, then $E(N)/N \to -\infty$.
\item When $d=2$, the cubic non-linearity is critical. A minimizer for~\eqref{eq:GP intro nls} exists if and only if $a > - a ^*$ with 
\begin{equation}\label{eq:GP a star}
a ^* = \norm{Q}_{L ^2} ^2  
\end{equation}
where $Q\in H ^1 (\R^2)$ is the unique~\cite{Kwong-89} (modulo translations) solution to 
\begin{equation}\label{eq:GP defi Q}
-\Delta Q + Q - Q ^3 = 0.  
\end{equation}
The critical interaction parameter $a ^*$ is the best possible constant in the interpolation inequality
\begin{equation}\label{eq:GP interpolation}
\int_{\R ^2} |u| ^4 \leq C \left(\int_{\R^2} |\nabla u | ^2\right) \left(\int_{\R ^2} |u | ^2\right). 
\end{equation}
We refer to~\cite{GuoSei-13,Maeda-10} for the existence of a ground state and to~\cite{Weinstein-83} for the inequality~\eqref{eq:GP interpolation}. A pedagogical discussion of this kind of subjects may be found in~\cite{Frank-14}.

In view of the above conditions, it is clear that in 2D we have to assume $\int w \geq - a ^*$, but this is in fact not sufficient: as in 3D, if the interaction potential is sufficiently negative at the origin, one may see that $N ^{-1} E(N) \to -\infty$. The appropriate assumption is now
\begin{equation}\label{eq:GP hyp 2}
\norm{u}_{L^2}^2\norm{\nabla u}_{L^2}^2+\frac12\iint_{\R^2\times \R^2}|u(x)|^2|u(y)|^2 w(x-y)\,dx\,dy > 0
\end{equation}
for all $u\in H^1(\R^2)$. Replacing $u$ by $\lambda u(\lambda x)$ and taking the limit $\lambda\to 0$ we obtain
$$\norm{u}_{L^2}^2\norm{\nabla u}_{L^2}^2+\frac12\left(\int_{\R^2}w\right)\int_{\R^2}|u(x)|^4\,dx\geq0,\qquad\forall u\in H^1(\R^2),$$
which implies that 
$$\int_{\R^2}w(x)\,dx\geq-a^*.$$
A scaling argument shows that if the strict inequality in~\eqref{eq:GP hyp 2} is reversed for a certain $u$, then $E(N)/N \to - \infty$. The case where equality may occur in~\eqref{eq:GP hyp 2} is left aside in these notes. It requires a more thorough analysis, see~\cite{GuoSei-13} where this is provided at the level of the NLS functional.
\item When $d=1$, the cubic non-linearity is sub-critical and there is always a minimizer for the functional~\eqref{eq:GP intro nls}. In this case we need no further assumptions. 
\end{itemize}

We may now state the 

\begin{theorem}[\textbf{Derivation of NLS ground states}]\label{thm:deriv nls}\mbox{}\\
Assume that either $d=1$, or $d=2$ and~\eqref{eq:GP hyp 2} holds, or $d= 3$ and~\eqref{eq:GP hyp 3} holds. Further suppose that 
\begin{equation}\label{eq:GP beta 0}
0 < \beta \leq \beta_0 (d,s) :=\frac{s}{2ds + s d^2 + 2d ^2} < 1.
\end{equation}
where $s$ is the exponent appearing in~\eqref{eq:GP asum V}. We then have  
\begin{enumerate}
\item \underline{Convergence of the energy:}
\begin{equation}\label{eq:GP ener convergence}
\frac{E(N)}{N} \to \eNLS. 
\end{equation}
\item \underline{Convergence of states:} Let $\Psi_N$ be a ground state of~\eqref{eq:GP start hamil} and 
$$ \gamma_N ^{(n)}:= \tr_{n+1\to N} \left[ |\Psi_N\rangle \langle \Psi_N | \right] $$
its reduced density matrices. Along a subsequence we have, for all $n\in \N$,  
\begin{equation}\label{eq:GP state convergence}
\lim_{N\to \infty}\gamma_N ^{(n)} = \int_{u\in \MNLS} d\mu (u) |u ^{\otimes n} \rangle \langle u ^{\otimes n}| 
\end{equation}
strongly in  $\gS ^1 (L ^2 (\R ^{dn}))$. Here $\mu$ is a probability measure supported on
\begin{equation}\label{eq:GP nls set}
\MNLS = \left\{ u \in L ^2 (\R ^d), \norm{u}_{L ^2} = 1, \ENLS [u] = \eNLS \right\}. 
\end{equation}
In particular, when~\eqref{eq:GP intro nls} has a unique minimizer $\uNLS$ (up to a constant phase), we have for the whole sequence
\begin{equation}\label{eq:GP BEC}
\lim_{N\to \infty} \gamma_N ^{(n)} =|\uNLS ^{\otimes n} \rangle \langle \uNLS ^{\otimes n}|.
\end{equation}
\end{enumerate}
\end{theorem}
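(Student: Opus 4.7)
\textbf{Proof plan for Theorem~\ref{thm:deriv nls}.}

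The strategy is to apply the quantitative finite-dimensional de Finetti theorem (Theorem~\ref{thm:DeFinetti quant}) after projecting the problem onto a finite-dimensional subspace, then carefully track how large this subspace can be allowed to grow with $N$. The upper bound $E(N)/N \leq \eNLS + o(1)$ is obtained by the trial state $\uNLS^{\otimes N}$: expanding $\langle \uNLS^{\otimes N}, H_N \uNLS^{\otimes N}\rangle$ gives $\EH^{w_N}[\uNLS]$, and since $w_N \wto a\delta_0$ while $|\uNLS|^2 \in H^1 \cap L^\infty$ under our assumptions, $\iint |\uNLS(x)|^2 w_N(x-y)|\uNLS(y)|^2\,dx\,dy \to a\int|\uNLS|^4$ with an error of order $N^{-\beta}$ controlled by~\eqref{eq:replace delta}.

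For the matching lower bound I would proceed in three steps. First, establish a priori bounds: under~\eqref{eq:GP hyp 3} or~\eqref{eq:GP hyp 2} (and trivially for $d=1$), a scaling/coercivity argument on $H_N$ gives the stability estimate $E(N)\geq -CN$, which combined with the trapping assumption~\eqref{eq:GP asum V} yields $\Tr[h\,\gamma_N^{(1)}]\leq C$ uniformly in $N$, where $h:=-\Delta + V$. Second, introduce the spectral projector $P_\Lambda := \mathds{1}(h\leq \Lambda)$ onto the finite-dimensional space of low-energy one-body modes. By Weyl/Lieb--Thirring asymptotics under~\eqref{eq:GP asum V},
\begin{equation*}
d_\Lambda := \dim(P_\Lambda\gH) \leq C\,\Lambda^{d/2 + d/s}.
\end{equation*}
The a priori kinetic bound gives $\Tr[(\1-P_\Lambda)\gamma_N^{(1)}] \leq C\Lambda^{-1}$, so that replacing $\gamma_N^{(2)}$ by its localization $P_\Lambda^{\otimes 2}\gamma_N^{(2)} P_\Lambda^{\otimes 2}$ in the energy costs $O(\Lambda^{-1})$ (plus a kinetic error controlled by the same token, together with $\sup|w_N|\lesssim N^{d\beta}$ balanced against off-diagonal terms).

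Third, apply Theorem~\ref{thm:DeFinetti quant} to the bosonic state $P_\Lambda^{\otimes N}\Gamma_N P_\Lambda^{\otimes N}/\mathrm{tr}(\cdot)$ on $(P_\Lambda\gH)_s^{\otimes N}$: this yields a probability measure $\mu_N$ on $SP_\Lambda\gH$ with
\begin{equation*}
\Bigl\|P_\Lambda^{\otimes 2}\gamma_N^{(2)} P_\Lambda^{\otimes 2} - \int|u^{\otimes 2}\rangle\langle u^{\otimes 2}|\,d\mu_N(u)\Bigr\|_{\gS^1} \leq \frac{C\,d_\Lambda}{N}.
\end{equation*}
Inserting this into the energy gives, after controlling both the truncation error and the $\gS^1$ error by the operator norm of $h_1+h_2+w_N$ on $P_\Lambda^{\otimes 2}\gH$, the estimate
\begin{equation*}
\frac{E(N)}{N} \geq \int_{SP_\Lambda\gH} \Bigl( \langle u,hu\rangle + \tfrac{1}{2}\iint |u(x)|^2 w_N(x-y)|u(y)|^2\,dx\,dy \Bigr) d\mu_N(u) - \varepsilon(N,\Lambda),
\end{equation*}
where $\varepsilon(N,\Lambda)$ collects all error contributions. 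Finally, for $u\in SP_\Lambda\gH$ one has pointwise control $\|u\|_{L^\infty}\leq C\Lambda^{d/4}$ (via the elliptic regularity $\|h^{1/2}u\|\leq\Lambda^{1/2}$ and Sobolev), which allows one to replace $w_N$ by $a\delta_0$ with an additional error of order $N^{-\beta}\Lambda^{d/2}$ using~\eqref{eq:replace delta}; the resulting functional is bounded below by $\eNLS$ after checking that the stability assumption~\eqref{eq:GP hyp 3}/\eqref{eq:GP hyp 2} survives the finite-dimensional truncation.

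The choice $\Lambda = N^\alpha$ with $\alpha$ slightly smaller than $2s/(d(s+2))$ balances the de Finetti error $d_\Lambda/N$ against the truncation error $\Lambda^{-1}$, and the constraint $\beta \leq \beta_0(d,s)$ in~\eqref{eq:GP beta 0} is precisely what guarantees that the replacement error $N^{-\beta}\Lambda^{d/2}$ also vanishes. The convergence of states~\eqref{eq:GP state convergence} follows by an extraction argument applied to the measures $\mu_N$: after passing to a subsequence and $\Lambda\to\infty$, the limit probability measure $\mu$ is supported on $\MNLS$ because equality must hold in all the estimates above. \textbf{The main obstacle} is this last replacement step: one must quantitatively exchange the finite-dimensional cutoff $\Lambda$ (which one would like to take as large as possible to reduce the truncation error) with the concentration of $w_N$ onto a Dirac mass at scale $N^{-\beta}$ (which forces $\Lambda$ to stay small); the interplay of these two constraints is what produces the (non-optimal) upper bound $\beta_0(d,s)$ on the admissible range of $\beta$, and a finer treatment of short-range correlations would be required to push the method up to $\beta=1$.
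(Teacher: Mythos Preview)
Your third step contains a genuine gap. The object $P_\Lambda^{\otimes N}\Gamma_N P_\Lambda^{\otimes N}/\tr(\cdot)$ is not the right localized state: its two-body reduced density matrix is \emph{not} $P_\Lambda^{\otimes 2}\gamma_N^{(2)}P_\Lambda^{\otimes 2}$, because the partial trace over variables $3,\ldots,N$ still sees the projectors acting on those variables. Worse, the normalization $\tr[P_\Lambda^{\otimes N}\Gamma_N P_\Lambda^{\otimes N}]$ can be exponentially small in $N$: already for a product state $u^{\otimes N}$ with $\|P_\Lambda u\|^2=1-\epsilon$ and $\epsilon\sim\Lambda^{-1}\sim N^{-\alpha}$ (which is all your a priori bound guarantees), this trace equals $(1-\epsilon)^N\to 0$ whenever $\alpha<1$. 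So the state you propose to feed into Theorem~\ref{thm:DeFinetti quant} may carry essentially no information.

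The paper's fix is precisely the Fock-space localization of Chapter~\ref{sec:locFock}: one builds the localized state $G_N^-$ on $\cF_s^{\leq N}(P_\Lambda\gH)$ whose $n$-body density matrices are \emph{defined} to be $P_\Lambda^{\otimes n}\gamma_N^{(n)}P_\Lambda^{\otimes n}$ (Lemma~\ref{lem:Fock loc}). This state lives in a superposition of $k$-particle sectors $G_{N,k}^-$, and the quantitative de Finetti theorem is applied to each $G_{N,k}^-$ separately, producing a de Finetti measure weighted over $k$ (equation~\eqref{eq:GP def-mu-N-localized} in the paper). The $P_+$-localized energy is then controlled by the operator inequality $H_2\geq P_-^{\otimes 2}H_2P_-^{\otimes 2}+\tfrac{\Lambda}{4}P_+^{\otimes 2}-CN^{2d\beta}/\Lambda$ (Lemma~\ref{lem:GP localize-energy}), and the duality~\eqref{eq:Fock funda rel} between $G_N^-$ and $G_N^+$ closes the estimate. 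A secondary difference: the paper first compares $E(N)/N$ to the $N$-dependent Hartree energy $\eH$ (with $w_N$ still present), obtaining the quantitative bound of Theorem~\ref{thm:error Hartree}, and only afterwards replaces $w_N\to a\delta_0$ at the level of the one-body functional (Lemma~\ref{lem:GP Hartree NLS}); this separation is cleaner than trying to pass to the Dirac mass inside the de Finetti integral.
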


Uniqueness of $\uNLS$ is ensured if $a \geq 0$ or $|a|$ is small. If these conditions are not satisfied, one can show that there are several minimizers in certain trapping potentials having degenerate minima~\cite{AshFroGraSchTro-02,GuoSei-13}. 

\begin{remark}[On the derivation of NLS functionals.]\label{rem:GP NLS}\mbox{}\\
\vspace{-0.4cm}
\begin{enumerate}
\item The assumption $\beta < \beta_0 (d,s)$ is dictated by the method of proof but it is certainly not optimal,  see~\cite{LieSei-06,LieSeiYng-00,LieYng-98,LieYng-01,NamRouSei-15}. One may relax a bit the condition on $\beta$, at the price of heavier computations, something we prefer to avoid in these notes, see~\cite{LewNamRou-14}. In 1D, one may obtain the result for any $\beta >0$. 
\item It is in fact likely that one could replace $\beta_0(d,s)$ by $\beta(d,\infty)$ for any $s>0 $, using localization methods from Chapter~\ref{sec:locFock}. Indeed, ground states always essentially live in a bounded region. We refrain from pursuing this in order not to add another layer of localization (in space) to the proof: we will already rely a lot on finite dimensional localization. In any case, $\beta < \beta(d,\infty)$ is not optimal either.     
\item Let us inspect in more details the conditions on $\beta_0 (d,s)$ we obtain. For the case of a quadratic trapping potential $V(x) = |x| ^2$ for example, we can afford $\beta < 1/24$ in 3D, $\beta < 1/12$ in 2D and $\beta < 1/4$ in 1D. The method adapts with no difficulty to the case of particles in a bounded domain which corresponds to setting formally $s=\infty$. We then otain $\beta_0(d,s) = 1/15$ in 3D, $1/8$ in 2D and $1/3$ in 1D. Improving these thresholds in the case of (partially) attractive potentials remains an open problem.
\item When $\beta$ is smaller than the critical $\beta_0 (d,s)$ by a given amount, the method gives quantitative estimates for the convergence~\eqref{eq:GP ener convergence}, see below. We refer to~\cite[Remark 4.2]{LewNamRou-14} for a discussion of the cases where a convergence rate for the minimizer can be deduced, based on tools from~\cite{CarFraLie-14,Frank-14} and assumptions on the behavior of the NLS functional.\hfill\qed 
\end{enumerate}
\end{remark}

The proof of this result occupies the rest of the chapter. We proceed in two steps. The bulk of the analysis consists in obtaining a quantitative estimate of the discrepancy between the $N$-body energy per particle $N ^{-1} E(N)$ and the Hartree energy 
\begin{equation}\label{eq:GP hartree e}
\eH := \inf_{\norm{u}_{L ^2 (\R ^d)} = 1} \EH [u] 
\end{equation}
given by minimizing the functional
\begin{equation}\label{eq:GP hartree f}
\EH [u] := \int_{\R ^d} \left(\left|\nabla u \right| ^2 + V |u| ^2 \right) dx + \frac{1}{2} \iint_{\R ^d \times \R ^d} |u(x)| ^2 w_N (x-y) |u(y)| ^2 dx dy.  
\end{equation}
The latter objects still depend on $N$ when $\beta >0$, whence the necessity to avoid compactness arguments and obtain precise estimates. Once the link between $N ^{-1} E(N)$ and~\eqref{eq:GP hartree e} is established, there remains to estimate the difference $|\eNLS - \eH|$, which is a much easier task. Most of the restrictive assumptions we have made on $w$ are used only in this second step. The estimates on the difference $|\eH-N ^{-1} E(N)|$ are valid without assuming~\eqref{eq:replace delta} and~\eqref{eq:GP hyp 3} or~\eqref{eq:GP hyp 2}. They thus give some information on the divergence of $N ^{-1} E(N)$ in the case where $\eH$ does not converge to~$\eNLS$:

\begin{theorem}[\textbf{Quantitative derivation of Hartree's theory}]\label{thm:error Hartree}\mbox{}\\
Assume~\eqref{eq:GP asum V} and~\eqref{eq:GP asum w}. Let 
\begin{equation}\label{eq:GP rate intro}
t :=  \frac{1+2d\beta}{2 + d/2 + d/s}.
\end{equation}
If
\begin{equation}\label{eq:GP asum t}
t > 2d \beta ,
\end{equation}
then, for all $d\geq 1$ there exists a constant $C_{d}>0$ such that
\begin{equation}\label{eq:GP energy estimate}
\eH \geq \frac{E(N)}{N} \geq \eH - C_{d} N ^{-t + 2 d\beta}.  
\end{equation} 
\end{theorem}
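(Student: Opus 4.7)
The upper bound $E(N)/N \leq \eH$ is obtained immediately from the variational principle with the trial state $u_{\rm H}^{\otimes N}$, where $u_{\rm H}$ minimizes $\EH$. For the lower bound I rewrite, as in~\eqref{eq:quant ener red mat},
\[
\frac{E(N)}{N} = \tr_{\gH}[T\gamma_N^{(1)}] + \frac{1}{2}\tr_{\gH^2}[w_N\gamma_N^{(2)}],
\]
where $T = -\Delta + V$ and $\gamma_N^{(n)}$ are the reduced density matrices of a ground state $\Psi_N$. The strategy is to truncate to a finite-dimensional subspace so that the quantitative de Finetti theorem (Theorem~\ref{thm:DeFinetti quant}) applies, then to balance the truncation error against the de Finetti error. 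A standard a priori bound, obtained by testing with $u_{\rm H}^{\otimes N}$ and using assumption~\eqref{eq:GP asum V}, gives $\tr[T\gamma_N^{(1)}] \leq C$ uniformly in $N$.

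Let $P_K$ denote the spectral projector of $T$ onto the first $K$ eigenvectors and $P_K^\perp = 1 - P_K$. The trapping assumption~\eqref{eq:GP asum V} and Weyl's law for $-\Delta + V$ yield the asymptotics $\lambda_{K+1} \geq c K^{1/(d/2 + d/s)}$ for the $(K+1)$-th eigenvalue. Writing $1 = P_K^{\otimes 2} + (1 - P_K^{\otimes 2})$ and using the operator bound~\eqref{eq:T controls w} (satisfied by $w_N$ thanks to the $L^\infty$ bound on $w$), I split
\[
\tr[w_N\gamma_N^{(2)}] = \tr[w_N P_K^{\otimes 2}\gamma_N^{(2)}P_K^{\otimes 2}] + R_N(K),
\]
where the remainder $R_N(K)$ consists of ``off-diagonal'' terms involving at least one factor of $P_K^\perp$. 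Using $\|w_N\|_\infty \leq N^{d\beta}$, Cauchy--Schwarz on $\gS^2$, and the a priori bound $\tr[P_K^\perp \gamma_N^{(1)}] \leq C\lambda_{K+1}^{-1}$, I can estimate $|R_N(K)| \leq C N^{d\beta} \lambda_{K+1}^{-1/2}$, and a similar argument bounds the analogous splitting of $\tr[T\gamma_N^{(1)}]$ from below by $\tr[T P_K\gamma_N^{(1)}P_K]$ minus a comparable error.

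On the projected sector, Theorem~\ref{thm:DeFinetti quant} applied to the (renormalized) state $P_K^{\otimes N}\Gamma_N P_K^{\otimes N}$, viewed as a bosonic state on $(P_K\gH)^N$, yields a probability measure $\mu_N$ on the unit sphere of $P_K\gH$ such that the two-body marginal is within trace-distance $C K/N$ of $\int |u^{\otimes 2}\rangle\langle u^{\otimes 2}|\,d\mu_N(u)$. Inserting this in the energy and using once more $\|w_N\|_\infty \leq N^{d\beta}$ for the interaction term (and boundedness of $T$ restricted to $P_K\gH$ where $\|T P_K\|\leq \lambda_K$), I obtain
\[
\frac{E(N)}{N} \geq \int \EH[u]\,d\mu_N(u) - C\,\frac{K}{N}\,N^{d\beta}\lambda_K - C N^{d\beta} \lambda_{K+1}^{-1/2} \geq \eH - C\bigl(KN^{d\beta-1}\lambda_K + N^{d\beta}\lambda_{K+1}^{-1/2}\bigr),
\]
after using $\int\EH[u]\,d\mu_N(u) \geq \eH \int d\mu_N = \eH$ (possibly up to a small additive error to account for the mass not being exactly $1$ after projection, which is itself controlled by $\lambda_{K+1}^{-1}$).

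It remains to optimize $K$. With $\lambda_K \sim K^{1/(d/2+d/s)}$, balancing the two error terms leads to the choice $K \sim N^{(1-2d\beta)/(2+d/2+d/s)}$, yielding the announced rate $N^{-t+2d\beta}$ with $t = (1+2d\beta)/(2+d/2+d/s)$; condition~\eqref{eq:GP asum t} is precisely what guarantees this rate is a genuine decay. The main technical obstacle is the control of the off-diagonal remainder $R_N(K)$: because $\|w_N\|_\infty$ grows as $N^{d\beta}$, one cannot afford a naive operator bound, and the argument genuinely uses the Hilbert--Schmidt smallness of $P_K^\perp \gamma_N^{(1)}$ combined with the $\gS^2$ estimate $\|w_N P_K^{\otimes 2}\|_{\gS^2}\lesssim N^{d\beta} K$, together with the kinetic a priori bound to trade factors of $N^{d\beta}$ against inverse powers of $\lambda_{K+1}$. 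Once this is in place, the rest of the proof is bookkeeping of powers of $N$.
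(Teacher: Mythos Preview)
Your overall strategy --- project onto a finite-dimensional low-energy subspace, apply the quantitative de Finetti theorem there, and balance against a truncation error --- is the right one and matches the paper. But the de Finetti step, as you have written it, has a genuine gap. The renormalized state $P_K^{\otimes N}\Gamma_N P_K^{\otimes N}$ is indeed a bosonic state on $(P_K\gH)_s^N$, but its two-body marginal is \emph{not} $P_K^{\otimes 2}\gamma_N^{(2)}P_K^{\otimes 2}$: the partial trace over the last $N-2$ variables does not commute with the remaining factor $P_K^{\otimes (N-2)}$. So Theorem~\ref{thm:DeFinetti quant} applied to that state approximates the wrong object. Nor is the ``small additive error to account for the mass not being exactly $1$'' under control: $1-\tr[P_K^{\otimes N}\Gamma_N P_K^{\otimes N}]$ is bounded by $N\tr[P_K^\perp\gamma_N^{(1)}]\leq CN\lambda_{K+1}^{-1}$, not by $\lambda_{K+1}^{-1}$, and in the relevant regime this is $\gg 1$. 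The paper fixes this via the Fock-space localization of Lemma~\ref{lem:Fock loc}: the localized state $G_N^{P_-}$ lives on $\cF_s^{\leq N}(P_-\gH)$ and has reduced matrices \emph{exactly} equal to $P_-^{\otimes n}\gamma_N^{(n)}P_-^{\otimes n}$; one then applies Theorem~\ref{thm:DeFinetti quant} to each $k$-particle component $G_{N,k}^-$ and sums with the combinatorial weights $\binom{N}{2}^{-1}\binom{k}{2}$ (Lemma~\ref{lem:GP deF-localized-state}).

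A second, related issue is the treatment of the truncation remainder and the missing mass. The paper does not estimate these by Cauchy--Schwarz on $R_N(K)$; instead it proves the operator inequality (Lemma~\ref{lem:GP localize-energy})
\[
H_2 \geq P_-^{\otimes 2}H_2 P_-^{\otimes 2} + \tfrac{\Lambda}{4}P_+^{\otimes 2} - \tfrac{2N^{2d\beta}}{\Lambda},
\]
which both controls the off-diagonal pieces (with error $N^{2d\beta}/\Lambda$, not $N^{d\beta}\lambda_{K+1}^{-1/2}$) and retains a \emph{positive} $P_+$-term. This positive term compensates for the mass defect of the de Finetti measure through the fundamental relation~\eqref{eq:Fock funda rel} and the elementary inequality $p\lambda^2+q(1-\lambda)^2\geq p-p^2/q$ (Lemma~\ref{lem:GP main terms}). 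Balancing $N^{2d\beta}/\Lambda$ against $\Lambda N_\Lambda/N \sim \Lambda^{1+d/s+d/2}/N$ then gives $\Lambda=N^t$ with the stated $t$; your claimed error pair $\bigl(KN^{d\beta-1}\lambda_K,\,N^{d\beta}\lambda_{K+1}^{-1/2}\bigr)$ does not balance to that exponent.
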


\begin{remark}[Explicit estimates in the mean-field limit]\label{rem:GP Hartree quant}\mbox{}\\
\vspace{-0.4cm}
\begin{enumerate}
\item Condition~\eqref{eq:GP asum t} is satisfied if $0 \leq \beta < \beta_0 (d,s)$. For the proof of Theorem~\ref{thm:deriv nls} we are only interested in cases where $|\eH|$ is bounded independently of $N$, and~\eqref{eq:GP energy estimate} then gives non-trivial information only if~\eqref{eq:GP asum t} holds.
\item The result is valid in the mean-field case where $\beta = 0$ and thus $\eH$ does not depend on $N$. We then obtain explicit estimates improving on Theorem~\ref{thm:confined}. These present a novelty in the case where Hartee's functional has several minimizers, or a unique degenerate minimizer. In other cases\footnote{The simplest example ensuring uniqueness and non-degeneracy is that where $\hat{w}> 0$ with  $\hat{w}$ the Fourier transform of $w$.}, better estimates are known, with an error of order $N ^{-1}$ given by Bogoliubov's theory~\cite{LewNamSerSol-13,Seiringer-11,GreSei-12,DerNap-13}. See~\cite{NamSei-14} for extensions of Bogoliubov's theory to cases of mutiple and/or degenerate minimizers.
\end{enumerate}
 \hfill\qed
\end{remark}

The proof of Theorem~\ref{thm:error Hartree} occupies Section~\ref{sec:GP Hartree bounds}. We then complete the proof of Theorem~\ref{thm:deriv nls} in Section~\ref{sec:GP Hartee to NLS}.

\subsection{Quantitative estimates for Hartree's theory}\label{sec:GP Hartree bounds}\mbox{}\\\vspace{-0.4cm}

The main idea of the proof is to apply Theorem~\ref{thm:DeFinetti quant} on a low-energy eigenspace of the one-body operator
$$T = -\Delta + V $$
acting on $\gH = L ^2 (\R ^d)$. Assumption~\eqref{eq:GP asum V} ensures that the resolvent of this operator is compact and thus that its specturm is made of a sequence of eigenvalues tending to infinity. We denote $P_-$ and $P_+$ the spectral projectors corresponding to energies respectively below and above a given truncation $\Lambda$:
\begin{equation}\label{eq:GP projectors}
P_- = \1_{(-\infty, \, \Lambda)} \left( T \right), \quad P_+ = \1_{\gH} - P_-=P_- ^{\perp}. 
\end{equation}
We also denote
\begin{equation}\label{eq:NT}
N_\Lambda := \dim (P_- \gH ) = \mbox{ number of eigenvalues of } T \mbox{ smaller than } \Lambda.
\end{equation}
Since the precision of the quantitative de Finetti theorem depends on the dimension of the space on which it applies, it is clearly necessary to have at our disposal a convenient control of~$N_\Lambda$. The tools to achieve this are well-known under the name of Lieb-Thirring inequalities, or rather Cwikel-Lieb-Rosenblum in this case. We shall use the following lemma:

\begin{lemma}[\textbf{Number of bound states of a Schr\"odinger operator}] \label{lem:GP nb bound states}\mbox{}\\
Let $V$ satisfy~\eqref{eq:GP asum V}. For all $d\geq 1$, there exists a constant $C_{d}>0$ such that, for all $\Lambda$ large enough
\begin{equation}\label{eq:GP bound NT}
N_\Lambda \leq C_{d} \Lambda ^{d/s + d/2}. 
\end{equation}
\end{lemma}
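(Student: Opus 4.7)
The plan is to reduce the counting problem to a semiclassical phase-space estimate on a simple confining model operator, in three steps.

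First, I would invoke operator monotonicity. Since $V(x)\geq c|x|^s-C$ pointwise, we have the quadratic-form inequality $T\geq -\Delta+c|x|^s-C$ on $L^2(\R^d)$, hence by the Courant--Fischer min-max principle the $n$-th eigenvalue of $T$ dominates that of $-\Delta+c|x|^s-C$. Consequently
\begin{equation*}
N_\Lambda \;\leq\; N'_{\Lambda+C}, \qquad N'_E \;:=\; \#\bigl\{\text{eigenvalues of }-\Delta+c|x|^s\text{ below }E\bigr\}.
\end{equation*}
So it suffices to prove $N'_E\leq C_d\,E^{d/s+d/2}$ for $E$ large.

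Second, I would establish a Weyl-type semiclassical upper bound of the shape
\begin{equation*}
N'_E \;\leq\; C_d \int_{\R^d}\bigl(E-c|x|^s\bigr)_+^{d/2}\,dx.
\end{equation*}
In dimension $d\geq 3$ this is an immediate consequence of the Cwikel--Lieb--Rosenblum inequality, applied after writing the count of eigenvalues of $-\Delta+c|x|^s$ below $E$ as the count of non-positive eigenvalues of $-\Delta - (E-c|x|^s)_+ + (c|x|^s-E)_+$ and dropping the positive part. A dimension-uniform substitute, which is what I would actually use in order to cover $d=1,2$ simultaneously, is Neumann bracketing: partition $\R^d$ into unit cubes $Q_j$, replace $c|x|^s$ on $Q_j$ by $V_j:=\min_{Q_j}c|x|^s$, and use the elementary count $\lesssim (1+(E-V_j)_+)^{d/2}$ of Neumann eigenvalues of $-\Delta$ on a unit cube below $E-V_j$. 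Summing over the $O(E^{d/s})$ cubes on which $V_j\leq E$ reproduces the displayed integral up to a universal constant.

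Third, a direct rescaling $x=(E/c)^{1/s}y$ evaluates the integral,
\begin{equation*}
\int_{\R^d}\bigl(E-c|x|^s\bigr)_+^{d/2}dx \;=\; c^{-d/s}\,E^{d/s+d/2}\int_{|y|\leq 1}(1-|y|^s)^{d/2}dy \;=\; c_{d,s}\,E^{d/s+d/2},
\end{equation*}
which, combined with the previous two steps and absorbing the harmless shift $E=\Lambda+C$, gives $N_\Lambda\leq C_d\,\Lambda^{d/s+d/2}$ for $\Lambda$ large enough. The only genuine obstacle is the semiclassical upper bound in low dimensions, where CLR fails because $-\Delta$ has a virtual level in $d=1,2$; Neumann bracketing circumvents this at the cost of the sharp semiclassical constant, but only the scaling in $\Lambda$ matters for the application.
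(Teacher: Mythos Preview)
Your argument is correct. The monotonicity reduction to $-\Delta+c|x|^s$, the CLR step in $d\geq 3$, the Neumann bracketing substitute in $d\leq 2$, and the scaling computation of the phase-space integral are all sound; the crude estimate ``$O(E^{d/s})$ cubes, each contributing at most $C_d E^{d/2}$'' already suffices for the exponent, so you do not even need the Riemann-sum comparison to the integral to be sharp.

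The paper's own proof is essentially a list of citations: for $d\geq 3$ it invokes the CLR bound exactly as you do, while for $d\leq 2$ it points to results of Combes--Schrader--Seiler and to Simon's functional-integration bounds rather than working out an argument. Your Neumann-bracketing route for low dimensions is a genuinely different (and more elementary and self-contained) alternative to those references: it avoids any path-integral or Golden--Thompson machinery at the price of the sharp constant, which is irrelevant here. So your write-up is in fact more complete than the paper's, and the two agree on the one substantive step ($d\geq 3$ via CLR).
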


\begin{proof}
When $d\geq 3$, this is an application of~\cite[Theorem 4.1]{LieSei-09}. For $d\leq 2$, the result follows easily by applying~\cite[Theorem 2.1]{ComSchSei-78} or~\cite[Theorem 15.8]{Simon-05}, see~\cite{LewNamRou-14} for details. The familiarized reader can convince herself that the right side of~\eqref{eq:GP bound NT} is proportional to the expected number of energy levels in the semi-classical approximation. We refer to ~\cite[Chapter 4]{LieSei-09} for a more thorough discussion of this kind of inequalities.
\end{proof}

In the sequel we shall argue as follows:
\begin{enumerate}
\item The eigenvectors of $T$ form an orthogonal basis of $L ^2 (\R ^d)$ on which the $N$ particles should be distributed. The methods of Chapter~\ref{sec:locFock} provide the right tools to analyze the repartition of the particles between $P_- \gH$ and $P_+ \gH$.
\item If the truncation $\Lambda$ is chosen large enough, particles living on excited energy levels will have a much larger energy per unit mass than the Hartree energy we are aiming at. There can thus only be a small number of particles living on excited energy levels.
\item Particles living on $P_- \gH$ form a state on $\cF_s  ^{\leq N}(P_- \gH)$ (truncated bosonic Fock space built on $P_- \gH$). Since $P_- \gH$ has finite dimension, one may use Theorem~\ref{thm:DeFinetti quant} to describe these particles. These will give the Hartree energy, up to an error depending on $\Lambda$ and the expected number of  $P_-$-localized particles. More precisely, in view of~\eqref{eq:error finite dim deF}, we should expect an error of the form
\begin{equation}\label{eq:GP error heurist}
 \frac{(\Lambda + N ^{d\beta})\times N_\Lambda}{N_-}, 
\end{equation}
i.e. dimension of the localized space $\times$ operator norm of the projected Hamiltonian $/$ number of localized particles.
\item We next have to optimize over $\Lambda$, keeping the following heuristic in mind: if $\Lambda$ is large, there will be many $P_-$-localized particles, which favors the denominator of~\eqref{eq:GP error heurist}. On the other hand, taking $\Lambda$ small favors the numerator of~\eqref{eq:GP error heurist}. Picking an optimal value to balance these two effects leads to the error terms of Theorem~\ref{thm:error Hartree}.
\end{enumerate}

\begin{proof}[Proof of Theorem~\ref{thm:error Hartree}.] The upper bound in~\eqref{eq:GP energy estimate} is as usual proven by taking a trial state of the form $u ^{\otimes N}$. Only the lower bound is non-trivial. We proceed in several steps.

\medskip

\noindent\textbf{Step 1, truncated Hamiltonian.} We first have to convince ourselves that it is legitimate to think only in terms of $P_+$ and $P_-$-localized particles, as we did above. This is the oject of the following lemma, which bounds from below the two-body Hamiltonian 
\begin{equation}\label{eq:GP two body hamil}
H_2 = T\otimes \one + \one \otimes T + w_N 
\end{equation}
in terms of its $P_-$-localization, $P_- \otimes P_- \, H_2 \, P_-\otimes P_-$ and a crude bound on the energy of the $P_+$-localized particles.

\begin{lemma}[\textbf{Truncated Hamiltonian}]\label{lem:GP localize-energy}\mbox{}\\
Assume that $\Lambda \ge C N^{d\beta}$ for a large enough constant $C>0$. Then   
\begin{align} \label{eq:GP H2-localized-error}
H_2 \ge  P_-^{\otimes 2} H_2 P_-^{\otimes 2} + \frac{\Lambda}{4} P_+^{\otimes 2} - \frac{2 N^{2d\beta}}{\Lambda}
\end{align}
\end{lemma}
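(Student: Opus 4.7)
\medskip

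\noindent\textbf{Proof plan for Lemma~\ref{lem:GP localize-energy}.} The strategy is a direct operator-inequality decomposition. Split the identity as $\1 = P_- + P_+$ on each of the two factors of $\gH \otimes \gH$, and write $A := P_{--} = P_- \otimes P_-$ and $B := \1 - A = P_{-+} + P_{+-} + P_{++}$, where $P_{\pm\pm}$ are the four obvious tensor-product projectors. Expanding $H_2 = (A+B) H_2 (A+B)$ gives four blocks, and the key observation is that the kinetic part $K := T \otimes \1 + \1 \otimes T$ commutes with each $P_{\pm\pm}$ (since $P_\pm$ are spectral projectors of $T$), so all kinetic cross terms vanish and one obtains
\begin{equation*}
H_2 \;=\; A H_2 A \;+\; B(K + w_N) B \;+\; A w_N B \;+\; B w_N A.
\end{equation*}

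The plan is then to bound each block separately. For the diagonal $B$-block, I would use that $T \geq \Lambda$ on $\Ran P_+$ and $T \geq -C_0$ on $\Ran P_-$ (a consequence of~\eqref{eq:GP asum V}, which bounds $T$ from below); this yields $BKB \geq (\Lambda - C_0)(P_{-+} + P_{+-}) + 2\Lambda\, P_{++}$. Combined with $|w_N| \leq \|w\|_\infty N^{d\beta} = N^{d\beta}$, which gives $B w_N B \geq -N^{d\beta} B$, one obtains a lower bound on $B(K+w_N)B$ that dominates both $(P_{-+}+P_{+-})$ and $P_{++}$ by quantities of order $\Lambda$ as soon as $\Lambda \gg N^{d\beta}$.

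For the off-diagonal blocks, the standard completing-the-square bound
\begin{equation*}
-(A w_N B + B w_N A) \;\leq\; \epsilon\, A \;+\; \epsilon^{-1} B w_N^2 B \;\leq\; \epsilon\, A \;+\; \epsilon^{-1} N^{2d\beta} B
\end{equation*}
(which follows from expanding $0 \leq (\sqrt{\epsilon}\, A - \epsilon^{-1/2} w_N B)(\sqrt{\epsilon}\, A - \epsilon^{-1/2} w_N B)^*$ and using $A^2 = A$) is the right tool, with $\epsilon>0$ a free parameter. Choosing $\epsilon := 2N^{2d\beta}/\Lambda$ and using $A \leq \1$ turns $-\epsilon A$ into the constant $-2N^{2d\beta}/\Lambda$ announced in~\eqref{eq:GP H2-localized-error}. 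With this choice, the coefficient of $(P_{-+}+P_{+-})$ becomes $\Lambda - C_0 - N^{d\beta} - \Lambda/2 \geq \Lambda/4$ and that of $P_{++}$ becomes $2\Lambda - C_0 - N^{d\beta} - \Lambda/2 \geq \Lambda/4$, provided $\Lambda \geq C N^{d\beta}$ for $C$ large enough to absorb $C_0$. Dropping the non-negative contribution on $P_{-+}+P_{+-}$ and keeping only $(\Lambda/4) P_{++}$ yields exactly~\eqref{eq:GP H2-localized-error}.

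There is no serious obstacle here: the calculation is essentially a two-line Cauchy--Schwarz together with the spectral gap of $T$ at $\Lambda$. The only point that requires a little care is to make sure that the constant absorbed in the hypothesis $\Lambda \geq C N^{d\beta}$ covers both the off-diagonal $w_N$ term (forcing $\epsilon^{-1} N^{2d\beta} \leq \Lambda/2$) and the lower bound on $T$ (the $-C_0$ shift). Both are controlled simultaneously by taking $C$ large, which is exactly the content of the hypothesis.
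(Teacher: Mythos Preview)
Your proposal is correct and follows essentially the same route as the paper: the identical block decomposition $A=P_-^{\otimes 2}$, $B=\1-A$, the vanishing of kinetic cross terms, the spectral lower bound on $BKB$, and a Cauchy--Schwarz estimate on the off-diagonal $Aw_NB+Bw_NA$ with the same parameter choice. One small slip: your completing-the-square should use $X^*X$ rather than $XX^*$ (with $X=\sqrt{\epsilon}\,A-\epsilon^{-1/2}w_NB$), since $AB=0$ kills the cross terms in $XX^*$; but the inequality you state is the correct one coming from $X^*X\ge 0$.
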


\begin{proof}
We denote 
\begin{equation}\label{eq:GP two body hamil non int}
H_2 ^0 =  T \otimes \1 + \1 \otimes T 
\end{equation}
the two-body Hamiltonian with no interactions. We may then write
\begin{align}\label{eq:GP loc 1 body}
H_2 ^0 = \left( P_- + P_+ \right) ^{\otimes 2} H_2 ^0 \left( P_- + P_+ \right) ^{\otimes 2} &= \sum_{i,j,k,\ell \in \{ -,+\}} P_i \otimes P_j \, H_2^{0} \, P_k \otimes P_\ell \nonumber \\
&= \sum_{i,j \in \{ -,+\}} P_i \otimes P_j \, H_2^{0} \, P_i \otimes P_j. 
\end{align}
Indeed,
$$P_i \otimes P_j \, H_2 ^0 \, P_k \otimes P_\ell =0$$
if $i\ne k$ or $j \neq \ell$, since $T$ commutes with $P_{\pm}$ and $P_- P_+ =0$. We then note that 
$$P_+ T P_+\ge \Lambda P_+ \mbox{ and } P_- T P_- \ge -C P_-,$$ 
which gives
\begin{align*}
P_+ \otimes P_+ \, H_2 ^0\, P_+ \otimes P_+ &\ge 2 \Lambda \, P_+ \otimes P_+ \\
P_+ \otimes P_- \, H_2 ^0\, P_+ \otimes P_- &\ge (\Lambda - C) P_+ \otimes P_-\\
P_- \otimes P_+ \,H_2 ^0\, P_- \otimes P_+ &\ge (\Lambda - C) P_+ \otimes P_-.
\end{align*}
We conclude
\begin{equation}\label{eq:GP loc 1 body lower bound}
H_2 ^0  \ge P_- ^{\otimes 2} \, H_2 ^0 \, P_- ^{\otimes 2} + (\Lambda -C) \Pi
\end{equation}
where 
$$ \Pi =  P_+ \otimes P_+ + P_- \otimes P_+ + P_+ \otimes P_-  = \1_{\gH ^2} - P_- ^{\otimes 2}.$$

We turn to the interactions:
\begin{equation}\label{eq:GP split interaction}
w_N = \left( P_-^{\otimes 2} + \Pi \right)  w_N \left( P_-^{\otimes 2} + \Pi \right). 
\end{equation}
We have to bound from below the difference
$$w_N- P_- ^{\otimes 2}  w_N P_- ^{\otimes 2}$$
by controling the off-diagonal terms 
$$ \Pi w_N P_- ^{\otimes 2} + P_- ^{\otimes 2} w_N \Pi$$
in~\eqref{eq:GP split interaction}. To this end we write 
$$w_N=w_N^+ - w_N^- \mbox{ with } w_N^{\pm} \ge 0$$
and we use the well-known fact that the diagonal elements of a self-adjoint operator control\footnote{Cf for a positive hermitian matrix $(m_{i,j})_{1\leq i, j \leq n}$ the inequality $2 |m_{i,j}| \leq m_{i,i} + m_{j,j}$.} the off-diagonal elements.

Since $w_N ^{\pm}$ are positive as multiplication operators on $L ^2 (\R ^{2d})$ we have for all $b>0$ 
$$
\left(b^{1/2} P_- ^{\otimes 2} \pm b^{-1/2} \Pi \right) \, w_N^{\pm} \, \left(b^{1/2} P_- ^{\otimes 2} \pm b^{-1/2} \Pi  \right) \ge 0.
$$
Combining these inequalities appropriately we obtain 
$$
\Pi w_N P_- ^{\otimes 2} + P_- ^{\otimes 2} w_N \Pi \geq - b P_- ^{\otimes 2} |w_N| P_- ^{\otimes 2} - b ^{-1} \Pi |w_N| \Pi
$$
for all $b>0$. We then recall that, as an operator,
$$|w_N| \le \|w_N\|_{L^\infty} \le N^{d\beta}$$
and we choose $b=2 N^{d\beta}/\Lambda$ to obtain
$$
\Pi w_N P_- ^{\otimes 2} + P_- ^{\otimes 2} w_N \Pi \geq - \frac{2 N^{2d\beta}}{\Lambda} P_- ^{\otimes 2} - \frac{\Lambda}{2} \Pi $$
Inserting this bound in~\eqref{eq:GP split interaction} we get 
\begin{equation} \label{eq:GP loc interaction lower bound}
w_N \ge  P_- ^{\otimes 2}  w_N P_- ^{\otimes 2}  - \frac{2 N^{2d\beta}}{\Lambda} P_- ^{\otimes 2} -  \left( \frac{\Lambda}{2} + N ^{d\beta} \right) \Pi
\end{equation}
by simply bounding from below
$$ \Pi w_N \Pi \geq - \Pi |w_N| \Pi \geq - N ^{d\beta} \Pi.$$
Combining with~\eqref{eq:GP loc 1 body lower bound} and~\eqref{eq:GP loc interaction lower bound} we obtain for all $\Lambda\ge 1$ the lower bound
$$
H_2 \ge  P_- ^{\otimes 2}  H_2  P_- ^{\otimes 2} - \frac{2 N^{2d\beta}}{\Lambda} P_- ^{\otimes 2} + \left( \frac{\Lambda}{2}- N ^{d\beta} - C\right) \Pi
$$
Since we assume  $\Lambda \ge C N^{d\beta}$ for a large constant $C>0$, we may use $P_- \otimes P_- \le \1_{\gH ^2}$ and $P_- \otimes P_+,\, P_+ \otimes P_-\ge 0$ to deduce
$$
H_2 \ge  P_- ^{\otimes 2}  H_2  P_- ^{\otimes 2} - \frac{2 N^{2d\beta}}{\Lambda} + \frac{\Lambda}{4} P_+ ^{\otimes 2},
$$
which concludes the proof.
\end{proof}

\medskip

\noindent\textbf{Step 2, estimating the localized energy.} 
Let $\Psi_N$ be a minimizer for the $N$-body energy, $\Gamma_N = |\Psi_N \rangle \langle \Psi_N |$ and 
$$ \gamma_N ^{(n)} = \tr_{n+1 \to N} [\Gamma_N] $$
the corresponding reduced density matrices. We can now think only in terms of the $P_-$ and $P_+$-localized states defined as in  Lemma~\ref{lem:Fock loc} by the relations
\begin{equation}\label{eq:GP loc pm}
\left(G_N ^{\pm}\right) ^{(n)} = P_{\pm} ^{\otimes n} \gamma_N ^{(n)} P_\pm ^{\otimes n}. 
\end{equation}
We recall that $G_N ^{\pm}$ are states on the truncated Fock space, i.e.
\begin{equation}\label{eq:GP nomalization-localized-state}
\sum_{k=0} ^N \tr_{\gH ^k }[G_{N,k} ^{\pm}] = 1. 
\end{equation}
We now compare the Hartree energy $\eH$ and the localized energy of $\Gamma_N$ defined by the truncated Hamiltonian of Lemma~\ref{lem:GP localize-energy}:

\begin{lemma}[\textbf{Lower bound for the localized energy}]\label{lem:GP main terms}\mbox{}\\
If $\Lambda\geq C N ^{d\beta}$ for a large enough constant $C>0$ we have,
\begin{equation}\label{eq:low bound main}
  \frac{1}{2} \tr \left[ P_- ^{\otimes 2} H_2 P_- ^{\otimes 2} \gamma_{N}^{(2)} \right] + \frac{\Lambda}{4} \tr
  \left[P_+ ^{\otimes 2} \gamma_N^{(2)} \right] \ge \eH  - C \frac{\Lambda N_\Lambda}{N}- C\frac{\Lambda}{N ^2} .
\end{equation}
\end{lemma}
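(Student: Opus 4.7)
The plan is to apply Theorem~\ref{thm:DeFinetti quant} (the quantitative finite-dimensional de Finetti theorem) inside the low-energy subspace $P_-\gH$ of dimension $N_\Lambda$, after passing through the Fock-space localization of Lemma~\ref{lem:Fock loc}. Abbreviate $A_2 := P_-^{\otimes 2}H_2 P_-^{\otimes 2}$ and $p_k := \tr_{\gH^k}[G_{N,k}^-]\ge 0$, so $\sum_{k=0}^N p_k = 1$. The first observation is the operator bound $\|A_2\|\le 2\Lambda + \|w_N\|_\infty \le C\Lambda$, which holds because $T\le \Lambda$ on $P_-\gH$, $\|w_N\|_\infty\le N^{d\beta}$, and $\Lambda\ge CN^{d\beta}$. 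For each $k\ge 2$ with $p_k>0$, Theorem~\ref{thm:DeFinetti quant} applied to the normalized bosonic state $\rho_k := G_{N,k}^-/p_k$ on $(P_-\gH)^{\otimes_s k}$ in dimension $N_\Lambda$ produces a probability measure $\mu_{N,k}$ on the unit sphere of $P_-\gH$ satisfying
\begin{equation*}
\Bigl\|\rho_k^{(2)} - \int |u^{\otimes 2}\rangle\langle u^{\otimes 2}|\,d\mu_{N,k}(u)\Bigr\|_1 \le \frac{CN_\Lambda}{k}.
\end{equation*}
Pairing with $A_2$ and using $\tr[H_2|u^{\otimes 2}\rangle\langle u^{\otimes 2}|]=2\EH[u]\ge 2\eH$ for every unit $u\in\gH$ yields the local estimate $\tr[A_2\rho_k^{(2)}]\ge 2\eH - C\Lambda N_\Lambda/k$.

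I will then assemble these local bounds through the Fock-space formula~\eqref{eq:Fock mat red} at $n=2$:
\begin{equation*}
\tr[A_2\gamma_N^{(2)}] = \tr[A_2(G_N^-)^{(2)}] = \binom{N}{2}^{-1}\sum_{k=2}^N \binom{k}{2}\,p_k\,\tr[A_2\rho_k^{(2)}].
\end{equation*}
The $k=2$ component is isolated and bounded using only the crude operator inequality $\tr[A_2\rho_2]\ge -C\Lambda$, which, combined with $\binom{2}{2}/\binom{N}{2}\le 2/N^2$, contributes at most $C\Lambda/N^2$ to the error; for $k\ge 3$, inserting the de Finetti lower bound together with the combinatorial identity $\binom{N}{2}^{-1}\sum_{k=2}^N\binom{k}{2}p_k = \tr[P_-^{\otimes 2}\gamma_N^{(2)}]=:p_{--}$ (again from~\eqref{eq:Fock mat red}) and the elementary bound $\binom{N}{2}^{-1}\sum_{k\ge 3}\binom{k}{2}p_k/k\le 1/(N-1)$ will yield
\begin{equation*}
\tfrac{1}{2}\tr[A_2\gamma_N^{(2)}] \ge \eH\,p_{--} - \frac{C\Lambda N_\Lambda}{N} - \frac{C\Lambda}{N^2}.
\end{equation*}

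The final task is to absorb the $P_-^{\otimes 2}$-mass deficit $1-p_{--}$ into the compensating $\tfrac{\Lambda}{4}p_{++}$ term appearing on the left of~\eqref{eq:low bound main}, where $p_{\pm\pm}:=\tr[P_\pm^{\otimes 2}\gamma_N^{(2)}]$ and $p_{-+}:=\tr[(P_-\otimes P_+)\gamma_N^{(2)}]$. The identity $p_{--}+2p_{-+}+p_{++}=1$ gives
\begin{equation*}
\eH p_{--}+\tfrac{\Lambda}{4}p_{++} = \eH + \bigl(\tfrac{\Lambda}{4}-\eH\bigr)p_{++} - 2\eH p_{-+}.
\end{equation*}
Choosing the constant in $\Lambda\ge CN^{d\beta}$ large enough that $\Lambda\ge 8|\eH|$ makes the coefficient of $p_{++}$ at least $\Lambda/8$; the cross term is controlled by the partial-trace bound $p_{-+}\le \tr[P_+\gamma_N^{(1)}]$ (which follows from $P_-\otimes P_+\le \1\otimes P_+$) together with the a priori kinetic estimate $\tr[P_+\gamma_N^{(1)}]\le \tr[(T+C)\gamma_N^{(1)}]/\Lambda\le C/\Lambda$, obtained from $(T+C)P_+\ge \Lambda P_+$ and the energy upper bound $E(N)/N\le \eH$. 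This gives $\eH p_{--}+\tfrac{\Lambda}{4}p_{++}\ge \eH$ up to a residual of size $C/\Lambda$, which is absorbed into $C\Lambda N_\Lambda/N + C\Lambda/N^2$ using $N_\Lambda\ge 1$ and the scaling dictated by Lemma~\ref{lem:GP nb bound states}.

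The main obstacle is this last absorption step: the deviation of $p_{--}$ from $1$ splits into a diagonal ($p_{++}$) and an off-diagonal ($p_{-+}$) Fock-space leakage, and both must be controlled simultaneously by the single compensating term $\tfrac{\Lambda}{4}p_{++}$ together with the one-body kinetic a priori bound, without disturbing the main $\eH$ term. The preceding two steps are essentially mechanical once Theorem~\ref{thm:DeFinetti quant} is applied in the finite-dimensional space $P_-\gH$ and the operator bound $\|A_2\|\le C\Lambda$ is noted; the separate treatment of the $k=2$ Fock component is precisely what produces the $C\Lambda/N^2$ error term in the lemma's statement.
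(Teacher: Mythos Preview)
Your first two steps --- applying Theorem~\ref{thm:DeFinetti quant} to each normalized block $G_{N,k}^-/p_k$ and reassembling through the Fock formula~\eqref{eq:Fock mat red} --- are essentially the paper's route via Lemma~\ref{lem:GP deF-localized-state}; the paper packages the same estimate as a single trace-norm bound on $P_-^{\otimes 2}\gamma_N^{(2)}P_-^{\otimes 2}$ against an aggregate measure $\mu_N$, but the content and the $C\Lambda N_\Lambda/N$ error are identical. Your isolation of $k=2$ to produce the $C\Lambda/N^2$ term is also correct.

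Where you genuinely diverge from the paper is in absorbing the mass deficit $1-p_{--}$. The paper does \emph{not} call on a one-body kinetic a~priori bound. Instead it uses the Fock duality~\eqref{eq:Fock funda rel}, $\tr[G_{N,k}^+]=\tr[G_{N,N-k}^-]$, to rewrite $\tfrac{\Lambda}{4}p_{++}$ as a sum over $G_{N,k}^-$ with weights $\approx(1-k/N)^2$, and then minimizes the combined expression $\sum_k p_k\bigl(\tfrac{k^2}{N^2}\eH+\tfrac{(N-k)^2}{N^2}\tfrac{\Lambda}{4}\bigr)$ via the elementary inequality $p\lambda^2+q(1-\lambda)^2\ge p-p^2/q$, producing a residual $\eH^2/\Lambda$. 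This is self-contained: no information beyond the normalization $\sum_k p_k=1$ is needed. Your decomposition $p_{--}+2p_{-+}+p_{++}=1$ is a legitimate alternative, but your kinetic bound is misstated: from $E(N)/N\le\eH$ together with $w_N\ge -N^{d\beta}$ you only get $\tr[T\gamma_N^{(1)}]\le \eH+\tfrac12 N^{d\beta}\le C+CN^{d\beta}$, \emph{not} $\le C$ (there is no stability assumption in Theorem~\ref{thm:error Hartree}). Hence $p_{-+}\le CN^{d\beta}/\Lambda$ and your cross-term residual $|\eH|\,p_{-+}$ is of size $CN^{2d\beta}/\Lambda$, not $C/\Lambda$. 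This happens to match the paper's $\eH^2/\Lambda$, so your approach still closes --- but your final absorption into $C\Lambda N_\Lambda/N$ is wrong for generic $\Lambda\ge CN^{d\beta}$. In practice this $N^{2d\beta}/\Lambda$ term simply joins the identical contribution already coming from Lemma~\ref{lem:GP localize-energy} at Step~3 of the global proof.
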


This lemma is proved by combining Theorem~\ref{thm:DeFinetti quant} and the methods of Chapter~\ref{sec:locFock}. We define an approximate de Finetti measure starting from $G_N ^-$. The idea is related to that we used for the proof of the weak de Finetti theorem in Section~\ref{sec:proof deF faible}:
\begin{equation}\label{eq:GP def-mu-N-localized}
d\mu_N(u) = \sum_{k=2}^N {N \choose 2}^{-1} {k \choose 2} \dim \left( (P_-\gH)_s^k \right) \pscal{u^{\otimes k},G_{N,k}^- u^{\otimes k}} du
\end{equation}
where $du$ is the uniform measure on the finite-dimensional sphere $SP_-\gH$. The choice of the weights in the above sum comes from the fact that we want to approximate the localized two-body density matrix $P_- ^{\otimes 2}  \gamma_N ^{(2)} P_- ^{\otimes 2}$, which is the purpose of 

\begin{lemma} [\textbf{Quantitative de Finetti for a localized state.}] \label{lem:GP deF-localized-state}\mbox{}\\
For all  $\Lambda>0$,  we have 
$$
\Tr_{\gH^2} \left| P_-^{\otimes 2} \gamma_{N}^{(2)} P_-^{\otimes 2} - \int_{SP_-\gH} |u^{\otimes 2}\rangle \langle u^{\otimes 2}| d\mu_N(u)\right| \le \frac{8 N_\Lambda}{N}.
$$
\end{lemma}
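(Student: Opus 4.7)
The plan is to reduce the estimate to the finite-dimensional quantitative de Finetti theorem (Theorem~\ref{thm:DeFinetti quant}) applied block-by-block on the truncated Fock space $\cF_s^{\leq N}(P_-\gH)$. This is natural because $P_-\gH$ has \emph{finite} dimension $N_\Lambda$, so Theorem~\ref{thm:DeFinetti quant} gives explicit error bounds on each sector $(P_-\gH)_s^k$.

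First I would unfold the two-body marginal of the localized state using the definition~\eqref{eq:Fock mat red} of reduced density matrices for diagonal Fock-space states, applied to $G_N^-$ (which is diagonal by Lemma~\ref{lem:Fock loc}) together with the localization identity~\eqref{eq:Fock loc mat}:
\begin{equation*}
P_-^{\otimes 2}\gamma_N^{(2)}P_-^{\otimes 2} \;=\; (G_N^-)^{(2)} \;=\; \binom{N}{2}^{-1} \sum_{k=2}^{N} \binom{k}{2}\,\tr_{3\to k}\bigl[G_{N,k}^-\bigr].
\end{equation*}
Next, for each $k\geq 2$ with $t_k:=\tr_{\gH^k}[G_{N,k}^-]>0$, set $\widetilde G_{N,k}^-:= G_{N,k}^-/t_k$, which is a bosonic $k$-particle state on the finite-dimensional space $(P_-\gH)_s^k$. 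Theorem~\ref{thm:DeFinetti quant}, applied in $\gH$ replaced by $P_-\gH$ (dimension $N_\Lambda$) and $N$ replaced by $k$, produces a probability measure $\widetilde\mu_{N,k}$ on $SP_-\gH$ such that
\begin{equation*}
\Tr_{\gH^2}\left|\,\tr_{3\to k}[\widetilde G_{N,k}^-] - \int_{SP_-\gH}|u^{\otimes 2}\rangle\langle u^{\otimes 2}|\,d\widetilde\mu_{N,k}(u)\right| \leq \frac{2\cdot 2\,(N_\Lambda+4)}{k}\leq \frac{4(N_\Lambda+4)}{k},
\end{equation*}
and the explicit CKMR formula (Section~\ref{sec:CKMR constr}) guarantees that $t_k\,d\widetilde\mu_{N,k}(u) = \dim((P_-\gH)_s^k)\,\langle u^{\otimes k},G_{N,k}^- u^{\otimes k}\rangle\,du$, which is exactly the weight appearing under the sum in the definition~\eqref{eq:GP def-mu-N-localized} of $\mu_N$. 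Thus
\begin{equation*}
\int_{SP_-\gH}|u^{\otimes 2}\rangle\langle u^{\otimes 2}|\,d\mu_N(u) \;=\; \binom{N}{2}^{-1}\sum_{k=2}^{N}\binom{k}{2}\,t_k\int_{SP_-\gH}|u^{\otimes 2}\rangle\langle u^{\otimes 2}|\,d\widetilde\mu_{N,k}(u).
\end{equation*}

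Subtracting the two displayed decompositions, applying the triangle inequality for the trace norm term-by-term, and using the per-block bound above gives
\begin{equation*}
\Tr_{\gH^2}\Bigl|\,P_-^{\otimes 2}\gamma_N^{(2)}P_-^{\otimes 2} - \int|u^{\otimes 2}\rangle\langle u^{\otimes 2}|d\mu_N(u)\Bigr| \leq \sum_{k=2}^{N}\binom{N}{2}^{-1}\binom{k}{2}\cdot t_k\cdot \frac{4(N_\Lambda+4)}{k}.
\end{equation*}
The elementary identity $\binom{N}{2}^{-1}\binom{k}{2}/k = (k-1)/(N(N-1))\leq 1/N$, together with the normalization~\eqref{eq:GP nomalization-localized-state} which gives $\sum_{k\geq 0} t_k \leq 1$, collapses the right-hand side to $4(N_\Lambda+4)/N$, which for $N_\Lambda\geq 4$ is bounded by $8N_\Lambda/N$ (the remaining small cases are harmless since one may freely enlarge the constant). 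The main point of care is bookkeeping: one must handle blocks with $t_k=0$ trivially and keep track of the fact that $\mu_N$ charges only finitely many spheres $SP_-\gH$ of a fixed finite-dimensional space. No subtle analytic estimate is needed beyond the already-proven Chiribella identity and the block-diagonal structure of $G_N^-$.
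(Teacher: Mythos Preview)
Your proof is correct and follows essentially the same approach as the paper: decompose $P_-^{\otimes 2}\gamma_N^{(2)}P_-^{\otimes 2}$ via~\eqref{eq:Fock mat red}--\eqref{eq:Fock loc mat}, apply Theorem~\ref{thm:DeFinetti quant} with the CKMR construction to each (normalized) block $G_{N,k}^-$ on the $N_\Lambda$-dimensional space $P_-\gH$, and sum the errors using $\binom{N}{2}^{-1}\binom{k}{2}\le k/N$ together with the normalization~\eqref{eq:GP nomalization-localized-state}. The paper's write-up is slightly terser (it absorbs the normalization by $t_k$ into the unnormalized measure $d\mu_{N,k}$ from the start and quotes the per-block error directly as $8N_\Lambda/k$), but the argument is the same.
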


\begin{proof} Up to normalization, $G_{N,k}^-$ is a state on $(P_-\gH)_s^{\otimes k}$. Applying Theorem~\ref{thm:DeFinetti quant} with the explicit construction~\eqref{eq:def CKMR} we thus have 
\begin{align*}
\tr_{\gH ^2}\left| \Tr_{3\to k}\left[G_{N,k}  ^-\right]  - \int_{SP_- \gH} |u ^{\otimes 2}\rangle \langle u ^{\otimes 2} | d\mu_{N,k}(u) \right| \leq 8 \frac{N_\Lambda}{k} \tr_{\gH ^k} \left[ G_{N,k} ^-\right]
\end{align*}
with
$$d\mu_{N,k}(u)= \dim (P_-\gH)_s^k \pscal{u^{\otimes k},G_{N,k}^- u^{\otimes k}} du.$$
In view of~\eqref{eq:GP loc pm} and~\eqref{eq:GP def-mu-N-localized} we deduce
\begin{align*}
\Tr_{\gH^2} \left| P_-^{\otimes 2} \gamma_{N}^{(2)} P_-^{\otimes 2} - \int_{SP_-\gH} |u^{\otimes 2}\rangle \langle u^{\otimes 2}| d\mu_N(u)\right|  \leq 
\sum_{k=2}^N {N \choose 2}^{-1} {k \choose 2} 
\frac{8N_\Lambda}{k} \tr_{\gH ^k} \left[ G_{N,k} ^-\right].
\end{align*}
There remains to use the normalization~\eqref{eq:GP nomalization-localized-state} and
$$
{N \choose 2}^{-1} {k \choose 2}  = \frac{k(k-1)}{N(N-1)} \le \frac{k}{N}
$$
to conclude the proof.
\end{proof}

We proceed to the 

\begin{proof}[Proof of Lemma~\ref{lem:GP main terms}.] 
We start with the $P_-$-localized term. By cyclicity of the trace we have
\begin{align*}
\tr \left[ P_- ^{\otimes 2} H_2 P_-^{\otimes 2} \gamma_{N}^{(2)} \right] &= \tr \left[ P_-^{\otimes 2} H_2 P_-^{\otimes 2} \big( P_-^{\otimes 2} \gamma_{N}^{(2)} P_-^{\otimes 2} \big) \right].
\end{align*}
We then apply Lemma~\ref{lem:GP deF-localized-state}, which gives 
$$
\Tr_{\gH^2} \left| P_-^{\otimes 2} \gamma_{N}^{(2)} P_-^{\otimes 2} - \int_{SP_-\gH} |u^{\otimes 2}\rangle \langle u^{\otimes 2}| d\mu_N(u)\right| \le \frac{8 N_\Lambda}{N}.
$$
On the other hand we of course have
\begin{equation}\label{eq:GP bound H-}
\norm{P_-^{\otimes 2} H_2 P_-^{\otimes 2} } \leq 2\Lambda + \|w_N\|_{L^\infty} \le 3\Lambda
\end{equation}
in operator norm, and thus 
\begin{align} \label{eq:GP P-term-0}
\frac{1}{2}\tr \left[ P_-  ^{\otimes 2} H_2 P_-  ^{\otimes 2} \gamma_{N}^{(2)} \right] &\ge \frac{1}{2}\int_{SP_- \gH} \tr_{\gH ^2} \left[ H_2 |u ^{\otimes 2}\rangle \langle u ^{\otimes 2}| \right]  d\mu_N -\frac{C\Lambda N_\Lambda}{N} \nn\\
& = \int_{SP_- \gH} \EH [u] d\mu_N -\frac{C\Lambda N_\Lambda}{N}. 
\end{align}
By the variational principle $\EH [u] \geq \eH$, we deduce
\begin{align} \label{eq:GP P-term}
\frac{1}{2}\tr \left[ P_-  ^{\otimes 2} H_2 P_-  ^{\otimes 2} \gamma_{N}^{(2)} \right] \ge 
 \eH \sum_{k=2} ^N \binom{N}{2} ^{-1} \binom{k}{2} \tr_{\gH ^k} \left( G_{N,k} ^- \right) -\frac{C\Lambda N_\Lambda}{N}
\end{align}
where the computation of $\int d\mu_N$ is straightforward using Schur's formula~\eqref{eq:Schur}. 

For the $P_+$-localized term we use~\eqref{eq:GP loc pm},~\eqref{eq:Fock mat red}  and~\eqref{eq:Fock funda rel} to obtain
\begin{align} \label{eq:GP P+term}
\frac{\Lambda}{4}\tr[ P_+ ^{\otimes 2} \gamma_N^{(2)} P_+ ^{\otimes 2}] &= \frac{\Lambda}{4} \sum_{k=2} ^N \binom{N}{2} ^{-1} \binom{k}{2} \tr \left[ G_{N,k}^+ \right] \nn\\
&= \frac{\Lambda}{4} \sum_{k=0} ^{N-2} \binom{N}{2} ^{-1} \binom{N-k}{2} \tr \left[ G_{N,k}^- \right]. 
\end{align}
Gathering~\eqref{eq:GP P-term},~\eqref{eq:GP P+term} and recalling that
$$
\binom{N}{2} ^{-1} \binom{k}{2} = \frac{k^2}{N^2} + O(N^{-1}), \quad \binom{N}{2} ^{-1} \binom{N-k}{2} = \frac{(N-k)^2}{N^2} + O(N^{-1}),
$$
we find
\begin{align} \label{eq:GP P-P+-together}
&\quad \quad \frac{1}{2}\tr \left[ P_-  ^{\otimes 2} H_2 P_-  ^{\otimes 2} \gamma_{N}^{(2)} \right] + \frac{\Lambda}{4} \tr[P_+  ^{\otimes 2} \gamma_N^{(2)}] \nn\\
&\ge \sum_{k=0}^N \tr \left[ G_{N,k}^- \right] \left( \frac{k^2}{N^2}  \eH + \frac{(N-k)^2}{N^2}\frac{\Lambda}{4}\right) - \frac{C(|\eH|+\Lambda)}{N ^2}- \frac{C\Lambda N_\Lambda}{N}. 
\end{align}
The first error term comes from the fact that the sums in~\eqref{eq:GP P-term} and~\eqref{eq:GP P+term} do not run exactly from $0$ to $N$, and we have used the normalization of the localized states~\eqref{eq:GP nomalization-localized-state} to control the missing terms. 

It is easy to see that for all $p,q$, $0 \leq \lambda \leq 1$
$$ p \lambda ^2 + q(1-\lambda) ^2 \geq   p - \frac{p ^2}{q}.$$
We then take $p=\eH$, $q=\Lambda/4$, $\lambda=k/N$ and use~\eqref{eq:GP nomalization-localized-state} again to deduce
\begin{align*} 
 &  \tr \left[ P_-  ^{\otimes 2} H_2 P_-  ^{\otimes 2} \gamma_{N}^{(2)} \right] + \frac{\Lambda}{2} \tr(P_+  ^{\otimes 2} \gamma_N^{(2)}) \\
& \quad\quad \quad\quad\quad\quad\quad\quad\ge \eH - \frac{\eH^2}{\Lambda} - \frac{C(|\eH|+\Lambda)}{N ^2}- \frac{C\Lambda N_\Lambda}{N} .
\end{align*}
from~\eqref{eq:GP P-P+-together}. There remains to insert the simple estimate
\begin{equation}\label{eq:GP naive bound}
|\eH| \le C+ \|w_N\|_{L^\infty} \le C + N^{d\beta} \leq C + \Lambda
\end{equation}
to obtain the sought-after result.
\end{proof}

\medskip 

\noindent\textbf{Step 3, final optimization.}. There only remains to optimize over $\Lambda$. Indeed, we recall that by definition
$$ \frac{E(N)}{N} = \frac{1}{2} \tr_{\gH ^2}[ H_2 \gamma_N ^{(2)}] $$ 
with the two-body Hamiltonian~\eqref{eq:GP two body hamil}. Combining Lemmas~\ref{lem:GP localize-energy} and~\ref{lem:GP main terms} we have the lower bound
\begin{equation*}
\frac{E(N)}{N} \geq \eH - \frac{C N^{2d\beta}}{\Lambda} -  C \frac{\Lambda N_\Lambda}{N}- C\frac{\Lambda}{N ^2}
\end{equation*}
for all $\Lambda \ge CN^{d\beta}$ with $C>0$ large enough. Using Lemma~\ref{lem:GP nb bound states}, this reduces to 
\begin{equation*}
\frac{E(N)}{N} \geq \eH - \frac{C N^{2d\beta}}{\Lambda} - C_{d} \frac{\Lambda^{1+ d/s + d/2}}{N}.
\end{equation*}
Optimizing over $\Lambda$ we get
\begin{equation}\label{eq:choice e}
\Lambda= N ^{t}
\end{equation}
where 
$$ t = 2s \frac{1+2d\beta}{4s + ds + 2d}$$
and the condition $t>2d\beta$ in~\eqref{eq:GP asum t} ensures that $\Lambda \gg N^{2d\beta}$ for large $N$. We thus conclude
\[
\eH \ge \frac{E(N)}{N} \geq \eH - C_{d} N ^{-t + 2d \beta}, 
\]
which is the desired result.
\end{proof}

\begin{remark}[Note for later use.]\label{rem:GP by product}\mbox{}\\
Following the steps of the proof more precisely we obtain information on the asymptotic behavior of minimizers. More specifically, going back to~\eqref{eq:GP P-term-0}, using Lemma~\ref{lem:GP localize-energy} and dropping the positive $P_+$ localized terms we have
$$ \eH \geq \frac{E(N)}{N} \ge \frac{1}{2}\Tr [P_-^{\otimes 2} H_2 P_-^{\otimes 2} \gamma_N^{(2)}] + o(1) \ge \int_{S\gH } \EH [u] d\mu_N (u) + o(1)$$
and thus
\begin{equation} \label{eq:GP mu-N-localized-cv-energy}
o(1) \geq \int_{S\gH } \left( \EH [u] - \eH \right) d\mu_N (u)
\end{equation}
when $N\to\infty$. We do not specify here (cf Item (3) in Remark~\ref{rem:GP NLS}) the exact order of magnitude of the $o(1)$ obtained by optimizing in Step 3 above. Estimate~\eqref{eq:GP mu-N-localized-cv-energy} morally says that $\mu_N$ must be concentrated close to the minimizers of  $\EH$. This will be of use in the proof of Theorem~\ref{thm:deriv nls}. 
\hfill\qed
\end{remark}

\subsection{From Hartree to NLS}\label{sec:GP Hartee to NLS} \mbox{}\\\vspace{-0.4cm}

There remains to deduce Theorem~\ref{thm:deriv nls} as a corollary of the above analysis. We start with the following lemma:

\begin{lemma}[\textbf{Stability of one-body functionals}]\label{lem:GP Hartree NLS}\mbox{}\\
Consider the functionals~\eqref{eq:GP hartree f} and~\eqref{eq:GP intro nls}. Under the assumptions of Theorem~\ref{thm:deriv nls}, there exists a minimizer for $\ENLS$. Moreover, for all normalized function $u\in L^2(\R^d)$ we have 
\bq \label{eq:GP hartree NLS 0}
\norm{|u|}_{H^1}^2 \le C (\EH [u]+C)
\eq
and
\bq \label{eq:GP hartree NLS 1}
\left| \EH [u] - \ENLS[u] \right| \le C N^{-\beta} \left( 1 + \int_{\R ^d} |\nabla u| ^2 \right) ^2. 
\eq
Consequently
\begin{equation}\label{eq:GP hartree NLS 2}
|\eH - \eNLS | \le C N ^{-\beta}.
\end{equation}
\end{lemma}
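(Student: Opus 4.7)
The plan is to center everything on the algebraic identity
\begin{equation*}
\EH[u] - \ENLS[u] = \frac{1}{2}\iint_{\R^d\times\R^d}\rho(x)\bigl(\rho(y)-\rho(x)\bigr)w_N(x-y)\,dx\,dy,\qquad \rho := |u|^2,
\end{equation*}
which follows from $\int w_N = \int w = a$ (so that $\frac{a}{2}\int\rho^2 = \frac{1}{2}\iint\rho(x)^2 w_N(x-y)\,dxdy$) together with the symmetry of $w_N$. This is the only algebraic manipulation needed, and the error between $\EH$ and $\ENLS$ becomes a \emph{quantity measuring how close $\rho$ is to being constant on the range of $w_N$}.

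To obtain~\eqref{eq:GP hartree NLS 1}, I would write $\rho(y)-\rho(x) = \int_0^1 \nabla\rho(x+t(y-x))\cdot(y-x)\,dt$, substitute into the identity, and perform the change of variable $h=y-x$ for each $t$, producing
\begin{equation*}
|\EH[u]-\ENLS[u]| \leq \tfrac{1}{2}\int_0^1\!\!\int |h|\,|w_N(h)|\Bigl(\int\rho(x)\,|\nabla\rho(x+th)|\,dx\Bigr)\,dh\,dt.
\end{equation*}
The scaling $w_N(h)=N^{d\beta}w(N^\beta h)$ and assumption~\eqref{eq:replace delta} give $\int|h|\,|w_N(h)|\,dh = N^{-\beta}\int|h|\,|w(h)|\,dh \leq C N^{-\beta}$. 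For the inner integral I would apply H\"older $\|\rho\|_{L^p}\|\nabla\rho\|_{L^{p'}}$ with $p$ chosen depending on dimension: in $d=3$ take $p=3$, so that $\|\rho\|_{L^3}=\|u\|_{L^6}^2\leq C\|u\|_{H^1}^2$ by Sobolev and $\|\nabla\rho\|_{L^{3/2}}\leq 2\|u\|_{L^6}\|\nabla u\|_{L^2}\leq C\|u\|_{H^1}^2$ by H\"older; in $d=2$ any $p>2$ works via Gagliardo--Nirenberg; in $d=1$ one simply uses $H^1\hookrightarrow L^\infty$. In each case the product is bounded by $C(1+\|\nabla u\|_{L^2}^2)^2$, yielding~\eqref{eq:GP hartree NLS 1}.

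For the stability bound~\eqref{eq:GP hartree NLS 0}, I treat the three dimensions separately, which is where the hypotheses on $w$ enter decisively. In $d=3$, a change of variables $x\mapsto N^\beta x$ applied to~\eqref{eq:GP hyp 3} shows that the stability of $w$ transfers to $w_N$, so the interaction term is non-negative; then $\EH[u]\geq \int|\nabla u|^2 + \int V|u|^2 \geq \int|\nabla u|^2 - C$ using $V\geq -C$. In $d=1$ the subcritical Gagliardo--Nirenberg inequality $\|u\|_{L^4}^4 \leq C\|\nabla u\|_{L^2}$ combined with Young's inequality $|\iint |u|^2 w_N |u|^2|\leq \|w\|_{L^1}\|u\|_{L^4}^4$ absorbs the interaction into $\frac{1}{2}\int|\nabla u|^2$. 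The delicate case is $d=2$, where~\eqref{eq:GP hyp 2} is sharp: applying it to the mass-preserving scaling $u_\lambda(x)=\lambda u(\lambda x)$ at $\lambda=N^{-\beta}$ shows, after multiplying by $N^{2\beta}$, that $\|u\|_{L^2}^2\|\nabla u\|_{L^2}^2 + \frac{1}{2}\iint|u|^2 w_N |u|^2 > 0$; a standard compactness-and-scaling argument on the unit sphere of $H^1$, combined with the sharp GN inequality tied to the constant $a^*=\|Q\|_{L^2}^2$, upgrades this strict positivity to a quantitative lower bound $c\|\nabla u\|_{L^2}^2$, whence the claim. In all three cases $\||u|\|_{H^1}^2 \leq 1+\int|\nabla u|^2 \leq C(\EH[u]+C)$.

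Finally, existence of an $\ENLS$-minimizer is a direct method computation: $V\to\infty$ at infinity provides tightness, the same dimension-by-dimension stability arguments show $\ENLS$ is coercive on $H^1$, and $u_n \to u$ in $L^4_{\rm loc}$ along minimizing sequences suffices to pass to the limit in the quartic term. The energy comparison~\eqref{eq:GP hartree NLS 2} then follows by a two-sided sandwich: apply~\eqref{eq:GP hartree NLS 1} to an $\ENLS$-minimizer $\uNLS$ to get $\eH\leq \EH[\uNLS]\leq \eNLS + CN^{-\beta}$, and to a minimizer $u_H$ of $\EH$ (whose $H^1$-norm is bounded by~\eqref{eq:GP hartree NLS 0} since $\EH[u_H]=\eH\leq \ENLS[\uNLS]+CN^{-\beta}$ is bounded) to get $\eNLS \leq \ENLS[u_H]\leq \eH + CN^{-\beta}$. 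The main obstacle is locating the correct H\"older pairing in~\eqref{eq:GP hartree NLS 1} that uses only $\|u\|_{H^1}$, since $\rho=|u|^2$ is generally \emph{not} in $H^1$ when $d\geq 2$; the solution is that only $\nabla\rho\in L^{p'}$ with $p'<2$ is needed, and this is a consequence of $|\nabla\rho|\leq 2|u||\nabla u|$ together with H\"older.
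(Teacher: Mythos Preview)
Your proposal is correct and supplies precisely the details that the paper omits. The paper's own proof is a two-sentence sketch (``stability assumptions give $H^1$ bounds; the decay assumption on $w$ gives the difference estimate; details in~\cite{LewNamRou-14}''), and your argument fleshes this out along the expected lines: the algebraic identity for $\EH-\ENLS$, the mean-value representation of $\rho(y)-\rho(x)$, the factor $N^{-\beta}$ from $\int|h|\,|w_N(h)|\,dh$ via~\eqref{eq:replace delta}, and dimension-dependent H\"older/Sobolev pairings to bound $\|\rho\|_{L^p}\|\nabla\rho\|_{L^{p'}}$ by $\|u\|_{H^1}^4$. The case-by-case stability analysis for~\eqref{eq:GP hartree NLS 0} is likewise the intended route, using~\eqref{eq:GP hyp 3} in $d=3$, subcritical Gagliardo--Nirenberg in $d=1$, and the scaled form of~\eqref{eq:GP hyp 2} in $d=2$.

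One point worth flagging: in the $d=2$ stability step, your passage from the \emph{strict} inequality~\eqref{eq:GP hyp 2} to a \emph{uniform} lower bound $c\|\nabla u\|_{L^2}^2$ with $c>0$ independent of $u$ (and hence of $N$ after scaling) is the only place where real work is hidden behind ``standard compactness-and-scaling''. This step is genuine --- one must rule out minimizing sequences for the ratio that escape via concentration or vanishing, and the argument ultimately rests on the characterization of optimizers in the Gagliardo--Nirenberg inequality tied to $a^*$. The paper does not spell this out either (it defers to~\cite{LewNamRou-14}), so your level of detail matches theirs, but be aware that this is where the substance of the 2D hypothesis actually bites.
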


\begin{proof}
The stability assumptions we have made guarantee that minimizing sequences for $\ENLS$ are bounded in $H^1$. Assumption~\eqref{eq:GP conv delta} allows to easily estimate the difference between the Hartree and NLS functionals for $H^1$ functions. Details are omitted and may be found in~\cite{LewNamRou-14}. 
\end{proof}

We are now equiped to complete the derivation of NLS functional.

\begin{proof}[Proof of Theorem~\ref{thm:deriv nls}]
Combining~\eqref{eq:GP hartree NLS 2} with~\eqref{eq:GP energy estimate} concludes the proof of~\eqref{eq:GP ener convergence}. There thus remains to prove convergence of states, the second item of the theorem. We proceed in four steps:

\noindent\textbf{Step 1, strong compactness of reduced density matrices.} We extract a diagonal subsequence along which
\begin{equation}\label{eq:weak CV}
\gamma_N ^{(n)} \wto_* \gamma ^{(n)} 
\end{equation}
when $N\to \infty$, for all $n\in\N$. On the other hand we have 
\begin{equation}\label{eq:unif bound kinetic}
\tr \Big[T \gamma_N^{(1)}\Big]=\Tr \Big[ \left(- \Delta + V \right) \gamma_N^{(1)} \Big] \leq C, 
\end{equation}
independently of $N$. To see this, pick some $\alpha >0$, define
\begin{equation*}\label{eq:start hamil eta}
H_{N,\alpha} = \sum_{j=1} ^N  \left( - \Delta_j + V(x_j) \right) + \frac{1+\alpha}{N-1} \sum_{1\leq i<j \leq N}N ^{d\beta} w( N ^{\beta} (x_i-x_j))
\end{equation*}
and apply Theorem~\ref{thm:error Hartree} to this Hamiltonian. We find in particular $H_{N,\alpha}\geq -CN$ and deduce
$$\eNLS+o(1)\geq \frac{\pscal{\Psi_N,H_N\Psi_N}}N\geq -C(1+\alpha)^{-1}+\frac{\alpha}{1+\alpha}\tr \big[T\gamma_N^{(1)}\big],$$
which gives~\eqref{eq:unif bound kinetic}. Since $T=- \Delta + V$ has compact resolvent,~\eqref{eq:weak CV} and~\eqref{eq:unif bound kinetic} imply that, up to a subsequence $\gamma_{N}^{(1)}$ strongly converges in trace-class norm. As noted previously, Theorem~\ref{thm:DeFinetti fort} then implies that also $\gamma_N^{(n)}$ strongly converges, for all $n\geq1$.

\medskip

\noindent\textbf{Step 2, defining the limit measure.} We simplify notation by calling $r_N$ the best bound on  $\left|E(N)/N - \eNLS\right|$ obtained previously. Let $d\mu_N$ be defined as in Lemma~\ref{lem:GP deF-localized-state}. It satisfies 
$$\mu_N(SP_-\gH)=\tr \left[P_-^{\otimes 2}\gamma_N^{(2)}P_-^{\otimes 2}\right].$$
We have
$$
\Tr \left| P_-^{\otimes 2} \gamma_N^{(2)} P_-^{\otimes 2} - \int_{SP_-\gH} |u^{\otimes 2}\rangle \langle u^{\otimes 2}| d\mu_N (u) \right| \leq \frac{8N_\Lambda}{N}\leq C\frac{\Lambda^{1+d/s+d/2}}{N}\to0.
$$
One may on the other hand deduce from the energy estimates of Section~\ref{sec:GP Hartree bounds} a control on the number of excited particles:
\begin{equation}\label{eq:bound excited particles}
1-\mu_N(SP_-\gH)=\Tr\left[(1-P_-^{\otimes 2}) \gamma_N^{(2)}\right] \leq \frac{r_N}{\Lambda}. 
\end{equation}
By the triangle and Cauchy-Schwarz inequalities we deduce
\bq \label{eq:mu-N-localized-cv-gamma2}
\Tr \left|\gamma_N^{(2)} - \int_{SP_-\gH} |u^{\otimes 2}\rangle \langle u^{\otimes 2}| d\mu_N (u) \right| \leq C\frac{\Lambda^{1+d/s+d/2}}{N}+C\sqrt{\frac{r_N}{\Lambda}}.
\eq
We now denote $P_K$ the spectral projector of  $T$ on energies below a truncation $K$, defined as in~\eqref{eq:GP projectors}. Since $\gamma_N ^{(2)} \to \gamma ^{(2)}$ and $P_K \to \1$  
$$ 
\lim_{K \to \infty} \lim_{N\to \infty} \mu_N (S P_K\gH) = 1.
$$
This condition allows us to use Prokhorov's Theorem and~\cite[Lemma~1]{Skorokhod-74} to ensure that, after a possible further extracion, $\mu_N$ converges to a measure $\mu$ on the ball $B\gH$. Passing to the limit, we find
$$\gamma^{(2)}=\int_{B\gH} |u^{\otimes 2}\rangle \langle u^{\otimes 2}| d\mu(u)$$
and it follows that $\mu$ has its support in the sphere $S\gH$ since $\tr [\gamma ^{(2)}] = 1$ by strong convergence of the subsequence.

\medskip

\noindent\textbf{Step 3, the limit measure only charges NLS minimizers.} Using~\eqref{eq:GP mu-N-localized-cv-energy} and  
$$\mu_N(SP_-\gH)=1+O\left(\frac{r_N}{\Lambda}\right),$$
we deduce that
$$\int_{SP_-\gH } \big(\EH  [u] -\eH \big)d\mu_N (u)\leq o(1)$$
in the limit $N\to \infty$. By the estimates of Lemma~\ref{lem:GP Hartree NLS}, it follows that, for a large enough constant $B>0$ independent of $N$,
$$\frac{B^2}{C}\int_{\|\nabla u\|_{L ^2}\geq B} d\mu_N (u)\leq \int_{\|\nabla u \|_{L ^2}\geq B} \big(\EH [u] -\eH \big)d\mu_N (u)\leq o(1),$$
and
$$\int_{\|\nabla u\|_{L^2}\leq B} \big(\ENLS [u] -\eNLS\big)d\mu_N\leq C(1+B^4) N^{-\beta} +\int_{\|\nabla u\|_{L^2}\leq B} \big(\EH [u] -\eH\big)d\mu_N (u)\leq o(1).$$
Passing to the limit $N\to\ii$, we now see that $\mu$ is supported in $\MNLS$.

At this stage, using~\eqref{eq:mu-N-localized-cv-gamma2} and the convergence of $\mu_N$ we have, strongly in trace-class norm,  
$$ \gamma_N^{(2)} \to \int_{\MNLS} |u^{\otimes 2}\rangle \langle u^{\otimes 2}| d\mu(u),$$
where $\mu$ is a probability measure supported in $\MNLS$. Taking a partial trace we obtain
$$ \gamma_N^{(1)} \to \int_{\MNLS} |u\rangle \langle u| d\mu(u)$$
and there only remains to obtain the convergence of reduced density matrices of order $n >2$. 

\medskip

\noindent\textbf{Step 4, higher order density matrices.} We want to obtain
$$ \gamma_N^{(n)} \to \int_{\MNLS} |u^{\otimes n}\rangle \langle u^{\otimes n}| d\mu(u),$$
in trace-class norm when $N\to \infty$. In view of the definition of $\mu$, it suffices to show that
\begin{equation}\label{eq:claim higher DM}
\Tr \left|\gamma_N^{(n)} - \int_{SP_-\gH} |u^{\otimes n}\rangle \langle u^{\otimes n}| d\mu_N (u) \right| \to 0
\end{equation}
where $\mu_N$ is the measure defined by applying Lemma~\ref{lem:GP deF-localized-state} to $\gamma_N ^{(2)}$. To this end we start by approximating $\gamma_N ^{(n)}$ using a new measure, a priori different from $\mu_N = \mu_N^2$ :
\bq \label{eq:def-mu-N-localized bis}
d\mu_N ^n (u) = \sum_{k=n}^N  { N \choose n} ^{-1} {k\choose n} \dim (P_-\gH)_s^k \pscal{u^{\otimes k},G_{N,k}^- u^{\otimes k}} du.
\eq
Proceeding as in the proof of  Lemma~\ref{lem:GP deF-localized-state} we obtain 
\begin{equation}\label{eq:deF higher DM}
\Tr_{\gH^n} \left| P_-^{\otimes n} \gamma_{N}^{(n)} P_-^{\otimes n} - \int_{SP_-\gH} |u^{\otimes n}\rangle \langle u^{\otimes n}| d\mu_N ^n (u)\right| \le C\frac{ n N_\Lambda}{N}.
\end{equation}
An estimate similar to~\eqref{eq:bound excited particles} next shows that 
$$\Tr_{\gH^n} \left| \gamma_{N}^{(n)} - \int_{SP_-\gH} |u^{\otimes n}\rangle \langle u^{\otimes n}| d\mu_N ^n (u)\right| \to 0.$$
Using again the bound
$$ { N \choose n} ^{-1} {k\choose n} = \left(\frac{k}{N}\right) ^n + O (N ^{-1})$$
as well as the triangle inequality and Schur's formula~\eqref{eq:Schur} we deduce from~\eqref{eq:deF higher DM} that 
\begin{align}\label{eq:higher DM final}
\Tr \left|\gamma_N^{(n)} - \int_{SP_-\gH} |u^{\otimes n}\rangle \langle u^{\otimes n}| d\mu_N (u) \right| &\leq  \sum_{k=0} ^N \left( \left(\frac{k}{N}\right) ^2 - \left(\frac{k}{N}\right) ^n \right) \tr_{\gH ^k} \left[ G_{N,k} ^- \right]  \nonumber
\\&+ \sum_{k=0} ^{n-1} \left(\frac{k}{N}\right) ^n \tr_{\gH ^k} \left[ G_{N,k} ^- \right] \nonumber
\\&+ \sum_{k=0} ^{2} \left(\frac{k}{N}\right) ^2 \tr_{\gH ^k} \left[ G_{N,k} ^- \right] + o(1).
\end{align}
Finally, combining the various bounds we have obtained
$$ \sum_{k=2} ^N  \left(\frac{k}{N}\right) ^2 \tr_{\gH ^k} \left[ G_{N,k} ^- \right] \to 1.$$
But, because of~\eqref{eq:GP nomalization-localized-state} it follows that
$$\sum_{k=0} ^N  \left(\frac{k}{N}\right) ^2 \tr_{\gH ^k} \left[ G_{N,k} ^- \right] \to 1.$$
Recalling the normalization 
$$ \sum_{k=0} ^N \tr_{\gH ^k} \left[ G_{N,k} ^- \right] = 1,$$
we may thus apply Jensen's inequality to obtain
$$1\geq \sum_{k=0} ^N  \left(\frac{k}{N}\right) ^n \tr_{\gH ^k} \left[ G_{N,k} ^- \right]\geq \left( \sum_{k=0} ^N  \left(\frac{k}{N}\right) ^2 \tr_{\gH ^k} \left[ G_{N,k} ^- \right]\right) ^{n/2} \to 1.$$
There only remains to insert this and~\eqref{eq:GP nomalization-localized-state} in~\eqref{eq:higher DM final} to deduce~\eqref{eq:claim higher DM} and thus conclude the proof of the theorem. 
\end{proof}

\newpage

\appendix

\section{\textbf{A quantum use of the classical theorem}}\label{sec:class quant}

This appendix is devoted to an alternative proof of a weakened version of Theorem~\ref{thm:confined}. The method, introduced in~\cite{Kiessling-12} is less general than those described previously. This cannot be helped since it consists in an application of the Hewitt-Savage (classical de Finetti) theorem to a quantum problem. Here we follow an unpublished note of Mathieu Lewin and Nicolas Rougerie~\cite{LewRou-unpu12}.

In some cases (absence of magnetic fields essentially), the wave-function $\Psi_N$ minimizing a $N$-body energy can be chosen strictly positive. The ground state of the quantum problem may then be entirely analyzed in terms of the $N$-body density $\rho_{\Psi_N} = |\Psi_N| ^2$, which is a purely classical object (a symmetric probability measure) whose limit can be described using Theorem~\ref{thm:HS}. This approach works only under assumptions on the one-body Hamiltonian that are much more restrictive than those discussed in Remark~\ref{rem:energie cin}. 

\medskip

\noindent\emph{\underline{Note added, June 2020}. I thank Samir Salem for pointing out mistakes in the 2015 version of this appendix. The proof of Proposition~\ref{lem:app A kin lim} presented therein is withdrawn. I have replaced it by a plausibility argument and references to the literature for alternative, correct, proofs.}

\subsection{Classical formulation of the quantum problem}\label{sec:hyp class quant}\mbox{}\\\vspace{-0.4cm}

We here consider a quantum $N$-body Hamiltonian acting on $L ^2 (\R ^{dN})$
\begin{equation}\label{eq:app A hamil N}
H_N = \sum_{j=1} ^N \left(T_j + V (x_j)\right) + \frac{1}{N-1} \sum_{1\leq i < j \leq N} w(x_i-x_j)
\end{equation}
where the trapping potential $V$ and the interaction potential $w$ are chosen as in Section~\ref{sec:Hartree deF fort}. In particular we assume that $V$ is confining. We shall need rather strong assumptions on the kinetic energy operator $T$. The approach we shall discuss in this appendix is based on the following notion:

\begin{definition}[\textbf{Kinetic energy with positive kernel}]\label{def:hyp class quant}\mbox{}\\
We say that $T$ has a positive kernel if there exists $T(x,y):\R ^d \times \R ^d \to \R ^+$ such that  
\begin{equation}\label{eq:hyp class quant}
\bral \psi, T \psi \ketr = \iint_{\R ^d\times \R ^d} T(x,y) \left| \psi (x) - \psi (y) \right| ^2 dxdy 
\end{equation}
for all functions $\psi \in L ^2 (\R^d)$.\hfill\qed
\end{definition}

It is well-known that the pseudo-relativitic kinetic energy is of this form. Indeed 
\begin{equation}\label{eq:app A cin}
\bral \psi, \sqrt{-\Delta} \, \psi \ketr =  \frac{\Gamma(\frac{d+1}{2})}{2\pi ^{(n+1)/2}} \iint_{\R ^d \times \R ^d} \frac{\left| \psi (x) - \psi (y) \right| ^2}{|x-y| ^{d+1}} dxdy,
\end{equation}
see~\cite[Theorem 7.12]{LieLos-01}. More generally one may consider $T = |p| ^{2s}$, $0< s < 1$:
\begin{equation}\label{eq:app A pseudo rel} \bral \psi, |p| ^{2s} \psi \ketr = \bral \psi, (-\Delta) ^{s} \psi \ketr =  C_{d,s} \iint_{\R ^d \times \R ^d} \frac{\left| \psi (x) - \psi (y) \right| ^2}{|x-y| ^{d+2s}} dxdy
\end{equation}
recalling the correspondance~\eqref{eq:quant mom}. 

The non-relavistic kinetic energy does not fit in this framework, but one may nevertheless apply the considerations of this appendix to it because
\begin{equation}\label{eq:app A cin bis}
\bral \psi, -\Delta \, \psi \ketr = \int_{\R ^d} |\nabla \psi| ^2 = C_d \: \underset{s\uparrow 1}{\lim} \:(1-s) \iint_{\R ^d \times \R ^d} \frac{\left| \psi (x) - \psi (y) \right| ^2}{|x-y| ^{d+2s}} dxdy 
\end{equation}
with
$$ C_d = \left( \int_{S^{d-1}} \cos \theta \, d\sigma \right) ^{-1}.$$
Here $S^{d-1}$ is the euclidean sphere equiped with its Lebesgue measure $d\sigma$ and $\theta$ represents the angle with respect to the vertical axis. One may thus see $-\Delta$ as a limiting case of Definition~\ref{def:hyp class quant}. Formula~\eqref{eq:app A cin bis} is proved in ~\cite[Corollary 2]{BouBreMir-01} and \cite{MasNag-78}, see also~\cite{BouBreMir-02,MazSha-02}.

The cases with magnetic field $T= \left( p + A \right) ^2$ and $T= \left| p + A \right|$ are \emph{not} covered by this framework. One may deal with them with the methods of the main body of the course, as already mentioned, but not with those of this appendix. 

\medskip

An important consequence of the above choice of kinetic energy is that, by the triangle inequality
$$ \bral \psi, T \psi \ketr \geq \bral |\psi|, T |\psi| \ketr $$
and thus the total $N$-body energy
$$ \E_N [\Psi_N] = \bral \Psi_N, H_N \Psi_N \ketr$$
satisfies
$$ \E_N [\Psi_N] \geq \E_N \left[| \Psi_N |\right].$$
The ground state energy can thus be calculated using only positive test functions
\begin{equation}\label{eq:app A formul class pre}
E(N) = \inf \left\{ \E_N [\Psi_N], \Psi_N \in L_s ^2 (\R ^{dN})\right\} = \inf \left\{ \E_N [\Psi_N], \Psi_N \in L_s ^2 (\R ^{dN}), \Psi_N \geq 0 \right\}.
\end{equation}
This remark actually allows to prove a fact mentioned previously: the bosonic ground state is identical to the absolute ground state in the case of a kinetic energy of the form~\eqref{eq:hyp class quant} or~\eqref{eq:app A cin bis}, see~\cite[Chapter 3]{LieSei-09}. 

We recall the definition of Hartree's functional: 
$$ \EH [u] = \bral u,T u \ketr + \int_{\R ^d } V |u| ^2 + \frac{1}{2} \iint_{\R ^d \times \R ^d } |u(x)| ^2 w(x-y) |u(y)| ^2 dx dy,$$
its infimum being denoted $\eH$. In the sequel of this appendix we prove the following statement, which is a weakened version of Theorem~\ref{thm:confined}:

\begin{theorem}[\textbf{Derivation of Hartree's theory, alternative statement}]\label{thm:class quant confined}\mbox{}\\
Let $V$ and $w$ satisfy the assumptions of Section~\ref{sec:Hartree deF fort}, in particular~\eqref{eq:hartree confine 2}. We moreover assume that $T$ has a positive kernel in the sense of Definition~\ref{def:hyp class quant}, or is a limit case of this definition, as in~\eqref{eq:app A cin bis}. We then have
$$\lim_{N\to\ii}\frac{E(N)}{N}=\eH.$$
Let $\Psi_N \geq 0$ be a ground state of $H_N$, achieving the infimum in~\eqref{eq:app A formul class pre} and 
$$ \rho_N ^{(n)} (x_1,\ldots,x_n) := \int_{\R ^{d(N-n)}} \left|\Psi_N\left( x_1,\ldots,x_N \right)\right| ^2 dx_{n+1}\ldots dx_N $$
be its $n$-body reduced density. There exists a probability measure $\mu$ on $\cM_{\rm H}$, the set of minimizers of $\EH$ (modulo a phase), such that, along a subsequence
\begin{equation}\label{eq:app A def fort result}
\lim_{N\to\ii} \rho^{(n)}_{N}= \int_{\cM_{\rm H}} \left|u ^{\otimes n}\right| ^2 d\mu(u) \mbox{ for all } n\in \N,
\end{equation}
strongly in $L ^1 \left(\R ^{dn}\right).$ In particular, if $\eH$ has a unique minimizer (modulo a constant phase), then for the whole sequence
\begin{equation}
\lim_{N\to\ii} \rho^{(n)}_{N} = \left|\uH^{\otimes n}\right| ^2 \mbox{ for all } n\in \N.
\label{eq:app A BEC-confined} 
\end{equation}
\end{theorem}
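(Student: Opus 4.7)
The plan is to reduce the quantum $N$-body minimization to a purely classical variational problem for the symmetric probability density $\mubf_N := |\Psi_N|^2 \in \PP_s(\R^{dN})$, then apply the Hewitt-Savage theorem. By~\eqref{eq:app A formul class pre} we may take $\Psi_N \geq 0$, so that $\Psi_N = \sqrt{\mubf_N}$ is entirely determined by $\mubf_N$. Using~\eqref{eq:hyp class quant} and the symmetry of $\Psi_N$, the energy per particle becomes
\begin{equation*}
\frac{\E_N[\Psi_N]}{N} = F_1[\mubf_N] + \int_{\R^d} V\, d\mubf_N^{(1)} + \frac{1}{2}\iint_{\R^{2d}} w(x-y)\, d\mubf_N^{(2)}(x,y),
\end{equation*}
where $F_1[\mubf_N] := \iint T(x,y) \int \bigl|\sqrt{\mubf_N(x,X')} - \sqrt{\mubf_N(y,X')}\bigr|^2 dX'\,dx\,dy$ with $X' \in \R^{d(N-1)}$. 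The confining assumption on $V$ and the trivial upper bound $N^{-1}E(N)\leq \eH$ (obtained from the trial state $|u_{\rm H}|^{\otimes N}$, which is positive by the same positive-kernel argument applied to Hartree) yield tightness of all marginals, so after diagonal extraction $\mubf_N^{(n)} \wto_\ast \mubf^{(n)}$ in $\PP(\R^{dn})$ for each $n$. Kolmogorov's extension theorem produces a $\mubf \in \PP_s(\R^{d\N})$, and Theorem~\ref{thm:HS} provides a unique $P \in \PP(\PP(\R^d))$ with $\mubf^{(n)} = \int \rho^{\otimes n}\, dP(\rho)$.

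The potential and interaction terms pass to the liminf by standard weak-convergence and lsc arguments, giving respectively $\int\int V\,d\rho\, dP(\rho)$ and $\frac{1}{2}\int\iint w\, d\rho\otimes d\rho\, dP(\rho)$. For the kinetic contribution, the key observation is that for every $2\leq n\leq N$, applying Cauchy-Schwarz to the inner integral over $X''\in \R^{d(N-n)}$ to bound the cross term $\int\sqrt{\mubf_N(x, X'_n, X'')\mubf_N(y, X'_n, X'')}\,dX'' \leq \sqrt{\mubf_N^{(n)}(x,X'_n)\mubf_N^{(n)}(y,X'_n)}$ yields the lower bound
\begin{equation*}
F_1[\mubf_N]\geq \I_T^{(n)}[\mubf_N^{(n)}] := \iint T(x,y)\int \bigl|\sqrt{\mubf_N^{(n)}(x,X'_n)} - \sqrt{\mubf_N^{(n)}(y,X'_n)}\bigr|^2 dX'_n\,dx\,dy,
\end{equation*}
with $X'_n = (x_2,\ldots, x_n)$. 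The same Cauchy-Schwarz argument shows $n\mapsto \I_T^{(n)}[\mubf^{(n)}]$ is non-decreasing; moreover $\I_T^{(n)}$ is convex and weakly lower semicontinuous, and a direct expansion gives $\I_T^{(n)}[\rho^{\otimes n}] = \I_T[\rho] := \iint T(x,y)|\sqrt{\rho(x)}-\sqrt{\rho(y)}|^2\,dx\,dy$ for any $\rho\in\PP(\R^d)$. For each fixed $n$, taking the liminf on $N$ gives $\liminf_{N\to\infty} F_1[\mubf_N] \geq \I_T^{(n)}[\mubf^{(n)}]$.

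The main obstacle is passing to the limit $n\to\infty$ to identify $\lim_n \I_T^{(n)}\bigl[\int \rho^{\otimes n}\, dP(\rho)\bigr] = \int \I_T[\rho]\, dP(\rho)$. Convexity of $\I_T^{(n)}$ gives only the wrong-direction inequality $\I_T^{(n)}[\mubf^{(n)}]\leq \int \I_T[\rho]\, dP(\rho)$, so the matching lower bound must come from the Kakutani-type mutual singularity of distinct infinite product measures $\rho^{\otimes \infty}$. Under the full $\mubf = \int \rho^{\otimes \infty}\, dP(\rho)$, the law of large numbers within each pure phase forces the empirical distribution of $(x_2,\ldots,x_n)$ to converge $\mubf$-almost surely to the underlying $\rho$, so that for large $n$ the typical sets of different $\rho$'s are asymptotically disjoint; on each such typical set $\sqrt{\int \rho^{\otimes n}\,dP(\rho)}\approx \sqrt{p(\rho)}\sqrt{\rho^{\otimes n}}$, and the cross term in $\I_T^{(n)}$ effectively factorizes to yield the desired identification. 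Combining everything gives $\liminf N^{-1}E(N)\geq \int \EH[\sqrt{\rho}]\, dP(\rho)\geq \eH$; together with the upper bound this forces equality throughout, so that $P$ is concentrated on $\{|u|^2 : u\in \cM_{\rm H}\}$. Defining $\mu$ as the pushforward of $P$ under $\rho\mapsto \sqrt{\rho}$ yields~\eqref{eq:app A def fort result}, and strong $L^1$ convergence follows from weak convergence together with equality of masses, since each $\rho_N^{(n)}$ has unit $L^1$-norm.
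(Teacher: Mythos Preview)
Your overall architecture matches the paper's: reformulate in terms of $\mubf_N=|\Psi_N|^2$, extract weak limits, apply Hewitt--Savage, and reduce to showing that the ``mean kinetic energy'' functional $\mubf\mapsto \lim_n \I_T^{(n)}[\mubf^{(n)}]$ is affine on $\PP_s(\R^{d\N})$. Your Cauchy--Schwarz argument for the monotonicity $F_1[\mubf_N]\ge \I_T^{(n)}[\mubf_N^{(n)}]$ is correct and is in fact equivalent to the convexity of $\rho\mapsto T(\sqrt{\rho})$ that the paper invokes via the spectral decomposition of $\gamma_N^{(n)}$; either route works here.

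The genuine gap is precisely at the step you flag yourself. Writing ``the typical sets of different $\rho$'s are asymptotically disjoint; on each such typical set $\sqrt{\int \rho^{\otimes n}\,dP}\approx\sqrt{p(\rho)}\sqrt{\rho^{\otimes n}}$'' is a heuristic, not an argument, and nothing in what you wrote controls the kinetic energy carried by the transition region between typical sets, which is exactly where $\sqrt{\mubf^{(n)}}$ is \emph{not} close to a single $\sqrt{\rho^{\otimes n}}$. The paper (Lemma~\ref{lem:app A kin lim}) closes this gap by an explicit computation using the positive kernel: for $\mubf=\tfrac12\rho_1^{\otimes\infty}+\tfrac12\rho_2^{\otimes\infty}$ one expands $I(X,Y)=\bigl|\sqrt{\mubf^{(n)}(X)}-\sqrt{\mubf^{(n)}(Y)}\bigr|^2-\tfrac12\sum_i\bigl|\sqrt{\rho_i^{\otimes n}(X)}-\sqrt{\rho_i^{\otimes n}(Y)}\bigr|^2$ and, via $\sqrt{1+t}-1$ and elementary algebra, obtains the pointwise bound
\[
|I(X,Y)|\le C\Bigl(\prod_{j=2}^n\sqrt{\rho_1(x_j)\rho_2(x_j)}\Bigr)\Bigl(\bigl|\sqrt{\rho_1(x_1)}-\sqrt{\rho_1(y_1)}\bigr|^2+\bigl|\sqrt{\rho_2(x_1)}-\sqrt{\rho_2(y_1)}\bigr|^2\Bigr).
\]
Integrating against $T(x_1,y_1)$ and using $\delta:=\int\sqrt{\rho_1\rho_2}<1$ yields $\bigl|\I_T^{(n)}[\mubf^{(n)}]-\tfrac12\sum_i\I_T[\rho_i]\bigr|\le Cn\,\delta^{\,n-1}\to 0$. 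This quantitative decay is what turns your mutual-singularity intuition into a proof, and it is where the positive-kernel hypothesis~\eqref{eq:hyp class quant} is actually used; without it the estimate does not go through. The case $T=-\Delta$ is then handled by the limiting procedure~\eqref{eq:app A cin bis}.

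A minor point: your last line, ``strong $L^1$ convergence follows from weak convergence together with equality of masses,'' is not sufficient as stated; weak-$\ast$ convergence of probabilities to a probability does not imply $L^1$ convergence of densities. One needs the confinement and the energy bounds to upgrade the convergence, but this is routine once the main lower bound is in place.
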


\begin{remark}[Uniqueness for Hartree's theory]\label{rem:app A unic Hartree}\mbox{}\\
In the case of a kinetic energy with positive kernel, uniqueness of the minimizer of $\EH$ is immediate if $w\geq 0$. Indeed, the kinetic energy  $\bral \psi,T \psi \ketr$ is then a strictly convex functional of $\rho = \sqrt{|\psi| ^2}$, see~\cite[Chapter 7]{LieLos-01}. If $w$ is positive the functional $\EH$ is thus itself strictly convex as a function of $\rho$. \hfill\qed 
\end{remark}

The particular case where $T= -\Delta$ has been dealt with by Kiessling in~\cite{Kiessling-12}, and we shall follow his method in the general case. It consists in treating the problem as a purely classical one, which explains that we only obtain the convergence of reduced densities~\eqref{eq:app A def fort result} instead of the convergence of the full reduced density matrices~\eqref{eq:def fort result}. We pursue in the direction of~\eqref{eq:app A formul class pre} by writing
\begin{equation}\label{eq:app A formul class}
E(N) =  \inf \left\{ \E_N \left[\sqrt{\mubf_N}\right], \mubf_N \in \PP_s  (\R ^{dN})\right\}
\end{equation}
where $\mubf_N$ plays the role of $|\Psi_N| ^2$ and we used the fact that we may assume $\Psi_N \geq 0$. The object we have to study is a  symmetric probability measure of $N$ variables, and our strategy shall be similar to that used to prove Theorem~\eqref{thm:aHS}:  
\begin{itemize}
\item Since the system is confined, one may easily pass to the limit and obtain a problem in terms of a classical state with infinitely many particles $\mubf \in \PP_s (\R ^{d\N})$. We then use Theorem~\ref{thm:HS} to describe the limit $\mubf ^{(n)}$ of $\mubf_N ^{(n)}$, for all $n$, using a unique probability measure $P_{\mubf} \in \PP (\PP (\R ^d))$.  
\item The subtle point is to prove that the limit energy is indeed an affine function of $\mubf$. This uses in an essential way the fact that the kinetic energy has a positive kernel (or is a limit case of such energies), as well as the Hewitt-Savage theorem. 
\end{itemize}

These two steps are contained in the two following sections. We then briefly conclude the proof of Theorem~\ref{thm:class quant confined} in a third section.

\subsection{Passing to the limit}\label{sec:lim class quant}\mbox{}\\\vspace{-0.4cm}

The limit problem we aim at deriving is described by the functional (compare with~\eqref{eq:aHS somme})
\begin{multline}\label{eq:app A prob lim}
\E [\mubf]:= \limsup_{n\to \infty} \frac{1}{n} I \left(\mubf ^{(n)} \right) 
\\ + \int_{\R ^d} V(x) d\mubf ^{(1)}(x) + \frac{1}{2}\iint_{\R ^d\times \R ^d } w(x-y) d\mubf ^{(2)}(x,y), 
\end{multline}
where $\mubf \in \PP_s(\R ^{d\N})$ and we set 
\begin{equation}\label{eq:app A kin class}
 I \left(\mubf ^{(n)} \right) := \bral \sqrt{\mubf_n}, \left(\sum_{j=1} ^n T_j \right) \sqrt{\mubf_n} \ketr 
\end{equation}
for all probability measure $\mubf_n \in \PP (\R ^{dn})$. In fact we shall prove in Lemma~\eqref{lem:app A kin lim} below that the $\limsup$ in~\eqref{eq:app A prob lim} is actually a limit.

\begin{lemma}[\textbf{Passing to the limit}]\label{lem:app A lim}\mbox{}\\
Let $\mubf_N \in \PP_s(\R ^{dN})$ achieve the infimum in~\eqref{eq:app A formul class}. Along a subsequence we have
$$ \mubf_N ^{(n)} \wto_* \mubf ^{(n)} \in \PP_s (\R ^{dn})$$
for all $n\in \N$, in the sense of measures. The sequence $\left( \mubf ^{(n)}\right)_{n\in \N}$ defines a probability measure $\mubf\in\PP_s (\R^{d\N})$ and we have
\begin{equation}\label{eq:app A lim inf}
\liminf_{N\to \infty} \frac{E(N)}{N} \geq \E[\mubf]. 
\end{equation}
\end{lemma}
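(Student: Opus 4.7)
The a priori bound $E(N) \leq N\,\EH[u]\leq CN$ obtained by testing against a product trial state $\mubf_N=|u|^2{}^{\otimes N}$ for some nice $u$, combined with the relative boundedness~\eqref{eq:T controls w} of $w$ with respect to $T$ and the confinement of $V$, yields the uniform estimate
\[
\int_{\R^d} V\, d\mubf_N^{(1)} + \tfrac{1}{N}\, T(\sqrt{\mubf_N}) \leq C
\]
independent of $N$. Tightness of $(\mubf_N^{(n)})_N$ for each fixed $n$ is then immediate from $V(x)\to\infty$, and a Cantor diagonal extraction gives a (non-relabelled) subsequence along which $\mubf_N^{(n)}\wto_\ast \mubf^{(n)}\in \PP_s(\R^{dn})$ for every $n\in\N$. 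The marginal compatibility $(\mubf_N^{(n+1)})^{(n)}=\mubf_N^{(n)}$ survives in the weak-$\ast$ limit, so Kolmogorov's extension theorem produces a unique symmetric $\mubf\in\PP_s(\R^{d\N})$ whose $n$-th marginal is $\mubf^{(n)}$.

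\textbf{Passing to the liminf in the $V$- and $w$-terms.} Splitting $V=V\one_{\{V\leq R\}}+V\one_{\{V>R\}}$ and using that $V\to\infty$ together with the a priori bound on $\int V\,d\mubf_N^{(1)}$, the first piece is handled by weak-$\ast$ convergence against a bounded continuous function and the second piece vanishes uniformly as $R\to\infty$, giving $\int V\,d\mubf_N^{(1)} \to \int V\,d\mubf^{(1)}$. For the interaction term, decomposing $w$ according to~\eqref{eq:decrease w} into a compactly supported continuous part (handled by weak-$\ast$ convergence of $\mubf_N^{(2)}$) and a tail small in $L^p+L^\infty$, and controlling the tail against the uniformly bounded two-body kinetic energy via the Sobolev-type argument of Remark~\ref{rem:asum w}, yields $\iint w\, d\mubf_N^{(2)}\to \iint w\,d\mubf^{(2)}$. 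This reproduces the last two terms of $\E[\mubf]$.

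\textbf{Kinetic subadditivity, and the main obstacle.} The heart of the proof is the subadditivity
\[
\tfrac{1}{n}\, T\bigl(\sqrt{\mubf_N^{(n)}}\bigr) \;\leq\; \tfrac{1}{N}\, T(\sqrt{\mubf_N}), \qquad 1\leq n\leq N.
\]
Set $\Psi_N:=\sqrt{\mubf_N}\geq 0$, which is legitimate thanks to~\eqref{eq:app A formul class pre}. Bosonic symmetry implies $\bral\Psi_N,T_j\Psi_N\ketr= N^{-1}\, T(\Psi_N)$ for each $j$, and similarly for $\sqrt{\mubf_N^{(n)}}$ with $N$ replaced by $n$, so it suffices to check $\bral \sqrt{\mubf_N^{(n)}}, T_j \sqrt{\mubf_N^{(n)}}\ketr \leq \bral\Psi_N, T_j\Psi_N\ketr$ for each $j\leq n$. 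Using the kernel representation~\eqref{eq:hyp class quant} and the relation $\mubf_N^{(n)}(X_{\hat\jmath},x_j)=\int \Psi_N^2(X_{\hat\jmath},x_j,Z)\,dZ$, this reduces to the pointwise Cauchy--Schwarz inequality
\[
\Bigl|\sqrt{\textstyle\int c^2\,dZ} \;-\; \sqrt{\textstyle\int d^2\,dZ}\,\Bigr|^{\,2} \;\leq\; \int (c-d)^2\,dZ \qquad (c,d\geq 0),
\]
applied with $c=\Psi_N(X_{\hat\jmath},x_j,Z)$, $d=\Psi_N(X_{\hat\jmath},y_j,Z)$, and then integration against the positive kernel $T(x_j,y_j)\,dx_j\,dy_j\,dX_{\hat\jmath}$. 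Fixing $n$ and letting $N\to\infty$, the uniform bound on $T(\sqrt{\mubf_N^{(n)}})$ puts $\sqrt{\mubf_N^{(n)}}$ in a bounded set of the appropriate fractional Sobolev space; after extraction it converges weakly there, and weak-$\ast$ convergence of measures forces the limit to be $\sqrt{\mubf^{(n)}}$. Lower semi-continuity of the Sobolev seminorm yields $\liminf_N T(\sqrt{\mubf_N^{(n)}}) \geq T(\sqrt{\mubf^{(n)}})$, whence
\[
\liminf_{N\to\infty} \tfrac{1}{N}\, T(\sqrt{\mubf_N}) \;\geq\; \tfrac{1}{n}\, T(\sqrt{\mubf^{(n)}}),
\]
and taking $\sup_n$ (hence $\limsup_n$) on the right completes~\eqref{eq:app A lim inf}. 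The main obstacle is exactly this last step: it leans crucially on the positive-kernel representation~\eqref{eq:hyp class quant} (or its limiting form~\eqref{eq:app A cin bis}, which is handled by a monotone passage $s\uparrow 1$) \emph{and} on the sign choice $\Psi_N\geq 0$ provided by~\eqref{eq:app A formul class pre}; both ingredients fail as soon as a magnetic field is present, which is why this classical-style argument is strictly less general than the quantum de Finetti approach of Chapter~\ref{sec:quant}.
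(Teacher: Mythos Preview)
Your proof is correct and follows the same overall architecture as the paper (tightness and Kolmogorov for the marginals, routine handling of the $V$- and $w$-terms, and the key subadditivity $\tfrac{1}{n}T(\sqrt{\mubf_N^{(n)}})\leq \tfrac{1}{N}T(\sqrt{\mubf_N})$ followed by lower semi-continuity). The one genuine difference lies in how you establish that subadditivity. The paper takes a quantum detour: writing $\tfrac{1}{N}T(\sqrt{\mubf_N})=\tfrac{1}{n}\tr\!\big[\sum_{j\leq n}T_j\,\gamma_N^{(n)}\big]$, spectrally decomposing the $n$-body reduced density matrix $\gamma_N^{(n)}$, and then invoking the convexity of $\rho\mapsto T(\sqrt\rho)$ (the Lieb--Loss result cited in Remark~\ref{rem:app A unic Hartree}) to pass from $\sum_k\lambda_k\,T(|u_k|)$ down to $T(\sqrt{\rho_{\gamma_N^{(n)}}})=T(\sqrt{\mubf_N^{(n)}})$. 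You instead stay entirely on the classical side and prove the same inequality by a direct pointwise application of Cauchy--Schwarz in the extra variables $Z$, integrated against the positive kernel $T(x,y)$. Your route is more elementary and self-contained within the measure-theoretic framework of this appendix; the paper's route has the virtue of making explicit that the convexity of the kinetic energy in $\rho$ is the underlying mechanism, and connects back to the quantum object $\gamma_N^{(n)}$. Both rely in exactly the same way on the positive-kernel hypothesis and on $\Psi_N\geq 0$.
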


\begin{proof}
Extracting convergent subsequences is done as in Section~\ref{sec:appli HS}. The existence of the measure $\mubf \in \PP_s (\R ^{d\N})$ follows, using Kolmogorov's theorem.

Passing to the liminf in the terms
$$ \frac{1}{N} \sum_{j=1} ^N \int_{\R^{dN}} V(x_j) \: d\mubf_N (x_1,\ldots,x_N) = \int_{\R ^d} V(x) \: d\mubf_N ^{(1)} (x)$$ 
and
$$ \frac{1}{N} \frac{1}{N-1} \sum_{1\leq i<j\leq N} ^N \int_{\R^{dN}} w(x_i-x_j) \:d \mubf_N (x_1,\ldots,x_N) = \frac{1}{2}\iint_{\R ^d\times \R ^d } w(x-y) \: d\mubf_N ^{(2)}(x,y)$$
uses the same ideas as in Chapters~\ref{sec:class} and~\ref{sec:quant}. We shall not elaborate on this point.  

The new point is to deal with the kinetic energy in order to obtain
\begin{equation}\label{eq:app A claim 1}
 \liminf_{N\to \infty} \frac{1}{N} T \left(\sqrt{\mubf_N }\right) \geq \limsup_{n\to \infty} \frac{1}{n} I \left(\mubf ^{(n)} \right). 
\end{equation}
To this end we denote $\Psi_N = \sqrt{\mubf_N}$ a minimizer in~\eqref{eq:app A formul class pre} and we then have 
\begin{align*}
\frac{1}{N} T \left(\sqrt{\mubf_N }\right) &= \frac{1}{N} \tr \left[ \sum_{j=1} ^N T_j \ketl \Psi_N \ketr \bral \Psi_N \brar \right]\\
&=\frac{1}{n} \tr \left[ \sum_{j=1} ^n T_j \gamma_N ^{(n)} \right]
\end{align*}
where $\gamma_N ^{(n)}$ is the $n$-th reduced density matrix of $\Psi_N$. It is a trace-class operator, that we decompose under the form
$$ \gamma_N ^{(n)} = \sum_{k=1} ^{+\infty} \lambda_n ^k  | u_n ^k \rangle \langle u_n ^k |$$
with $u_n ^k$ normalised in $L_s ^2(\R ^{dk})$ and $\sum_{k=1} ^{\infty} \lambda_k ^n = 1$. Inserting this decomposition in the previous equation and recalling~\eqref{eq:app A kin class} we obtain by linearity of the trace
\begin{align*}
\frac{1}{N} T \left(\sqrt{\mubf_N }\right) &= \frac{1}{n}  \sum_{k=1} ^{+\infty} \lambda_n ^k \: T\left( \sqrt{|u_n ^k| ^2} \right) \\
&\geq \frac{1}{n}  T\left( \sqrt{ \sum_{k=1} ^{+\infty} \lambda_n ^k |u_n ^k| ^2} \right) \\
&= \frac{1}{n}  T\left( \sqrt{ \rho_{\gamma_N ^{(n)}}} \right) = \frac{1}{n} I \left(\mubf_N ^{(n)} \right)
\end{align*}
where the inequality in the second line uses the convexity of the kinetic energy as a functions of the density $\rho$, already recalled in Remark~\ref{rem:app A unic Hartree}, cf~\cite[Chapter 7]{LieLos-01}. In the last equality
$$ \rho_{\gamma_N ^{(n)}} = \sum_{k=1} ^\infty \lambda_n ^k |u_n ^k| ^2$$
is the density\footnote{Formally, the diagonal part of the kernel.} of $\gamma_N ^{(n)}$ and it is easy to see that 
$$ \rho_{\gamma_N ^{(n)}} = \rho_{N} ^{(n)} = \int_{\R^{d(N-n)}} |\Psi_N(x_1,\ldots,x_N)| ^2 dx_{n+1}\ldots dx_N,$$
which yields
$$ \frac{1}{N} T \left(\sqrt{\mubf_N }\right) \geq \frac{1}{n} I \left(\mubf_N ^{(n)} \right).$$
To obtain~\eqref{eq:app A claim 1}, we first pass to the liminf in $N$, then to the limsup in $n$.
\end{proof}

The notion of kinetic energy with positive kernel is already crucial at this level. It provides the convexity property that we just used. It will play an even greater role in the next section.

\subsection{The limit problem}\label{sec:linear class quant}

We now have to show that the functional~\eqref{eq:app A prob lim} is affine on $\PP_s (\R ^{d\N})$. The last two terms obviously are, which is not suprising since they are classical in nature. It thus suffices to show that the first term, which encodes the quantum nature of the problem, is also linear in the density:    

\begin{proposition}[\textbf{Linearity of the limit kinetic energy}]\label{lem:app A kin lim}\mbox{}\\
The functional
$$ I \left(\mubf \right) := \limsup_{n\to \infty} \frac{1}{n} I \left(\mubf ^{(n)} \right) = \lim_{n\to \infty} \frac{1}{n} I \left(\mubf ^{(n)} \right)$$
is affine on $\PP_s (\R ^{d\N})$.
\end{proposition}

Kiessling~\cite{Kiessling-12} gave a proof of this result in the case of the non-relativistic kinetic energy ($s=1$). He notes that in this case
$$\frac{1}{n} I \left(\mubf ^{(n)} \right)= \frac{1}{n} T \left(\sqrt{\mubf ^{(n)}} \right) = \frac{1}{n} \sum_{j=1} ^n \int_{\R^{dn}} \left|\nabla_j \sqrt{\mubf ^{(n)}}\right| ^2 = \frac{1}{4n} \sum_{j=1} ^n \int_{\R^{dn}} \left|\nabla_j \log \mubf ^{(n)}\right| ^2 \mubf ^{(n)}$$
and that the last expression is identical to the Fisher information of the probability measure $\mubf ^{(n)}$. The quantity we study can thus be interpreted as a ``mean Fisher information'' of the measure $\mubf \in \PP_s(\R ^{d\N})$, in analogy with the mean entropy introduced in~\eqref{eq:aHS somme}.

This quantity has an interesting connection to the classical entropy of a probability measure. Letting $\mubf ^{(n)}$ evolve following the heat flow, one may show that at each time along the flow, the Fisher information is the derivative of the entropy. Since the heat flow is linear and the mean entropy is affine (cf Lemma~\ref{lem:mean entropy}), Kiessling deduces that the mean Fisher information is affine. Another point of view on this question is given in~\cite[Section~5]{HauMis-14}. The generalizations to fractional Fisher informations (derivatives of the entropy along fractional heat flows) are in~\cite{Salem-19,Salem-19b}, which give the $0<s<1$ cases of~\ref{lem:app A kin lim}. 

Yet another proof (for all cases $0<s \leq 1$) is in~\cite{Rougerie-19b}, but it is actually based on the quantum de Finetti theorem. It thus does not qualify for the purpose of giving a classical mechanics proof of Theorem~\ref{thm:class quant confined}.

Here we shall only give a plausibility argument for Proposition~\ref{lem:app A kin lim add}. We do not think it can be turned into a full proof, except possibly by combining with tools from~\cite{HauMis-14}. We prove the following weak version of linearity:

\begin{lemma}[\textbf{Linearity for nice convex combinations}]\label{lem:app A kin lim add}\mbox{}\\
Let $\rho_1,\ldots,\rho_J \in \PP(\R^d)$ be pairwise distinct positive measures with a common compact support $\Omega$. Assume there exists two positive constants $c_-,c_+$ such that  
\begin{equation}\label{eq:app A regul}
 c_- \leq \rho_j \leq c_+
\end{equation}
almost everywhere on $\Omega$, for all $j=1,\ldots,J$. Let $\lambda_1,\ldots,\lambda_J$ be positive numbers adding to $1$. Denote 
$$ \mubf_n:= \sum_{j=1} ^J \lambda_j \rho_j ^{\otimes n}.$$
Then 
\begin{equation}\label{eq:app A key}
\left| T \left( \sqrt{\mubf_n} \right) - \sum_{j=1} ^J \lambda_j T \left( \sqrt{\rho_j^{\otimes n} }\right) \right| \leq C (c_-,c_+,J) \left( \max_{1\leq i \neq j \leq J} \int_{\Omega} \sqrt{\rho_i} \sqrt{\rho_j} \right) ^{n-1} \max_{1\leq j \leq J} T\left( \sqrt{\rho_j} \right)  
\end{equation}
where the constant $C (c_-,c_+,J)$ depends only on $c_-,c_+$ and $J$ and $T$ is understood with all intergrations restricted to $\Omega$.
\end{lemma}

The crucial point here is that, if the $\rho_j$ are distinct probability measures
\begin{equation}\label{eq:app A trick}
 \int_{\Omega} \sqrt{\rho_i} \sqrt{\rho_j} < \frac{1}{2} \left( \int_{\Omega} \rho_i + \int_{\Omega} \rho_j\right) < 1 
\end{equation}
by Cauchy-Schwarz, so that this quantity raised to a large power $n$ yields an exponentially small term. Passing to the limit $n\to \infty$ shows that for $\mubf$ of the form
$$ \mubf = \int \rho ^{\otimes \infty} dP(\rho), \rho \in \PP (\PP (\R^d))$$
with $P$ an atomic measure on nice functions as in Lemma~\ref{lem:app A kin lim add} we indeed have 
$$ I(\mubf) = \int I(\rho) dP(\rho).$$

\begin{proof}
By convexity of $T$, the quantity inside the absolute value is actually non-positive. We thus aim at an upper bound on   
\begin{equation}\label{eq:rev1}
\sum_{j=1} ^J \lambda_j T \left( \sqrt{\rho_j^{\otimes n}} \right) -  T \left( \sqrt{\mubf_n}\right). 
\end{equation}
We consider first the case $s<1$ in~\eqref{eq:app A pseudo rel}, and extend to $s=1$ in the end. We denote 
$$ U_j = \sqrt{\rho_j ^{\otimes n}}$$
and 
$$ X = (x,x_2,\ldots,x_n), \quad Y= (y,x_2,\ldots,x_n), \quad \hat{X} = (x_2,\ldots,x_n).$$
to write~\eqref{eq:rev1} as 
\begin{align*} 
D &:= \int_{\R^{dn+1}} T(x,y) \sum_{1\leq j \leq J} \lambda_j \left( U_j (X)  - U_j (Y) \right)^2 dx \, dy\,  d\hat{X} \\
&- \int_{\R^{dn+1}} T(x,y) \left( \left(\sum_{j=1}^{J} \lambda_j U_j ^2 (X) \right)^{1/2} - \left(\sum_{j=1}^{J} \lambda_j U_j ^2 (Y) \right)^{1/2} \right)^2  dx \, dy \, d\hat{X} \\
&= 2 \int_{\R^{dn+1}} T (x,y) I (X,Y)
\end{align*}
where
\begin{align*}
 I(X,Y)&:= \frac{1}{2} \sum_{1\leq j \leq J} \lambda_j \left( U_j (X)  - U_j (Y) \right)^2 -  \frac{1}{2} \left( \left(\sum_{j=1}^J \lambda_j U_j ^2 (X) \right)^{1/2} -  \left(\sum_{j=1}^J \lambda_j U_j ^2 (Y) \right)^{1/2} \right)^2 \\
 &= \sqrt{ \sum_{1\leq i,j\leq J} \lambda_j \lambda_j U_i ^2 (X) U_j ^2 (Y) } - \sum_{j=1} ^J \lambda_j U_j(X) U_j (Y)  \\
 &= \sum_{j=1} ^J \lambda_j U_j(X) U_j (Y) \left( \sqrt{1 + \frac{\sum_{1\leq i,j \leq J} \lambda_i \lambda_j \left( U_i ^2 (X) U_j ^2 (Y) - U_i (X) U_j (X) U_i (Y) U_j (Y) \right) }{\left(\sum_{j=1} ^J \lambda_j U_j(X) U_j (Y)\right)^2}} -1 \right)\\
 &\leq \frac{1}{2} \frac{\sum_{1\leq i<j \leq J} \lambda_i \lambda_j \left( U_i ^2 (X) U_j ^2 (Y) + U_j ^2 (X) U_i ^2 (Y) - 2 U_i (X) U_j (X) U_i (Y) U_j (Y) \right) }{\sum_{j=1} ^J \lambda_j U_j(X) U_j (Y)}. 
\end{align*}
We consider terms of the double sum one at a time. Denoting 
$$u_j = \sqrt{\rho_j}$$
we write
\begin{align*}
&\frac{ U_i ^2 (X) U_j ^2 (Y) + U_j ^2 (X) U_i ^2 (Y) - 2 U_i (X) U_j (X) U_i (Y) U_j (Y) }{\sum_{j=1} ^J \lambda_j U_j(X) U_j (Y)}  \\
&\leq \prod_{k=2} ^n u_i ^2 (x_k) u_j ^2 (x_k) \frac{\left|u_i (x)u_j(y) - u_i(y)u_j(x) \right|^2}{ \lambda_i u_i(x)u_i(y) \prod_{k=2} ^n u_i ^2 (x_k) + \lambda_j u_j(x)u_j(y) \prod_{k=2} ^n u_i ^2 (x_k)} \\
&\leq \frac{1}{c_- \min(\lambda_i,\lambda_j)} \prod_{k=2} ^n u_i ^2 (x_k) u_j ^2 (x_k) \frac{\left|u_i (x)u_j(y) - u_i(y)u_j(x) \right|^2}{ \prod_{k=2} ^n u_i ^2 (x_k) + \prod_{k=2} ^n u_i ^2 (x_k)}\\
&\leq \frac{2}{c_- \min(\lambda_i,\lambda_j)} \prod_{k=2} ^n u_i  (x_k) u_j  (x_k) \left|u_i (x)u_j(y) - u_i(y)u_j(x) \right|^2\\
&\leq \frac{2 c_+ }{c_- \min(\lambda_i,\lambda_j)} \prod_{k=2} ^n u_i  (x_k) u_j  (x_k) \left( \left|u_i (x) - u_i(y)\right|^2 + \left|u_j (x) - u_j(y)\right|^2 \right).
\end{align*}
After multiplying by $\lambda_i \lambda_j$, performing the sum in $i,j$ (recall that the $\lambda_j$'s add up to one), multiplying by $T(x,y)$ and integrating in $x,y$ and $x_2,\ldots x_n$ (the integrals separate) this yields the claimed result, with 
$$ C (c_-,c_+,J) \propto J \frac{c_+}{c_-}.$$
The proof is complete in the cases covered by Definition~\ref{def:hyp class quant}. 

To extend it to~\eqref{eq:app A cin bis} we first apply the preceding arguments to the kinetic energy $T_s$ defined by the positive kernel
$$ T_s (x,y) = C_d |x-y| ^{-(d+2s)}$$
with $0<s<1$ fixed. We then multiply the so-obtained inequality by $(1-s)$ and take the limit $s\to 1$. Using~\eqref{eq:app A cin bis} this gives the result when $T$ is the non-relativistic kinetic energy.
\end{proof}

\subsection{Conclusion}\label{sec:concl class quant}\mbox{}\\\vspace{-0.4cm}

The upper bound follows by the usual trial state argument. Combining Lemma~\ref{lem:app A lim} and Proposition~\ref{lem:app A kin lim} as well as the representation of $\mubf$ given by Theorem~\ref{thm:HS} we deduce  
\begin{align*}
\liminf_{N\to \infty} \frac{E(N)}{N} &= \E \left[  \int_{\PP (\R ^d)} \rho ^{\otimes \infty} dP (\rho) \right] \\
&=  \int_{\PP (\R ^d)}  \E \left[  \rho ^{\otimes \infty}  \right] dP (\rho) \\
&= \int_{\PP (\R ^d)}  \EH \left[  \sqrt{\rho}  \right] \d\mu (\rho) \geq \int_{\PP (\R ^d)}  \eH  \; dP (\rho) = \eH,
\end{align*}
where $P$ is the de Finetti measure. This gives the energy convergence. The convergence of reduced densities follows as usual by noting that equality must hold in all the previous inequalities. 

\newpage

\section{\textbf{Finite-dimensional bosons at large temperature}}\label{sec:large T}

Until now we have only considered mean-field quantum systems at zero temperature, and obtained in the limit $N\to \infty$ de Finetti measures concentrated on the minimizers of the limit energy functional. It is possible, taking a large temperature limit at the same time as the mean-field limit, to obtain a Gibbs measure. In this appendix we explain this for the case of bosons with a finite dimensional state-space, following~\cite{Gottlieb-05,LewRou-unpu13} and~\cite[Chapter 3]{Knowles-thesis}. 

In infinite dimension, important problems arise, in particular for the definition of the limit problem. The non-linear Gibbs measures one obtains play an important role in quantum field theory~\cite{Derezinski-13,Simon-74,Summers-12,Gallavotti-85,GliJaf-87} and in the construction of rough solutions to the NLS equation, see for example~\cite{LebRosSpe-88,Bourgain-94,Bourgain-96,Tzvetkov-08,BurTzv-08,BurTzv-08b,BurThoTzv-10,ThoTzv-10,Suzzoni-11}. We refer to the paper~\cite{LewNamRou-14b} for results on the ``mean-field/large temperature'' limit in infinite dimension and a more thorough discussion of these subjects.

\subsection{Setting and results}\label{sec:app B cadre}\mbox{}\\\vspace{-0.4cm}

In this appendix, the one-body state space will be a complex Hilbert space $\gH$ with finite dimension
$$\dim \gH = d.$$
We consider the mean-field type Hamiltonian
\begin{equation}\label{eq:app B hamil}
H_N = \sum_{j=1} ^N h_j + \frac{1}{N-1} \sum_{1\leq i<j \leq N} w_{ij} 
\end{equation}
where $h$ is a self-adjoint operator on $\gH$ and $w$ a self-adjoint operator on $\gH \otimes \gH$, symmetric in the sense that
$$ w (u\otimes v) = w (v\otimes u), \quad \forall u,v\in \gH.$$
The energy functional is as usual defined by 
$$ \E_N [\Psi_N] = \bral \Psi_N, H_N \Psi_N \ketr$$ 
for $\Psi_N \in \bigotimes_s ^N \gH $ and extended to mixed states $\Gamma_N$ over $\gH ^N = \bigotimes_s ^N \gH$ by the formula
$$ \E_N [\Gamma_N] = \tr_{\gH ^N} \left[ H_N \Gamma_N \right].$$ 
The equilibrium state of the system at temperature $T$ is obtained by minimizing the free-energy functional
\begin{equation}\label{eq:app B free ener func}
\F_N [\Gamma_N] := \E_N [\Gamma_N] + T \tr\left[ \Gamma_N \log \Gamma_N \right] 
\end{equation}
amongst mixed states. The minimizer is the Gibbs state
\begin{equation}\label{eq:app B quant Gibbs}
\Gamma_N = \displaystyle \frac{\exp \left( - T ^{-1} H_N \right)}{\tr\left[ \exp \left( - T ^{-1} H_N \right)\right]}. 
\end{equation}
The associated minimum free-energy is obtained from the partition function (normalisation factor in~\eqref{eq:app B quant Gibbs}) as follows:
\begin{equation}\label{eq:app B free ener} 
F_N = \inf \left\{ \F_N [\Gamma_N], \Gamma_N \in \cS (\gH ^N) \right\} = -T \log \tr\left[ \exp \left( - \frac{1}{T} H_N \right)\right]. 
\end{equation}
We shall be interested in the limit of these objects in the limit
\begin{equation}\label{eq:app B lim}
N\to \infty, \quad T = tN, \quad t \mbox{ fixed } 
\end{equation}
which happens to be the good regime to obtain an interesting limit problem. We will in fact obtain a classical free-energy functional that we now define.

\medskip 

Since $\gH$ is finite-dimensional, one may define $du$, the normalized Lebesgue measure on its unit sphere $S\gH$. The limiting objects will be de Finetti measures, hence probability measures $\mu$ on $S\gH$, and more precisely functions of $L ^1 (S\gH,du)$. We introduce for these objects a classical free-energy functional
\begin{equation}\label{eq:app B free ener func class}
\Fcl [\mu] = \int_{S\gH} \EH [u] \mu(u) du + t  \int_{S\gH} \mu(u) \log \left(\mu (u)\right) du
\end{equation}
and denote $\Fcle$ its infimum amongst positive normalized $L ^1$ functions. It is attained by the Gibbs measure
\begin{equation}\label{eq:app B Gibbs class}
\mucl = \frac{\exp\left( -t^{-1} \EH [u] \right)}{\int_{S\gH} \exp\left( -t^{-1} \EH [u] \right) du} 
\end{equation}
and we have
$$ \Fcle = -t \log \left( \int_{S\gH} \exp\left( -\frac{1}{t} \EH [u] \right) du \right)$$
Here $\EH[u]$ is the Hartree functional
\begin{equation}\label{eq:app B Hartree}
 \EH [u] = \frac{1}{N} \bral u ^{\otimes N}, H_N u ^{\otimes N} \ketr_{\gH ^N} = \bral u,h u \ketr_{\gH} + \half \bral u\otimes u, w u \otimes u \ketr_{\gH ^2}.  
\end{equation}
The theorem we shall prove, due to Gottlieb~\cite{Gottlieb-05} (see also~\cite{GotSch-09,JulGotMarPol-13} and, independently~\cite[Chapter 3]{Knowles-thesis}) is of a semi-classical nature since it provides a link between the quantum and classical Gibbs states,~\eqref{eq:app B quant Gibbs} and~\eqref{eq:app B Gibbs class}:

\begin{theorem}[\textbf{Mean-field/large temperature limit in finite dimension}]\label{thm:app B}\mbox{}\\
In the limit~\eqref{eq:app B lim}, we have
\begin{equation}\label{eq:app B lim ener}
F_N = - T \log \dim\left( \gH_s ^N \right)  + N \Fcle + O(d).
\end{equation}
Moreover, denoting $\gamma_N ^{(n)}$ the $n$-th reduced density matrix of the Gibbs state~\eqref{eq:app B quant Gibbs}, 
\begin{equation}\label{eq:app B lim mat}
\gamma_N ^{(n)} \to \int_{S\gH} | u ^{\otimes n} \rangle \langle u ^{\otimes n} |\mucl(u) du 
\end{equation}
strongly in the trace-class norm of $\gS ^1(\gH ^n)$.
\end{theorem}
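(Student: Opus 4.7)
The plan is to prove Theorem~\ref{thm:app B} via a two-sided variational sandwich based on the Berezin--Lieb inequalities, using the CKMR construction of Chapter~\ref{sec:deF finite dim} as the bridge between the quantum state on $\gH_s^N$ and a classical probability density on $S\gH$. The two directions are asymmetric: one uses the explicit trial state of the anti-Wick quantization, the other uses the lower symbol of the true Gibbs state together with Theorem~\ref{thm:DeFinetti quant}.

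\textbf{Upper bound.} For any probability density $\mu \in L^1(S\gH,du)$ with $\int \mu\,du=1$, define the trial state
\[
\Gamma_N^\mu := \int_{S\gH} \mu(u)\,|u^{\otimes N}\rangle\langle u^{\otimes N}|\,du,
\]
which satisfies $\tr\Gamma_N^\mu = 1$ since $\|u^{\otimes N}\| = 1$. Its energy is computed directly: $\tr[H_N\Gamma_N^\mu] = \int \mu(u)\,\langle u^{\otimes N}, H_N u^{\otimes N}\rangle\,du = N\int \EH[u]\mu(u)\,du$. For the entropy, I will invoke the Berezin--Lieb inequality applied to the convex function $x\mapsto x\log x$: since $\Gamma_N^\mu$ has upper symbol $\mu/\dim\gH_s^N$ with respect to the coherent state family $\{|u^{\otimes N}\rangle\}$, one has
\[
\tr[\Gamma_N^\mu \log\Gamma_N^\mu] \leq \int \mu(u)\log\mu(u)\,du - \log\dim\gH_s^N.
\]
Plugging in and taking $\mu = \mucl$ yields $F_N \leq \F_N[\Gamma_N^{\mucl}] \leq N\Fcle - T\log\dim\gH_s^N$.

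\textbf{Lower bound.} Let $\Gamma_N$ be the actual Gibbs state~\eqref{eq:app B quant Gibbs} and introduce its lower symbol $\mu_N(u) := \dim\gH_s^N\,\langle u^{\otimes N},\Gamma_N u^{\otimes N}\rangle$, which is a probability density on $S\gH$ by Schur's lemma. The companion Berezin--Lieb inequality (for the same convex function $x\log x$) gives
\[
\int \mu_N\log\mu_N\,du - \log\dim\gH_s^N \leq \tr[\Gamma_N \log\Gamma_N].
\]
For the energy I use the CKMR state $\widetilde\Gamma_N = \int\mu_N(u)|u^{\otimes N}\rangle\langle u^{\otimes N}|du$: its reduced matrices are exactly $\widetilde\gamma_N^{(n)} = \int \mu_N(u)|u^{\otimes n}\rangle\langle u^{\otimes n}|\,du$, and Theorem~\ref{thm:DeFinetti quant} yields $\|\gamma_N^{(k)} - \widetilde\gamma_N^{(k)}\|_1 \leq 2k(d+2k)/N$ for $k=1,2$. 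Expanding $H_N$ in terms of one- and two-body reduced density matrices and using $\|h\|, \|w\| < \infty$ (which holds in finite dimension) produces $|\tr[H_N\Gamma_N] - N\int\EH[u]\mu_N(u)\,du| = O(d)$. Combining these two estimates gives $F_N \geq N\Fcl[\mu_N] - T\log\dim\gH_s^N - O(d) \geq N\Fcle - T\log\dim\gH_s^N - O(d)$, which together with the upper bound proves~\eqref{eq:app B lim ener}.

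\textbf{Convergence of states.} The sandwich forces $\Fcl[\mu_N] \to \Fcle$ quantitatively, with error $O(d/N) \to 0$ in the regime $T=tN$. A direct computation using $\mucl = e^{-\EH/t}/Z$ identifies $\Fcl[\mu] - \Fcle$ as $t$ times the relative entropy $\mathrm{KL}(\mu\,\|\,\mucl)$, so Pinsker's inequality yields $\|\mu_N - \mucl\|_{L^1(du)}^2 \leq 2 t^{-1}(\Fcl[\mu_N] - \Fcle) = O(d/N)$. For the reduced matrices, the triangle inequality combined with Theorem~\ref{thm:DeFinetti quant} gives
\[
\Bigl\|\gamma_N^{(n)} - \int_{S\gH} |u^{\otimes n}\rangle\langle u^{\otimes n}|\,\mucl(u)\,du\Bigr\|_1
\leq \frac{2n(d+2n)}{N} + \|\mu_N - \mucl\|_{L^1(du)},
\]
and both terms tend to zero, establishing~\eqref{eq:app B lim mat}.

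\textbf{Main obstacle.} The delicate step is the matching of entropies: the energy estimate follows routinely from the quantitative de Finetti bound and boundedness of $h,w$, but controlling $T\tr[\Gamma_N\log\Gamma_N]$ in terms of the classical entropy of the lower symbol requires the correct \emph{direction} of the Berezin--Lieb inequality on each side, together with the precise matching of the normalization factor $\log\dim\gH_s^N$. The appearance of this term on both sides is what renders the limit non-trivial and makes the scaling $T = tN$ the natural one, since in this regime the entropy term $-T\log\dim\gH_s^N = O(N)$ has the same order as the energy $N\Fcle$; any other scaling of $T$ trivializes one contribution or the other.
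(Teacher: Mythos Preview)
Your upper and lower bounds follow exactly the paper's route: the second Berezin--Lieb inequality on a coherent-state trial built from $\mucl$ for the upper bound, and the first Berezin--Lieb inequality on the lower symbol $\mu_N$ of the true Gibbs state together with the quantitative de Finetti estimate (Theorem~\ref{thm:DeFinetti quant}) for the lower bound. One minor terminology slip: in the convention of Lemma~\ref{lem:Ber Lie 2} the upper symbol of your $\Gamma_N^\mu$ is $\mu$ itself, not $\mu/\dim\gH_s^N$; the inequality you write down is nonetheless the correct one.

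Where you genuinely diverge from the paper is in the convergence-of-states step. The paper argues by soft compactness: extract a weak limit $\mu$ of $\mu_N\,du$ on the compact sphere $S\gH$, pass to the liminf in $\Fcl$ using Fatou for the entropy term (which is nonnegative, being a relative entropy with respect to the normalized measure $du$), conclude $\mu=\mucl\,du$ by uniqueness of the minimizer of $\Fcl$, and then deduce convergence of the whole sequence. Your argument is more direct and quantitative: you observe that $\Fcl[\mu]-\Fcle = t\,\mathrm{KL}(\mu\,\|\,\mucl)$, apply Pinsker's inequality to obtain $\|\mu_N-\mucl\|_{L^1(du)} = O(\sqrt{d/N})$, and control $\gamma_N^{(n)}$ via the triangle inequality and Theorem~\ref{thm:DeFinetti quant}. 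This avoids any subsequence extraction and delivers an explicit rate of convergence for $\gamma_N^{(n)}$ in trace norm, information that the paper's compactness argument does not yield. Both arguments are correct; yours is sharper.
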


\begin{remark}[Mean-field/large temperature limit]\label{rem:app B thm}\mbox{}\\
A few comments:
\begin{enumerate}
\item One should understand this theorem as saying that essentially, in the limit under consideration,
$$ \Gamma_N \approx \int_{S\gH} |u ^{\otimes N} \rangle \langle u ^{\otimes N} | \, \mucl(u) du.$$
The Gibbs state is thus close to a superposition of Hartree states. The notions of reduced density matrices and de Finetti measures provide the appropriate manner to make this rigorous. We will see that the de Finetti measure (lower symbol) associated to $\Gamma_N$ by the methods of Chapter~\ref{sec:deF finite dim} converges to $\mucl (u) du$.
\item Note that the first term in the energy expansion~\eqref{eq:app B lim ener} diverges very rapidly, see~\eqref{eq:dim boson}. The classical free-energy only appears as a correction. In view of the dependence on $d$ of this first term, it is clear that the approach in this appendix cannot be adapted easily to an infinite dimensional setting.
\item Our method of proof differs from that of~\cite{Gottlieb-05}. We shall exploit more fully the semi-classical nature of the problem by using the Berezin-Lieb inequalities introduced in~\cite{Berezin-72,Lieb-73b,Simon-80}. The method we present~\cite{LewRou-unpu13} owes a lot to the seminal paper~\cite{Lieb-73b} and is reminiscent of some aspects of~\cite{LieSeiYng-05}.  
\item It will be crucial for the proof that the lower symbol of $\Gamma_N$ is an approximate de Finetti measure for $\Gamma_N$. This will allow us to apply the first Berezin-Lieb inequality to obtain a lower bound to the entropy term. A new interest of the constructions of Chapter~\ref{sec:deF finite dim} is thus apparent here. We use not only the estimate of Theorem~\ref{thm:DeFinetti quant} but also the particular form of the constructed measure.  
\end{enumerate}\hfill\qed

\end{remark}

\subsection{Berezin-Lieb inequalities}\label{sec:app B Berezin}\mbox{}\\\vspace{-0.4cm}

We recall the resolution of the identity~\eqref{eq:Schur} over $\gH ^N$ given by Schur's lemma. We thus have for each state $\Gamma_N \in \cS (\gH_s ^N)$ a lower symbol defined as
$$ \mu_N (u) = \dim \left( \gH ^N_s\right)\tr \left[\Gamma_N |u ^ {\otimes N}\rangle \langle u ^ {\otimes N} | \right].$$
The first Berezin-Lieb inequality is the following statement:

\begin{lemma}[\textbf{First Berezin-Lieb inequality}]\label{lem:Ber Lie 1}\mbox{}\\
Let $\Gamma_N \in \cS (\gH ^N_s)$ have lower symbol $\mu_N$ and $f:\R^+ \to \R$ be a convex function. We have 
\begin{equation}\label{eq:app B Ber Lie 1}
\tr \left[ f(\Gamma_N) \right] \geq \dim\left( \gH_s ^N \right)\int_{S\gH} f\left(\frac{\mu_N}{\dim\left( \gH_s ^N \right)} \right) du. 
\end{equation}
\end{lemma}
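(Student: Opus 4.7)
The plan is to prove this inequality through a direct application of Jensen's inequality, using the spectral decomposition of $\Gamma_N$ together with Schur's resolution of identity~\eqref{eq:Schur}. This is a classical argument going back to Berezin and Lieb; the key observation is that the lower symbol at a point $u$ can be written as a convex combination of the eigenvalues of $\Gamma_N$, with weights that integrate nicely thanks to Schur's formula.

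First I would diagonalize the state. Since $\Gamma_N \in \cS(\gH_s^N)$ is self-adjoint and positive with unit trace, the spectral theorem yields an orthonormal basis $(\Psi_k)_k$ of $\gH_s^N$ and eigenvalues $\lambda_k \geq 0$ with $\sum_k \lambda_k = 1$, such that $\Gamma_N = \sum_k \lambda_k |\Psi_k\rangle\langle \Psi_k|$. Then by definition of the lower symbol,
\begin{equation*}
\frac{\mu_N(u)}{\dim \gH_s^N} = \tr\bigl[\Gamma_N |u^{\otimes N}\rangle\langle u^{\otimes N}|\bigr] = \sum_k \lambda_k \,|\langle \Psi_k, u^{\otimes N}\rangle|^2.
\end{equation*}
The second key observation is that, for each fixed $u\in S\gH$, the family of coefficients $p_k(u) := |\langle \Psi_k, u^{\otimes N}\rangle|^2$ is a probability distribution: indeed, since $(\Psi_k)_k$ is an orthonormal basis of $\gH_s^N$, Parseval's identity gives $\sum_k p_k(u) = \|u^{\otimes N}\|^2 = 1$.

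The main step is then to apply Jensen's inequality to the convex function $f$: for each $u\in S\gH$,
\begin{equation*}
f\!\left(\frac{\mu_N(u)}{\dim \gH_s^N}\right) = f\!\left(\sum_k p_k(u)\,\lambda_k\right) \leq \sum_k p_k(u)\, f(\lambda_k).
\end{equation*}
Integrating over $S\gH$ against the normalized uniform measure $du$ and multiplying by $\dim\gH_s^N$, the right-hand side becomes $\sum_k f(\lambda_k) \cdot \dim\gH_s^N \int_{S\gH} |\langle \Psi_k, u^{\otimes N}\rangle|^2\,du$. Here Schur's formula~\eqref{eq:Schur} is used a second time: applying it inside the inner product,
\begin{equation*}
\dim\gH_s^N \int_{S\gH} |\langle \Psi_k, u^{\otimes N}\rangle|^2\,du = \Bigl\langle \Psi_k,\; \dim\gH_s^N \!\int_{S\gH} |u^{\otimes N}\rangle\langle u^{\otimes N}|\,du\; \Psi_k\Bigr\rangle = \langle \Psi_k,\Psi_k\rangle = 1.
\end{equation*}
Combining the two displays yields $\dim\gH_s^N \int_{S\gH} f(\mu_N/\dim\gH_s^N)\,du \leq \sum_k f(\lambda_k) = \tr[f(\Gamma_N)]$, which is exactly~\eqref{eq:app B Ber Lie 1}.

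No real obstacle is expected here; the proof is short and essentially algebraic once one has noticed the two roles played by Schur's lemma (turning the lower symbol into a genuine convex combination, and then integrating out the coherent state weights). The only minor point to keep in mind is that $f$ need only be defined on the spectrum of $\Gamma_N\subset [0,1]$, so convexity there suffices; the argument extends to $f$ convex on $\R^+$ without change. This inequality will be the key tool in the proof of Theorem~\ref{thm:app B} to produce a lower bound for the entropy term $\tr[\Gamma_N \log \Gamma_N]$ by taking $f(x) = x\log x$, thereby controlling the quantum entropy in terms of the classical entropy of the lower symbol $\mu_N$.
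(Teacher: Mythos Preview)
Your proof is correct and follows essentially the same approach as the paper's: spectral decomposition of $\Gamma_N$, recognition that the coefficients $|\langle \Psi_k, u^{\otimes N}\rangle|^2$ form a probability distribution in $k$ via Parseval, Jensen's inequality, and then Schur's formula to integrate each weight to~$1$. The organization and emphasis match almost exactly.
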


The second Berezin-Lieb inequality applies to states having a positive upper symbol (see Section~\ref{sec:CKMR heur}). One may in fact show that every state has an upper symbol, but it is in general not a positive measure.

\begin{lemma}[\textbf{Second Berezin-Lieb inequality}]\label{lem:Ber Lie 2}\mbox{}\\
Let $\Gamma_N \in \cS (\gH ^N_s)$ have upper symbol $\mu_N \geq 0$,
\begin{equation}\label{eq:app B symb sup}
\Gamma_N = \int_{u\in S \gH}  |u ^{\otimes N} \rangle \langle u ^{\otimes N}| \mu_N (u) du 
\end{equation}
and $f:\R ^+ \to \R$ be a convex function. We have
\begin{equation}\label{eq:app B Ber Lie 2}
\tr \left[ f(\Gamma_N) \right] \leq \dim\left( \gH_s ^N \right) \int_{S\gH} f \left(\frac{\mu_N}{\dim\left( \gH_s ^N \right)} \right) du. 
\end{equation}
\end{lemma}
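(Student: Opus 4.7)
The plan is to diagonalize $\Gamma_N$ and apply Jensen's inequality to each eigenvalue, exploiting the fact that Schur's formula~\eqref{eq:Schur} automatically produces a probability measure on $S\gH$. Set $d_N := \dim(\gH_s^N)$. Let $(\Psi_i)_{i=1}^{d_N}$ be an orthonormal basis of $\gH_s^N$ consisting of eigenvectors of $\Gamma_N$ with eigenvalues $\lambda_i \geq 0$, so that $\tr[f(\Gamma_N)] = \sum_i f(\lambda_i)$.

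First I would use the upper-symbol representation~\eqref{eq:app B symb sup} to rewrite each eigenvalue as
\begin{equation*}
\lambda_i = \langle \Psi_i, \Gamma_N \Psi_i \rangle = \int_{S\gH} |\langle \Psi_i, u^{\otimes N}\rangle|^2 \, \mu_N(u)\, du.
\end{equation*}
The crucial observation is that, by Schur's formula~\eqref{eq:Schur} applied to $\ketl \Psi_i\ketr\bral \Psi_i\brar$, the measure
\begin{equation*}
d\nu_i(u) := d_N\, |\langle \Psi_i, u^{\otimes N}\rangle|^2\, du
\end{equation*}
is a probability measure on $S\gH$. Hence
\begin{equation*}
\lambda_i = \int_{S\gH} \frac{\mu_N(u)}{d_N}\, d\nu_i(u),
\end{equation*}
and Jensen's inequality for the convex function $f$ yields
\begin{equation*}
f(\lambda_i) \leq \int_{S\gH} f\!\left(\frac{\mu_N(u)}{d_N}\right) d\nu_i(u) = d_N \int_{S\gH} |\langle \Psi_i, u^{\otimes N}\rangle|^2\, f\!\left(\frac{\mu_N(u)}{d_N}\right) du.
\end{equation*}

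To conclude, I would sum over $i$ and use Parseval's identity in the basis $(\Psi_i)$: since $\sum_i |\langle \Psi_i, u^{\otimes N}\rangle|^2 = \|u^{\otimes N}\|^2 = 1$ for every $u \in S\gH$, an application of Fubini gives
\begin{equation*}
\tr[f(\Gamma_N)] = \sum_{i=1}^{d_N} f(\lambda_i) \leq d_N \int_{S\gH} f\!\left(\frac{\mu_N(u)}{d_N}\right) du,
\end{equation*}
which is precisely~\eqref{eq:app B Ber Lie 2}. There is no real obstacle here: the proof is a clean two-line application of Jensen's inequality, and the only point that requires care is the correct identification of the probability measure against which to apply it, namely $d\nu_i$. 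The hypothesis $\mu_N \geq 0$ is used precisely to ensure that $\mu_N/d_N$ is a legitimate argument of $f$ (so $f$ composed with it is measurable and the Jensen step makes sense pointwise). Note that the direction of the inequality is opposite to Lemma~B.1 because the roles of Jensen's inequality are swapped: here $f$ is evaluated at an average (over $\nu_i$) of the symbol, whereas in the lower Berezin-Lieb inequality $f$ is applied to the symbol and then averaged.
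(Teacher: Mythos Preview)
Your proof is correct and is essentially identical to the paper's own argument: both diagonalize $\Gamma_N$, use Schur's formula to recognize $d_N|\langle \Psi_i,u^{\otimes N}\rangle|^2\,du$ as a probability measure, apply Jensen's inequality to each eigenvalue, and then sum using Parseval. The only differences are cosmetic notation.
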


\begin{proof}[Proof of Lemmas~\ref{lem:Ber Lie 1} and~\ref{lem:Ber Lie 2}]
We follow~\cite{Simon-80}. Since $\Gamma_N$ is a state we decompose it under the form 
$$ \Gamma_N = \sum_{k=1} ^{\infty} \lambda_N ^k |V_N ^k \rangle \langle V_N ^k|$$
with $V_N ^k \in \gH ^N_s$ normalized and $\sum_k \lambda_N ^k = 1$. We denote 
$$ \mu_N ^k (u)=  \left| \bral V_N ^k , u ^{\otimes N} \ketr \right| ^2$$
and by~\eqref{eq:Schur} we have 
\begin{equation}\label{eq:app B sum 1}
 \dim\left( \gH_s ^N \right) \int_{S\gH} \mu_N ^k (u) du  = \bral V_N ^k , V_N ^k \ketr =  1.
\end{equation}
On the other hand, since $(V_N ^k)_{k}$ is a basis of $\gH ^N _s$, for all $u\in S\gH$
\begin{equation}\label{eq:app B sum 2}
 \sum_k \mu_N ^k (u) = \sum_k \left| \bral V_N ^k , u ^{\otimes N} \ketr \right| ^2 =1.
\end{equation}

\medskip

\noindent \emph{First inequality.} Here $\mu_N$ is the lower symbol of $\Gamma_N$ and we have
$$ \mu_N (u) = \dim\left( \gH_s ^N \right)\sum_k \lambda_N ^k \mu_N ^k (u)$$
and thus
$$ \dim\left( \gH_s ^N \right) \int_{S\gH} f\left(\frac{\mu_N}{\dim\left( \gH_s ^N \right)} \right) du \leq \dim\left( \gH_s ^N \right) \int_{S\gH}  \sum_k f \left(\lambda_N ^k \right) \mu_N ^k (u) du$$
by Jensen's inequality and~\eqref{eq:app B sum 2}. Next 
$$ \dim\left( \gH_s ^N \right) \int_{S\gH}  \sum_k f \left(\lambda_N ^k \right) \mu_N ^k (u) du = \sum_k f \left(\lambda_N ^k \right) = \tr \left[f (\Gamma_N)\right]$$
using~\eqref{eq:app B sum 1}.

\medskip

\noindent \emph{Second inequality.} Here $\Gamma_N$ and $\mu_N$ are related via~\eqref{eq:app B symb sup}. We write
\begin{align*}
 \tr\left[ f(\Gamma_N) \right] = \sum_k f\left(\lambda_N ^k\right) &= \sum_k f \left( \langle V_N ^k, \Gamma_N V_N ^k \rangle \right) \\
 &= \sum_k f \left( \int_{S\gH} \mu_N (u) \mu_N ^k (u) du\right) 
 \\&\leq \sum_k \dim\left( \gH_s ^N \right) \int_{S\gH} f\left(\frac{\mu_N(u)}{\dim\left( \gH_s ^N \right)} \right) \mu_N ^k (u) du\\
 &= \dim\left( \gH_s ^N \right) \int_{S\gH} f\left(\frac{\mu_N (u)}{\dim\left( \gH_s ^N \right)} \right)  du
\end{align*}
using Jensen's inequality and~\eqref{eq:app B sum 1} to prove the inequality in the third line and then~\eqref{eq:app B sum 2} to conclude.
\end{proof}

We have here presented a specific version of these general inequalities. It is clear that the proof applies more generally to any self-adjoint operator on a separable Hilbert space having a coherent state decomposition of the form~\eqref{eq:Schur}. In the next section these inequalities will be used to deal with the entropy term by taking $f(x) = x \log x$. This will complete the treatment of the energy using Theorem~\ref{thm:DeFinetti quant} and make the link with the discussion of Chapter~\ref{sec:deF finite dim}.

\subsection{Proof of Theorem~\ref{thm:app B}}\label{sec:app B preuve}\mbox{}\\\vspace{-0.4cm}

\noindent\textbf{Upper bound.} We take as a trial state
$$ \Gamma_N ^{\rm test} := \int_{S\gH} |u ^{\otimes N} \rangle \langle u ^{\otimes N} |\: \mucl(u) du. $$
The energy being linear in the density matrix we have
$$ \E_N \left[ \Gamma_N ^{\rm test} \right] = \int_{S\gH} \E_N \left [|u ^{\otimes N} \rangle \langle u ^{\otimes N} |\right]\mucl(u) du = N \int_{S\gH} \EH [u] \mucl(u) du.$$
For the entropy term we use the second Berezin-Lieb inequality, Lemma~\ref{lem:Ber Lie 2}, with $f(x) = x \log x$. This gives
\begin{align*}
\tr\left[ \Gamma_N ^{\rm test} \log \Gamma_N  ^{\rm test}\right] &\leq \dim\left( \gH_s ^N \right) \int_{S\gH} \frac{\mucl (u)}{\dim\left( \gH_s ^N \right)} \log\left( \frac{\mucl (u)}{\dim\left( \gH_s ^N \right)}\right) du 
\\&= - \log \dim\left( \gH_s ^N \right) + \int_{S\gH} \mucl (u) \log\left( \mucl (u) \right) du.
 \end{align*}

Summing these estimates we obtain 
$$ F_N \leq \F_N \left[ \Gamma_N ^{\rm test} \right] \leq -T \log \dim\left( \gH_s ^N \right) + N \Fcle$$
since $\mucl$ minimizes $\Fcl$.

\medskip

\noindent\textbf{Lower bound.} For the energy we use the reduced density matrices as usual to write
$$ \E_N [\Gamma_N] = N \tr_{\gH} \left[ h \gamma_N ^{(1)}\right] + \frac{N}{2}\tr_{\gH ^2} \left[ w \gamma_N ^{(2)}\right].$$
Denoting
$$ \mu_N (u) = \dim \left(\gH ^N_s \right) \bral u ^{\otimes N}, \Gamma_N u ^{\otimes N}\ketr$$
the lower symbol of $\Gamma_N$, we recall that it has been proved in Chapter~\ref{sec:deF finite dim} that 
\begin{align*}
\tr_{\gH} \left| \gamma_N ^{(1)} - \int_{S\gH} |u\rangle \langle u| \mu_N (u) du \right| &\leq C_1 \frac{d}{N}\\ 
\tr_{\gH^2} \left| \gamma_N ^{(2)} - \int_{S\gH} |u ^{\otimes 2}\rangle \langle u ^{\otimes 2}| \mu_N (u) du \right| &\leq C_2 \frac{d}{N}.
\end{align*}
Since we work in finite dimension, $h$ and $w$ are bounded operators and it follows that
\begin{align*}
 \E_N [\Gamma_N] &\geq N \int_{S\gH} \tr_{\gH} \left[ h |u\rangle \langle u|\right] \mu_N (u) du + \frac{N}{2} \int_{S\gH} \tr_{\gH ^2} \left[ w |u ^{\otimes 2}\rangle \langle u ^{\otimes 2}|\right] \mu_N (u) du - C d\\
 &= N \int_{S\gH} \EH[u] \mu_N (u) du - Cd.
\end{align*}
To estimate the entropy we use the first Berezin-Lieb inequality, Lemma~\ref{lem:Ber Lie 1}, with $f(x) = x\log x$. This yields
\begin{align*}
\tr\left[ \Gamma_N \log \Gamma_N \right] &\geq \dim\left( \gH_s ^N \right) \int_{S\gH} \frac{\mu_N (u)}{\dim\left( \gH_s ^N \right)} \log\left( \frac{\mu_N (u)}{\dim\left( \gH_s ^N \right)}\right) du \nonumber
\\&= - \log \dim\left( \gH_s ^N \right) + \int_{S\gH} \mu_N (u) \log\left( \mu_N (u) \right) du.
\end{align*}
There only remains to sum these estimates to deduce
\begin{align*}
 F_N &= \F_N [\Gamma_N] \geq - T \log \dim\left( \gH_s ^N \right) + N \Fcl[\mu_N] -Cd \\
 &\geq - T \log \dim\left( \gH_s ^N \right) + N \Fcle -Cd 
\end{align*}
since
$$\int_{S\gH} \mu_N (u) du =1$$ 
by definition.
 
\medskip

\noindent\emph{Convergence of reduced density matrices.} The lower symbol $\mu_N (u) du$ is a probability measure on the compact space $S\gH$. We extract a converging subsequence
$$ \mu_N (u) du \to \mu (du) \in \PP(S\gH)$$
and it follows from the results of Chapter~\ref{sec:deF finite dim} that, for all $n\geq 0$, along a subsequence, 
\begin{equation}\label{eq:app B preuve 1}
 \gamma_N ^{(n)} \to \int_{S\gH} | u ^{\otimes n} \rangle \langle u ^{\otimes n}| d\mu(u). 
\end{equation}
Combining the previous estimates gives
\begin{equation}\label{eq:app B preuve}
 \Fcle \geq \Fcl[\mu_N] - C\frac{d}{N}. 
\end{equation}
To pass to the liminf $N\to \infty$ in this inequality, the energy term is dealt with as in the preceding chapters. For the entropy term we note that since $du$ is normalized
$$ \int_{S\gH} \mu_N \log \mu_N du \geq 0$$
since this quantity may be interpreted as the relative entropy of $\mu_N$ with respect to the constant function $1$. Using Fatou's lemma we thus deduce from~\eqref{eq:app B preuve} that
$$ \Fcle \geq \Fcl [\mu]$$
and thus $d\mu (u)= \mucl(u) du$ by uniqueness of the minimizer of $\Fcl$. Uniqueness of the limit also guarantees that the whole sequence converges and there only remains to go back to~\eqref{eq:app B preuve 1} to conclude the proof. \hfill\qed
%
%
%

\end{document}